\documentclass[12pt,a4paper]{article}

\pagestyle{plain}
\usepackage{lmodern}[lmr]

\usepackage{bm,amsfonts, amsmath, amssymb, amsthm, color, enumitem, float, mathrsfs, mathtools, multirow, nccmath, rotating, tikz, tocloft, setspace, subfig}

\usepackage[colorlinks=true, citecolor=., linkcolor=.]{hyperref}
\usepackage[nameinlink,noabbrev,capitalize]{cleveref}
\usepackage[hmargin=0.8in, vmargin=0.9in]{geometry}

\usepackage[round]{natbib}
\bibliographystyle{agsm}

\author{}
\title{}

\DeclareMathOperator*{\plim}{p\!\lim}

\newtheorem{assumption}{Assumption}

\newtheorem{lemma}{Lemma}
\newtheorem{proposition}{Proposition}

\newtheorem{theorem}{Theorem}

\renewcommand{\arraystretch}{1.1}


\allowdisplaybreaks[4]

\begin{document}

\setlength{\abovedisplayskip}{8pt}
\setlength{\belowdisplayskip}{8pt}

\begin{titlepage}

\title{A Robust Residual-Based Test for Structural Changes in Factor Models}
\author{Bin Peng$^{a}$, Liangjun Su$^{b},$ and Yayi Yan$^{c}\smallskip $ 
\thanks{The authors thank the co-editor, associate editor and two anonymous referees
for their constructive comments. Peng would like to acknowledge the
Australian Research Council Discovery Projects for its financial support
under Grant Number: DP250100063. Su acknowledges the National Natural
Science Foundation of China (NSFC) for financial support under the grant
number 72133002. Yan acknowledges the financial support by the NSFC under
the grant number 72303142 and the Fundamental Research Funds for the Central
Universities under grant numbers 2022110877 and 2023110099. The authors contributed equally to this paper and are credited in alphabetical order.} \\
{\small $^{a}$Department of Econometrics and Business Statistics, Monash
University, Australia}\\
{\small $^{b}$School of Economics and Management, Tsinghua University, China}%
\\
{\small $^{c}$School of Economics and Management, Shanghai University of
Finance and Economics, China} }
\date{\today }
\maketitle

\begin{abstract}
\noindent In this paper, we propose an easy-to-implement residual-based
specification testing procedure for detecting structural changes in factor
models, which is powerful against both smooth and abrupt structural changes
with unknown break dates. The proposed test is robust to the over-specified
number of factors, and serially and cross-sectionally correlated error
processes. A new central limit theorem is given for the quadratic forms of
panel data with dependence over both dimensions, thereby filling a gap in
the literature. We establish the asymptotic properties of the proposed test
statistic, and accordingly develop a simulation-based scheme to select
critical value in order to improve finite sample performance. Through
extensive simulations and a real-world application, we confirm our
theoretical results and demonstrate that the proposed test exhibits
desirable size and power in practice.

\medskip

\noindent\textbf{JEL Classification:} C14, C23, C33.\medskip

\noindent \textbf{Keywords:} Factor model; structural change; residual-based
test; serial correlation; cross-sectional dependence.
\end{abstract}

\end{titlepage}

\setlength{\abovedisplayskip}{7pt} \setlength{\belowdisplayskip}{7pt}

\section{Introduction \label{Sec1}}

As we are embracing the era of data rich environment, factor models have
been very popular in the past three decades or so. See \cite{Jolliffe} and 
\cite{FLL2021} for comprehensive reviews on a variety of applications of
factor models. One of the most attractive features is possibly the low rank
representation, which usually imposes that a set of high dimensional
observations can be driven by a series of low dimensional vectors. More
often than not, the low dimensional vectors are the foundation of many
factor augmented analyses (see, e.g., \citealp{STOCK2016415}, and references
therein).

To obtain these low dimensional vectors properly, extensive efforts have
been devoted to estimating the number of factors, viz., the length of
aforementioned low dimensional vectors. See, e.g., \cite{BN2002}, \cite{onatski2010}, \cite{NW2011}, \cite{LM2012}, \cite{ahn2013eigenvalue}, among
others. Notably, all of these works are based on the conventional factor
models where the loadings are assumed to remain constant over time. As
economists realize that the relationships between economic and financial
variables may suffer from structural changes over time, the factor loadings,
which capture the relationships between these variables and the latent
common factors, could be varying over time or some state variables. For this
reason, various methods have been proposed to test for the structural
changes in factor models or to estimate the factor models with breaks. For
example, \cite{breitung2011testing}, \cite{chen2014detecting}, \cite{han2015tests}, \cite{yamamoto_tanaka2015}, \cite{su2017time, su2020testing}, \cite{BKW2021}, \cite{fu2023testing}, and \cite{BDH2024} consider various
methods to detect structural changes in large dimensional factor models.
Except \cite{su2017time, su2020testing} and \cite{fu2023testing} who allow
for smooth changes, almost all the other papers in the literature consider
one or multiple abrupt changes. We provide a detailed summary of these works
in Section \ref{App.A1} for the purpose of comparison. In terms of
estimation, \cite{cheng2016shrinkage} study the estimation of a factor
models with a one-time break; \cite{ma_su2018} and \cite{BKW2021} consider
the estimation of factor models with multiple abrupt breaks; \cite%
{su2017time} and \cite{pelger_xiong2022} consider the kernel estimation of
factor models with loadings changing over time and stated variables,
respectively; \cite{fu_su_wang2024} study time-varying factor-augmented
regression models.

In this paper, we are also interested in testing whether the loadings change
over time or not. We aim to propose a new test statistic that has a superb
power in detecting smaller local alternatives than almost all existing tests
by utilizing richer information along both cross-sectional and time
dimensions, and remains robust against both cross-sectional and serial
dependence. One popular approach of detecting an abrupt change is via the
use of a Lagrange multiplier (LM) type of test statistic. For example, \cite{chen2014detecting}, \cite{BKW2021} and \cite{BDH2024} all consider LM-type
of tests based on the information of the unknown factors. When doing so, the
test statistic can only detect local alternatives that converge to the null
at a slow rate (e.g., $T^{-1/2}$) associated with the time dimension $\left(
T\right) $ alone, and it is generally not robust to misspecified factor
numbers (e.g., see p. 359 in \citealp{BKW2021}). Another popular approach
aims at detecting smooth structural changes and remains powerful against
abrupt breaks too. The tests of \cite{su2017time,su2020testing} and \cite%
{fu2023testing} are capable of examining the parametric loadings against the
nonparametric generalization via a smooth function. In general, such tests
are able to utilize the information along both dimensions of the residuals,
thereby detecting local alternatives converging to the null at a rate faster
than $T^{-1/2}.$ In particular, \cite{su2017time,su2020testing} consider a
kernel-based smoothing nonparametric test for structure changes that can
detect local alternatives converging to the null at the rate $%
N^{-1/4}T^{-1/2}h^{-1/4},$ where $N$ and $T$ denote cross-sectional and time
dimensions, respectively, and $h$ is a bandwidth parameter; \cite{fu2023testing} consider a nonsmoothing test\footnote{See Chapter 13 in \cite{li2007nonparametric} for the discussions on
smoothing and nonsmoothing tests.} for the structural changes in factor
models based on the discrete Fourier transform and their test can detect
local alternatives at a rate $(NT)^{-1/2}$. But to work out the asymptotic
distributional theory, early research typically assumes either cross
sectional independence or serial uncorrelation for the error terms $\left\{
\varepsilon _{it}\right\} $ in the factor model. The reason is that one
often has to deal with the following terms: 
\begin{equation}
\varepsilon _{it}\varepsilon _{js}\quad \text{with}\quad i,j\in [ N ] ,\quad t,s\in [T],  \label{EQ_BP1}
\end{equation}%
where $\left[ L\right] \coloneqq\left\{ 1,\ldots ,L\right\} $ for a given
positive integer $L$. When aggregating the information from both the
cross-sectional and time dimensions, early literature needs to balance the
rate of convergence yielded from the aggregation of the terms in %
\eqref{EQ_BP1} and the independence conditions imposed on $\left\{
\varepsilon _{it}\right\} $. To the best of our knowledge, no attempt has
been made to construct a test statistic using \eqref{EQ_BP1} and to study
its limit null distribution while allowing for dependence along both
dimensions of $\left\{ \varepsilon _{it}\right\} $. Despite the technical
challenge in dosing so, the issue is a realistic one. As \cite{CP2015}
remark, the weak dependence along both dimensions of $\left\{ \varepsilon
_{it}\right\} $ is likely to be the rule rather than the exception
practically.

In this paper, we focus on a residual-based test to examine whether one
should go beyond a traditional type of factor model practically while
accounting for dependence over both dimensions of $\left\{ \varepsilon
_{it}\right\} $. We shall derive the asymptotic properties of the newly
proposed test statistic, and conduct extensive numerical studies to confirm
our theoretical findings. To proceed, we briefly review the literature on
residual-based model specification tests. The basic idea is to use the
estimated residuals (typically from a parametric or semiparametric model
under the null) to construct a test statistic, in which 
\begin{equation*}
\text{estimated residuals}=\text{observations}-\text{fitted values}.
\end{equation*}%
Under the null, the estimated residuals are typically sufficiently close to
the true error terms so that the limiting distribution of a suitably
constructed residual-based test statistic can be established. Under the
alternative, the estimated residuals contain a term that reflects the
deviation from the null hypothesis and thus contribute to the power of the
test statistic. Along this line of research, early works focus on examining
different types of time series models (e.g., \citealp{DK1979, PP1988}).
Gradually, the idea is extended to justify different model specifications
for a variety of data types such as cross-sectional data (%
\citealp{Li1994,
JOHNXUZHENG1996263}), time series data (\citealp{GKLT2009}), and panel data (%
\citealp{su2015specification}; \citealp{WSX2022}).

To sum up, our main contributions are as follows. First, assuming the number
of factors is correctly specified, we establish the asymptotic distribution
of a residual-based test statistic under the null by allowing for dependence
along both dimensions of $\left\{ \varepsilon _{it}\right\} $. As mentioned
above, this is a highly technically challenging issue, and thus a new
central limit theorem is given for the quadratic forms of panel data with
dependence over both dimensions, thereby filling a gap in the literature.
This result is of independent interest and can be used in various
kernel-based specification tests (e.g., %
\citealp{su2015specification,chen2018nonparametric}). Second, we explore the
local power of the proposed test and show that it can detect local
alternatives converging to the null at a rate $(NT)^{-1/2}h^{-1/4},$ faster
than almost all existing tests in the literature but \cite{fu2023testing}
who consider a nonsmoothing test. It is well known a nonsmoothing test (like
that of \cite{fu2023testing}) tends to be more powerful than a smoothing
test (like ours) under low frequency local alternatives whereas\ a smoothing
test can be more powerful than a nonsmoothing test under high frequency
local alternatives as studied in \cite{rosenblatt1975quadratic}. See the
discussion on p. 400 of \cite{li2007nonparametric}. Third, we show the
robustness of our test by proving the limiting distributions under the null
and local alternatives continue to hold true when the number of factors is
over-specified. Fourth, we also study the global power behavior of the
proposed test and our investigation helps to explain where the
residual-based tests gain their power. Finally, we conduct numerical studies
to examine the finite sample performance of the test and the results
corroborate the theoretical findings.

The rest of the paper is organized as follows. Section \ref{Sec2} presents
the hypotheses and the test statistic. Section \ref{Sec3} investigates the
test statistic under the null, a sequence of Pitman local alternatives, and
a global alternative. Section \ref{Sec4} conducts numerical studies using
simulated data and real data examples. Section \ref{Sec5} concludes. The
proofs of the main results are relegated to the appendix. The online
supplement contains the proofs of the technical lemmas used in the proof of
the main result, the verification of Assumption \ref{Assumption4}, and some
additional simulation results.

\textit{Notation}. For a real-valued matrix $\mathbf{A=}\{a_{ij}\}$, let $%
\Vert \mathbf{A}\Vert $ and $\Vert \mathbf{A}\Vert _{2}$ denote the
Frobenius norm and the spectral norm, respectively. Let $\Vert \mathbf{A}%
\Vert _{1}=\max_{j}\sum_{i}|a_{ij}|$ and $\Vert \mathbf{A}\Vert _{\infty
}=\max_{i}\sum_{j}|a_{ij}|$. When $\mathbf{A}$ has full column rank, let $%
\mathbf{P}_{\mathbf{A}}\coloneqq \mathbf{A}(\mathbf{A}^{\top }\mathbf{A}%
)^{-1}\mathbf{A}^{\top }$ and $\mathbf{M}_{\mathbf{A}}\coloneqq \mathbf{I}-%
\mathbf{P}_{\mathbf{A}},$ where $^{\top }$ denotes transpose. When $\mathbf{A%
}$ is symmetric, we use $\lambda _{\min }(\mathbf{A})$ and $\lambda _{\max }(%
\mathbf{A})$ to denote its minimum and maximum eigenvalues, respectively.
For a random vector $\mathbf{v}$, let $|\mathbf{v}|_{q}\coloneqq \left(
E\Vert \mathbf{v}\Vert ^{q}\right) ^{1/q}$. For two constants $a$ and $b$, $%
a\wedge b\coloneqq\min (a,b)$, $a\vee b\coloneqq\max (a,b)$, and $a\simeq b$
stands for $a=O(b)$ and $b=O(a)$. For two random variables $c$ and $d$, $%
c\asymp d$ stands for $c=O_{P}(d)$ and $d=O_{P}(c)$. The operators $%
\rightarrow _{P}$ and $\rightarrow _{D}$ stand for convergence in
probability and convergence in distribution, respectively. $\left(N,T\right)
\rightarrow \infty $ signifies that $N$ and $T$ diverge to infinity jointly.

\section{The Null Hypothesis and the Test Statistic \label{Sec2}}

In this section we introduce the null hypothesis and the test statistic.

\subsection{Hypotheses \label{Sec2.1}}

Consider a series of high dimensional vectors $\{\mathbf{x}_{t}:t\in \left[ T%
\right] \}$, and assume that $\{\mathbf{x}_{t}\}$ is generated by the
following factor model that is potentially time-varying:

\begin{equation}
\mathbf{x}_{t}=\pmb{\Lambda}_{t}\mathbf{f}_{t}+\pmb{\varepsilon}_{t},
\label{Eq2.1}
\end{equation}
where $\pmb{\varepsilon}_{t}=(\varepsilon _{1t},\ldots ,\varepsilon
_{Nt})^{\top }$ is a vector of zero-mean error terms, $\pmb{\Lambda}_{t}=(%
\pmb{\lambda}_{1t},\ldots ,\pmb{\lambda}_{Nt})^{\top }$ is an $N\times r$
time-varying loading matrix, and $\mathbf{f}_{t}$ is an $r\times 1$ latent
factor. We focus on the case where $r$ is fixed, and $\left( N,T\right)
\rightarrow \infty $. To examine whether $\pmb{\Lambda}_{t}$'s evolve over
time, the null hypothesis is specified as follows:

\begin{equation}
\mathbb{H}_{0}:\ \pmb{\Lambda}_{t}=\pmb{\Lambda}\ \text{for all}\ t\in [T],
\label{Eq2.2}
\end{equation}%
where $\pmb{\Lambda}$ is a conformable constant matrix. The alternative
hypothesis $\mathbb{H}_{1}$ is the negation of $\mathbb{H}_{0}.$ Obviously,
under $\mathbb{H}_{0},$ the factor model in (\ref{Eq2.1}) degenerates to the
conventional time-invariant factor model that has been widely studied in the
literature. Under $\mathbb{H}_{1},$ the loadings $\pmb{\lambda}_{it}$ must
change over time for some $i\in \left[ N \right]$.

\subsection{The Test Statistic \label{Sec2.2}}

Under $\mathbb{H}_{0}$, we have 
\begin{equation}
\mathbf{X}=\mathbf{F}\pmb{\Lambda}^{\top }+\mathcal{E},  \label{Eq2.3a}
\end{equation}
where $\mathbf{X=(\mathbf{x}}_{1},\ldots,\mathbf{\mathbf{x}}_{T}\mathbf{)}%
^{\top }\mathbf{,}$ $\mathbf{F}=(\mathbf{f}_{1},\ldots ,\mathbf{f}%
_{T})^{\top },$ and $\mathcal{E}$ is defined accordingly. Following the lead
of \cite{BN2002}, we impose the identification condition $\frac{1}{T}\mathbf{%
F}^{\top }\mathbf{F}=\mathbf{I}_{r}$ to conduct the standard principal
component analysis (PCA): 
\begin{equation}
\widehat{\mathbf{F}}\widehat{\mathbf{V}}=\mathbf{X}\mathbf{X}^{\top }%
\widehat{\mathbf{F}},  \label{Eig1}
\end{equation}%
where $\widehat{\mathbf{V}}$ is an $r\times r$ diagonal matrix including the
first $r$ largest eigenvalues of $\mathbf{X}\mathbf{X}^{\top }$ on the main
diagonal, and $\frac{1}{T}\widehat{\mathbf{F}}^{\top }\widehat{\mathbf{F}}=%
\mathbf{I}_{r}$. Accordingly, we have the residual matrix $\widehat{\mathcal{%
E}}=\mathbf{X}-\widehat{\mathbf{F}}\widehat{\pmb{\Lambda}}^{\top }=\mathbf{M}%
_{\widehat{\mathbf{F}}}\mathbf{X},$ where $\widehat{\pmb{\Lambda}}=\frac{1}{T%
}\mathbf{X}^{\top }\widehat{\mathbf{F}}$. Having $\widehat{\mathcal{E}}=(%
\widehat{\mathcal{E}}_{1},\ldots ,\widehat{\mathcal{E}}_{N})=\{\widehat{%
\varepsilon }_{it}\}_{T\times N}$ in hand, we construct the residual-based
test statistic as follows: 
\begin{equation}
L_{NT}=\frac{1}{T^{2}N^{2}}\sum_{i,j=1}^{N}\widehat{\mathcal{E}}_{i}^{\top
}\,\mathbf{K}_{h}\,\widehat{\mathcal{E}}_{j},  \label{Eq2.3}
\end{equation}%
where $\mathbf{K}_{h}=\{\frac{1}{h}K(\frac{t-s}{Th})\}_{T\times T}$ with $t,s\in [ T]$ and $h\coloneqq h_{T}$ is a bandwidth parameter. The test
statistic in \eqref{Eq2.3} has a typical form of the residual-based test,
and the expression further concurs with \eqref{EQ_BP1} that one has to
account for dependence yielded by $\varepsilon _{it}\varepsilon _{js}$
appearing in the leading term in the expansion of $L_{NT}.$

\section{The Asymptotic Properties of the Test Statistic \label{Sec3}}

In this section, we establish the limiting distribution of $L_{NT}$ under
the null, and study its asymptotic local and global power properties. To
facilitate the presentation, we first provide the basic assumptions used for
our asymptotic analyses in Section \ref{Sec3.1}. In Section \ref{Sec3.2} we
establish the asymptotic distribution of the test statistic under the null,
and then present the result with the number of factors over-specified. In
Sections \ref{Sec3.3} and \ref{Sec3.4}, we study the power properties of our
test under different scenarios, of which one is also helpful to explain
where the residual-based tests gain their global power. Finally, we point
out that the newly proposed test can also be used to select the number of
factors practically.

\subsection{Basic Assumptions\label{Sec3.1}}

Note that we can write $\pmb{\Lambda}=(\pmb{\lambda}_{1},\ldots , %
\pmb{\lambda}_{N})^{\top }$ under $\mathbb{H}_{0}.$ As in \cite{su2017time},
we will assume that the loadings are nonrandom without loss of generality.
To facilitate the asymptotic analyses, we make the following assumptions.

\begin{assumption}\label{Assumption1}

\noindent (a) $K(\cdot )$ is a continuous and symmetric function with
bounded support $[-1,1]$, and $K(0)=1$.

\smallskip

\noindent (b) As $(N,T)\rightarrow \infty ,$ $h\rightarrow 0,$ $%
Th\rightarrow \infty ,$ $hT^{2}/N^{2}\rightarrow 0$ and $hN^{2}/T^{2}%
\rightarrow 0$.
\end{assumption}

\begin{assumption}
\label{Assumption2}

Suppose that $c_{1}\leq \lambda _{\min }(\pmb{\Sigma}_{\lambda})\leq \lambda
_{\max }(\pmb{\Sigma}_{\lambda})\leq c_{2}$ and $\max_{i\in \left[ N \right]
}\Vert \pmb{\lambda}_{i}\Vert \leq c_{0}$ for some constants $%
c_{0},c_{1},c_{2}>0$, where $\pmb{\Sigma}_{\lambda}\coloneqq \lim_{N\to
\infty}\frac{1}{N}\pmb{\Lambda}^{\top } \pmb{\Lambda}$.
\end{assumption}

\begin{assumption}
\label{Assumption3}

\noindent (a) Let $\{ \mathbf{f}_{t} \} $ be a stationary process such that $%
\mathbf{f}_{t}=\mathbf{G}(\mathbf{e}_{t},\mathbf{e}_{t-1},\ldots )$ with $%
\mathbf{e}_{t}$ being independent and identically distributed (i.i.d.)
random vectors, $E[\mathbf{f}_{t}\mathbf{f}_{t}^{\top }]=\pmb{\Sigma}_{f}>0$%
, and $\mathbf{G}(\cdot )=[G_{1}(\cdot),\ldots ,G_{r}(\cdot )]^{\top }$
being a vector of measurable functions.

\smallskip

\noindent (b) Let $|\mathbf{f}_{m}-\mathbf{f}_{m}^{\ast }|_{4} \eqqcolon \lambda _{4}^{f}(m)=O(m^{-\alpha })$ for some $\alpha >3$, where $\mathbf{f}
_{t}^{\ast }=\mathbf{G}(\mathbf{e}_{t},\ldots ,\mathbf{e}_{1},\mathbf{e}
_{0}^{\prime },\mathbf{e}_{-1},\ldots )$ is the coupled version of $\mathbf{f }_{t}$ with $\mathbf{e}_0^{\prime }$ being an independent copy of $\{\mathbf{e}_{t}\}$.
\end{assumption}

\begin{assumption}
\label{Assumption4}

\noindent (a) Let $\{\pmb{\varepsilon}_{t}\}$ be a stationary process such
that $\pmb{\varepsilon}_{t}=\mathbf{g}(\mathbf{e}_{t},\mathbf{e}
_{t-1},\ldots )$, $E(\pmb{\varepsilon}_{t}\, |\, \mathbf{f}_{t})=\mathbf{0}
_{N} $, $E[\pmb{\varepsilon}_{t}\pmb{\varepsilon}_{t}^{\top }]=\pmb{\Sigma}
_{\varepsilon }$, $\left\Vert \pmb{\Sigma}_{\varepsilon }\right\Vert
_{2}\leq c_{3}$ for some constant $c_{3}>0,$ and $\mathbf{g}(\cdot
)=[g_{1}(\cdot ),\ldots ,g_{N}(\cdot )]^{\top }$ is a vector of measurable
functions.

\smallskip

\noindent (b) Let $\sup_{\mathbf{v}\in \mathbb{R}^{N},\Vert \mathbf{v}\Vert
=1}|\overline{\varepsilon }_{m,\mathbf{v}}-\overline{\varepsilon }_{m,%
\mathbf{v}}^{\ast }|_{4}\eqqcolon\lambda _{4}^{\varepsilon }(m)=O(m^{-\alpha
})$ for some $\alpha >3$, where $\overline{\varepsilon }_{t,\mathbf{v}}=%
\pmb{\varepsilon}_{t}^{\top }\mathbf{v}$, $\overline{\varepsilon }_{t,%
\mathbf{v}}^{\ast }=\pmb{\varepsilon}_{t}^{\ast \top }\mathbf{v}$ with $%
\pmb{\varepsilon}_{t}^{\ast }=\mathbf{g}(\mathbf{e}_{t},\ldots ,\mathbf{e}%
_{1},\mathbf{e}_{0}^{\prime },\mathbf{e}_{-1},\ldots )$, and $\mathbf{e}%
_{0}^{\prime }$ is as defined in Assumption \ref{Assumption3}(b).

\smallskip

\noindent (c) $|\mathcal{P}_{t-m}(\mathbf{f}_{t}\frac{1}{\sqrt{N}}\sum_{i=1}^{N} \varepsilon _{it}\varepsilon _{is})|_{2}=O(m^{-\beta })$ for
some $\beta >1 $, where $\mathcal{P}_{t}[\cdot ]=E[\cdot \,|\, \mathscr{E}%
_{t}]$ $-E[\cdot \,|\, \mathscr{E}_{t-1}]$ and $\mathscr{E}_{t}=(\mathbf{e}%
_{t}, \mathbf{e}_{t-1},\ldots )$.
\end{assumption}

Assumption \ref{Assumption1} imposes some standard conditions on the kernel
function $K\left( \cdot \right) $ and the bandwidth parameter $h.$ In
addition, we allow for $N$ and $T$ diverging to infinity at different or
same rates. Without loss of generality, Assumption \ref{Assumption2} assumes
that the loadings under $\mathbb{H}_{0}$ are non-random; see, e.g., \cite%
{su2017time} and \cite{fu2023testing} and the references therein. It is easy
to see that the asymptotic analyses remain valid when the loadings are
random and independent of the factors and error terms, and satisfy some
moment conditions as in \cite{BN2002}.

Assumption \ref{Assumption3} imposes weak dependence conditions on $\{ 
\mathbf{f}_{t}\}$, allowing for serial correlations and conditional
heteroscedasticity of unknown forms. The strength of temporal dependence is
defined by the functional dependence measure of \cite{wu2005nonlinear},
which nests many stationary processes as special cases (e.g., %
\citealp[p.204]{tong1990non}; \citealp{wu2005nonlinear}). Here, we only
require an algebraic decay rate of the temporal dependence. Intuitively, the
physical dependence coefficients $\{\lambda_4^f(m)\}$ quantify the
dependence of outputs $\{\mathbf{f}_{m}\}$ on $\mathbf{e}_0$.

Assumption \ref{Assumption4} regulates the decay rate of temporal dependence
after taking weighted average cross-sectionally. By doing so, $\varepsilon
_{it}$'s are allowed to exhibit cross-sectional dependence of various
unknown forms. The constants \textquotedblleft $\alpha $\textquotedblright\
and \textquotedblleft $\beta $\textquotedblright\ play a role similar to the 
$\alpha $-mixing or $\beta $-mixing coefficients in the time series
literature and we only require an algebraic decay rate of the temporal
dependence. Assumption \ref{Assumption4}(c) requires the underlying data
generating process to satisfy the following moment restriction (see Lemma %
\ref{Lemma1}(c) for details) 
\begin{equation*}
\left\vert \frac{1}{\sqrt{TN}}\sum_{t=1}^{T}\mathbf{f}_{t}\sum_{i=1}^{N}[%
\varepsilon _{it}\varepsilon _{is}-E(\varepsilon _{it}\varepsilon
_{is})]\right\vert _{2}=O(1).
\end{equation*}%
As a high level assumption, the above condition has been widely adopted in
the literature on factor models (see, e.g., Assumption F.1 in %
\citealp{bai2003inferential}, Assumption A.1 (vii) in \citealp{su2017time}
and Assumption A.7 in \citealp{fu2023testing}). Intuitively, Assumption \ref%
{Assumption4}(c) requires that $\mathbf{f}_{t}\frac{1}{\sqrt{N}}%
\sum_{i=1}^{N}\varepsilon _{it}\varepsilon _{is}$ and $\mathbf{e}_{t-m}$ are
asymptotically independent as $m\rightarrow \infty $ for any fixed $s.$ In
Section \ref{App.B2} of the online supplement, we provide an example to show
that Assumption \ref{Assumption4} is fulfilled by a well known data
generating process.

\subsection{Asymptotic Null Distribution \label{Sec3.2}}

Define 
\begin{equation*}
\sigma _{\varepsilon ,a}^{2}=\lim_{(N,T)\rightarrow \infty }\frac{1}{Th}%
\sum_{t,s=1}^{T}E(\overline{\varepsilon }_{t,a}\overline{\varepsilon }%
_{s,a})K\left( \frac{t-s}{Th}\right) ,
\end{equation*}%
where $\overline{\varepsilon }_{t,a}\coloneqq \frac{1}{\sqrt{N}}%
\sum_{i=1}^{N}(1-a_{i})\varepsilon _{it}$, $a_{i}\coloneqq \pmb{\lambda}_{i}^{\top }(%
\frac{1}{N}\pmb{\Lambda}^{\top }\pmb{\Lambda})^{-1}\overline{\pmb{\lambda}}$%
, and $\overline{\pmb{\lambda}}\coloneqq\frac{1}{N}\sum_{i=1}^{N}\pmb{\lambda}_{i}$%
. Clearly, $\sigma _{\varepsilon ,a}^{2}$ denotes a kernel-weighted version
of the long-run variance of ${\varepsilon }_{it}.$ We are now
ready to present the first main result of this paper.

\begin{theorem}
\label{theorem1} Suppose that Assumptions \ref{Assumption1}--\ref{Assumption4} hold, and $\sigma _{\varepsilon ,a}^{2}>0$. As $(N,T)\rightarrow \infty $, 
\begin{equation*}
TN\sqrt{h}[L_{NT}-(TNh)^{-1}\sigma _{\varepsilon ,a}^{2}]\rightarrow
_{D}N(0,2\nu _{0}\sigma _{\varepsilon ,a}^{4})\text{ under }\mathbb{H}_{0},
\end{equation*}%
where $\nu _{0}=\int_{-1}^{1}K^{2}(u)\mathrm{d}u$.
\end{theorem}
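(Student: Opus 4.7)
My plan is to rewrite $L_{NT}$ as a quadratic form in the cross-sectionally-aggregated residuals, replace those residuals by a tractable infeasible counterpart built from the true errors, and then prove a CLT for the resulting bilinear form in the kernel $\mathbf{K}_h$. Since $\sum_{i,j}\widehat{\mathcal{E}}_i^\top\mathbf{K}_h\widehat{\mathcal{E}}_j=(\widehat{\mathcal{E}}\mathbf{1}_N)^\top\mathbf{K}_h(\widehat{\mathcal{E}}\mathbf{1}_N)$, $L_{NT}$ depends only on the $T$-vector $\widehat{\mathcal{E}}\mathbf{1}_N$, so the entire analysis reduces to an asymptotic expansion of that aggregate together with a kernel-weighted inner product.

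For the expansion, I would use the standard PCA residual expansion under $\mathbb{H}_0$: $\widehat{\varepsilon}_{it}=\varepsilon_{it}-\pmb{\lambda}_i^\top(\mathbf{H}\widehat{\mathbf{f}}_t-\mathbf{f}_t)-\widehat{\mathbf{f}}_t^\top(\widehat{\pmb{\lambda}}_i-\mathbf{H}^{-1}\pmb{\lambda}_i)+o_P(\cdot)$, with $\mathbf{H}$ the usual rotation satisfying $\widehat{\mathbf{F}}=\mathbf{F}\mathbf{H}+o_P(1)$. Averaging over $i$ introduces $\overline{\pmb{\lambda}}$, and plugging in the leading linear-in-$\varepsilon$ representations of $\widehat{\mathbf{f}}_t$ and $\widehat{\pmb{\lambda}}_i$ exposes the weights $a_i=\pmb{\lambda}_i^\top(\frac{1}{N}\pmb{\Lambda}^\top\pmb{\Lambda})^{-1}\overline{\pmb{\lambda}}$, yielding $\frac{1}{\sqrt{N}}\sum_i\widehat{\varepsilon}_{it}=\overline{\varepsilon}_{t,a}+R_t$. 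Substituting this into the quadratic form leaves the main piece
\[
L_{NT}^{\ast}=\frac{1}{T^2Nh}\sum_{t,s=1}^T\overline{\varepsilon}_{t,a}\overline{\varepsilon}_{s,a}\,K\!\left(\tfrac{t-s}{Th}\right),
\]
plus cross and quadratic-in-$R$ terms that I would show to be $o_P((TN\sqrt{h})^{-1})$ using the factor-convergence rates together with the moment bounds in Assumption~\ref{Assumption4}; the bandwidth restrictions $hT^2/N^2\to 0$ and $hN^2/T^2\to 0$ in Assumption~\ref{Assumption1}(b) are calibrated exactly to make this step go through when $N$ and $T$ grow at possibly different rates.

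With the test statistic reduced to $L_{NT}^{\ast}$, I would extract the bias by decomposing $L_{NT}^{\ast}=E(L_{NT}^{\ast})+\widetilde{L}_{NT}$ and showing $E(L_{NT}^{\ast})=(TNh)^{-1}\sigma_{\varepsilon,a}^{2}+o((TN\sqrt{h})^{-1})$ by passing to the lag variable $k=t-s$ and using stationarity plus summability of the autocovariances of $\{\overline{\varepsilon}_{t,a}\}$ granted by Assumptions~\ref{Assumption3}--\ref{Assumption4}. The centered object $\widetilde{L}_{NT}$ carries the entire stochastic fluctuation, and proving
\[
TN\sqrt{h}\,\widetilde{L}_{NT}\to_{D}N(0,2\nu_0\sigma_{\varepsilon,a}^4)
\]
is the main obstacle; this is precisely the step that motivates the paper's new CLT for two-way-dependent quadratic forms. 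My approach is a Wu-style martingale approximation in the filtration generated by $\mathscr{E}_t$: write $\overline{\varepsilon}_{t,a}=\sum_{j\le t}\mathcal{P}_j\overline{\varepsilon}_{t,a}$, rearrange the centered kernel-weighted bilinear form as a sum of martingale differences indexed by $t$, and then invoke a martingale CLT. The conditional Lindeberg condition follows from the fourth-moment physical-dependence rate $\lambda_4^{\varepsilon}(m)=O(m^{-\alpha})$ with $\alpha>3$, whereas the hardest ingredient is the convergence of the conditional variance to $2\nu_0\sigma_{\varepsilon,a}^4$: one has to show that the non-Gaussian fourth-order cumulants of $\{\overline{\varepsilon}_{t,a}\}$ are asymptotically negligible, which is exactly where Assumption~\ref{Assumption4}(c) is used, after which the limit reduces to a Riemann-sum computation with $\nu_0=\int_{-1}^{1}K^2(u)\,du$ coming from the kernel-weighted squared autocovariances and the factor of $2$ arising from the pair of orderings $(t,s)$ and $(s,t)$.
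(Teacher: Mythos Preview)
Your strategy is essentially the paper's: reduce $L_{NT}$ to the infeasible quadratic form $L_{NT}^{\ast}$ in $\overline{\varepsilon}_{t,a}$ and then apply a martingale CLT under Wu's physical dependence; the paper organizes the reduction via the matrix identity $\widehat{\mathcal{E}}=\mathbf{M}_{\widehat{\mathbf{F}}}(\mathbf{F}\pmb{\Lambda}^{\top}+\mathcal{E})$ and the decomposition $\mathbf{F}-\widehat{\mathbf{F}}\mathbf{H}^{-1}=-\mathbf{W}_1-\mathbf{W}_2-\mathbf{W}_3$ rather than aggregating residuals first, but both routes land on the same $(1-a_i)(1-a_j)$-weighted kernel form.

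Two corrections. First, the CLT step in the paper (Lemmas~\ref{Lemma4}--\ref{Lemma6}) does not go directly from the projection decomposition to a martingale CLT: it inserts an explicit $m$-dependent truncation $\widetilde{J}_T$ and then builds the martingale $M_T=\sum_t H_t\sum_{s<t}K_{ts}H_s$ with $H_t=h_t-E[h_t\,|\,\mathscr{E}_{t-1}]$; controlling the gap between $J_T$, $\widetilde{J}_T$, and $M_T$ is the substance of Lemma~\ref{Lemma4}, and skipping it would leave the conditional-variance convergence unproved. Second, you misplace Assumption~\ref{Assumption4}(c): that condition involves $\mathbf{f}_t$ and is used only in the \emph{remainder} analysis, feeding Lemma~\ref{Lemma1}(c) and hence the bound on $\|\frac{1}{TN}\mathcal{E}\mathcal{E}^{\top}\mathbf{F}\|$ in Lemma~\ref{Lemma2}(e); the CLT for $Q_T$ itself (Lemmas~\ref{Lemma5}--\ref{Lemma6}) uses only Assumption~\ref{Assumption4}(a)--(b), and the fourth-moment control there comes from $\lambda_4^{\varepsilon}(m)=O(m^{-\alpha})$ rather than from any cumulant condition tied to~(c).
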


As indicated in Theorem \ref{theorem1}, in order to achieve an
asymptotically nondegenerate normal distribution, the long-run variance $%
\sigma _{\varepsilon ,a}^{2}$ cannot be 0. This immediately rules out the
following special case: 
\begin{equation*}
\mathbf{x}_{t}=\mathbf{1}_{N}f_{t}+\pmb{\varepsilon}_{t},
\end{equation*}%
where $f_{t}$ is a common trend function with loadings given identically by
1. In this case, we have $1-a_{i}=0$ for all $i\in \lbrack N]$. Notably, the
term $1-a_{i}$ is equivalent to $B_{i}$ of \citet[p. 307]{fu2023testing},
which also suffers the issue raised here. The reason of having a structure
as in $1-a_{i}$ is due to the use of projection matrix involved in the PCA.
Our test has the usual asymptotic normal distribution when one has
idiosyncratic trends as follows 
\begin{equation*}
\mathbf{x}_{t}=\mathbf{\Lambda }f_{t}+\pmb{\varepsilon}_{t},\text{ }
\end{equation*}%
or equivalently in scalar notations: $x_{it}=\lambda _{i}f_{t}+\varepsilon
_{it},$ where $\lambda _{i}$ is not a constant across $i.$ Interestingly,
the common trend issue has been addressed in another literature. Various
tests have been proposed to test for common trend in the literature; see
e.g., \cite{zhang2012testing} and \cite{WSX2022}. As discussed in \cite%
{zhang2012testing}, high-dimensional time series rarely share a common trend
in the real world and idiosyncratic trending behavior is a norm rather than
the exception. For the factor model, if one wishes, one can first eliminate
the possibility of having a common time trend in the dataset, say, by using
the test of \cite{WSX2022}. But this is subject to the usual pretesting
issue. For this reason we do not recommend testing the presence of common
trends before using our test.

Building on Theorem \ref{theorem1}, we investigate the case when the number
of factors is over-specified. That said, suppose that when conducting PCA, $%
\widetilde{r}\ (\geq r)$ is specified as the number of factors. Then the
eigenvalue problem in (\ref{Eig1}) becomes 
\begin{equation*}
\widetilde{\mathbf{F}}\widetilde{\mathbf{V}}=\mathbf{X}\mathbf{X}^{\top }%
\widetilde{\mathbf{F}},
\end{equation*}%
where $\widetilde{\mathbf{V}}$ is an $\widetilde{r}\times \widetilde{r}$
diagonal matrix including the first $\widetilde{r}$ largest eigenvalues of $%
\mathbf{X}\mathbf{X}^{\top }$ on the main diagonal, and $\frac{1}{T}%
\widetilde{\mathbf{F}}^{\top }\widetilde{\mathbf{F}}=\mathbf{I}_{\widetilde{r%
}}$. Accordingly, we have 
\begin{equation*}
\widetilde{\mathcal{E}}=\mathbf{X}-\widetilde{\mathbf{F}}\widetilde{%
\pmb{\Lambda}}^{\top }=\mathbf{M}_{\widetilde{\mathbf{F}}}\mathbf{X},
\end{equation*}%
where $\widetilde{\pmb{\Lambda}}=\frac{1}{T}\mathbf{X}^{\top }\widetilde{%
\mathbf{F}}$ and $\widetilde{\mathcal{E}}=(\widetilde{\mathcal{E}}%
_{1},\ldots ,\widetilde{\mathcal{E}}_{N})=\{\widetilde{\varepsilon }%
_{it}\}_{T\times N}$. With a bit abuse of notation, we continue to use $%
L_{NT}$ to denote the test statistic: 
\begin{equation}
L_{NT}=\frac{1}{T^{2}N^{2}}\sum_{i,j=1}^{N}\widetilde{\mathcal{E}}_{i}^{\top
}\,\mathbf{K}_{h}\,\widetilde{\mathcal{E}}_{j}.  \label{Eq2.5}
\end{equation}

We summarize the asymptotic property of $L_{NT}$ under the null in the
following theorem.

\begin{theorem}
\label{theorem2} Let Assumptions \ref{Assumption1}--\ref{Assumption4} hold
and $L_{NT}$ be defined in (\ref{Eq2.5}) with $\widetilde{r}\geq r$ being
fixed. Then as $(N,T)\rightarrow \infty $, 
\begin{equation*}
TN\sqrt{h}[L_{NT}-(TNh)^{-1}\sigma _{\varepsilon ,a}^{2}]\rightarrow
_{D}N(0,2\nu _{0}\sigma _{\varepsilon ,a}^{4})\text{ under }\mathbb{H}_{0}.
\end{equation*}
\end{theorem}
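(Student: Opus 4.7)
The strategy is to reduce Theorem \ref{theorem2} to Theorem \ref{theorem1} by writing the over-specified statistic as the correctly-specified one plus a remainder, and then showing that the remainder is $o_P((TN\sqrt{h})^{-1})$. Since eigenvectors of $\mathbf{X}\mathbf{X}^{\top}$ can be taken mutually orthogonal, I arrange $\widetilde{\mathbf{F}} = (\widehat{\mathbf{F}},\widetilde{\mathbf{F}}_{+})$, where $\widehat{\mathbf{F}}$ collects the first $r$ eigenvectors (identical to those used in Theorem \ref{theorem1}) and $\widetilde{\mathbf{F}}_{+}$ contains the remaining $\widetilde{r}-r$ columns. The normalization $\frac{1}{T}\widetilde{\mathbf{F}}^{\top}\widetilde{\mathbf{F}} = \mathbf{I}_{\widetilde{r}}$ combined with this orthogonality yields $\mathbf{P}_{\widetilde{\mathbf{F}}} = \mathbf{P}_{\widehat{\mathbf{F}}} + \mathbf{P}_{\widetilde{\mathbf{F}}_{+}}$, and hence $\widetilde{\mathcal{E}} = \widehat{\mathcal{E}} - \mathbf{P}_{\widetilde{\mathbf{F}}_{+}}\mathbf{X}$.

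Writing $L_{NT}$ as a quadratic form in $\widetilde{\mathcal{E}}\mathbf{1}_N$ and expanding gives
\begin{equation*}
L_{NT} = L_{NT}^{(r)} - \frac{2}{T^{2}N^{2}}(\widehat{\mathcal{E}}\mathbf{1}_N)^{\top}\mathbf{K}_h(\mathbf{P}_{\widetilde{\mathbf{F}}_{+}}\mathbf{X}\mathbf{1}_N) + \frac{1}{T^{2}N^{2}}(\mathbf{P}_{\widetilde{\mathbf{F}}_{+}}\mathbf{X}\mathbf{1}_N)^{\top}\mathbf{K}_h(\mathbf{P}_{\widetilde{\mathbf{F}}_{+}}\mathbf{X}\mathbf{1}_N),
\end{equation*}
where $L_{NT}^{(r)}$ is the statistic of \eqref{Eq2.3} analyzed in Theorem \ref{theorem1}, whose scaled version is asymptotically normal. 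It thus remains to bound the cross and quadratic remainder terms. Decomposing $\mathbf{X}\mathbf{1}_N = \mathbf{F}\pmb{\Lambda}^{\top}\mathbf{1}_N + \mathcal{E}\mathbf{1}_N$ under $\mathbb{H}_0$, this is achieved via (i) the spectral bound $\lambda_{r+1}(\mathbf{X}\mathbf{X}^{\top}) = O_P(N\vee T)$, which follows from $\|\mathcal{E}\|_2^{2} = O_P(N\vee T)$ under Assumptions \ref{Assumption3}--\ref{Assumption4} and Weyl's inequality, (ii) the resulting near-orthogonality $T^{-1}\|\widetilde{\mathbf{F}}_{+}^{\top}\mathbf{F}\|^{2} = O_P((N\vee T)/(NT))$, and (iii) the rank bound $\|\widetilde{\mathbf{F}}_{+}^{\top}\mathcal{E}\|^{2} \leq (\widetilde{r}-r)\lambda_{r+1}(\mathbf{X}\mathbf{X}^{\top})$. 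Combining these with standard bounds for $\|\mathbf{K}_h\|_2$ and $\|\mathcal{E}\mathbf{1}_N\|^2=O_P(NT)$ reduces both remainder terms to $o_P((TN\sqrt{h})^{-1})$, after which Theorem \ref{theorem1} closes the argument.

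The main obstacle is producing sharp probabilistic control on $\widetilde{\mathbf{F}}_{+}$: unlike the leading $r$ eigenvectors, which enjoy an $NT$-order eigenvalue gap that makes a Davis--Kahan perturbation argument available, the spurious eigenvectors correspond to eigenvalues that need not be mutually separated and are driven entirely by the error component. The argument must therefore proceed indirectly, relying only on the orthogonality $\widehat{\mathbf{F}}^{\top}\widetilde{\mathbf{F}}_{+} = \mathbf{0}$ built in by construction and on the uniform eigenvalue bound through $\|\mathcal{E}\|_2^{2}$. The bandwidth restrictions $hT^{2}/N^{2}\to 0$ and $hN^{2}/T^{2}\to 0$ in Assumption \ref{Assumption1}(b) are precisely what make these crude bounds tight enough so that $TN\sqrt{h}$ times the remainder is $o_P(1)$; without them, over-specification would inflate the test statistic at a rate comparable to its standard deviation.
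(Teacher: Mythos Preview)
Your decomposition $L_{NT}=L_{NT}^{(r)}-2R_1+R_2$ is valid, but the bounds you propose for the remainder are not sharp enough. Claim (iii) is already problematic: since the columns of $\widetilde{\mathbf{F}}_{+}$ have norm $\sqrt{T}$ under the paper's normalization, the eigenvector identity gives $\|\widetilde{\mathbf{F}}_{+}^{\top}\mathbf{X}\|^{2}=T\sum_{j=r+1}^{\widetilde{r}}\lambda_{j}(\mathbf{X}\mathbf{X}^{\top})\le T(\widetilde{r}-r)\lambda_{r+1}$, with an extra factor $T$, and passing from $\mathbf{X}$ to $\mathcal{E}$ costs an additional $\|\widetilde{\mathbf{F}}_{+}^{\top}\mathbf{F}\pmb{\Lambda}^{\top}\|^{2}$ term of the same order. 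More fundamentally, even granting (i) and your version of (iii), combining them with $\|\mathbf{K}_h\|_2=O(T)$ and Cauchy--Schwarz yields only
\[
|R_2|\le \frac{\|\mathbf{K}_h\|_2}{T^{2}N^{2}}\,\|\mathbf{P}_{\widetilde{\mathbf{F}}_{+}}\mathbf{X}\mathbf{1}_N\|^{2}
\le \frac{T}{T^{2}N^{2}}\cdot N\lambda_{r+1}(\mathbf{X}\mathbf{X}^{\top})
=O_P\!\Big(\frac{N\vee T}{TN}\Big),
\]
so that $TN\sqrt{h}\,|R_2|=O_P\big((N\vee T)\sqrt{h}\big)$. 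Assumption \ref{Assumption1}(b) implies only $\sqrt{h}\,(N\vee T)/(N\wedge T)\to 0$, not $(N\vee T)\sqrt{h}\to 0$: take $N=T$ and $h=T^{-1/2}$, which satisfies all of Assumption \ref{Assumption1}(b), yet $(N\vee T)\sqrt{h}=T^{3/4}\to\infty$. Bounding the $\mathcal{E}\mathbf{1}_N$ piece via $\|\mathcal{E}\mathbf{1}_N\|^{2}=O_P(NT)$ gives the same failure, producing a scaled remainder of order $T\sqrt{h}$. Your closing paragraph asserts the bandwidth conditions ``are precisely what make these crude bounds tight enough,'' but the calculation above shows they are not.

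The paper avoids this loss by \emph{not} peeling off $\mathbf{P}_{\widetilde{\mathbf{F}}_{+}}$ at the level of $\mathbf{X}\mathbf{1}_N$. Instead it redoes the Theorem \ref{theorem1} expansion with $\widetilde{\mathbf{F}}$ in place of $\widehat{\mathbf{F}}$, using a rectangular rotation $\widetilde{\mathbf{H}}^{+}=\frac{1}{TN}\widetilde{\mathbf{V}}\mathbf{W}_{TN}^{\top}(\mathbf{W}_{TN}\mathbf{W}_{TN}^{\top})^{-1}$ (with $\mathbf{W}_{TN}=\frac{1}{TN}\pmb{\Lambda}^{\top}\pmb{\Lambda}\mathbf{F}^{\top}\widetilde{\mathbf{F}}$) to first reduce $L_{NT}$ to the $(1-a_i)$--weighted quadratic form $\frac{1}{T^{2}N^{2}}\sum_{i,j}(1-a_i)(1-a_j)\mathcal{E}_i^{\top}\mathbf{M}_{\widetilde{\mathbf{F}}}\mathbf{K}_h\mathbf{M}_{\widetilde{\mathbf{F}}}\mathcal{E}_j$. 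Only then are the extra columns $\ddot{\mathbf{F}}$ handled, via an invariant--subspace perturbation lemma (Lemma \ref{Lemma7}) that approximates $\ddot{\mathbf{F}}$ by columns $\ddot{\mathbf{F}}^{\ast}$ lying exactly in $\mathrm{span}(\mathbf{F})^{\perp}$, with error $O_P(\sqrt{T/(T\wedge N)})$. The relevant inner products then involve $\sum_i(1-a_i)\mathcal{E}_i$ against a \emph{fixed low--dimensional} reference, for which Lemma \ref{Lemma2}(f),(g) supply the needed $O_P(\sqrt{TN})$ rates. Your purely spectral route, which never generates the $(1-a_i)$ weights and never pins $\widetilde{\mathbf{F}}_{+}$ to a specific low--dimensional target, cannot recover that missing factor.
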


Theorem \ref{theorem2} infers that when the number of factors is
over-specified, we are still able to achieve the same limit null
distribution as in Theorem \ref{theorem1}.

\medskip

To implement our test, one needs to estimate the long-run variance $\sigma
_{\varepsilon ,a}^{2}$. Define a panel heteroskedasticity and
autocorrelation consistent (HAC) variance estimator by 
\begin{equation}
\widehat{\sigma }_{\varepsilon ,a}^{2}=\sum_{k=-l}^{l}\widehat{\sigma }
_{\varepsilon ,a,k}^{2}a(k/l),  \label{Eq2.4}
\end{equation}
where $\widehat{\sigma }_{\varepsilon ,a,k}^{2} \coloneqq \frac{1}{T}
\sum_{t=1}^{T-|k|} \widehat{\overline{\varepsilon }}_{t}\widehat{\overline{
\varepsilon }} _{t+|k|}$, $\widehat{\overline{\varepsilon }}_{t}\coloneqq%
\frac{1}{ \sqrt{N}} \sum_{i=1}^{N}\widetilde{\varepsilon }_{it}$, $a(\cdot )$
is a kernel function, and $l\coloneqq l_{T}$ is a bandwidth parameter for
the HAC estimation. We further impose the following assumption.

\begin{assumption}
\label{Assumption5}

\noindent (a) Suppose that $a(\cdot )$ is a symmetric positive kernel
function that is Lipschitz continuous on $[-1,1]$. For $q\in \{1,2\}$, $%
\lim_{|x|\rightarrow 0}\frac{1-a(x)}{|x|^{q}}=\bar{c}_{q}$ for some constant 
$\bar{c}_{q}\in (0,\infty )$.

\smallskip

\noindent (b) As $\left(N,T\right) \rightarrow \infty ,$ $1/l+l/T\rightarrow
0$, $Tl/N^{2}\rightarrow 0$ and $lN^{2}/T^{3}\rightarrow 0$.
\end{assumption}

Assumption \ref{Assumption5} includes a set of typical conditions for the
kernel function $a(\cdot)$ and the bandwidth $l$, and we omit the discussion and refer interested readers
to \cite{andrews1991heteroskedasticity} for details. Then the following
result holds.

\begin{proposition}
\label{Coro1} Let Assumptions \ref{Assumption1}--\ref{Assumption5} hold.
Under the null hypothesis in \eqref{Eq2.2}, for any fixed $\widetilde{r}\geq
r$, 
\begin{equation*}
\widehat{\sigma }_{\varepsilon ,a}^{2}=\sigma _{\varepsilon
,a}^{2}+O_{P}(l^{-q}+\sqrt{l/T}).
\end{equation*}%
In addition, if $l\simeq T^{1/(2q+1)}$ and $hT^{2q/(2q+1)}\rightarrow \infty 
$, then 
\begin{equation}
\widehat{L}_{T}=\frac{TN\sqrt{h}[L_{NT}-(TNh)^{-1}\widehat{\sigma }%
_{\varepsilon ,a}^{2}]}{\sqrt{2\nu _{0}}\widehat{\sigma }_{\varepsilon
,a}^{2}}\rightarrow _{D}N(0,1).  \label{L_T_hat}
\end{equation}
\end{proposition}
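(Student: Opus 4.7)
The plan is to first establish the stated rate for the HAC estimator $\widehat{\sigma}_{\varepsilon,a}^{2}$ and then feed it into Theorem \ref{theorem2} via Slutsky's theorem. To that end I would introduce the infeasible HAC estimator
$$\widetilde{\sigma}_{\varepsilon,a}^{2}\coloneqq\sum_{k=-l}^{l}\widetilde{\sigma}_{\varepsilon,a,k}^{2}\,a(k/l),\qquad \widetilde{\sigma}_{\varepsilon,a,k}^{2}\coloneqq\frac{1}{T}\sum_{t=1}^{T-|k|}\overline{\varepsilon}_{t,a}\,\overline{\varepsilon}_{t+|k|,a},$$
built from the idealized aggregate $\overline{\varepsilon}_{t,a}$ used in the definition of $\sigma_{\varepsilon,a}^{2}$, and decompose $\widehat{\sigma}_{\varepsilon,a}^{2}-\sigma_{\varepsilon,a}^{2}=(\widehat{\sigma}_{\varepsilon,a}^{2}-\widetilde{\sigma}_{\varepsilon,a}^{2})+(\widetilde{\sigma}_{\varepsilon,a}^{2}-\sigma_{\varepsilon,a}^{2})$.

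For the second summand, classical HAC arguments (\citealp{andrews1991heteroskedasticity}) applied to the stationary mean-zero series $\{\overline{\varepsilon}_{t,a}\}$ give bias of order $O(l^{-q})$ from the behaviour of $1-a(\cdot)$ near zero in Assumption \ref{Assumption5}(a) and the summability of autocovariances implied by Assumption \ref{Assumption4}(b) (the decay rate $\alpha>3$ is more than enough), and variance of order $O(l/T)$; together these give $\widetilde{\sigma}_{\varepsilon,a}^{2}=\sigma_{\varepsilon,a}^{2}+O_{P}(l^{-q}+\sqrt{l/T})$. For the first summand I would use the key identity $\widetilde{\mathcal{E}}\mathbf{1}_{N}=\mathbf{M}_{\widetilde{\mathbf{F}}}(N\mathbf{F}\overline{\pmb{\lambda}}+\mathcal{E}\mathbf{1}_{N})$ valid under $\mathbb{H}_{0}$, and apply the rotational expansion $\widetilde{\mathbf{F}}=\mathbf{F}\mathbf{H}+(\text{error})$ inherited from the proof of Theorem \ref{theorem2} (which already handles $\widetilde{r}\geq r$); the resulting expansion shows
$$\widehat{\overline{\varepsilon}}_{t}=\overline{\varepsilon}_{t,a}+O_{P}\big((N\wedge T)^{-1/2}\big),$$
with the factor $1-a_{i}$ arising precisely from the projection of $\pmb{\Lambda}^{\top}\mathbf{1}_{N}=N\overline{\pmb{\lambda}}$ onto $(\pmb{\Lambda}^{\top}\pmb{\Lambda}/N)^{-1}\pmb{\lambda}_{i}$. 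Substituting this into $\widehat{\sigma}_{\varepsilon,a,k}^{2}$, bounding the cross terms via Cauchy-Schwarz using the moment bounds on $\overline{\varepsilon}_{t,a}$ implied by Assumption \ref{Assumption4}, and summing over the $O(l)$ lags with the rate conditions in Assumption \ref{Assumption5}(b) shows that $\widehat{\sigma}_{\varepsilon,a}^{2}-\widetilde{\sigma}_{\varepsilon,a}^{2}$ is of smaller order than $l^{-q}+\sqrt{l/T}$, which combines with the previous step to give the first claim of the proposition.

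For \eqref{L_T_hat}, I would write
$$\widehat{L}_{T}=\frac{TN\sqrt{h}[L_{NT}-(TNh)^{-1}\sigma_{\varepsilon,a}^{2}]}{\sqrt{2\nu_{0}}\,\widehat{\sigma}_{\varepsilon,a}^{2}}-\frac{\widehat{\sigma}_{\varepsilon,a}^{2}-\sigma_{\varepsilon,a}^{2}}{\sqrt{h}\,\sqrt{2\nu_{0}}\,\widehat{\sigma}_{\varepsilon,a}^{2}}.$$
By Theorem \ref{theorem2} the numerator of the first fraction converges in distribution to $N(0,2\nu_{0}\sigma_{\varepsilon,a}^{4})$, and the consistency just established gives $\widehat{\sigma}_{\varepsilon,a}^{2}\rightarrow_{P}\sigma_{\varepsilon,a}^{2}$, so Slutsky delivers the first term $\rightarrow_{D}N(0,1)$. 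The second term is $O_{P}(h^{-1/2}(l^{-q}+\sqrt{l/T}))$; with $l\simeq T^{1/(2q+1)}$ the quantities $h\,l^{2q}$ and $hT/l$ are both of the order $hT^{2q/(2q+1)}$, so the assumption $hT^{2q/(2q+1)}\to\infty$ forces the second term to be $o_{P}(1)$, which completes the proof. The hardest step is the feasible-to-infeasible replacement for $\widehat{\sigma}_{\varepsilon,a}^{2}$: pointwise the PCA residual aggregate differs from $\overline{\varepsilon}_{t,a}$ by only $O_{P}((N\wedge T)^{-1/2})$, but one must show that after aggregating over $O(l)$ lags the replacement error remains of smaller order than the target rate $l^{-q}+\sqrt{l/T}$, for which the rate conditions in Assumption \ref{Assumption5}(b) are essential.
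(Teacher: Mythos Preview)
Your overall strategy is correct and matches the paper's: decompose into (i) feasible-to-infeasible replacement error and (ii) the bias/variance of the infeasible HAC built from $\overline{\varepsilon}_{t,a}$, then feed the consistency into Theorem~\ref{theorem2} via Slutsky. The second part of your argument (the display for $\widehat{L}_T$ and the rate check $h^{-1/2}(l^{-q}+\sqrt{l/T})=o(1)$ under $l\simeq T^{1/(2q+1)}$, $hT^{2q/(2q+1)}\to\infty$) is exactly what the paper does.

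The one genuine difference is in how you execute step (i). You work pointwise, expanding $\widehat{\overline{\varepsilon}}_t=\overline{\varepsilon}_{t,a}+O_P((N\wedge T)^{-1/2})$ and then summing over $O(l)$ lags with Cauchy--Schwarz. The paper instead observes that $\widehat{\sigma}_{\varepsilon,a}^{2}=\frac{1}{TN}\sum_{i,j}\widetilde{\mathcal E}_i^{\top}\mathbf{A}_l\,\widetilde{\mathcal E}_j$ with $\mathbf{A}_l=\{a((t-s)/l)\}_{T\times T}$ has \emph{exactly} the same quadratic-form structure as $L_{NT}$, so the entire decomposition of Theorems~\ref{theorem1}--\ref{theorem2} (and Lemma~\ref{Lemma6} for the fluctuation term) applies verbatim with $Th$ replaced by $l$. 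This recycling immediately gives $\widehat{\sigma}_{\varepsilon,a}^{2}=\frac{1}{TN}\sum_{i,j,t,s}(1-a_i)(1-a_j)\varepsilon_{it}\varepsilon_{js}a((t-s)/l)+o_P(\sqrt{l/T})$ without any separate pointwise or uniform-in-$t$ control of $\widehat{\overline{\varepsilon}}_t$. Your route is workable, but the pointwise bound you quote is not uniform in $t$ for free, and a naive accumulation over $O(l)$ lags gives $O_P(l/(N\wedge T)^{1/2})$, which is not obviously $o_P(\sqrt{l/T})$ under Assumption~\ref{Assumption5}(b) alone; you would need the finer structure of the remainder (e.g., that the dominant error is itself a centred sum with its own CLT-type scaling) rather than crude Cauchy--Schwarz. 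The paper's quadratic-form shortcut sidesteps this entirely, and also lets Lemma~\ref{Lemma6} deliver the $O_P(\sqrt{l/T})$ variance term directly rather than appealing to Andrews-type arguments that would need to be re-verified under the physical-dependence framework of Assumption~\ref{Assumption4}.
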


Proposition \ref{Coro1} infers that $\widehat{L}_{T}$ is asymptotically
pivotal under the null. By the first part of Proposition \ref{Coro1}, $%
l\simeq T^{1/(2q+1)}$ corresponds to the optimal rate of bandwidth in terms
of minimizing the asymptotic mean squared error of $\widehat{\sigma }%
_{\varepsilon ,a}^{2}.$

\subsection{Asymptotic Local Power}\label{Sec3.3}

To study the asymptotic local power property of our test, we consider a
sequence of local alternatives: 
\begin{equation}
\mathbb{H}_{1}(a_{TN}):\pmb{\lambda}_{it}=\pmb{\lambda}_{i}+a_{TN}\mathbf{g}%
_{it}\ \ \text{for all}\ \ i\in \lbrack N],  \label{Eq2.8}
\end{equation}%
where $a_{TN}\rightarrow 0$, $\mathbf{g}_{it}=\mathbf{g}_{i}(t/T)$, and $%
\int_{0}^{1}\mathbf{g}_{i}(\tau )\mathrm{d}\tau =\mathbf{0}_{r}$ for the
purpose of normalization. The term $a_{TN}\mathbf{g}_{it}$ characterizes the
departure of $\pmb{\lambda}_{it}$ from the constant $\pmb{\lambda}_{i}$, and
our setup allows for both abrupt and smooth structural changes. For example,
if we have only one break at $T_{1}$ with $\tau _{1}=T_{1}/T:$%
\begin{equation*}
\mathbf{\lambda }_{it}=\left\{ 
\begin{array}{cc}
\mathbf{\lambda }_{i\left( 1\right) } & \text{ for }t<T_{1} \\ 
\mathbf{\lambda }_{i\left( 2\right) } & \text{for }t\geq T_{1}%
\end{array}%
\right. ,
\end{equation*}%
then we can define 
\begin{eqnarray*}
\mathbf{g}_{i} ( \tau ) &=&\left\{ 
\begin{array}{cc}
\mathbf{g}_{i\left( 1\right) } & \text{ for }\tau <T_{1}/T \\ 
\mathbf{g}_{i\left( 2\right) } & \text{for }\tau \geq T_{1}/T%
\end{array}%
\right. 
\end{eqnarray*}
with $\mathbf{g}_{i\left( 1\right) } =-\frac{\mathbf{\lambda }%
_{i\left( 2\right) }-\mathbf{\lambda }_{i\left( 1\right) }}{a_{TN}}\tau _{2}$,
$\mathbf{g}_{i\left( 2\right) }=\frac{\mathbf{\lambda }_{i\left(
2\right) }-\mathbf{\lambda }_{i\left( 1\right) }}{a_{TN}}\tau _{1}$ and $\tau _{2}=1-\tau _{1}$ to ensure $\int_{0}^{1}\mathbf{g}_{i}(\tau )\mathrm{d}\tau =\tau _{1}\mathbf{%
g}_{i\left( 1\right) }+\tau _{2}\mathbf{g}_{i\left( 2\right) }=0.$ Here, $%
a_{TN}$ controls the speed at which $\mathbf{\lambda }_{i\left( 2\right) }-%
\mathbf{\lambda }_{i\left( 1\right) }$ shrinks to 0.

The following theorem indicates that our test statistic can detect a class
of Pitman local alternatives at a departure rate of $(TN)^{-1/2}h^{-1/4}$.
In addition, our test is robust to the over-specified number of factors
under the local alternatives.

\begin{theorem}
\label{theorem3} Suppose that Assumptions \ref{Assumption1}--\ref%
{Assumption4} hold, and $Th^{3/2}\rightarrow \infty $. Then under $\mathbb{H}%
_{1}(a_{TN})$ with $a_{TN}=(TN)^{-1/2}h^{-1/4}$ and for any fixed $%
\widetilde{r}\geq r$, we have 
\begin{equation*}
TN\sqrt{h}[L_{NT}-(TNh)^{-1}\sigma _{\varepsilon ,a}^{2}]\rightarrow
_{D}N(\mu _{1},2\nu _{0}\sigma _{\varepsilon ,a}^{4}),
\end{equation*}%
where $\mu _{1}=\int_{-1}^{1}K(\tau )\mathrm{d}\tau \int_{0}^{1}[\overline{%
\mathbf{g}}_{a}(\tau )^{\top }E(\mathbf{f}_{t})]^{2}\mathrm{d}\tau $, and $%
\overline{\mathbf{g}}_{a}(\tau )=\lim_{N\rightarrow \infty }\frac{1}{N}%
\sum_{i=1}^{N}(1-a_{i})\mathbf{g}_{i}(\tau )$.
\end{theorem}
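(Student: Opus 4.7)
The plan is to decompose $L_{NT}$ under $\mathbb{H}_{1}(a_{TN})$ into a null-like component, a pure signal component that generates $\mu_{1}$, and asymptotically negligible cross and remainder components, then to invoke Theorem~\ref{theorem2} on the first piece and a direct kernel-smoothing calculation on the second. First, I would note that under $\mathbb{H}_{1}(a_{TN})$ one has $\mathbf{X}=\mathbf{F}\pmb{\Lambda}^{\top}+a_{TN}\mathbf{Z}+\mathcal{E}$, where $\mathbf{Z}\in\mathbb{R}^{T\times N}$ has $(t,i)$-th entry $\mathbf{g}_{i}(t/T)^{\top}\mathbf{f}_{t}$. Since $a_{TN}\to 0$ and $\max_{i,t}\Vert\mathbf{g}_{it}\Vert$ is bounded, the perturbation $a_{TN}\mathbf{Z}$ is spectrally dominated by $\mathbf{F}\pmb{\Lambda}^{\top}$, so $\widetilde{\mathbf{F}}$ retains the same asymptotic expansion (rotation matrix and remainder rates) as established in the proofs of Theorems~\ref{theorem1}--\ref{theorem2}. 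Using the identity $\sum_{i,j}\widetilde{\mathcal{E}}_{i}^{\top}\mathbf{K}_{h}\widetilde{\mathcal{E}}_{j}=(\mathbf{X}\mathbf{1}_{N})^{\top}\mathbf{M}_{\widetilde{\mathbf{F}}}\mathbf{K}_{h}\mathbf{M}_{\widetilde{\mathbf{F}}}(\mathbf{X}\mathbf{1}_{N})$, one rewrites $L_{NT}=T^{-2}\bar{\mathbf{x}}^{\top}\mathbf{M}_{\widetilde{\mathbf{F}}}\mathbf{K}_{h}\mathbf{M}_{\widetilde{\mathbf{F}}}\bar{\mathbf{x}}$ with $\bar{\mathbf{x}}=N^{-1}\mathbf{X}\mathbf{1}_{N}=\mathbf{F}\bar{\pmb{\lambda}}+a_{TN}\mathbf{h}_{g}+\bar{\pmb{\varepsilon}}$, where $(\mathbf{h}_{g})_{t}=N^{-1}\sum_{i}\mathbf{g}_{i}(t/T)^{\top}\mathbf{f}_{t}$ and $\bar{\pmb{\varepsilon}}_{t}=N^{-1}\sum_{i}\varepsilon_{it}$. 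Expanding the quadratic form produces $L_{NT}=L_{NT}^{(\varepsilon)}+2a_{TN}L_{NT}^{(\varepsilon g)}+a_{TN}^{2}L_{NT}^{(g)}+R_{NT}$, with the three leading pieces being the pure $\bar{\pmb{\varepsilon}}$, cross, and pure $\mathbf{h}_{g}$ sandwiches of $\mathbf{M}_{\widetilde{\mathbf{F}}}\mathbf{K}_{h}\mathbf{M}_{\widetilde{\mathbf{F}}}$, and $R_{NT}$ collecting all contributions from $\mathbf{F}\bar{\pmb{\lambda}}$.

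The term $L_{NT}^{(\varepsilon)}$ is precisely the null version of the test statistic computed with the same (mildly perturbed) $\widetilde{\mathbf{F}}$, so Theorem~\ref{theorem2} delivers $TN\sqrt{h}[L_{NT}^{(\varepsilon)}-(TNh)^{-1}\sigma_{\varepsilon,a}^{2}]\to_{D}N(0,2\nu_{0}\sigma_{\varepsilon,a}^{4})$, the extra $a_{TN}\mathbf{Z}$ input to PCA contributing only $o_{P}((TN\sqrt{h})^{-1})$. For the signal piece, the calibration $a_{TN}=(TN)^{-1/2}h^{-1/4}$ yields $TN\sqrt{h}\cdot a_{TN}^{2}=1$, reducing the goal to
\[
\frac{1}{T^{2}}\mathbf{h}_{g}^{\top}\mathbf{M}_{\widetilde{\mathbf{F}}}\mathbf{K}_{h}\mathbf{M}_{\widetilde{\mathbf{F}}}\mathbf{h}_{g}\to_{P}\mu_{1}.
\]
The PCA projection correction on $\mathbf{h}_{g}$, expanded by the same identities used to obtain the $(1-a_{i})$ structure of $\sigma_{\varepsilon,a}^{2}$ in Theorem~\ref{theorem1}, introduces the leverage factor $1-a_{i}$ inside the cross-sectional average, so that $(\mathbf{M}_{\widetilde{\mathbf{F}}}\mathbf{h}_{g})_{t}$ behaves asymptotically like $\overline{\mathbf{g}}_{a}(t/T)^{\top}\mathbf{f}_{t}$. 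A Riemann-sum/kernel-smoothing argument using $h\to 0$, $Th\to\infty$, and the stationarity and ergodicity of $\{\mathbf{f}_{t}\}$ then gives
\[
\frac{1}{T^{2}h}\sum_{t,s=1}^{T}\!\bigl[\overline{\mathbf{g}}_{a}(t/T)^{\top}\mathbf{f}_{t}\bigr]\bigl[\overline{\mathbf{g}}_{a}(s/T)^{\top}\mathbf{f}_{s}\bigr]K\!\left(\tfrac{t-s}{Th}\right)\to_{P}\int_{-1}^{1}\!K(\tau)\mathrm{d}\tau\int_{0}^{1}\!\bigl[\overline{\mathbf{g}}_{a}(\tau)^{\top}E(\mathbf{f}_{t})\bigr]^{2}\mathrm{d}\tau=\mu_{1}.
\]

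The cross term $a_{TN}L_{NT}^{(\varepsilon g)}$ satisfies $E[\cdot\,|\,\mathbf{F}]=0$ by Assumption~\ref{Assumption4}(a), so a conditional second-moment computation combined with $\Vert\mathbf{K}_{h}\Vert_{2}=O(T)$, $\Vert\mathbf{Cov}(\bar{\pmb{\varepsilon}})\Vert_{2}=O(N^{-1})$, and the rate $a_{TN}^{2}=(TN)^{-1}h^{-1/2}$ gives $TN\sqrt{h}\cdot a_{TN}L_{NT}^{(\varepsilon g)}=o_{P}(1)$; the extra condition $Th^{3/2}\to\infty$ is used to absorb the kernel normalization in this variance bound. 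The remainder $R_{NT}$ involving $\mathbf{F}\bar{\pmb{\lambda}}$ is controlled through the bound on $\Vert\mathbf{M}_{\widetilde{\mathbf{F}}}\mathbf{F}\Vert_{F}$ already obtained in the proof of Theorem~\ref{theorem1}, delivering $TN\sqrt{h}\cdot R_{NT}=o_{P}(1)$. Slutsky's theorem combines these three contributions into the stated limit, and robustness to any over-specified $\widetilde{r}\geq r$ follows from the argument used for Theorem~\ref{theorem2} without modification.

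The principal obstacle is the signal-projection step: one must show that applying $\mathbf{M}_{\widetilde{\mathbf{F}}}$ to $\mathbf{h}_{g}$ and averaging across $i$ produces the leverage-weighted limit $\overline{\mathbf{g}}_{a}$ rather than the naive $\overline{\mathbf{g}}$. This forces one to re-run the refined PCA residual expansion of Theorem~\ref{theorem1}, but now treating $a_{TN}\mathbf{g}_{it}^{\top}\mathbf{f}_{t}$ as an additional input on the same footing as $\varepsilon_{it}$, including the downstream bookkeeping in the projection correction. A secondary difficulty is that at the borderline rate $a_{TN}=(TN)^{-1/2}h^{-1/4}$ there is little slack, so the variance control of $L_{NT}^{(\varepsilon g)}$ and $R_{NT}$ must be sharp; this is precisely where the strengthened bandwidth condition $Th^{3/2}\to\infty$ is invoked.
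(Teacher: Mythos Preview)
Your overall plan and your ``principal obstacle'' paragraph point in the right direction, but the specific decomposition you propose has a genuine error that would make the argument collapse.

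The claim $TN\sqrt{h}\cdot R_{NT}=o_{P}(1)$ is false, and correspondingly $L_{NT}^{(\varepsilon)}$ is \emph{not} the null test statistic. Under $\mathbb{H}_{0}$ the statistic is $T^{-2}(\mathbf{F}\bar{\pmb{\lambda}}+\bar{\pmb{\varepsilon}})^{\top}\mathbf{M}_{\widetilde{\mathbf{F}}}\mathbf{K}_{h}\mathbf{M}_{\widetilde{\mathbf{F}}}(\mathbf{F}\bar{\pmb{\lambda}}+\bar{\pmb{\varepsilon}})$, not the pure $\bar{\pmb{\varepsilon}}$ sandwich. In the proof of Theorem~\ref{theorem1} the $(1-a_{i})$ weight appearing in $\sigma_{\varepsilon,a}^{2}$ arises precisely from the $\mathbf{F}\bar{\pmb{\lambda}}$ contributions: the leading term of $\mathbf{M}_{\widehat{\mathbf{F}}}\mathbf{F}\bar{\pmb{\lambda}}$ is $-\mathbf{M}_{\widehat{\mathbf{F}}}\mathbf{W}_{2}\bar{\pmb{\lambda}}=-\tfrac{1}{N}\mathbf{M}_{\widehat{\mathbf{F}}}\sum_{i}a_{i}\mathcal{E}_{i}$, which is of the \emph{same} order as $\mathbf{M}_{\widehat{\mathbf{F}}}\bar{\pmb{\varepsilon}}$; its sandwich and cross with $\bar{\pmb{\varepsilon}}$ are what convert the unweighted sum into the $(1-a_{i})(1-a_{j})$-weighted sum. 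By the same mechanism, the $(1-a_{i})$ weight on the signal does not come from $\mathbf{M}_{\widetilde{\mathbf{F}}}$ acting on $\mathbf{h}_{g}$ alone; it comes from the $\mathbf{F}\bar{\pmb{\lambda}}\times\mathbf{h}_{g}$ cross term that you have placed in $R_{NT}$. So your $R_{NT}$ carries leading-order pieces of both the null variance and the drift $\mu_{1}$.

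The paper fixes this by doing what your final paragraph hints at: it writes $\mathbf{X}=\mathbf{F}\pmb{\Lambda}^{\top}+\mathcal{E}^{\ast}$ with $\mathcal{E}^{\ast}=\mathcal{E}+a_{TN}\mathbf{Z}$ treated as a \emph{single} error, re-derives all the PCA expansion bounds for $\mathcal{E}^{\ast}$ (this is Lemma~\ref{Lemma9}), and thereby obtains directly
\[
L_{NT}=\frac{1}{T^{2}N^{2}}\sum_{i,j=1}^{N}\sum_{t,s=1}^{T}(1-a_{i})(1-a_{j})\,\varepsilon_{it}^{\ast}\varepsilon_{js}^{\ast}\,h^{-1}K_{ts}+o_{P}\!\left((TN\sqrt{h})^{-1}\right).
\]
Only \emph{after} the $(1-a_{i})$ structure is in place does one expand $\varepsilon_{it}^{\ast}\varepsilon_{js}^{\ast}$ into $J_{1}$ (pure noise, handled by Lemma~\ref{Lemma6}), $J_{2}$ (pure signal, producing $\mu_{1}$ via the $E(\mathbf{f}_{t})$ extraction), and $J_{3}$ (cross, where $Th^{3/2}\to\infty$ is used). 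Reorganizing your argument this way --- PCA expansion first, three-way split second --- closes the gap.
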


Theorem \ref{theorem3} indicates that $L_{NT}$ has asymptotically
non-negligible power in detecting local alternatives converging to the null
at the rate $(TN)^{-1/2}h^{-1/4}$ provided $\mu _{1}>0.$ This rate is much
faster than almost all existing tests in the literature by noticing that $%
Nh^{1/2}\rightarrow \infty $ as $\left( N,T\right) \rightarrow \infty $. As
mentioned in the introduction, existing time-series-based tests, such as 
\cite{breitung2011testing}, \cite{chen2014detecting}, \cite{han2015tests}, 
\cite{yamamoto_tanaka2015}, \cite{BKW2021}, and \cite{BDH2024}, have power
in detecting local alternatives at rate $T^{-1/2}.$ Similarly, the
kernel-based smoothing tests of \cite{su2017time} and \cite{su2020testing}
can detect local alternatives at the rate $N^{-1/4}T^{-1/2}h^{-1/4}.$ The
only test that can detect local alternatives converging to the null faster
than ours is the non-smoothing test of \cite{fu2023testing}, which is based
on a discrete Fourier transform and can detect local alternatives at rate $%
(NT)^{-1/2}.$ Again, neither the smoothing kernel test of \cite{su2017time}
and \cite{su2020testing} nor the nonsmoothing test of \cite{fu2023testing}
allows for serial dependence in the error terms.

It is worth mentioning that the result in Theorem \ref{theorem3} holds for
any fixed $\widetilde{r}\geq r.$ That is, the over-specification of the
number of factors does not affect the local power property of our test.
Intuitively, this is due to the fact that the local alternatives converge to
the null at a very fast rate (viz., $(TN)^{-1/2}h^{-1/4}$), and the
over-specified $\widetilde{r}-r$ factors are essentially very weak factors
so that they cannot be identified by the PCA and their effect is still
present through the residuals in the PCA.

Note that the requirement that $\mu _{1}>0$ rules out the case where $E(%
\mathbf{f}_{t})=0$. We believe that this requirement is not restrictive for
practical purpose. For example, in the case of asset pricing (e.g., %
\citealp{lettau2020factors}), the risk factors, say, in the Fama-French
three-factor model (and also the estimated latent pricing factors) usually
do not have zero mean due to the risk premium required by investors. The
inclusion of an intercept may not address the issue of non-zero mean
factors. As a matter of factor, when an idiosyncratic intercept term $\alpha
_{i}$ is included in the regression, it can be absorbed into the vector of
factor loadings which has the associated factor $\mathbf{1}_{T}$. In this
case, the population mean of $\mathbf{f}_{t}$ is surely nonzero.

Below we show that when $E(\mathbf{f}_{t})=0$, our test statistic is able to
detect a class of local alternatives at a departure rate of $%
N^{-1/2}h^{1/4}. $ This rate converges to zero slower than $%
(TN)^{-1/2}h^{-1/4}$, indicating the price we have to pay when all the
factors in $\mathbf{f}_{t}$ have zero mean.

\begin{proposition}
\label{prop2} Suppose that Assumptions \ref{Assumption1}--\ref{Assumption4}
hold with $E(\mathbf{f}_t)=0$ and $E(\mathbf{f}_t\pmb{\varepsilon}%
_{s}^\top)=0$ for any $1\leq t,s \leq T$. Then under $\mathbb{H}_{1}(a_{TN})$
with $a_{TN}=N^{-1/2}h^{1/4}$ and for any fixed $\widehat{r}\geq r$, we have 
\begin{equation*}
TN\sqrt{h}[L_{NT}-(TNh)^{-1}\sigma _{\varepsilon ,a}^{2}]\rightarrow
_{D}N(\mu_{2},2\nu _{0}\sigma _{\varepsilon ,a}^{4}),
\end{equation*}
where $\mu _{2}=\lim_{T\to\infty}\frac{1}{T}\sum_{t,s=1}^{T}E\left(\overline{%
\mathbf{g}}_a(t/T)^\top \mathbf{f}_t\mathbf{f}_s^\top\overline{\mathbf{g}}%
_a(s/T)\right)K\left( \frac{t-s}{Th}\right)$.
\end{proposition}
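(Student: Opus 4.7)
The plan is to parallel the proof of Theorem \ref{theorem3} but carry the expansion one order finer, since the leading drift $\mu_1$ there is proportional to $[E(\mathbf{f}_t)]^2$ and therefore vanishes under the present hypotheses. Under $\mathbb{H}_1(a_{TN})$, I write $\mathbf{X}=\mathbf{F}\pmb{\Lambda}^{\top}+a_{TN}\mathbf{D}+\mathcal{E}$, where $\mathbf{D}\in\mathbb{R}^{T\times N}$ has $(t,i)$-entry $\mathbf{g}_i(t/T)^{\top}\mathbf{f}_t$. Because $a_{TN}=N^{-1/2}h^{1/4}\to 0$, the PCA-consistency arguments underlying Theorem \ref{theorem2} continue to apply: $\widetilde{\mathbf{F}}$ spans the column space of $\mathbf{F}$ (inside $\widetilde r$ directions) up to a rotation and asymptotically negligible perturbations, so $\mathbf{M}_{\widetilde{\mathbf{F}}}\mathbf{F}\pmb{\Lambda}^{\top}$ contributes only at lower order. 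The test statistic then splits as
\begin{equation*}
L_{NT}=L_{NT}^{(1)}+2\,L_{NT}^{(2)}+L_{NT}^{(3)}+o_P((TN\sqrt{h})^{-1}),
\end{equation*}
where $L_{NT}^{(1)}=\tfrac{1}{T^2N^2}\mathbf{1}_N^{\top}\mathcal{E}^{\top}\mathbf{M}_{\widetilde{\mathbf{F}}}\mathbf{K}_h\mathbf{M}_{\widetilde{\mathbf{F}}}\mathcal{E}\mathbf{1}_N$, $L_{NT}^{(2)}=\tfrac{a_{TN}}{T^2N^2}\mathbf{1}_N^{\top}\mathbf{D}^{\top}\mathbf{M}_{\widetilde{\mathbf{F}}}\mathbf{K}_h\mathbf{M}_{\widetilde{\mathbf{F}}}\mathcal{E}\mathbf{1}_N$, and $L_{NT}^{(3)}=\tfrac{a_{TN}^2}{T^2N^2}\mathbf{1}_N^{\top}\mathbf{D}^{\top}\mathbf{M}_{\widetilde{\mathbf{F}}}\mathbf{K}_h\mathbf{M}_{\widetilde{\mathbf{F}}}\mathbf{D}\mathbf{1}_N$.

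The three pieces are treated in turn. The null-like $L_{NT}^{(1)}$ is exactly the quantity treated in Theorem \ref{theorem2}, hence $TN\sqrt{h}\,[L_{NT}^{(1)}-(TNh)^{-1}\sigma_{\varepsilon,a}^2]\rightarrow_{D} N(0,2\nu_0\sigma_{\varepsilon,a}^4)$. For the signal term $L_{NT}^{(3)}$, replacing $\mathbf{M}_{\widetilde{\mathbf{F}}}$ by its asymptotic surrogate produces, as in the derivation of $\sigma_{\varepsilon,a}^2$ in Theorem \ref{theorem1}, the weighting with $1-a_i$; combined with the prescription $a_{TN}^2=h^{1/2}/N$, one checks that
\begin{equation*}
TN\sqrt{h}\,L_{NT}^{(3)}=\frac{1}{T}\sum_{t,s=1}^{T}\overline{\mathbf{g}}_a(t/T)^{\top}\mathbf{f}_t\mathbf{f}_s^{\top}\overline{\mathbf{g}}_a(s/T)\,K\!\left(\frac{t-s}{Th}\right)+o_P(1),
\end{equation*}
and the stationarity and weak dependence of $\{\mathbf{f}_t\}$ imposed in Assumption \ref{Assumption3} push this to the expectation $\mu_2$ in probability via a standard covariance-summability argument. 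The cross term $L_{NT}^{(2)}$ has mean zero by the hypothesis $E(\mathbf{f}_t\pmb{\varepsilon}_s^{\top})=\mathbf{0}$; a second-moment expansion then uses the same independence to factorize $E[(L_{NT}^{(2)})^2]$ into a quadruple sum of $\mathbf{f}$-covariances and $\overline{\varepsilon}_a$-covariances, and after a change of variables in the kernel pair, the summability of both sequences (guaranteed by Assumptions \ref{Assumption3}--\ref{Assumption4}) yields $E[(L_{NT}^{(2)})^2]=O(a_{TN}^2/(T^2Nh))=O((T^2N^2h^{1/2})^{-1})$, so that $TN\sqrt{h}\,L_{NT}^{(2)}=o_P(1)$. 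Slutsky then combines the three pieces into the stated limit.

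The principal obstacle is controlling $L_{NT}^{(2)}$, and it is the exact point at which the extra assumption in Proposition \ref{prop2} becomes indispensable. A back-of-the-envelope bound shows that without $E(\mathbf{f}_t\pmb{\varepsilon}_s^{\top})=\mathbf{0}$, the cross mean is of order $a_{TN}T^{-1}N^{-1/2}h^{-1}$; after multiplication by $TN\sqrt{h}$ and substitution of $a_{TN}=N^{-1/2}h^{1/4}$, this equals $h^{-1/4}\to\infty$, producing a non-vanishing, non-Gaussian drift that would destroy the claimed limit. Imposing the factor-error orthogonality at all leads and lags eliminates exactly this term, and the residual variance calculation is then tight enough. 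A secondary, more routine issue is verifying that the $(1-a_i)$ weighting that defines $\overline{\mathbf{g}}_a$ survives when $\widetilde r>r$; this parallels and reuses the over-specification robustness argument already established in the proof of Theorem \ref{theorem2}.
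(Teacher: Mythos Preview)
Your strategy is essentially the paper's: reduce to the three quadratic forms $J_1$ (error-error), $J_2$ (signal-signal), $J_3$ (cross), show $J_1$ delivers the Gaussian fluctuation via Lemma~\ref{Lemma6}, $J_2$ delivers the drift $\mu_2$, and $J_3=O_P(h^{1/4}/(TN\sqrt{h}))$ under the orthogonality $E(\mathbf{f}_t\pmb{\varepsilon}_s^\top)=\mathbf{0}$. Your variance bound on the cross term and your identification of why that extra hypothesis is indispensable are both correct and match the paper's $TN\sqrt{h}\,J_3=O_P(h^{1/4})$.

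There is, however, a genuine error in how you obtain the $(1-a_i)$ weights. You assert that $\mathbf{M}_{\widetilde{\mathbf{F}}}\mathbf{F}\pmb{\Lambda}^{\top}$ ``contributes only at lower order'' and then write $L_{NT}^{(1)},L_{NT}^{(2)},L_{NT}^{(3)}$ with the flat vector $\mathbf{1}_N$. This is false: in the proofs of Theorems~\ref{theorem1}--\ref{theorem3}, the dominant part of $\mathbf{M}_{\widehat{\mathbf{F}}}\mathbf{F}$ is $-\mathbf{W}_2=-\tfrac{1}{N}\mathcal{E}\pmb{\Lambda}\pmb{\Sigma}_{\pmb{\Lambda}}^{-1}$ (or $-\tfrac{1}{N}\mathcal{E}^{*}\pmb{\Lambda}\pmb{\Sigma}_{\pmb{\Lambda}}^{-1}$ under the alternative), which is the \emph{same} order as $\mathbf{M}_{\widehat{\mathbf{F}}}\mathcal{E}$ and is precisely what, through $L_{T,12}$ and $L_{T,3}$, converts the unit weights into $(1-a_i)$. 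Your $L_{NT}^{(1)}$ as written is only the $L_{T,2}$ piece; by itself it would yield the unweighted long-run variance of $N^{-1/2}\sum_i\varepsilon_{it}$, not $\sigma_{\varepsilon,a}^2$. The projection $\mathbf{M}_{\widetilde{\mathbf{F}}}$ acts in the time direction and cannot by itself produce a cross-sectional reweighting, so your later appeal to an ``asymptotic surrogate'' of $\mathbf{M}_{\widetilde{\mathbf{F}}}$ does not repair this. The fix is to retain the $\mathbf{F}\pmb{\Lambda}^{\top}$ contributions, replicate the combination argument from Theorem~\ref{theorem1}/\ref{theorem2} with $\mathcal{E}$ replaced by $\mathcal{E}^{*}=\mathcal{E}+a_{TN}\mathbf{D}$ (this is what Lemma~\ref{Lemma10} underwrites for the rate $a_{TN}=N^{-1/2}h^{1/4}$), and only then split the resulting $(1-a_i)(1-a_j)$-weighted quadratic into your three pieces. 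With that correction your argument goes through.
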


\subsection{Global Power}

\label{Sec3.4}

In this section, we explain the source of power of our test when data
include global structural breaks. Additionally, we point out that the newly
proposed test can be used to select the number of factors practically.

To accommodate some typical settings of structural breaks such as those in 
\cite{breitung2011testing}, we write the loadings as $\mathbb{A}=[%
\pmb{\Lambda}_{1},\ldots ,\pmb{\Lambda}_{T}],$ where $\mathbb{A}$ is a $%
N\times Tr$ matrix. Let $R\coloneqq(Tr)\wedge N$ and consider the singular
value decomposition (SVD): $\mathbb{A}=\mathbf{U}\mathbf{S}\mathbf{V}^{\top
} $, where $\mathbf{U}$ and $\mathbf{V}$ are $N\times R$ and $Tr\times R$
unitary matrices, and $\mathbf{S}$ denotes an $R\times R$ diagonal matrix
with singular values $s_{TN,1},\ldots ,s_{TN,R}$ along the diagonal in
descending order. Then we can rewrite \eqref{Eq2.1} in a compact form: 
\begin{equation}
\mathbf{X}=\mathbb{F}\mathbb{A}^{\top }+\mathcal{E},  \label{Eq2.6}
\end{equation}%
where $\mathbb{F}=\mathrm{diag}(\mathbf{f}_{1}^{\top },\ldots ,\mathbf{f}%
_{T}^{\top })$ is a $T\times Tr$ block diagonal matrix.

To facilitate the development, we add the following assumption.

\begin{assumption}
\label{Assumption6}

\noindent (a) $\max_{i,t}\Vert \pmb{\lambda}_{it}\Vert \leq c_{0}$ for some
constant $c_{0}>0$, where $\pmb{\lambda}_{it}$ is the $i^{th}$ column of $%
\pmb{\Lambda}_{t}^{\top }$.

\smallskip

\noindent (b) The first $J$ singular values of $\mathbf{S}$ is of order $%
\sqrt{TN}$, while the remaining singular values satisfy $(TN)^{-1}
\sum_{j=J+1}^{R}s_{TN,j}^{2}=o(1)$, where $J$ is fixed.
\end{assumption}

This assumption encompasses many existing works as special cases. For
example, the model \eqref{Eq2.6} with a single structural break as in \cite%
{breitung2011testing} has $r<\mathrm{rank}(\mathbb{A})=J\leq2r.$ Assumption %
\ref{Assumption6}(a) regulates the factor loadings such that $\Vert \mathbb{A%
}\Vert /\sqrt{TN}=O(1)$. Thus, we have $\mathrm{tr}(\mathbb{A}^{\top }%
\mathbb{A}/(TN))=(TN)^{-1} \sum_{j=1}^{R}s_{TN,j}^{2}=O(1)$. Assumption \ref%
{Assumption6}(b) suggests that the first $J$ singular values diverge to
infinity at the rate of $\sqrt{ TN}$.

Now, we rewrite the model in \eqref{Eq2.6} as 
\begin{equation}
\mathbf{X}=\mathbb{F}\mathbb{A}^{(J),\top }+\mathbb{F}\mathbb{A}^{(-J),\top
}+\mathcal{E}=\mathbb{F}\mathbb{A}^{(J),\top }+\mathcal{E}^{\dagger },
\label{Eq2.7}
\end{equation}
where $\mathbb{A}^{(J)}=\mathbf{U}^{(J)}\mathbf{S}^{(J)}(\mathbf{V}^{(J)} 
\mathbf{)}^{\top }$, $\mathbb{A}^{(-J)}=\mathbf{U}^{(-J)}\mathbf{S}^{(-J)}( 
\mathbf{V}^{(-J)})^{\top }$, $\mathcal{E}^{\dagger }=\mathcal{E}+\mathbb{F(} 
\mathbb{A}^{(-J)})^{\top }$, $\mathbf{U}^{(J)}$ (resp. $\mathbf{V}^{(J)}$)
and $\mathbf{U}^{(-J)}$ (resp. $\mathbf{V}^{(-J)}$) contain the first $J$
and the last $R-J$ columns of $\mathbf{U}$ (resp. $\mathbf{V}$)
respectively, and $\mathbf{S}^{(J)}$ (resp. $\mathbf{S}^{(-J)}$) contains
the first $J$ (resp. the remaining $R-J$) singular values of $\mathbf{S}$.
Therefore, \eqref{Eq2.6} has a representation involving the time-invariant
factor loadings only: 
\begin{equation*}
\mathbf{X}=\mathcal{F}\pmb{\Theta}^{\top }+\mathcal{E}^{\dagger },
\end{equation*}
where $\mathcal{F}=\sqrt{T}\mathbb{F}\mathbf{V}^{(J)}$ with $\Vert \mathcal{%
F }^{\top }\mathcal{F}/T\Vert =O_{P}(1)$, and $\pmb{\Theta}=\mathbf{U}^{(J)} 
\mathbf{S}^{(J)}/\sqrt{T}$. Hence, the above analysis also explains where
the power of LM-type tests come from under a general setup since these tests
(e.g., \citealp{chen2014detecting}) try to test if the covariance matrix of
the larger finite-dimensional set of estimated factors is stable.

Define the $J\times J$ symmetric positive definite matrix: 
\begin{equation*}
\pmb{\Delta}=\plim(TN)^{-1}\mathbf{S}^{(J)}(\mathbf{V}^{(J)})^{\top }\mathbb{%
F}^{\top }\mathbb{F}\mathbf{V}^{(J)}\mathbf{S}^{(J)}.
\end{equation*}
Let $\mathbb{V}$ be an $\widetilde{r}\times \widetilde{r}$ diagonal matrix
containing the $\widetilde{r}$ largest eigenvalues of $\pmb{\Delta}$ in
descending order and $\pmb{\Upsilon}$ be a $J\times \widetilde{r}$
eigenvector matrix of $\pmb{\Delta}$ corresponding to the $\widetilde{r}$
eigenvalues in $\mathbb{V}$. The following theorem studies the asymptotic
behavior of $L_{NT}$ under the global alternative.

\begin{theorem}
\label{theorem4} Suppose that Assumptions \ref{Assumption1}, \ref%
{Assumption3}, \ref{Assumption4} and \ref{Assumption6} hold. For $\widetilde{%
r}<J$, there exists a positive constant $c$ such that 
\begin{equation*}
L_{NT}=\overline{\pmb{\theta}}^{\top }(\mathbf{I}_{J}-\mathbf{S}_{J}^{-1}%
\pmb{\Upsilon}\pmb{\Upsilon}^{\top }\mathbf{S}_{J})\pmb{\Omega}(\mathbf{I}%
_{J}-\mathbf{S}_{J}\pmb{\Upsilon}\pmb{\Upsilon}^{\top }\mathbf{S}_{J}^{-1})%
\overline{\pmb{\theta}}+o_{P}(1)\geq c+o_{P}(1),
\end{equation*}%
where 
\begin{eqnarray*}
\mathbf{S}_{J} &=&\lim_{(N,T)\rightarrow \infty }\mathrm{diag}(s_{TN,1}/%
\sqrt{TN},\ldots ,s_{TN,J}/\sqrt{TN}),  \notag \\
\pmb{\Omega} &=&\plim\frac{1}{T^{2}h}\sum_{t,s=1}^{T}\mathcal{F}_{t}\mathcal{%
F}_{s}^{\top }K((t-s)/(Th)),
\end{eqnarray*}%
$\overline{\pmb{\theta}}=\frac{1}{N}\sum_{i=1}^{N}\pmb{\theta}_{i}$ with $%
\pmb{\theta}_{i}$ being the $i^{th}$ column of $\pmb{\Theta}^{\top }$, and $%
\mathcal{F}_{t}$ is the $t^{th}$ column of $\mathcal{F}^{\top }$.
\end{theorem}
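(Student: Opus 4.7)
The plan is to exploit the representation $\mathbf{X}=\mathcal{F}\pmb{\Theta}^\top+\mathcal{E}^\dagger$ from \eqref{Eq2.7} and treat the under-specified PCA ($\widetilde{r}<J$) as a standard PCA problem with $J$ ``strong'' factors of which only the top $\widetilde{r}$ directions can be recovered, so the residuals necessarily retain a non-negligible signal leakage. Concretely, I would first verify that the noise inflation $\mathcal{E}^\dagger-\mathcal{E}=\mathbb{F}(\mathbb{A}^{(-J)})^\top$ is spectrally negligible: by Assumption \ref{Assumption6}(b), $(TN)^{-1}\Vert\mathbb{A}^{(-J)}\Vert^2=o(1)$, so the top $J$ eigenvalues of $\mathbf{X}\mathbf{X}^\top/(TN)$ are still separated from the bulk, and a Bai--Ng style perturbation argument gives an $\widetilde{r}$-dimensional orthogonal invariance: $\widetilde{\mathbf{F}}=\mathcal{F}\mathbf{B}+o_P(\sqrt{T})$ in Frobenius norm, where (up to a sign rotation) $\mathbf{B}=\mathbf{S}_J\pmb{\Upsilon}\mathbb{V}^{-1/2}$. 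This identification of $\mathbf{B}$ follows from computing the signal eigenproblem $\mathcal{F}\pmb{\Theta}^\top\pmb{\Theta}\mathcal{F}^\top\widetilde{\mathbf{F}}=\widetilde{\mathbf{F}}\widetilde{\mathbf{V}}$ and noting that $\mathbf{S}_J^2\pmb{\Psi}$ is similar to $\pmb{\Delta}=\mathbf{S}_J\pmb{\Psi}\mathbf{S}_J$ (with $\pmb{\Psi}=\plim \mathcal{F}^\top\mathcal{F}/T$), whence its eigenvectors for the top $\widetilde{r}$ eigenvalues are $\mathbf{S}_J\pmb{\Upsilon}$, after the normalisation $\mathbf{B}^\top\pmb{\Psi}\mathbf{B}=\mathbf{I}_{\widetilde{r}}$.

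Next I would compute the key identity $\mathbf{M}_{\widetilde{\mathbf{F}}}\mathcal{F}=\mathcal{F}(\mathbf{I}_J-\mathbf{B}\mathbf{B}^\top\pmb{\Psi})+o_P(\sqrt{T})$ and simplify
\[
\mathbf{B}\mathbf{B}^\top\pmb{\Psi}=\mathbf{S}_J\pmb{\Upsilon}\mathbb{V}^{-1}\pmb{\Upsilon}^\top\mathbf{S}_J\pmb{\Psi}=\mathbf{S}_J\pmb{\Upsilon}\mathbb{V}^{-1}\pmb{\Upsilon}^\top\pmb{\Delta}\mathbf{S}_J^{-1}=\mathbf{S}_J\pmb{\Upsilon}\pmb{\Upsilon}^\top\mathbf{S}_J^{-1},
\]
using $\mathbf{S}_J\pmb{\Psi}=\pmb{\Delta}\mathbf{S}_J^{-1}$ and $\pmb{\Upsilon}^\top\pmb{\Delta}\pmb{\Upsilon}=\mathbb{V}$. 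Transposing yields $\mathcal{F}^\top\mathbf{M}_{\widetilde{\mathbf{F}}}=(\mathbf{I}_J-\mathbf{S}_J^{-1}\pmb{\Upsilon}\pmb{\Upsilon}^\top\mathbf{S}_J)\mathcal{F}^\top+o_P(\sqrt{T})$, which is exactly the asymmetric sandwich appearing in the statement.

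I would then expand $L_{NT}$ by writing $\widetilde{\mathcal{E}}\mathbf{1}_N/N=\mathbf{M}_{\widetilde{\mathbf{F}}}\mathcal{F}\overline{\pmb{\theta}}+\mathbf{M}_{\widetilde{\mathbf{F}}}\mathcal{E}^\dagger\mathbf{1}_N/N$ and noting
\[
L_{NT}=\frac{1}{T^2N^2}(\widetilde{\mathcal{E}}\mathbf{1}_N)^\top\mathbf{K}_h(\widetilde{\mathcal{E}}\mathbf{1}_N)+\frac{1}{T^2N^2}\sum_{i\neq j}\bigl(\widetilde{\mathcal{E}}_i^\top\mathbf{K}_h\widetilde{\mathcal{E}}_j-\tfrac{1}{N^2}(\widetilde{\mathcal{E}}\mathbf{1}_N)^\top\mathbf{K}_h(\widetilde{\mathcal{E}}\mathbf{1}_N)\bigr),
\]
where after averaging the second block becomes negligible because of cross-sectional cancellation (essentially via the arguments used for Theorem \ref{theorem1} but now dominated by the fact that $\pmb{\Omega}$ enters at $O(1)$ rather than $O((TNh)^{-1})$). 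The signal-signal contribution gives the leading term $\overline{\pmb{\theta}}^\top(\mathcal{F}^\top\mathbf{M}_{\widetilde{\mathbf{F}}}\mathbf{K}_h\mathbf{M}_{\widetilde{\mathbf{F}}}\mathcal{F}/T^2)\overline{\pmb{\theta}}$, and plugging in the identity together with $\mathcal{F}^\top\mathbf{K}_h\mathcal{F}/T^2=\pmb{\Omega}+o_P(1)$ (an ergodic/mixing claim that follows from Assumption \ref{Assumption3} and the structure of $\mathcal{F}=\sqrt{T}\mathbb{F}\mathbf{V}^{(J)}$) yields the displayed quadratic form. For the lower bound $c>0$, I would argue that $\pmb{\Omega}$ is positive definite (since $K(0)=1>0$ and $\pmb{\Delta}\succ 0$), and the rank of $\mathbf{I}_J-\mathbf{S}_J\pmb{\Upsilon}\pmb{\Upsilon}^\top\mathbf{S}_J^{-1}$ equals $J-\widetilde{r}>0$, so the middle matrix is positive semi-definite of rank $J-\widetilde{r}$; the stated conclusion then amounts to requiring $\overline{\pmb{\theta}}$ to have a non-zero component outside the kernel, which is the generic situation for a genuine structural break.

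The main obstacle will be the PCA perturbation step. The existing Bai--Ng machinery assumes exactly $r$ strong factors, whereas here $J$ directions in $\mathcal{F}\pmb{\Theta}^\top$ are all of order $\sqrt{TN}$ while only $\widetilde{r}$ are extracted; moreover the residual noise $\mathcal{E}^\dagger$ is itself structured and serially/cross-sectionally dependent under Assumption~\ref{Assumption4}. Establishing $\widetilde{\mathbf{F}}=\mathcal{F}\mathbf{B}+o_P(\sqrt{T})$ with the correct rotation $\mathbf{B}$ therefore requires a careful Davis--Kahan-type argument on the top-$\widetilde{r}$ eigenspace of $\mathbf{X}\mathbf{X}^\top/(TN)$ that uses the order-$1$ spectral gap between $\mathbb{V}_{\widetilde{r},\widetilde{r}}$ and $\mathbb{V}_{\widetilde{r}+1,\widetilde{r}+1}$ inside the strong block and the vanishing contribution of $\mathbb{A}^{(-J)}$. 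Once this rotation is pinned down, the remaining calculations are algebraic manipulations with $\mathbf{S}_J$, $\pmb{\Upsilon}$, $\mathbb{V}$ and routine bounds on the kernel-weighted cross products.
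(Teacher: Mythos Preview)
Your proposal is correct and takes essentially the same route as the paper, which packages the PCA perturbation in Lemma~\ref{Lemma8} (yielding $\widehat{\mathbf{F}}=\mathcal{F}\mathcal{H}+o_P(\sqrt{T})$ with $\mathcal{H}\to\mathbf{S}_J\pmb{\Upsilon}\mathbb{V}^{-1/2}$, exactly your $\mathbf{B}$, via eigenvector perturbation theory) and then derives $\mathbf{M}_{\widehat{\mathbf{F}}}\mathcal{F}\approx\mathcal{F}\,\mathbf{S}_J(\mathbf{I}_J-\pmb{\Upsilon}\pmb{\Upsilon}^\top)\mathbf{S}_J^{-1}$ by the same algebra you outline. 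One cosmetic correction: your two-block split of $L_{NT}$ is redundant since $L_{NT}=\frac{1}{T^2N^2}(\widetilde{\mathcal{E}}\mathbf{1}_N)^\top\mathbf{K}_h(\widetilde{\mathcal{E}}\mathbf{1}_N)$ holds exactly, so the ``second block'' vanishes identically and the decomposition into signal and noise proceeds directly from $\widetilde{\mathcal{E}}\mathbf{1}_N/N=\mathbf{M}_{\widetilde{\mathbf{F}}}\mathcal{F}\overline{\pmb{\theta}}+\mathbf{M}_{\widetilde{\mathbf{F}}}\mathcal{E}^\dagger\mathbf{1}_N/N$.
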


We make three comments on Theorem \ref{theorem4}. First, Theorem \ref%
{theorem4} implies that $\widehat{L}_{T}$ defined in (\ref{L_T_hat})
generally diverges to infinity at a rate $NTh^{1/2}$ if one selects $%
\widetilde{r}<J$ under the alternative. Second, Theorem \ref{theorem4}
explains where the power of residual-based tests comes from. In the presence
of $k$ breaks, $J\coloneqq$rank$\left( \mathbb{A}\right) \leq k\cdot r.$
Therefore as long as one specifies a factor model with $\widetilde{r}<J$ (so
that the true rank of the represented model is under-specified), the
residual-based tests would have power in detecting the global alternative.
Third, in connection with the results in Section \ref{Sec3.2}, Theorem \ref%
{theorem4} also indicates that we can propose sequential procedure to test
the composite null hypotheses 
\begin{equation*}
\mathbb{H}_{0}^{\prime }:r=j\quad \text{for}\quad j\in \lbrack r_{\max }]
\end{equation*}%
to identify the number of factors practically if a low rank representation
is believed to be true. The estimated value of factor number, $\widehat{r}$,
will be the value of $j$ when we fail to reject the null at the first time.
Of course, in the presence of abrupt structural changes, $\widehat{r}$ would
over-estimate the underlying number of factors, $r$, for the original factor
model.

To proceed, we emphasize that as explained in 
\citet[pp.
310--311]{fu2023testing}, the information criterion of \cite{su2017time}
still offers a consistent estimation of $r$ rather than $J$. We therefore
suggest that one always turns to the information criterion of \cite%
{su2017time} when estimating $r$ to avoid any power loss.

\subsection{Simulating the Critical Value}

Note that $\widehat{L}_{T}$ is asymptotically pivotal under $\mathbb{H}_{0}$
defined in (\ref{Eq2.2}). It is well known that kernel-based nonparametric
tests may not perform well in finite samples if one uses the normal critical
values and they can be sensitive to the choice of bandwidth. To improve the
finite sample performance of the test, we propose a simulation-based scheme
for the selection of the critical value. A similar procedure has also been
adopted by \cite{zhang2012inference} for the same purpose in the context of
testing univariate time-varying regression models.

The algorithm of a simulation-assisted testing procedure is as follows.

\begin{description}
\item[Step 1:] Use $\{\mathbf{x}_{t}\}$ to estimate the constant factor
model, and compute $\widehat{L}_{T}$ based on Proposition \ref{Coro1}.

\item[Step 2:] Generate i.i.d. $\widetilde{r}$-dimensional standard normal
random vectors $\{\mathbf{f}_{t}^{\ast }\}_{t=1}^{T}$ and $\{\pmb{\lambda}
_{i}^{\ast }\}_{i=1}^{N}$, and i.i.d. $N$-dimensional standard normal random
vectors $\{\pmb{\varepsilon}_{t}^{\ast }\}$, and generate\footnote{%
In our own experiments, we also generate i.i.d. $N$-dimensional standard
normal random vectors $\{\mathbf{x}_{t}^{*}\}_{t=1}^{T}$ to compute the
simulated test statistics, which yields similar simulation results.} $%
x_{it}^{\ast }= \pmb{\lambda}_{i}^{\ast ,\top }\mathbf{f}_{t}^{\ast
}+\varepsilon _{it}^{\ast }$.

\item[Step 3:] Compute the bootstrap statistic $\widehat{L}_{T}^{b}$ in the
same way as $\widehat{L}_{T}$ using the simulated sample $\{x_{it}^{\ast }\}$%
.

\item[Step 4:] Repeat Steps 2-3 $B$ times to obtain $B$ test statistics $\{%
\widehat{L}_{T}^{b}\}_{b=1}^{B}$, as well as its empirical quantile $%
\widehat{q}_{1-\alpha }$. We reject the null hypothesis \eqref{Eq2.2} at the
significance level $\alpha $ if $\widehat{L}_{T}>\widehat{q}_{1-\alpha }$.
\end{description}

By Theorems \ref{theorem1} and \ref{theorem2}, $\widehat{L}_{T}^{b}$ and $%
\widehat{L}_{T}$ have the same asymptotic distribution under the null. This
suggests that instead of using the normal critical value, we can obtain the
simulated critical value $\widehat{q}_{1-\alpha }$ for $\widehat{L}_{T}$.
Under the global alternatives, $\widehat{L}_{T}^{b}$ remains to be
asymptotically normal whereas $\widehat{L}_{T}$ diverges to infinity in
probability.

Note that in the above simulation procedure, we do not need to use the
estimated factors and loadings under the null to generate $x_{it}^{\ast }.$
This is similar in spirit to the fixed-regressor wild bootstrap of \cite%
{hansen2000testing}, where one does not need to mimic the exact dependence
structure in the bootstrap world and may still obtain a valid bootstrap test
that shares the same limit null distribution as the original test statistic.
For our test, one may theoretically provide the approximation rate for the
simulation procedure by establishing a deep Gaussian approximation theory
for a quadratic form of panel data with complex dependence structure, based
on which one can judge whether the simulation can provide a better
approximation to the finite sample distribution than the asymptotic normal.
But this certainly goes beyond the scope of the current paper, and has to be
left for future research.

\section{Numerical Studies \label{Sec4}}

In this section, we conduct Monte Carlo simulations and consider a real data
example to examine the theoretical findings.

\subsection{Simulation \label{Sec4.1}}

In this section, we examine the finite sample performance of the proposed
testing procedure via extensive simulation experiments. We also compare our
test with the parametric tests of \cite{chen2014detecting} and \cite%
{han2015tests} (referred to as CDG and HI, respectively) designed for a
single structural break with an unknown break date in the factor loadings
and the nonparametric tests of \cite{su2020testing} and \cite{fu2023testing}
(referred to as SW and FHW, respectively) that allow for both single or
multiple abrupt breaks and smooth changes under the alternative. We refer
interested readers to these studies for detailed implementation, which is
omitted here to save space.

Throughout the numerical studies, the Bartlett kernel is adopted. When
estimating the long-run covariance, we follow \cite{stock2020introduction}
and set the bandwidth $\ell $ to be $\lceil 0.75T^{1/3}\rceil $, where $%
\lceil \cdot \rceil $ denotes a ceiling function. For the bandwidth $h$, we
use the rule of thumb bandwidth $h=(TN)^{-1/5}$. In Section \ref{App.B3} of
the online supplement, we also examine the performance of our test for
different choices of bandwidth sequences, and the results reveal that our
test is not sensitive to the choice of bandwidth. In addition, we consider
the case with $\widetilde{r}=2,\ldots ,5$ to verify the theoretical results
in Sections \ref{Sec3.2}--\ref{Sec3.4}. For comparison, we also examine the
performance of the above tests with the number of factors estimated by using 
\cite{su2017time}'s information criteria $IC_{h1}$.

\subsubsection{Size Performance}


We first study the size performance of the proposed test. We generate the
data by using the following high-dimensional factor model with two common
factors 
\begin{equation*}
\mathbf{x}_{t}=\pmb{\Lambda}_{t}\mathbf{f}_{t}+\pmb{\varepsilon}_{t},
\end{equation*}
where $\mathbf{f}_{t}=0.5+0.3\mathbf{f}_{t-1}+N(\mathbf{0},\mathbf{I}_{2})$.
To examine the size performance, we consider the following designs for the
factor loading $\pmb{\lambda}_{it}$ and the error term $\varepsilon _{it}$: 
\begin{eqnarray*}
&&\text{DGP.S1:}\quad \pmb{\lambda}_{it}=\pmb{\lambda}_{i0},\quad %
\pmb{\varepsilon}_{t}\sim N(\mathbf{0},\pmb{\Sigma}_{\varepsilon }),\quad %
\pmb{\Sigma}_{\varepsilon }=\{0.3^{|i-j|}\}_{N\times N},  \notag \\
&&\text{DGP.S2:}\quad \pmb{\lambda}_{it}=\pmb{\lambda}_{i0},\quad %
\pmb{\varepsilon}_{t}=0.2\pmb{\varepsilon}_{t-1}+N(\mathbf{0},\mathbf{I}
_{N}),  \notag \\
&&\text{DGP.S3:}\quad \pmb{\lambda}_{it}=\pmb{\lambda}_{i0},\quad %
\pmb{\varepsilon}_{t}=0.2\pmb{\varepsilon}_{t-1}+N(\mathbf{0},\pmb{\Sigma}
_{\varepsilon }),\quad \pmb{\Sigma}_{\varepsilon }=\{0.3^{|i-j|}\}_{N\times
N},
\end{eqnarray*}
where each element of $\pmb{\lambda}_{i0}$ is from $N(1,1)$. Here,
DGP.S1--DGP.S3 satisfy the null hypothesis of time-invariant factor loading
and is used to examine the size performance of the test statistics.
Specifically, DGP.S3 examines the performance under both time series
autocorrelation (TSA) and cross-sectional dependence (CSD), while DGP.S1
assumes the absence of TSA and DGP.S2 does not include CSD. For each data
generating process (DGP), we simulate 1000 data sets with $T=50,100,200$ and 
$N=50,100$ respectively, and conduct the test as aforementioned. For the
simulation-assisted testing procedure, we set $B=1000$.

\begin{table}[tbp]
\caption{Empirical Rejection Rates for DGP.S1--DGP.S3}
\label{Table1.1}\centering
\vspace{-3mm} \setlength{\tabcolsep}{4pt} \renewcommand{\arraystretch}{0.95} 
\scalebox{0.9}{\begin{tabular}{l cc ccccc | ccccc}
\hline\hline
& &  & \multicolumn{5}{c}{$N = 50$} & \multicolumn{5}{c}{$N = 100$} \\
\hline
& & $\widetilde{r}$ &PSY & CDG & HI& SW & FHW &PSY & CDG & HI& SW & FHW\\
\multirow{15}{*}{DGP.S1} 
&\multirow{5}{*}{$T = 50$}  & 2            &0.039  & 0.006  & 0.001  & 0.088  & 0.039  & 0.037  & 0.009  & 0.003  & 0.127  & 0.040  \\
& & 3            &0.031  & 0.003  & 0.000  & 0.096  & 0.043  & 0.049  & 0.001  & 0.000  & 0.150  & 0.031  \\
& & 4            &0.047  & 0.000  & 0.000  & 0.126  & 0.032  & 0.037  & 0.000  & 0.000  & 0.167  & 0.030  \\
& & 5            &0.040  & 0.001  & 0.000  & 0.108  & 0.028  & 0.031  & 0.000  & 0.000  & 0.150  & 0.028  \\
& & $\widehat{r}$&0.039  & 0.000  & 0.000  & 0.119  & 0.029  & 0.038  & 0.000  & 0.000  & 0.131  & 0.027  \\
\cline{3-13} 
& \multirow{5}{*}{$T = 100$}  & 2            &     0.062  & 0.019  & 0.005  & 0.070  & 0.041  & 0.053  & 0.023  & 0.004  & 0.086  & 0.050  \\
& & 3            &     0.049  & 0.007  & 0.000  & 0.069  & 0.051  & 0.053  & 0.009  & 0.000  & 0.105  & 0.041  \\
& & 4            &     0.049  & 0.003  & 0.000  & 0.080  & 0.035  & 0.055  & 0.004  & 0.000  & 0.127  & 0.043  \\
& & 5            &     0.054  & 0.001  & 0.000  & 0.098  & 0.045  & 0.042  & 0.002  & 0.000  & 0.160  & 0.041  \\
& & $\widehat{r}$&     0.053  & 0.022  & 0.007  & 0.089  & 0.036  & 0.046  & 0.031  & 0.008  & 0.083  & 0.056  \\
\cline{3-13}                     
                                                                                 
& \multirow{5}{*}{$T = 200$}  & 2            &    0.050  & 0.023  & 0.015  & 0.064  & 0.048  & 0.048  & 0.041  & 0.024  & 0.077  & 0.060  \\
&& 3            &    0.040  & 0.019  & 0.003  & 0.054  & 0.045  & 0.049  & 0.016  & 0.004  & 0.080  & 0.054  \\
&& 4            &    0.042  & 0.011  & 0.000  & 0.073  & 0.044  & 0.041  & 0.010  & 0.000  & 0.087  & 0.048  \\
&& 5            &    0.040  & 0.008  & 0.000  & 0.067  & 0.025  & 0.046  & 0.003  & 0.000  & 0.107  & 0.044  \\
&& $\widehat{r}$&    0.047  & 0.032  & 0.024  & 0.077  & 0.045  & 0.048  & 0.030  & 0.030  & 0.074  & 0.060  \\
\hline
\multirow{15}{*}{DGP.S2}	
&	\multirow{5}{*}{$T = 50$} & 2            &     0.081 & 0.011  & 0.000  & 0.606  & 0.087  & 0.119  & 0.012  & 0.000  & 0.728  & 0.071  \\
&& 3            &     0.079 & 0.004  & 0.000  & 0.505  & 0.069  & 0.083  & 0.001  & 0.000  & 0.631  & 0.064  \\
&& 4            &     0.050  & 0.001  & 0.000  & 0.317  & 0.072  & 0.078  & 0.001  & 0.000  & 0.437  & 0.050  \\
&& 5            &     0.055 & 0.000  & 0.000  & 0.175  & 0.056  & 0.049  & 0.000  & 0.000  & 0.114  & 0.050  \\
&& $\widehat{r}$&     0.058  & 0.001  & 0.000  & 0.290  & 0.056  & 0.064  & 0.000  & 0.000  & 0.326  & 0.063  \\
\cline{3-13} 
&\multirow{5}{*}{$T = 100$}  & 2            &    0.068 & 0.033  & 0.009  & 0.823  & 0.095  & 0.093  & 0.028  & 0.008  & 0.956  & 0.089  \\
& & 3            &    0.078 & 0.011  & 0.000  & 0.787  & 0.097  & 0.072  & 0.012  & 0.000  & 0.952  & 0.094  \\
& & 4            &    0.055 & 0.004  & 0.000  & 0.743  & 0.087  & 0.061  & 0.003  & 0.000  & 0.910  & 0.083  \\
& & 5            &    0.066 & 0.001  & 0.000  & 0.682  & 0.073  & 0.060  & 0.002  & 0.000  & 0.842  & 0.079  \\
& & $\widehat{r}$&    0.069  & 0.017  & 0.013  & 0.855  & 0.101  & 0.089  & 0.018  & 0.005  & 0.949  & 0.086  \\
\cline{3-13}                                                                                                                                                            
&\multirow{5}{*}{$T = 200$}   & 2            &     0.065  & 0.039  & 0.030  & 0.916  & 0.105  & 0.047  & 0.037  & 0.034  & 0.996  & 0.101  \\
& & 3            &     0.061  & 0.028  & 0.007  & 0.927  & 0.111  & 0.059  & 0.021  & 0.007  & 1.000  & 0.098  \\
& & 4            &     0.062  & 0.012  & 0.001  & 0.904  & 0.100  & 0.050  & 0.017  & 0.000  & 0.999  & 0.086  \\
& & 5            &     0.062  & 0.004  & 0.000  & 0.893  & 0.110  & 0.044  & 0.011  & 0.000  & 0.997  & 0.095  \\
& & $\widehat{r}$&     0.064  & 0.028  & 0.020  & 0.934  & 0.105  & 0.048  & 0.034  & 0.027  & 0.996  & 0.106  \\
\hline
\multirow{15}{*}{DGP.S3} 
&\multirow{5}{*}{$T = 50$}  & 2            &    0.099 & 0.011  & 0.000  & 0.575  & 0.089  & 0.109  & 0.013  & 0.001  & 0.720  & 0.062  \\
& & 3            &    0.098 & 0.005  & 0.000  & 0.431  & 0.066  & 0.078  & 0.002  & 0.000  & 0.597  & 0.069  \\
& & 4            &    0.084 & 0.000  & 0.000  & 0.308  & 0.055  & 0.105  & 0.001  & 0.000  & 0.360  & 0.054  \\
& & 5            &    0.084 & 0.000  & 0.000  & 0.185  & 0.056  & 0.083  & 0.000  & 0.000  & 0.152  & 0.051  \\
& & $\widehat{r}$&    0.085  & 0.001  & 0.000  & 0.241  & 0.060  & 0.089  & 0.000  & 0.000  & 0.261  & 0.055  \\
\cline{3-13} 
&\multirow{5}{*}{$T = 100$}  & 2            &    0.078 & 0.022  & 0.007  & 0.761  & 0.094  & 0.079  & 0.016  & 0.010  & 0.939  & 0.104  \\
& & 3            &    0.079 & 0.007  & 0.000  & 0.722  & 0.078  & 0.066  & 0.010  & 0.000  & 0.932  & 0.102  \\
& & 4            &    0.074 & 0.004  & 0.000  & 0.665  & 0.071  & 0.063  & 0.001  & 0.000  & 0.881  & 0.092  \\
& & 5            &    0.062 & 0.000  & 0.000  & 0.590  & 0.066  & 0.071  & 0.001  & 0.000  & 0.766  & 0.092  \\
& & $\widehat{r}$&    0.078  & 0.008  & 0.009  & 0.792  & 0.094  & 0.074  & 0.011  & 0.004  & 0.942  & 0.096  \\
\cline{3-13}                                                                                                                                                           
&\multirow{5}{*}{$T =200$}    & 2            &    0.066  & 0.037  & 0.032  & 0.893  & 0.102  & 0.056  & 0.029  & 0.024  & 0.993  & 0.109  \\
& & 3            &    0.062  & 0.027  & 0.007  & 0.867  & 0.086  & 0.069  & 0.020  & 0.006  & 0.991  & 0.104  \\
& & 4            &    0.059  & 0.014  & 0.001  & 0.849  & 0.095  & 0.056  & 0.016  & 0.000  & 0.991  & 0.091  \\
& & 5            &    0.065  & 0.010  & 0.000  & 0.824  & 0.085  & 0.044  & 0.006  & 0.000  & 0.986  & 0.098  \\
& & $\widehat{r}$&    0.060  & 0.033  & 0.034  & 0.896  & 0.089  & 0.056  & 0.033  & 0.022  & 0.994  & 0.106  \\
\hline\hline
\end{tabular}}
\end{table}

We first examine the size performance of our test (referred to as PSY).
Table \ref{Table1.1} reports the empirical rejection rates based on 1000
replications at the 5\% nominal level. We summarize some important findings
from these tables. First, PSY tends to be slightly oversized (undersized)
with the presence (absence) of time series correlation. But as the sample
size $T$ increases, the empirical rejection rates of PSY get closer to the
nominal levels. Second, when $\widetilde{r}$ varies from $2$ to $5$, the
size of PSY remains stable if the sample size is not so small. This verifies
the theoretical findings in Theorems \ref{theorem1}--\ref{theorem2}. Third,
CDG and HI are quite undersized, especially when the factor number is over
specified and the sample size is small, which is consistent with the
statements in \citet[p. 359]{BKW2021}. Fourth, SW and FHW performs
reasonably well in DGP.S1 in the absence of serial dependence in the error
terms as required by their theories, but the sizes of SW and FHW lose
control in the presence of serial dependence. Fifth, SW (resp. FHW) tends to
be oversized (resp. undersized) when the factor number is over specified and
the sample size is relatively small.

\subsubsection{Local Power Performance}

To examine the local power performance, we first consider the following
sequence of local alternatives with smooth structural changes 
\begin{equation*}
\pmb{\lambda}_{it}=\pmb{\lambda}_{i0}+10\times a_{TN}\times \lbrack \mathcal{%
G}(10\tau _{t};0.1,(1,3,7,9)^{\top }),0]^{\top },
\end{equation*}%
where $a_{TN}=(TN)^{-1/2}h^{-1/4}$ and $\mathcal{G}(y;\varsigma ,\pmb{\beta}%
)=[1+\exp (-\varsigma \prod_{l=1}^{p}(y-\beta _{l}))]^{-1}$ is the logistic
function with the scale parameter $\varsigma $ and location parameter vector 
$\pmb{\beta}=[\beta _{1},\ldots ,\beta _{p}]^{\top }$. The above setting has
non-monotonic smooth structural changes, which is the same as DGP.P3 in \cite%
{fu2023testing}. We also consider the following designs for the error term $%
\varepsilon _{it}$: 
\begin{eqnarray*}
&&\text{DGP.L1:}\quad \pmb{\varepsilon}_{t}\sim N(\mathbf{0},\pmb{\Sigma}%
_{\varepsilon }),\quad \pmb{\Sigma}_{\varepsilon }=\{0.3^{|i-j|}\}_{N\times
N},\notag \\
&&\text{DGP.L2:}\quad \pmb{\varepsilon}_{t}=0.2\pmb{\varepsilon}_{t-1}+N(%
\mathbf{0},\mathbf{I}_{N}),\notag \\
&&\text{DGP.L3:}\quad \pmb{\varepsilon}_{t}=0.2\pmb{\varepsilon}_{t-1}+N(%
\mathbf{0},\pmb{\Sigma}_{\varepsilon }),\quad \pmb{\Sigma}_{\varepsilon
}=\{0.3^{|i-j|}\}_{N\times N}.
\end{eqnarray*}%
In each case, $\mathbf{f}_{t}$ and $\pmb{\lambda}_{i0}$ are generated in the
same way as in DGP.S1--DGP.S3.

\begin{table}[tbp]
\caption{Empirical Rejection Rates for DGP.L1--DGP.L3}
\label{Table2.1}\centering
\vspace{-3mm} \setlength{\tabcolsep}{4pt} \renewcommand{\arraystretch}{0.95} 
\scalebox{0.9}{\begin{tabular}{c cc ccccc | ccccc}
\hline\hline
& &  & \multicolumn{5}{c}{$N = 50$} & \multicolumn{5}{c}{$N = 100$} \\
\hline
& & $\widetilde{r}$ &PSY & CDG & HI& SW & FHW &PSY & CDG & HI& SW & FHW\\
\multirow{15}{*}{DGP.L1} 
&\multirow{5}{*}{$T = 50$} & 2            &       0.285  & 0.008  & 0.001  & 0.166  & 0.354  & 0.241  & 0.010  & 0.000  & 0.184  & 0.371  \\
&& 3            &    0.192  & 0.001  & 0.000  & 0.137  & 0.208  & 0.170  & 0.000  & 0.000  & 0.149  & 0.249  \\
&& 4            &    0.169  & 0.000  & 0.000  & 0.145  & 0.120  & 0.147  & 0.001  & 0.000  & 0.187  & 0.154  \\
&& 5            &    0.135  & 0.000  & 0.000  & 0.131  & 0.079  & 0.120  & 0.000  & 0.000  & 0.157  & 0.122  \\
&& $\widehat{r}$&    0.154  & 0.000  & 0.000  & 0.132  & 0.116  & 0.129  & 0.000  & 0.000  & 0.154  & 0.124  \\
\cline{3-13} 
&\multirow{5}{*}{$T = 100$} & 2            &     0.289  & 0.016  & 0.013  & 0.158  & 0.443  & 0.255  & 0.021  & 0.010  & 0.174  & 0.491  \\
&& 3            &     0.228  & 0.012  & 0.000  & 0.140  & 0.309  & 0.226  & 0.014  & 0.000  & 0.166  & 0.341  \\
&& 4            &     0.182  & 0.002  & 0.000  & 0.156  & 0.201  & 0.207  & 0.003  & 0.000  & 0.165  & 0.257  \\
&& 5            &     0.157  & 0.001  & 0.000  & 0.135  & 0.157  & 0.224  & 0.000  & 0.000  & 0.191  & 0.215  \\
&& $\widehat{r}$&     0.280  & 0.032  & 0.004  & 0.187  & 0.428  & 0.253  & 0.023  & 0.010  & 0.162  & 0.484  \\
\cline{3-13}                                                                                                                                                                             
&\multirow{5}{*}{$T = 200$}  & 2            &    0.291  & 0.031  & 0.019  & 0.206  & 0.595  & 0.312  & 0.027  & 0.017  & 0.165  & 0.614  \\
&& 3            &    0.279  & 0.022  & 0.003  & 0.147  & 0.427  & 0.279  & 0.026  & 0.007  & 0.155  & 0.499  \\
&& 4            &    0.257  & 0.013  & 0.000  & 0.114  & 0.309  & 0.301  & 0.014  & 0.000  & 0.144  & 0.396  \\
&& 5            &    0.234  & 0.004  & 0.000  & 0.133  & 0.242  & 0.264  & 0.003  & 0.000  & 0.152  & 0.326  \\
&& $\widehat{r}$&    0.295  & 0.036  & 0.021  & 0.208  & 0.589  & 0.317  & 0.039  & 0.023  & 0.168  & 0.606  \\
\hline
\multirow{15}{*}{DGP.L2}&\multirow{5}{*}{$T = 50$} & 2            &      0.292  & 0.013  & 0.001  & 0.667  & 0.491  & 0.316  & 0.012  & 0.000  & 0.785  & 0.503  \\
&& 3            &    0.185  & 0.003  & 0.000  & 0.540  & 0.289  & 0.188  & 0.001  & 0.000  & 0.696  & 0.317  \\
&& 4            &    0.119  & 0.000  & 0.000  & 0.369  & 0.187  & 0.178  & 0.001  & 0.000  & 0.440  & 0.216  \\
&& 5            &    0.125  & 0.000  & 0.000  & 0.185  & 0.128  & 0.130  & 0.000  & 0.000  & 0.163  & 0.173  \\
&& $\widehat{r}$&    0.126  & 0.000  & 0.000  & 0.350  & 0.131  & 0.147  & 0.000  & 0.000  & 0.328  & 0.192  \\
\cline{3-13} 
&\multirow{5}{*}{$T = 100$} & 2            &     0.323  & 0.019  & 0.010  & 0.897  & 0.637  & 0.323  & 0.021  & 0.006  & 0.985  & 0.686  \\
&& 3            &     0.206  & 0.005  & 0.000  & 0.854  & 0.450  & 0.241  & 0.006  & 0.002  & 0.967  & 0.543  \\
&& 4            &     0.186  & 0.002  & 0.000  & 0.821  & 0.341  & 0.206  & 0.004  & 0.000  & 0.936  & 0.445  \\
&& 5            &     0.158  & 0.000  & 0.000  & 0.730  & 0.274  & 0.160  & 0.000  & 0.000  & 0.865  & 0.360  \\
&& $\widehat{r}$&     0.316  & 0.018  & 0.006  & 0.918  & 0.620  & 0.319  & 0.011  & 0.003  & 0.978  & 0.623  \\
\cline{3-13}                                                                                              
                                                                                
&\multirow{4}{*}{$T = 200$}   & 2            &      0.363  & 0.029  & 0.023  & 0.976  & 0.716  & 0.373  & 0.034  & 0.025  & 0.999  & 0.796  \\
&& 3            &    0.277  & 0.026  & 0.006  & 0.965  & 0.603  & 0.301  & 0.026  & 0.007  & 0.999  & 0.695  \\
&& 4            &    0.279  & 0.021  & 0.000  & 0.937  & 0.498  & 0.275  & 0.015  & 0.000  & 0.999  & 0.595  \\
&& 5            &    0.231  & 0.006  & 0.000  & 0.933  & 0.422  & 0.288  & 0.004  & 0.000  & 0.998  & 0.534  \\
&& $\widehat{r}$&    0.354  & 0.034  & 0.026  & 0.975  & 0.762  & 0.368  & 0.027  & 0.023  & 0.999  & 0.798  \\
\hline
\multirow{15}{*}{DGP.L3}
&\multirow{5}{*}{$T = 50$} & 2            &     0.225  & 0.008  & 0.000  & 0.255  & 0.254  & 0.218  & 0.008  & 0.001  & 0.283  & 0.294  \\
&& 3            &    0.189  & 0.001  & 0.000  & 0.287  & 0.152  & 0.161  & 0.000  & 0.000  & 0.413  & 0.212  \\
&& 4            &    0.127  & 0.000  & 0.000  & 0.245  & 0.118  & 0.134  & 0.000  & 0.000  & 0.335  & 0.164  \\
&& 5            &    0.114  & 0.000  & 0.000  & 0.155  & 0.082  & 0.135  & 0.000  & 0.000  & 0.145  & 0.136  \\
&& $\widehat{r}$&    0.122  & 0.001  & 0.000  & 0.200  & 0.107  & 0.145  & 0.000  & 0.000  & 0.236  & 0.143  \\
\cline{3-13} 

&\multirow{5}{*}{$T = 100$} & 2            &    0.179  & 0.023  & 0.007  & 0.410  & 0.370  & 0.217  & 0.023  & 0.007  & 0.506  & 0.427  \\
&& 3            &     0.158  & 0.006  & 0.000  & 0.465  & 0.254  & 0.171  & 0.006  & 0.000  & 0.687  & 0.326  \\
&& 4            &     0.154  & 0.002  & 0.000  & 0.476  & 0.221  & 0.151  & 0.003  & 0.000  & 0.716  & 0.268  \\
&& 5            &     0.134  & 0.000  & 0.000  & 0.424  & 0.169  & 0.126  & 0.000  & 0.000  & 0.616  & 0.225  \\
&& $\widehat{r}$&     0.161  & 0.015  & 0.003  & 0.476  & 0.305  & 0.213  & 0.009  & 0.002  & 0.574  & 0.372  \\
\cline{3-13}                                                                                                
&\multirow{5}{*}{$T = 200$}   & 2            &     0.220  & 0.036  & 0.026  & 0.550  & 0.450  & 0.185  & 0.026  & 0.016  & 0.648  & 0.509  \\
&& 3            &     0.174  & 0.020  & 0.005  & 0.621  & 0.337  & 0.156  & 0.020  & 0.005  & 0.840  & 0.409  \\
&& 4            &     0.197  & 0.013  & 0.000  & 0.660  & 0.267  & 0.153  & 0.011  & 0.000  & 0.887  & 0.355  \\
&& 5            &     0.144  & 0.004  & 0.000  & 0.679  & 0.212  & 0.169  & 0.006  & 0.000  & 0.909  & 0.300  \\
&& $\widehat{r}$&     0.212  & 0.025  & 0.010  & 0.550  & 0.446  & 0.187  & 0.035  & 0.017  & 0.687  & 0.484  \\
\hline\hline
\end{tabular}}
\end{table}

\begin{table}[tbp]
\caption{Empirical Rejection Rates for DGP.L4--DGP.L6}
\label{Table2.2}\centering
\vspace{-3mm} \setlength{\tabcolsep}{4pt} \renewcommand{\arraystretch}{0.95} 
\scalebox{0.9}{\begin{tabular}{c cc ccccc | ccccc}
\hline\hline
& &  & \multicolumn{5}{c}{$N = 50$} & \multicolumn{5}{c}{$N = 100$} \\
\hline
& & $\widetilde{r}$ &PSY & CDG & HI& SW & FHW &PSY & CDG & HI& SW & FHW\\
\multirow{15}{*}{DGP.L4}
&\multirow{5}{*}{$T = 50$} & 2            &    0.250  & 0.014  & 0.001  & 0.130  & 0.269  & 0.195  & 0.006  & 0.003  & 0.139  & 0.284  \\
&& 3            &    0.176  & 0.000  & 0.000  & 0.118  & 0.156  & 0.173  & 0.002  & 0.000  & 0.153  & 0.176  \\
&& 4            &    0.137  & 0.001  & 0.000  & 0.133  & 0.102  & 0.131  & 0.000  & 0.000  & 0.186  & 0.117  \\
&& 5            &    0.152  & 0.000  & 0.000  & 0.143  & 0.070  & 0.120  & 0.000  & 0.000  & 0.254  & 0.086  \\
&& $\widehat{r}$&    0.148  & 0.000  & 0.000  & 0.139  & 0.073  & 0.183  & 0.002  & 0.000  & 0.147  & 0.188  \\
\cline{3-13} 
&\multirow{5}{*}{$T = 100$} & 2            &    0.265  & 0.013  & 0.010  & 0.118  & 0.383  & 0.224  & 0.023  & 0.010  & 0.125  & 0.387  \\
&& 3            &    0.234  & 0.012  & 0.000  & 0.101  & 0.244  & 0.197  & 0.010  & 0.000  & 0.111  & 0.275  \\
&& 4            &    0.181  & 0.002  & 0.000  & 0.112  & 0.196  & 0.164  & 0.002  & 0.000  & 0.146  & 0.217  \\
&& 5            &    0.167  & 0.002  & 0.000  & 0.104  & 0.126  & 0.147  & 0.000  & 0.000  & 0.159  & 0.177  \\
&& $\widehat{r}$&    0.257  & 0.020  & 0.011  & 0.112  & 0.373  & 0.219  & 0.020  & 0.011  & 0.127  & 0.385  \\
\cline{3-13}                                                                                                                                                                
&                  
\multirow{5}{*}{$T = 200$} & 2            &    0.298  & 0.035  & 0.034  & 0.132  & 0.429  & 0.250  & 0.037  & 0.018  & 0.099  & 0.431  \\
&& 3            &    0.249  & 0.043  & 0.006  & 0.103  & 0.294  & 0.187  & 0.030  & 0.005  & 0.084  & 0.323  \\
&& 4            &    0.239  & 0.021  & 0.000  & 0.093  & 0.202  & 0.168  & 0.009  & 0.000  & 0.095  & 0.252  \\
&& 5            &    0.220  & 0.007  & 0.000  & 0.076  & 0.145  & 0.145  & 0.008  & 0.000  & 0.096  & 0.202  \\
&& $\widehat{r}$&    0.285  & 0.038  & 0.025  & 0.128  & 0.413  & 0.246  & 0.039  & 0.017  & 0.098  & 0.433  \\
\hline
\multirow{15}{*}{DGP.L5}
&\multirow{5}{*}{$T = 50$} & 2            &    0.224  & 0.009  & 0.002  & 0.655  & 0.490  & 0.278  & 0.008  & 0.000  & 0.833  & 0.539  \\
&& 3            &    0.195  & 0.000  & 0.000  & 0.652  & 0.357  & 0.200  & 0.002  & 0.000  & 0.855  & 0.380  \\
&& 4            &    0.135  & 0.000  & 0.000  & 0.609  & 0.256  & 0.161  & 0.001  & 0.000  & 0.803  & 0.300  \\
&& 5            &    0.127  & 0.000  & 0.000  & 0.549  & 0.190  & 0.148  & 0.000  & 0.000  & 0.702  & 0.209  \\
& &$\widehat{r}$&    0.128  & 0.000  & 0.000  & 0.548  & 0.186  & 0.264  & 0.005  & 0.000  & 0.851  & 0.441  \\
\cline{3-13} 
&\multirow{5}{*}{$T = 100$} & 2            &    0.245  & 0.016  & 0.005  & 0.812  & 0.623  & 0.276  & 0.020  & 0.004  & 0.956  & 0.660  \\
&& 3            &    0.219  & 0.024  & 0.000  & 0.812  & 0.496  & 0.178  & 0.011  & 0.000  & 0.971  & 0.537  \\
&& 4            &    0.176  & 0.002  & 0.000  & 0.788  & 0.418  & 0.192  & 0.002  & 0.000  & 0.964  & 0.443  \\
&& 5            &    0.178  & 0.002  & 0.000  & 0.771  & 0.363  & 0.188  & 0.001  & 0.000  & 0.958  & 0.390  \\
&& $\widehat{r}$&    0.233  & 0.014  & 0.007  & 0.816  & 0.617  & 0.278  & 0.029  & 0.003  & 0.976  & 0.594  \\
\cline{3-13}                                                                                                 
&\multirow{5}{*}{$T = 200$}                                                                                   
& 2            &     0.246  & 0.036  & 0.025  & 0.908  & 0.668  & 0.248  & 0.034  & 0.023  & 0.993  & 0.679  \\
&& 3            &     0.243  & 0.029  & 0.008  & 0.911  & 0.593  & 0.221  & 0.025  & 0.010  & 0.996  & 0.620  \\
&& 4            &     0.208  & 0.016  & 0.001  & 0.906  & 0.513  & 0.236  & 0.019  & 0.000  & 0.999  & 0.558  \\
&& 5            &     0.204  & 0.011  & 0.000  & 0.905  & 0.441  & 0.223  & 0.008  & 0.000  & 0.996  & 0.488  \\
&& $\widehat{r}$&     0.232  & 0.037  & 0.026  & 0.912  & 0.637  & 0.244  & 0.035  & 0.021  & 0.999  & 0.665  \\
\hline
\multirow{15}{*}{DGP.L6}
&\multirow{5}{*}{$T = 50$} & 2            &    0.262  & 0.010  & 0.002  & 0.612  & 0.342  & 0.282  & 0.008  & 0.000  & 0.807  & 0.357  \\
&& 3            &    0.219  & 0.004  & 0.000  & 0.615  & 0.198  & 0.217  & 0.001  & 0.000  & 0.821  & 0.243  \\
&& 4            &    0.135  & 0.000  & 0.000  & 0.532  & 0.119  & 0.152  & 0.000  & 0.000  & 0.754  & 0.154  \\
&& 5            &    0.152  & 0.000  & 0.000  & 0.494  & 0.076  & 0.131  & 0.000  & 0.000  & 0.611  & 0.112  \\
&& $\widehat{r}$&    0.148  & 0.000  & 0.000  & 0.499  & 0.113  & 0.263  & 0.000  & 0.000  & 0.817  & 0.288  \\
\cline{3-13} 
&\multirow{5}{*}{$T = 100$} & 2            &    0.301  & 0.023  & 0.009  & 0.746  & 0.404  & 0.277  & 0.016  & 0.007  & 0.920  & 0.438  \\
&& 3            &    0.217  & 0.017  & 0.000  & 0.736  & 0.255  & 0.215  & 0.011  & 0.000  & 0.940  & 0.333  \\
&& 4            &    0.222  & 0.007  & 0.000  & 0.720  & 0.178  & 0.163  & 0.006  & 0.000  & 0.940  & 0.259  \\
&& 5            &    0.162  & 0.000  & 0.000  & 0.678  & 0.137  & 0.159  & 0.003  & 0.000  & 0.935  & 0.194  \\
&& $\widehat{r}$&    0.297  & 0.020  & 0.011  & 0.742  & 0.403  & 0.274  & 0.012  & 0.011  & 0.937  & 0.385  \\
\cline{3-13}                                                                                                 
&\multirow{4}{*}{$T = 200$}                                                                                   
& 2            &    0.326  & 0.033  & 0.023  & 0.848  & 0.448  & 0.250  & 0.035  & 0.023  & 0.982  & 0.473  \\
&& 3            &    0.224  & 0.039  & 0.008  & 0.836  & 0.317  & 0.222  & 0.032  & 0.007  & 0.985  & 0.369  \\
&& 4            &    0.213  & 0.023  & 0.001  & 0.830  & 0.235  & 0.207  & 0.017  & 0.001  & 0.986  & 0.300  \\
&& 5            &    0.223  & 0.006  & 0.000  & 0.817  & 0.180  & 0.157  & 0.009  & 0.000  & 0.978  & 0.267  \\
&& $\widehat{r}$&    0.315  & 0.038  & 0.025  & 0.848  & 0.443  & 0.246  & 0.033  & 0.022  & 0.988  & 0.473  \\
\hline\hline
\end{tabular}}
\end{table}

Table \ref{Table2.1} reports the empirical rejection rates for the 5\% test
for DGP.L1--DGP.L3, which are smooth time-varying factor models with the
break size shrinking to zero. We summarize some findings from these tables.
First, as $\widetilde{r}$ increases from $2$ to $5$, the local power of PSY
generally decreases for fixed $\left( N,T\right) $, which indicates the side
effect of overspecifying the umber of factors. In theory (see Theorem \ref%
{theorem2}, Proposition \ref{Coro1} and Theorem \ref{theorem3}), the
over-specification of the number of factors does not yield any difference in
the asymptotic normal distribution under the null hypothesis and under the
local alternatives. However, in finite sample study, especially when the
sample size is small, it is expected that the empirical powers are affected.
Although the powers are slightly different, the numerical behavior is indeed
consistent with the theoretical results established in the paper when the
sample size is relatively large. Specifically, for DGP.L1, when $N=100$ and $%
T=200$, Table \ref{Table2.1} shows that the empirical rejection rates of PSY
test are 0.312, 0.279, 0.301, 0.264 for $\widetilde{r}=2,\ldots ,5$, in
which the differences are quite small. Second, CDG and HI seem to have no
local power against smooth structural changes, which is not surprising since
these two tests are designed to detect the presence of big structural
breaks. Third, for DGP.L1, PSY has a larger (resp. smaller) power than SW
(resp. FHW), which is consistent with the theory that our test is able to
detect a class of local alternatives at the rate $(TN)^{-1/2}h^{-1/4}$,
converging to 0 faster (resp. slower) than the rate $%
T^{-1/2}N^{-1/4}h^{-1/4} $ (resp. $T^{-1/2}N^{-1/2}$) in SW (resp. FHW).
Fourth, when the sample size is small ($T=50$), \cite{su2017time}'s
information criteria $IC_{h1}$ tends to overestimate the number of common
factors (indeed the effective sample size is $Th$, which is extremely
small), in which case PSY test has better local power performance than FHW.
Fifth, the local power of SW is very close to one if there is TSA in error
terms, which is consistent with the simulation results for DGP.S2--DGP.S3.

To examine the local power performance of the proposed test under abrupt
structural breaks, we also consider the following design for $\pmb{\lambda}%
_{it}$ 
\begin{equation*}
\pmb{\lambda}_{it}=\left\{ 
\begin{array}{ll}
\pmb{\lambda}_{i0}, & \text{for}\ 1\leq t\leq T/2 \\ 
\pmb{\lambda}_{i0}+a_{NT}\times \mathbf{b}, & \text{for}\ T/2+1\leq t\leq T%
\end{array}%
\right. .
\end{equation*}%
where $a_{TN}=(TN)^{-1/2}h^{-1/4}$ and $\mathbf{b}=[2,2]^{\top }$. This
setup is the same as DGP.P1 in \cite{fu2023testing}. In addition, similar to
DGP.L1--DGP.L3, we also consider the three designs for the error term $%
\varepsilon _{it}$ denoted by DGP.L4--DGP.L6 respectively, while $\mathbf{f}%
_{t}$ and $\pmb{\lambda}_{i0}$ are generated in the same way as in
DGP.S1--DGP.S3. Table \ref{Table2.2} reports the empirical rejection rates
for the 5\% test for DGP.L4--DGP.L6, from which we find similar results as
those for DGP.L1--DGP.L3.

\subsubsection{Global Power Performance}

To examine the global power, we consider the following specification for the
time-varying loadings: 
\begin{equation*}
\pmb{\lambda}_{it}=\left\{ 
\begin{array}{ll}
\pmb{\lambda}_{i0}, & \text{for}\ 1\leq t\leq T/2 \\ 
\pmb{\lambda}_{i0}+\mathbf{b}, & \text{for}\ T/2+1\leq t\leq T%
\end{array}%
\right. ,
\end{equation*}%
where $\mathbf{b}=[0.25,0.25]^{\top }$. We also consider the following
designs for the error term $\varepsilon _{it}$: 
\begin{eqnarray*}
&&\text{DGP.G1:}\quad \pmb{\varepsilon}_{t}\sim N(\mathbf{0},\pmb{\Sigma}%
_{\varepsilon }),\quad \pmb{\Sigma}_{\varepsilon }=\{0.3^{|i-j|}\}_{N\times
N}, \notag\\
&&\text{DGP.G2:}\quad \pmb{\varepsilon}_{t}=0.2\pmb{\varepsilon}_{t-1}+N(%
\mathbf{0},\mathbf{I}_{N}), \notag\\
&&\text{DGP.G3:}\quad \pmb{\varepsilon}_{t}=0.2\pmb{\varepsilon}_{t-1}+N(%
\mathbf{0},\pmb{\Sigma}_{\varepsilon }),\quad \pmb{\Sigma}_{\varepsilon
}=\{0.3^{|i-j|}\}_{N\times N}.
\end{eqnarray*}%
In each case, $\pmb{\lambda}_{i0}$ and $\mathbf{f}_{t}$ are generated in the
same way as DGP.S1--DGP.S3. Clearly, we have a one-time big break in each of
the above three DGPs. In this case, each DGP can be written as a
time-invariant factor model with three factors since the break in the factor
loadings is not idiosyncratic across $i$'s (see \citealp{han2015tests} for
more discussions on this type of abrupt structural break). Hence, our test
would have power to detect the breaks as long as $\widetilde{r}<3$.

\begin{table}[tbp]
\caption{Empirical Rejection Rates for DGP.G1--DGP.G3}
\label{Table3.1}\centering
\vspace{-3mm} \setlength{\tabcolsep}{4pt} \renewcommand{\arraystretch}{0.95} 
\scalebox{0.9}{\begin{tabular}{c cc ccccc | ccccc}
\hline\hline
& &  & \multicolumn{5}{c}{$N = 50$} & \multicolumn{5}{c}{$N = 100$} \\
\hline
& & $\widetilde{r}$ &PSY & CDG & HI& SW & FHW &PSY & CDG & HI& SW & FHW\\
\multirow{15}{*}{DGP.G1} 
&\multirow{5}{*}{$T = 50$} & 2            &     0.988  & 0.015  & 0.000  & 0.940  & 0.999  & 0.997  & 0.011  & 0.000  & 0.991  & 1.000  \\
&& 3            &     0.195  & 0.012  & 0.000  & 0.411  & 0.166  & 0.144  & 0.018  & 0.000  & 0.621  & 0.073  \\
&& 4            &     0.121  & 0.002  & 0.000  & 0.403  & 0.087  & 0.084  & 0.001  & 0.000  & 0.596  & 0.045  \\
&& 5            &     0.123  & 0.001  & 0.000  & 0.409  & 0.061  & 0.099  & 0.001  & 0.000  & 0.648  & 0.038  \\
&& $\widehat{r}$&     0.124  & 0.002  & 0.000  & 0.414  & 0.078  & 0.090  & 0.002  & 0.000  & 0.616  & 0.060  \\
\cline{3-13} 
&\multirow{5}{*}{$T = 100$} 
& 2            &     1.000  & 0.028  & 0.010  & 1.000  & 1.000  & 1.000  & 0.026  & 0.014  & 1.000  & 1.000  \\
&& 3            &     0.138  & 0.942  & 0.001  & 0.389  & 0.116  & 0.079  & 0.983  & 0.003  & 0.560  & 0.066  \\
&& 4            &     0.111  & 0.004  & 0.000  & 0.383  & 0.078  & 0.068  & 0.008  & 0.000  & 0.582  & 0.058  \\
&& 5            &     0.104  & 0.001  & 0.000  & 0.392  & 0.060  & 0.065  & 0.001  & 0.000  & 0.611  & 0.055  \\
&& $\widehat{r}$&     0.943  & 0.095  & 0.029  & 0.983  & 0.954  & 1.000  & 0.031  & 0.013  & 1.000  & 1.000  \\
\cline{3-13}                                                                                                 
&\multirow{5}{*}{$T = 200$}                                                                                   
& 2            &    1.000  & 0.033  & 0.032  & 1.000  & 1.000  & 1.000  & 0.053  & 0.035  & 1.000  & 1.000  \\
&& 3            &    0.141  & 1.000  & 0.992  & 0.364  & 0.132  & 0.078  & 1.000  & 1.000  & 0.548  & 0.068  \\
&& 4            &    0.102  & 0.020  & 0.306  & 0.348  & 0.099  & 0.071  & 0.013  & 0.485  & 0.543  & 0.054  \\
&& 5            &    0.115  & 0.007  & 0.000  & 0.327  & 0.085  & 0.077  & 0.004  & 0.000  & 0.536  & 0.043  \\
&& $\widehat{r}$&    1.000  & 0.049  & 0.239  & 1.000  & 1.000  & 1.000  & 0.060  & 0.042  & 1.000  & 1.000  \\
\hline
\multirow{15}{*}{DGP.G2} 
&\multirow{5}{*}{$T = 50$} 
& 2            &    0.997  & 0.014  & 0.000  & 0.996  & 1.000  & 1.000  & 0.005  & 0.002  & 0.998  & 1.000  \\
&& 3            &    0.196  & 0.006  & 0.000  & 0.794  & 0.234  & 0.126  & 0.012  & 0.000  & 0.891  & 0.128  \\
&& 4            &    0.091  & 0.001  & 0.000  & 0.745  & 0.155  & 0.113  & 0.000  & 0.000  & 0.856  & 0.095  \\
&& 5            &    0.118  & 0.000  & 0.000  & 0.699  & 0.107  & 0.082  & 0.000  & 0.000  & 0.821  & 0.069  \\
&& $\widehat{r}$&    0.088  & 0.001  & 0.000  & 0.696  & 0.161  & 0.096  & 0.004  & 0.000  & 0.828  & 0.092  \\
\cline{3-13} 
&\multirow{5}{*}{$T = 100$} 
& 2            &    1.000  & 0.034  & 0.015  & 1.000  & 1.000  & 1.000  & 0.031  & 0.020  & 1.000  & 1.000  \\
&& 3            &    0.101  & 0.938  & 0.003  & 0.851  & 0.217  & 0.082  & 0.970  & 0.002  & 0.917  & 0.093  \\
&& 4            &    0.076  & 0.009  & 0.000  & 0.843  & 0.184  & 0.050  & 0.009  & 0.000  & 0.916  & 0.093  \\
&& 5            &    0.100  & 0.001  & 0.000  & 0.837  & 0.148  & 0.056  & 0.000  & 0.000  & 0.913  & 0.098  \\
&& $\widehat{r}$&    0.927  & 0.161  & 0.034  & 0.976  & 0.939  & 0.961  & 0.153  & 0.031  & 0.984  & 0.971  \\
\cline{3-13}                                                                                                 
&\multirow{5}{*}{$T = 200$}                                                                                   
& 2            &    1.000  & 0.045  & 0.029  & 1.000  & 1.000  & 1.000  & 0.041  & 0.044  & 1.000  & 1.000  \\
&& 3            &    0.069  & 1.000  & 0.999  & 0.885  & 0.206  & 0.057  & 1.000  & 1.000  & 0.942  & 0.101  \\
&& 4            &    0.055  & 0.017  & 0.402  & 0.893  & 0.172  & 0.059  & 0.027  & 0.525  & 0.944  & 0.108  \\
&& 5            &    0.050  & 0.011  & 0.000  & 0.891  & 0.144  & 0.047  & 0.007  & 0.000  & 0.951  & 0.108  \\
&& $\widehat{r}$&    1.000  & 0.050  & 0.036  & 1.000  & 1.000  & 1.000  & 0.045  & 0.043  & 1.000  & 1.000  \\
\hline
\multirow{15}{*}{DGP.G3} 
&\multirow{4}{*}{$T = 50$} 
& 2            &     0.977  & 0.011  & 0.002  & 0.993  & 0.997  & 0.998  & 0.013  & 0.002  & 0.999  & 1.000  \\
&& 3            &     0.215  & 0.005  & 0.000  & 0.710  & 0.164  & 0.173  & 0.016  & 0.000  & 0.829  & 0.109  \\
&& 4            &     0.136  & 0.000  & 0.000  & 0.661  & 0.076  & 0.124  & 0.001  & 0.000  & 0.797  & 0.073  \\
&& 5            &     0.104  & 0.000  & 0.000  & 0.603  & 0.052  & 0.106  & 0.000  & 0.000  & 0.742  & 0.045  \\
&& $\widehat{r}$&     0.117  & 0.000  & 0.000  & 0.649  & 0.056  & 0.149  & 0.004  & 0.000  & 0.823  & 0.113  \\
\cline{3-13} 
&\multirow{5}{*}{$T = 100$} 
& 2            &     1.000  & 0.025  & 0.015  & 1.000  & 1.000  & 1.000  & 0.023  & 0.010  & 1.000  & 1.000  \\
&& 3            &     0.138  & 0.904  & 0.002  & 0.793  & 0.166  & 0.096  & 0.960  & 0.003  & 0.874  & 0.092  \\
& &4            &     0.112  & 0.006  & 0.000  & 0.781  & 0.125  & 0.073  & 0.005  & 0.000  & 0.890  & 0.081  \\
& &5            &     0.079  & 0.000  & 0.000  & 0.761  & 0.103  & 0.100  & 0.000  & 0.000  & 0.885  & 0.067  \\
& &$\widehat{r}$&     0.769  & 0.311  & 0.008  & 0.966  & 0.808  & 0.834  & 0.193  & 0.019  & 0.988  & 0.855  \\
\cline{3-13}                                                                                                 
&\multirow{5}{*}{$T = 200$}                                                                                   
& 2            &    1.000  & 0.060  & 0.041  & 1.000  & 1.000  & 1.000  & 0.044  & 0.040  & 1.000  & 1.000  \\
&& 3            &    0.091  & 1.000  & 0.992  & 0.812  & 0.173  & 0.051  & 1.000  & 1.000  & 0.926  & 0.101  \\
& &4            &    0.077  & 0.017  & 0.234  & 0.822  & 0.150  & 0.058  & 0.015  & 0.439  & 0.938  & 0.083  \\
& &5            &    0.094  & 0.015  & 0.000  & 0.823  & 0.136  & 0.043  & 0.006  & 0.000  & 0.945  & 0.072  \\
& &$\widehat{r}$&    1.000  & 0.063  & 0.031  & 1.000  & 1.000  & 1.000  & 0.050  & 0.041  & 1.000  & 1.000  \\
\hline\hline
\end{tabular}}
\end{table}

Table \ref{Table3.1} reports the empirical rejection rates for the 5\% test
for DGP.G1--DGP.G3. We summarize some findings here. First, our test is
powerful to reject the null hypothesis when $\widetilde{r}<J=3$, while the
empirical rejection rates of both PSY and FHW converge to the 5\% nominal
level when $\widetilde{r}\geq J$, which is consistent with our theory in
Section \ref{Sec3.4}. In addition, when the sample size is small ($T=50$), 
\cite{su2017time}'s information criteria $IC_{h1}$ tends to overestimate the
number of common factors, in which case PSY and FHW tend to lose power.
Second, CDG and HI have desirable power performance if the factor number is
specified to be $\widetilde{r}=J=3$ and the sample size is not so small, and
their tests lose power when $\widetilde{r}>3$ since their tests are designed
to detect the structural breaks in the variance of common factors (c.f.,
p.39 in \citealp{chen2014detecting}). Third, SW is generally more powerful
than PSY and FHW for detecting big structural breaks when the factor number
is over-specified, although these empirical rejection rates decrease with
the increase of sample size when the factor number is over-specified.

\subsection{An Empirical Study \label{Sec4.2}}

In this section, we revisit the U.S. macroeconomic dataset assembled by \cite%
{MN2016}. Specifically, we examine the time period from October 2003 to
September 2023. Therefore, in total, we have $T=240$ along the time
dimension. After removing variables with missing values, $N=127$ macro
variables are available for our investigation.

As in the simulation, we implement PSY, CDG, HI, SW and FHW tests. We let $%
\widetilde{r}=1,\ldots ,8$ in what follows. To examine the necessity of
accounting for CSD and TSA, for any given $\widetilde{r}$, we conduct the
PCA and obtain $\{\widetilde{\varepsilon }_{it}\}$ as in Section \ref{Sec2.1}%
. Subsequently, we consider the following two tests:

\begin{enumerate}
\item Examine TSA by conducting the Ljung-Box Q-test (referred to as LBQ)
for each time series (i.e., $\{\widetilde{\varepsilon }_{i1},\ldots , 
\widetilde{\varepsilon }_{iT}\}$), and report the percentage of individuals
having non-negligible autocorrelation at the 5\% level;

\item Examine CSD by conducting the CD test of \cite{Pesaran2004} on $%
\widetilde{\varepsilon }_{it}$'s, and report the test statistics.
\end{enumerate}

The results for the CD and LBQ tests are reported in the last two columns in
Table \ref{table.emp1}. It is obvious that both cross-sectional dependence
and serial dependence are present in the residuals.

Now, we focus on the structural change tests in Table \ref{table.emp1} ,
where $\surd $ and $\times $ denote failing to detect break and detecting
breaks, respectively at the 5\% nominal level. First, most of the test
results suggest the presence of structural changes in the factor models for $%
\widetilde{r}\in \{1,\ldots,7\}$. Second, as in the simulations, PSY and SW
are largely consistent in terms of rejecting the null with only one
exception occurring at $\widetilde{r}=7$. Third, PSY suggests that there are
eight common factors in this U.S. macroeconomic dataset if one wants to use
a conventional time-invariant factor model to characterize the U.S. macro
economy, which is consistent with the argument of \cite{MN2016}. Fourth, FHW
tends to reject the null hypothesis for any $\widetilde{r} \in
\{1,\ldots,8\} $, possibly due to its higher power to detect smaller
structural changes or its tendency to over-reject in the presence of serial
correlations in the error terms. To further confirm the above findings, we
use \cite{su2017time}'s information criteria $IC_{h1}$ to determine the
number of factors as documented in the end of Section \ref{Sec3.4}, which
indicates two common factors. Putting these numerical evidences together, we
conclude that the dataset features a factor model with time-varying loadings.

\begin{table}[tbp]
\caption{Test Results}
\label{table.emp1}\centering
\setlength{\tabcolsep}{4pt} \renewcommand{\arraystretch}{1} 
\begin{tabular}{rccccclrr}
\hline\hline
$\widetilde{r}$ & PSY & FHW & SW & CDG & HI &  & CD & LBQ \\ 
1 & $\times $ & $\times $ & $\times $ & \checkmark & $\times $ &  & 341.32 & 
0.99 \\ 
2 & $\times $ & $\times $ & $\times $ & $\times $ & $\times $ &  & 116.06 & 
0.99 \\ 
3 & $\times $ & $\times $ & $\times $ & $\times $ & $\times $ &  & 142.27 & 
1.00 \\ 
4 & $\times $ & $\times $ & $\times $ & $\times $ & $\times $ &  & 113.55 & 
1.00 \\ 
5 & $\times $ & $\times $ & $\times $ & $\times $ & $\times $ &  & 71.25 & 
1.00 \\ 
6 & $\times $ & $\times $ & $\times $ & $\times $ & $\times $ &  & 87.95 & 
0.99 \\ 
7 & $\times $ & $\times $ & \checkmark & $\times $ & $\times $ &  & 100.87 & 
0.97 \\ 
8 & \checkmark & $\times$ & \checkmark & \checkmark & \checkmark &  & 126.20
& 0.94 \\ \hline\hline
\end{tabular}%
\end{table}

\section{Conclusion \label{Sec5}}

In this paper, we propose an easy-to-implement residual-based specification
testing procedure to detect structural changes in factor models, which is
powerful against both smooth and abrupt structural changes with unknown
break dates. The proposed test is robust against the over-specified number
of factors under the null and local alternatives, and serially and
cross-sectionally correlated error processes. A new central limit theorem is
derived for the quadratic forms of panel data with dependence over both
dimensions, thereby filling a gap in the literature. We establish the
asymptotic properties of the proposed test statistic under the null and a
sequence of Pitman local alternative, and accordingly develop a
simulation-based scheme to select critical value in order to improve finite
sample performance. Through extensive simulations and a real-world
application, we confirm our theoretical results and demonstrate that the
proposed test exhibits desirable size and power in practice.

Some extensions are possible. First, despite the superb asymptotic power
properties of the proposed test among all existing kernel-based
nonparametric smooth test, its power decreases in the case when all factors
have zero mean in the model. It is worthwhile to consider a remedy to avoid
such a power loss. Second, we only consider the case of pervasive factors in
the paper. Given the recent burgeoning literature on weak factors (%
\citealp{bai2023approximate}), it is interesting to extend the current work
to allow for the presence of weak factors. We leave these topics for future
research.

\bigskip

\renewcommand{\theequation}{A.\arabic{equation}}
\renewcommand{\thesection}{A.\arabic{section}}
\renewcommand{\thefigure}{A.\arabic{figure}}
\renewcommand{\thetable}{A.\arabic{table}}
\renewcommand{\thelemma}{A.\arabic{lemma}}
\renewcommand{\theremark}{A.\arabic{remark}}
\renewcommand{\thecorollary}{A.\arabic{corollary}}
\renewcommand{\theassumption}{A.\arabic{assumption}}

\setcounter{equation}{0}
\setcounter{lemma}{0}
\setcounter{section}{0}
\setcounter{table}{0}
\setcounter{figure}{0}
\setcounter{remark}{0}
\setcounter{corollary}{0}
\setcounter{assumption}{0}

\medskip

\noindent { {\Large\bf Appendix A} }

\medskip

{\small This appendix includes three sections. Section \ref{App.A1}
summaries the details of the relevant tests in the literature. Section \ref{App.A2} states some technical lemmas that are used in the proofs of the main
results. Section \ref{App.A3} contains the proofs of the main results in
Section \ref{Sec3}. }

\section{Relevant Tests\label{App.A1}}

{\small In what follows, we summarize the details of the relevant tests in
the literature, and use the following notations. }

\begin{enumerate}
\item {\small Paper (by publication year): BE stands for \cite%
{breitung2011testing}, CDG stands for \cite{chen2014detecting}, HI stands
for \cite{han2015tests}, YT stands for \cite{yamamoto_tanaka2015}, SW\_17
stands for \cite{su2017time}, SW\_20 stands for \cite{su2020testing}, BKW
stands for \cite{BKW2021}, FHW stands for \cite{fu2023testing}, BDH stands
for \cite{BDH2024}. }

\item {\small Framework: `C' stands for constant factor model, `NTV' stands
for factor model with nonparametric time-varying loadings, `A' stands for
abrupt structural breaks. }

\item {\small Dependence: `CSD' stands for allowing for cross-sectional
dependence in the error terms, `TSA' stands for allowing for serial
correlations in the error terms. }

\item {\small Local: `LP' stands for some local departure from the constant
loadings. Whenever the local power analysis is not available, we use $\times 
$. }

\item {\small Rate: This column reports that the rate at which the test
statistic diverges to infinity under the global alternatives. The square
root of one over this rate denotes the rate at which the local alternatives
is allowed to converge to the null and the test has power to detect. }

\item {\small Robustness: `Yes' indicates that the asymptotic properties
remain valid under the null hypothesis or local alternatives, even when the
number of factors is over-specified. The symbol `$\times $' denotes that the
corresponding test is not robust to an over-specified factor number (e.g.,
see p. 359 in \citealp{BKW2021}). Meanwhile, `$?$' represents cases where it
is unclear whether the test is robust due to the absence of related
discussions or theorems in the literature. }
\end{enumerate}

 {\footnotesize \centering
\begin{tabular}{lcrrrrr}
\hline\hline
Paper & Framework & Dependence & Approach & Local & Rate & Robustness \\ 
\multicolumn{1}{l|}{BE} & \multicolumn{1}{r|}{C vs A} & \multicolumn{1}{r|}{
CSD} & \multicolumn{1}{r|}{LM test} & \multicolumn{1}{r|}{$\times $} & 
\multicolumn{1}{r|}{$T$} & $\times $ \\ 
\multicolumn{1}{l|}{CDG} & \multicolumn{1}{r|}{C vs A} & \multicolumn{1}{r|}{
CSD + TSA} & \multicolumn{1}{r|}{LM test (via factors)} & 
\multicolumn{1}{r|}{$\times $} & \multicolumn{1}{r|}{$T$} & $\times $ \\ 
\multicolumn{1}{l|}{HI} & \multicolumn{1}{r|}{C vs A} & \multicolumn{1}{r|}{
CSD + TSA} & \multicolumn{1}{r|}{LM test (via factors)} & 
\multicolumn{1}{r|}{$\times $} & \multicolumn{1}{r|}{$T$} & $\times $ \\ 
\multicolumn{1}{l|}{YT} & \multicolumn{1}{r|}{C vs A} & \multicolumn{1}{r|}{
CSD} & \multicolumn{1}{r|}{LM test} & \multicolumn{1}{r|}{LP} & 
\multicolumn{1}{r|}{$T$} & $\times $ \\ 
\multicolumn{1}{l|}{SW\_17} & \multicolumn{1}{r|}{C vs (NTV+A)} & 
\multicolumn{1}{r|}{CSD} & \multicolumn{1}{r|}{Quadratic form statistic} & 
\multicolumn{1}{r|}{LP} & \multicolumn{1}{r|}{$T(Nh)^{1/2}$} & $?$ \\ 
\multicolumn{1}{l|}{SW\_20} & \multicolumn{1}{r|}{C vs (NTV+A)} & 
\multicolumn{1}{r|}{CSD} & \multicolumn{1}{r|}{Quadratic form statistic} & 
\multicolumn{1}{r|}{LP} & \multicolumn{1}{r|}{$T(Nh)^{1/2}$} & $?$ \\ 
\multicolumn{1}{l|}{BKW} & \multicolumn{1}{r|}{C vs A} & \multicolumn{1}{r|}{
CSD + TSA} & \multicolumn{1}{r|}{LM test (via factors)} & 
\multicolumn{1}{r|}{$\times $} & \multicolumn{1}{r|}{$T$} & $\times $ \\ 
\multicolumn{1}{l|}{FHW} & \multicolumn{1}{r|}{C vs (NTV+A)} & 
\multicolumn{1}{r|}{CSD} & \multicolumn{1}{r|}{Residual-based test} & 
\multicolumn{1}{r|}{LP} & \multicolumn{1}{r|}{$TN$} & $\times $ \\ 
\multicolumn{1}{l|}{BDH} & \multicolumn{1}{r|}{C vs A} & \multicolumn{1}{r|}{
CSD + TSA} & \multicolumn{1}{r|}{LR test (via factors)} & 
\multicolumn{1}{r|}{$\times $} & \multicolumn{1}{r|}{$T$} & $\times $ \\ 
\multicolumn{1}{l|}{Our paper} & \multicolumn{1}{r|}{C vs (NTV+A)} & 
\multicolumn{1}{r|}{CSD + TSA} & \multicolumn{1}{r|}{Residual-based test} & 
\multicolumn{1}{r|}{LP} & \multicolumn{1}{r|}{$TN(h)^{1/2}$} & Yes \\ 
\hline\hline
\end{tabular}%
} 

\section{Some Technical Lemmas \label{App.A2}}

\begin{lemma}
{\small \label{Lemma1} Suppose that Assumptions \ref{Assumption3}--\ref%
{Assumption4} hold. Then }

{\small \smallskip }

{\small \noindent (a) $|\frac{1}{T}\sum_{t=1}^{T}[\varepsilon
_{it}\varepsilon _{jt}-E(\varepsilon _{it}\varepsilon _{jt})]|_{2}=O(1/\sqrt{%
T})$ for any $1\leq i,j\leq N$; }

{\small \smallskip }

{\small \noindent (b) $|\frac{1}{T}\sum_{t=1}^{T}\mathbf{f}_{t}\overline{%
\varepsilon }_{t, \mathbf{v}}|_{2}=O(1/\sqrt{T})$ for any non-random $%
\mathbf{v}\in \mathbb{R}^{N}$ satisfying $\Vert \mathbf{v}\Vert <\infty $; }

{\small \smallskip }

{\small \noindent (c) $|\frac{1}{TN}\sum_{t=1}^{T}\mathbf{f}%
_{t}\sum_{i=1}^{N}[\varepsilon _{it}\varepsilon _{is}-E(\varepsilon
_{it}\varepsilon _{is})]|_{2}=O(1/ \sqrt{TN})$ for any $1\leq s\leq T$. }
\end{lemma}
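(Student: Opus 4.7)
The common strategy for all three parts is a martingale projection decomposition driven by the functional dependence measure. For any stationary sequence $\{U_{t}\}$ adapted to $\mathscr{E}_{t}=(\mathbf{e}_{t},\mathbf{e}_{t-1},\ldots )$, the telescoping identity $U_{t}-EU_{t}=\sum_{m=0}^{\infty }\mathcal{P}_{t-m}U_{t}$ holds, and for each fixed $m$ the sequence $\{\mathcal{P}_{t-m}U_{t}\}_{t}$ is a martingale difference in $t$. Orthogonality of martingale differences then gives
\begin{equation*}
\Big|\sum_{t=1}^{T}(U_{t}-EU_{t})\Big|_{2}\leq \sum_{m=0}^{\infty }\Big|\sum_{t=1}^{T}\mathcal{P}_{t-m}U_{t}\Big|_{2}\leq \sqrt{T}\sum_{m=0}^{\infty }\sup_{t}|\mathcal{P}_{t-m}U_{t}|_{2},
\end{equation*}
so each part reduces to bounding $|\mathcal{P}_{t-m}U_{t}|_{2}$ by an $m$-summable sequence. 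The coupling representation $|\mathcal{P}_{t-m}U_{t}|_{2}\leq |U_{m}-U_{m}^{\ast }|_{2}$, where $U_{m}^{\ast }$ replaces $\mathbf{e}_{0}$ by the independent copy $\mathbf{e}_{0}^{\prime }$ of Assumptions \ref{Assumption3}(b) and \ref{Assumption4}(b), will be the workhorse for (a) and (b).

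For part (a), take $U_{t}=\varepsilon _{it}\varepsilon _{jt}$ and expand
\begin{equation*}
U_{m}-U_{m}^{\ast }=\varepsilon _{im}(\varepsilon _{jm}-\varepsilon _{jm}^{\ast })+(\varepsilon _{im}-\varepsilon _{im}^{\ast })\varepsilon _{jm}^{\ast }.
\end{equation*}
Applying the Cauchy--Schwarz inequality and invoking Assumption \ref{Assumption4}(b) with $\mathbf{v}$ taken as the $i$-th or $j$-th canonical unit vector (so that $\overline{\varepsilon }_{t,\mathbf{v}}=\varepsilon _{it}$ or $\varepsilon _{jt}$) delivers $|\mathcal{P}_{t-m}U_{t}|_{2}=O(m^{-\alpha })$ with $\alpha >3$; finiteness of $|\varepsilon _{im}|_{4}$ itself follows from another telescope in $m$ using the same measure. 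Summability of $m^{-\alpha }$ then yields the claimed $O(1/\sqrt{T})$ rate. For part (b), take $U_{t}=\mathbf{f}_{t}\overline{\varepsilon }_{t,\mathbf{v}}$; Assumption \ref{Assumption4}(a) gives $EU_{t}=\mathbf{0}$, and the identical coupling-plus-Cauchy--Schwarz split, now combining Assumption \ref{Assumption3}(b) for $\mathbf{f}_{m}-\mathbf{f}_{m}^{\ast }$ with Assumption \ref{Assumption4}(b) for $\overline{\varepsilon }_{m,\mathbf{v}}-\overline{\varepsilon }_{m,\mathbf{v}}^{\ast }$, again produces $|\mathcal{P}_{t-m}U_{t}|_{2}=O(m^{-\alpha })$.

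For part (c), set $W_{t}=\mathbf{f}_{t}\frac{1}{\sqrt{N}}\sum_{i=1}^{N}\varepsilon _{it}\varepsilon _{is}$ so that the quantity in question equals $\frac{1}{T\sqrt{N}}\sum_{t=1}^{T}(W_{t}-EW_{t})$. Since $\mathcal{P}_{t-m}$ annihilates constants, Assumption \ref{Assumption4}(c) delivers $|\mathcal{P}_{t-m}W_{t}|_{2}=O(m^{-\beta })$ with $\beta >1$ directly, so the master inequality gives $|\sum_{t}(W_{t}-EW_{t})|_{2}=O(\sqrt{T})$; dividing by $T\sqrt{N}$ produces the target $O(1/\sqrt{TN})$. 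The main obstacle lies precisely in (c): the triple product $\mathbf{f}_{t}\varepsilon _{it}\varepsilon _{is}$ entangles the latent factor with a cross-sectional aggregate of idiosyncratic errors, mixing temporal and cross-sectional dependence in a way that cannot be neatly reduced via product-of-moments arguments as in (a) and (b) without losing the crucial $\sqrt{N}$ gain. This is why (c) is posed as a high-level condition, consistent with Assumption F.1 of \cite{bai2003inferential}, and is verified for a concrete DGP in Section \ref{App.B2} of the online supplement; once in hand, the martingale projection bound is what converts it into the required Lemma statement.
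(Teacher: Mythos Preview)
Your proposal is correct and follows essentially the same route as the paper: martingale projection decomposition $U_t-EU_t=\sum_{m\geq 0}\mathcal{P}_{t-m}U_t$, the coupling bound $|\mathcal{P}_{t-m}U_t|_2\leq |U_m-U_m^{\ast}|_2$, a product split via Cauchy--Schwarz/H\"older in $L^4$ for (a) and (b), and a direct appeal to Assumption~\ref{Assumption4}(c) for (c). The paper phrases the $\sqrt{T}$ step as ``Burkholder and Minkowski inequalities'' rather than orthogonality of martingale differences, but in $L^2$ these coincide, and its proof of (c) is likewise reduced to the template of (a)--(b) without further detail.
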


\begin{lemma}
{\small \label{Lemma2} Suppose that Assumptions \ref{Assumption1}--\ref%
{Assumption4} hold. Let $\mathcal{E}_{i}$ denote the $i^{th}$ column of $%
\mathcal{E}$. Then }

{\small \smallskip }

{\small \noindent (a) $\Vert \mathbf{K}_{h}\Vert _{2}=O(T)$; }

{\small \smallskip }

{\small \noindent (b) $\Vert \frac{1}{TN}\mathcal{E}\mathcal{E}^{\top }\Vert
=O_{P}(\frac{1}{ \sqrt{T\wedge N}})$; }

{\small \smallskip }

{\small \noindent (c) $\Vert \frac{1}{TN}\pmb{\Lambda}^{\top }\mathcal{E}%
^{\top }\mathbf{F} \Vert =O_{P}(\frac{1}{\sqrt{TN}})$; }

{\small \smallskip }

{\small \noindent (d) $\Vert \frac{1}{TN}\mathcal{E}\pmb{\Lambda}\Vert
=O_{P}(\frac{1}{\sqrt{ TN }})$; }

{\small \smallskip }

{\small \noindent (e) $\Vert \frac{1}{TN}\mathcal{E}\mathcal{E}^{\top }%
\mathbf{F}\Vert =O_{P}( \frac{1}{\sqrt{T\wedge N}})$; }

{\small \smallskip }

{\small \noindent (f) $\Vert \frac{1}{TN}\sum_{i=1}^{N}(1-a_{i})\mathcal{E}%
_{i}^{\top }\mathbf{F}\Vert =O_{P}(\frac{1}{\sqrt{TN}})$; }

{\small \smallskip }

{\small \noindent (g) $\Vert \frac{1}{TN}\sum_{i=1}^{N}(1-a_{i})\mathcal{E}%
_{i}^{\top }\Vert =O_{P}(\frac{1}{\sqrt{TN}})$. }
\end{lemma}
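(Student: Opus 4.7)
The plan is to take the seven parts in order, relying on Lemma~\ref{Lemma1} together with Assumptions~\ref{Assumption2}--\ref{Assumption4} and converting $L^{2}$ moment bounds into $O_{P}$ statements via Markov's inequality.

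Part~(a) is a short Schur-test computation: because $K$ is bounded with support $[-1,1]$, each row and column of $\mathbf{K}_{h}$ has only $O(Th)$ non-zero entries, each of size $O(1/h)$, so $\Vert \mathbf{K}_{h}\Vert _{1}=\Vert \mathbf{K}_{h}\Vert _{\infty }=O(T)$ and $\Vert \mathbf{K}_{h}\Vert _{2}\le \sqrt{\Vert \mathbf{K}_{h}\Vert _{1}\Vert \mathbf{K}_{h}\Vert _{\infty }}=O(T)$. Parts~(c), (d), (f), and (g) all claim the fast rate $(TN)^{-1/2}$ on fixed-size target matrices, so an entry-by-entry $L^{2}$ bound suffices. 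For (c) and (f), Lemma~\ref{Lemma1}(b) applied with $\mathbf{v}$ set to the $k^{\text{th}}$ column of $\pmb{\Lambda}/\sqrt{N}$ (respectively the $(1-a_{i})$-weighted loading vector divided by $\sqrt{N}$)---each of bounded Euclidean norm by Assumption~\ref{Assumption2}---gives the scalar bound, and the extra $\sqrt{N}$ rescaling combines with the prefactor $1/(TN)$ to produce $(TN)^{-1/2}$. For (d) and (g), the direct identity $E\Vert \mathcal{E}\pmb{\Lambda}\Vert ^{2}=T\,\mathrm{tr}(\pmb{\Lambda}^{\top }\pmb{\Sigma}_{\varepsilon }\pmb{\Lambda})\le T\Vert \pmb{\Sigma}_{\varepsilon }\Vert _{2}\Vert \pmb{\Lambda}\Vert ^{2}=O(TN)$ (and its $(1-a_{i})$-weighted analogue) delivers the bound via Assumption~\ref{Assumption4}(a).

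For Part~(b), the cleanest route is to work with $\mathcal{E}^{\top }\mathcal{E}$, whose Frobenius norm coincides with that of $\mathcal{E}\mathcal{E}^{\top }$. Writing $[\mathcal{E}^{\top }\mathcal{E}]_{ij}=T\sigma _{ij,\varepsilon }+\sum _{t}[\varepsilon _{it}\varepsilon _{jt}-E(\varepsilon _{it}\varepsilon _{jt})]$, the mean piece contributes $T^{2}\Vert \pmb{\Sigma}_{\varepsilon }\Vert _{F}^{2}\le T^{2}\Vert \pmb{\Sigma}_{\varepsilon }\Vert _{2}\,\mathrm{tr}(\pmb{\Sigma}_{\varepsilon })=O(T^{2}N)$, while Lemma~\ref{Lemma1}(a) bounds the $L^{2}$ norm of each fluctuation entry by $O(\sqrt{T})$, summing over $(i,j)$ to $O(N^{2}T)$. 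The cross term vanishes in expectation, so $\Vert \mathcal{E}^{\top }\mathcal{E}\Vert ^{2}=O_{P}(TN(T\vee N))$, giving $\Vert (TN)^{-1}\mathcal{E}\mathcal{E}^{\top }\Vert =O_{P}((T\wedge N)^{-1/2})$.

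The main obstacle is Part~(e), because a naive Cauchy--Schwarz combining (b) with $\Vert \mathbf{F}\Vert _{2}=O_{P}(\sqrt{T})$ loses an extra $\sqrt{T}$. My plan is to use the mixed-norm inequality $\Vert \mathcal{E}\mathcal{E}^{\top }\mathbf{F}\Vert \le \Vert \mathcal{E}\Vert _{2}\Vert \mathcal{E}^{\top }\mathbf{F}\Vert$. The Frobenius bound $\Vert \mathcal{E}^{\top }\mathbf{F}\Vert =O_{P}(\sqrt{NT})$ follows from Lemma~\ref{Lemma1}(b) applied row-by-row with $\mathbf{v}=e_{i}$, while the spectral bound $\Vert \mathcal{E}\Vert _{2}^{2}=\Vert \mathcal{E}^{\top }\mathcal{E}\Vert _{2}\le T\Vert \pmb{\Sigma}_{\varepsilon }\Vert _{2}+\Vert \mathcal{E}^{\top }\mathcal{E}-T\pmb{\Sigma}_{\varepsilon }\Vert _{2}=O_{P}(T\vee N)$ is obtained by controlling the deviation term with a Nemirovski-type inequality under the fourth-moment and polynomial-decay conditions of Assumption~\ref{Assumption4}. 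Multiplying yields $\Vert (TN)^{-1}\mathcal{E}\mathcal{E}^{\top }\mathbf{F}\Vert =O_{P}(\sqrt{NT(T\vee N)}/(TN))=O_{P}((T\wedge N)^{-1/2})$. Establishing this spectral bound on $\mathcal{E}$ under weak moment/dependence conditions is the key technical challenge; the remaining six parts reduce to Markov applied to the moment computations above.
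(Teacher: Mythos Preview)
Parts (a)--(d), (f), and (g) of your proposal are correct and essentially coincide with the paper's arguments: the Schur/$\|\cdot\|_\infty$ bound for (a), the mean--fluctuation split of $\mathcal{E}^\top\mathcal{E}$ combined with Lemma~\ref{Lemma1}(a) for (b), Lemma~\ref{Lemma1}(b) with $\mathbf{v}$ a column of $\pmb{\Lambda}/\sqrt{N}$ for (c) and (f), and the trace identity $E\|\mathcal{E}\pmb{\Lambda}\|^2=T\,\mathrm{tr}(\pmb{\Lambda}^\top\pmb{\Sigma}_\varepsilon\pmb{\Lambda})=O(TN)$ for (d) and (g).

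The gap is in Part~(e). Your factorization $\|\mathcal{E}\mathcal{E}^\top\mathbf{F}\|\le\|\mathcal{E}\|_2\|\mathcal{E}^\top\mathbf{F}\|$ is valid, and $\|\mathcal{E}^\top\mathbf{F}\|=O_P(\sqrt{TN})$ is fine, but the spectral bound $\|\mathcal{E}\|_2^2=O_P(T\vee N)$ is not a consequence of Assumptions~\ref{Assumption1}--\ref{Assumption4} as stated, and a ``Nemirovski-type inequality'' does not deliver it. Nemirovski's inequality controls maxima of coordinates of sums of independent vectors; here you need the operator norm of a centered $N\times N$ random matrix $\mathcal{E}^\top\mathcal{E}-T\pmb{\Sigma}_\varepsilon$ built from data with dependence in \emph{both} dimensions and only a fourth-moment condition. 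The Frobenius bound from (b) gives $\|\mathcal{E}^\top\mathcal{E}-T\pmb{\Sigma}_\varepsilon\|=O_P(N\sqrt{T})$, which for $N\asymp T$ is $T^{3/2}$, far too large. Sharp spectral bounds of the Bai--Yin type under physical dependence exist in the literature but require additional structure or higher moments not imposed here; you cannot simply invoke them.

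The paper avoids this entirely by working row-by-row on $\frac{1}{TN}\mathcal{E}\mathcal{E}^\top\mathbf{F}$. For fixed $s$, it splits
\[
\frac{1}{TN}\sum_{t=1}^T\mathbf{f}_t\pmb{\varepsilon}_t^\top\pmb{\varepsilon}_s
=\frac{1}{TN}\sum_{t=1}^T\mathbf{f}_t\bigl(\pmb{\varepsilon}_t^\top\pmb{\varepsilon}_s-E(\pmb{\varepsilon}_t^\top\pmb{\varepsilon}_s)\bigr)
+\frac{1}{TN}\sum_{t=1}^T\mathbf{f}_tE(\pmb{\varepsilon}_t^\top\pmb{\varepsilon}_s).
\]
The first piece is $O(1/\sqrt{TN})$ in $L^2$ by Lemma~\ref{Lemma1}(c), which is precisely where Assumption~\ref{Assumption4}(c) enters. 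The second piece is $O(1/T)$ because $\max_i|E(\varepsilon_{it}\varepsilon_{is})|=O(|t-s|^{-\alpha})$ is summable in $t$ (a consequence of Assumption~\ref{Assumption4}(b)). Squaring and summing over $s\in[T]$ then gives the Frobenius bound $O_P(1/\sqrt{T\wedge N})$. In short, the paper trades the hard spectral-norm estimate for the tailor-made moment condition in Assumption~\ref{Assumption4}(c); your route would need an independent high-dimensional random-matrix argument that the paper's assumptions do not support.
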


\begin{lemma}
{\small \label{Lemma3} Suppose that Assumptions \ref{Assumption2}--\ref%
{Assumption4} hold. Let $\mathbf{H}=\frac{1}{N}\pmb{\Lambda}^{\top }%
\pmb{\Lambda}\cdot \frac{1}{T}\mathbf{F}^{\top }\widehat{\mathbf{F}}\cdot (%
\frac{1}{TN}\widehat{ \mathbf{V}})^{-1}.$ Then }

{\small \smallskip }

{\small \noindent (a) $\frac{1}{\sqrt{T}}\Vert \widehat{\mathbf{F}}-\mathbf{F%
}\mathbf{H}\Vert =O_{P}(\frac{1}{\sqrt{T\wedge N}})$; }

{\small \smallskip }

{\small \noindent (b) $\Vert \frac{1}{T}\mathbf{F}^{\top }(\widehat{\mathbf{F%
}}-\mathbf{F} \mathbf{H})\Vert =O_{P}(\frac{1}{T\wedge N})$; }

{\small \smallskip }

{\small \noindent (c) $\frac{1}{\sqrt{T}}\Vert \frac{1}{TN}\mathcal{E}%
\mathcal{E}^{\top }( \widehat{\mathbf{F}}-\mathbf{F}\mathbf{H})\Vert =O_{P}(%
\frac{1}{(T\wedge N)^{3/2}})$; }

{\small \smallskip }

{\small \noindent (d) $\Vert \frac{1}{TN}\sum_{i=1}^{N}(1-a_{i})\mathcal{E}%
_{i}^{\top }( \widehat{\mathbf{F}}-\mathbf{F}\mathbf{H})\Vert =O_{P}(\frac{1%
}{N})$. }
\end{lemma}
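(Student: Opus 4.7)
The plan is to follow the standard factor-model eigenvalue decomposition argument of \cite{BN2002} and \cite{bai2003inferential}, starting from the identity $\widehat{\mathbf{F}}\widehat{\mathbf{V}}=\mathbf{X}\mathbf{X}^{\top}\widehat{\mathbf{F}}$. Under $\mathbb{H}_0$, $\mathbf{X}=\mathbf{F}\pmb{\Lambda}^{\top}+\mathcal{E}$, so expanding $\mathbf{XX}^{\top}$ into four pieces, post-multiplying by $\widehat{\mathbf{F}}\widehat{\mathbf{V}}^{-1}$, and identifying the term $\mathbf{F}(\pmb{\Lambda}^{\top}\pmb{\Lambda})\mathbf{F}^{\top}\widehat{\mathbf{F}}\widehat{\mathbf{V}}^{-1}$ with $\mathbf{F}\mathbf{H}$ yields the key expansion
\begin{equation*}
\widehat{\mathbf{F}}-\mathbf{F}\mathbf{H}=\bigl[\mathbf{F}\pmb{\Lambda}^{\top}\mathcal{E}^{\top}\widehat{\mathbf{F}}+\mathcal{E}\pmb{\Lambda}\mathbf{F}^{\top}\widehat{\mathbf{F}}+\mathcal{E}\mathcal{E}^{\top}\widehat{\mathbf{F}}\bigr]\widehat{\mathbf{V}}^{-1}. \tag{$\ast$}
\end{equation*}
Throughout the proof I would use the two basic facts $\|\widehat{\mathbf{F}}\|/\sqrt{T}=O_P(1)$ (from the normalization $T^{-1}\widehat{\mathbf{F}}^{\top}\widehat{\mathbf{F}}=\mathbf{I}_r$) and $\|(TN)\widehat{\mathbf{V}}^{-1}\|_2=O_P(1)$, the latter being a consequence of Assumption~\ref{Assumption2} and Lemma~\ref{Lemma2}(b) via a standard Weyl-type argument on the eigenvalues of $\mathbf{XX}^{\top}/(TN)$.

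For part (a), I would take the Frobenius norm of $(\ast)$ and divide by $\sqrt{T}$: Lemma~\ref{Lemma2}(c) controls the first term at rate $1/\sqrt{TN}$, Lemma~\ref{Lemma2}(d) handles the second at the same rate, and Lemma~\ref{Lemma2}(e) together with $\|\widehat{\mathbf{F}}-\mathbf{F}\mathbf{H}\|/\sqrt{T}$ being small gives the dominant $1/\sqrt{T\wedge N}$ for the third via the bound $\|\mathcal{E}\mathcal{E}^{\top}\widehat{\mathbf{F}}\|/\sqrt{T}\leq \|\mathcal{E}\mathcal{E}^{\top}/(TN)\|\cdot \|\widehat{\mathbf{F}}\|/\sqrt{T}\cdot TN$. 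For part (b), I would left-multiply $(\ast)$ by $\mathbf{F}^{\top}/T$; the first two terms pick up an additional $1/\sqrt{T}$ through $\|\mathbf{F}^{\top}\mathcal{E}/T\|$, while for the third I would split $\mathcal{E}\mathcal{E}^{\top}\widehat{\mathbf{F}}=\mathcal{E}\mathcal{E}^{\top}\mathbf{F}\mathbf{H}+\mathcal{E}\mathcal{E}^{\top}(\widehat{\mathbf{F}}-\mathbf{F}\mathbf{H})$, bounding $\mathbf{F}^{\top}\mathcal{E}\mathcal{E}^{\top}\mathbf{F}/T^{2}$ via Lemma~\ref{Lemma1}(c) summed over $t,s$ and invoking part (a) on the residual piece. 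Parts (c) and (d) follow by pre-multiplying $(\ast)$ by $\mathcal{E}\mathcal{E}^{\top}/(TN)$ and by $\frac{1}{TN}\sum_{i=1}^{N}(1-a_i)\mathcal{E}_i^{\top}$, respectively, and then invoking Lemma~\ref{Lemma2}(b,c,e) for (c) and Lemma~\ref{Lemma2}(f,g) combined with (c,d) for (d), where the projection weights $(1-a_i)$ supply the extra $\sqrt{T}$ factor that produces the stated $1/N$ bound.

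The main obstacle will be part (c). After substituting $(\ast)$ into $\mathcal{E}\mathcal{E}^{\top}(\widehat{\mathbf{F}}-\mathbf{F}\mathbf{H})/(TN\sqrt{T})$, the third summand becomes a quartic expression of the form $\mathcal{E}\mathcal{E}^{\top}\mathcal{E}\mathcal{E}^{\top}\widehat{\mathbf{F}}\widehat{\mathbf{V}}^{-1}/(TN\sqrt{T})$, which the crude spectral bound $\|\mathcal{E}\mathcal{E}^{\top}/(TN)\|^{2}=O_P(1/(T\wedge N))$ would control only at rate $1/(T\wedge N)$, falling short of the required $1/(T\wedge N)^{3/2}$. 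The remedy is to iterate the $\mathbf{F}\mathbf{H}$ substitution: write $\mathcal{E}\mathcal{E}^{\top}\widehat{\mathbf{F}}=\mathcal{E}\mathcal{E}^{\top}\mathbf{F}\mathbf{H}+\mathcal{E}\mathcal{E}^{\top}(\widehat{\mathbf{F}}-\mathbf{F}\mathbf{H})$ and apply Lemma~\ref{Lemma2}(e) to the first piece (yielding $1/\sqrt{T\wedge N}$) while bootstrapping the residual piece from part (a), which supplies the extra $1/\sqrt{T\wedge N}$ factor. A similar—but one step shorter—iteration in part (d), exploiting the $1/\sqrt{TN}$ rates of Lemma~\ref{Lemma2}(f,g) rather than the slower $1/\sqrt{T\wedge N}$ rate of Lemma~\ref{Lemma2}(e), delivers the final $1/N$ bound. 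Assumptions~\ref{Assumption3}--\ref{Assumption4} enter only through the Lemma~\ref{Lemma1}--\ref{Lemma2} bounds they underwrite.
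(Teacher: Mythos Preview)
Your outline is essentially the paper's proof: the expansion $(\ast)$, the reliance on Lemma~\ref{Lemma2}, and the iteration trick $\widehat{\mathbf{F}}=\mathbf{F}\mathbf{H}+(\widehat{\mathbf{F}}-\mathbf{F}\mathbf{H})$ all match exactly what the authors do.

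One small slip in part~(b): you say the first two terms ``pick up an additional $1/\sqrt{T}$ through $\|\mathbf{F}^{\top}\mathcal{E}/T\|$,'' but after left-multiplying by $\mathbf{F}^{\top}/T$ the \emph{first} summand becomes $\tfrac{1}{T}\mathbf{F}^{\top}\mathbf{F}\cdot\tfrac{1}{TN}\pmb{\Lambda}^{\top}\mathcal{E}^{\top}\widehat{\mathbf{F}}\cdot(\tfrac{1}{TN}\widehat{\mathbf{V}})^{-1}$, which contains $\mathbf{F}^{\top}\mathbf{F}$, not $\mathbf{F}^{\top}\mathcal{E}$. The naive bound $\|\tfrac{1}{TN}\pmb{\Lambda}^{\top}\mathcal{E}^{\top}\|\cdot\|\widehat{\mathbf{F}}\|=O_P(1/\sqrt{N})$ is therefore insufficient when $T<N$. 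The paper handles this by applying to the first term the \emph{same} split you reserve for the third: write $\widehat{\mathbf{F}}=\mathbf{F}\mathbf{H}+(\widehat{\mathbf{F}}-\mathbf{F}\mathbf{H})$, use Lemma~\ref{Lemma2}(c) to get $\tfrac{1}{TN}\pmb{\Lambda}^{\top}\mathcal{E}^{\top}\mathbf{F}\mathbf{H}=O_P(1/\sqrt{TN})$, and bound the remainder via Lemma~\ref{Lemma2}(d) together with part~(a). With that adjustment your sketch goes through.
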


\begin{lemma}
{\small \label{Lemma4} Suppose Assumption \ref{Assumption4} holds. Let $%
\widetilde{\overline{\varepsilon }}_{t,\mathbf{v}}=E(\overline{\varepsilon }
_{t,\mathbf{v}}\,|\, \mathscr{E}_{t-m,t})$ and $\mathscr{E}_{t-m,t}=(\mathbf{%
e }_{t},\mathbf{e}_{t-1},\ldots ,\mathbf{e}_{t-m})$. Let $d_{m}=\sum_{t=0}^{
\infty }\left( \lambda _{4}^{\varepsilon }(t)\wedge (\sum_{j=m+1}^{\infty
}\lambda _{4}^{\varepsilon ,2}(j))^{1/2}\right) $ and $K_{ts}=K\left( \frac{
t-s}{Th}\right) .$ Define 
\begin{eqnarray*}
&&J_{T}=\frac{1}{N}\sum_{i,j=1}^{N}\sum_{1\leq t<s\leq
T}(1-a_{i})(1-a_{j})\varepsilon _{it}\varepsilon _{js}K_{ts},  \notag \\
&&\widetilde{J}_{T}=\frac{1}{N}\sum_{i,j=1}^{N}\sum_{1\leq t<s\leq
T}(1-a_{i})(1-a_{j})\widetilde{\varepsilon }_{it}\widetilde{\varepsilon }
_{js}K_{ts}.
\end{eqnarray*}
Let $h_{t}=\sum_{j=0}^{\infty }E[\widetilde{\overline{\varepsilon }}
_{t+j,a}\,|\, \mathscr{E}_{t}]$, $H_{t}=h_{t}-E[h_{t}\,|\, \mathscr{E}%
_{t-1}] $ , $M_{T}=\sum_{t=1}^{T}H_{t}\sum_{s=1}^{t-1}K_{ts}H_{s}$, and $%
\widetilde{ \overline{\varepsilon }}_{t,a}=\frac{1}{\sqrt{N}}%
\sum_{i=1}^{N}(1-a_{i}) \widetilde{\varepsilon }_{it}$. Then }

{\small \smallskip }

{\small \noindent (a) $\left\vert \sum_{t=1}^{T}b_{t}(\overline{\varepsilon }%
_{t,\mathbf{v}}- \widetilde{\overline{\varepsilon }}_{t,\mathbf{v}%
})\right\vert _{4}=O\left(
(\sum_{t=1}^{T}b_{t}^{2})^{1/2}(\sum_{j=m+1}^{\infty }\lambda
_{4}^{\varepsilon }(j))\right) $ for any $m\geq 1$, any fixed $\mathbf{v}\in 
\mathbb{R}^{N}$ with $\Vert \mathbf{v}\Vert <\infty $ and any fixed sequence 
$\{b_{t}\}$; }

{\small \smallskip }

{\small \noindent (b) $|J_{T}-E(J_{T})-\widetilde{J}_{T}-E(\widetilde{J}%
_{T})|_{2}=O\left( \sqrt{T}(\sum_{t=1}^{T}K^{2}(\frac{t}{Th}%
))^{1/2}d_{m}\right) $; }

{\small \smallskip }

{\small \noindent (c) $|\widetilde{J}_{T}-E(\widetilde{J}_{T})-M_{T}|_{2}%
\leq O(1)T^{1/2}m\left( \max_{1\leq t\leq T-1}K^{2}(\frac{t}{Th}
)+m\sum_{t=1}^{T-1}\left( K(\frac{t}{Th})-K(\frac{t-1}{Th})\right)
^{2}\right) ^{1/2}.$ }
\end{lemma}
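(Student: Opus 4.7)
The common thread in all three parts is to exploit the physical-dependence structure of $\{\pmb{\varepsilon}_t\}$ through orthogonal martingale decompositions of the kind developed in \cite{wu2005nonlinear}, combined with Burkholder and Minkowski inequalities and the pointwise coupling bound $|E[X\mid\mathscr{E}_k]-E[X\mid\mathscr{E}_{k-1}]|_p\le|X-X^{*(k)}|_p$, which for $X=\overline{\varepsilon}_{t,\mathbf{v}}$ gives $|\mathcal{P}_k\overline{\varepsilon}_{t,\mathbf{v}}|_4\le\lambda_4^\varepsilon(t-k)$ thanks to Assumption \ref{Assumption4}(b). Write $\delta_t\coloneqq\overline{\varepsilon}_{t,a}-\widetilde{\overline{\varepsilon}}_{t,a}$ throughout.

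For part (a), I would represent $\overline{\varepsilon}_{t,\mathbf{v}}-\widetilde{\overline{\varepsilon}}_{t,\mathbf{v}}$ as a sum of orthogonal martingale-difference-like contributions indexed by the ``integrated-out'' innovations $\mathbf{e}_{t-j}$ with $j>m$, each of $L^4$-size at most $\lambda_4^\varepsilon(j)$. After exchanging orders of summation in $\sum_t b_t(\overline{\varepsilon}_{t,\mathbf{v}}-\widetilde{\overline{\varepsilon}}_{t,\mathbf{v}})$, the inner sum at each fixed lag $j$ is a martingale-difference sum with respect to a shifted filtration; Burkholder's inequality in $L^4$ controls it by $C(\sum_t b_t^2)^{1/2}\lambda_4^\varepsilon(j)$, and a Minkowski summation over $j>m$ delivers the announced bound.

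For part (b), the bilinear decomposition $J_T-\widetilde{J}_T=\sum_{t<s}K_{ts}[\delta_t\overline{\varepsilon}_{s,a}+\widetilde{\overline{\varepsilon}}_{t,a}\delta_s]$ lets me treat each centered piece in $L^2$ via the projection decomposition $\sum_k\mathcal{P}_k$ and orthogonality across $k$. Two regimes for the projection index emerge: when $k$ sits near the ``$\overline{\varepsilon}$'' factor, the direct bound $|\mathcal{P}_k\overline{\varepsilon}_{s,a}|_4\le\lambda_4^\varepsilon(s-k)$ is effective; when $k$ lies in the older range that feeds $\delta_t$, the square-function estimate $|\delta_t|_4^2\le C\sum_{j>m}\lambda_4^{\varepsilon,2}(j)$ from part (a) is the sharper one. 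Choosing the tighter branch at each lag produces precisely the quantity $d_m$; a Cauchy--Schwarz pairing against the kernel weights and a sum over the outer time index yield the envelope $\sqrt{T}(\sum_{t=1}^T K^2(t/(Th)))^{1/2}d_m$.

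For part (c), I would invoke Wu's martingale approximation. A direct calculation from the definitions of $h_t$ and $H_t$ gives the clean identity $\widetilde{\overline{\varepsilon}}_{t,a}=H_t+R_{t-1}-R_t$, where $R_t\coloneqq E[h_{t+1}\mid\mathscr{E}_t]$ satisfies $|R_t|_4=O(m)$ because $R_t$ is a sum of at most $m+1$ bounded conditional expectations. Substituting this identity into $\widetilde{J}_T$ produces $M_T$ plus three cross terms paired against $K_{ts}$, namely $H_t(R_{s-1}-R_s)$, $(R_{t-1}-R_t)H_s$, and $(R_{t-1}-R_t)(R_{s-1}-R_s)$. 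Abel summation along the $s$- and $t$-indices transfers the $R$-differences onto $K$, replacing kernel entries by the increments $K(s/(Th))-K((s-1)/(Th))$ while leaving telescoping boundary pieces controlled by $\max_t K^2(t/(Th))$. Taking $L^2$-norms via orthogonality of the outer martingale sum then produces the stated bound, with the $T^{1/2}m$ prefactor arising from the martingale square function together with the size $|R_t|_4=O(m)$. The principal technical difficulty lies here: organizing the Abel summation on a triangular array so that the three cross terms and the boundary pieces all fit the announced right-hand side without inflating the $m$-dependence, and matching the $L^2$-costs of the $R_t$ telescopes. Part (b) is conceptually straightforward but requires careful bookkeeping to select the correct branch of the minimum defining $d_m$ at each projection level.
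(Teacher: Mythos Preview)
Your plans for parts (a) and (b) are correct and track the paper's proof closely; the paper routes (b) through an intermediate object $J_T^\diamond$ in which only the earlier index is replaced by its $m$-truncation, but this is just your bilinear split $\delta_t\overline{\varepsilon}_{s,a}+\widetilde{\overline{\varepsilon}}_{t,a}\delta_s$ reorganized, and the two-regime bookkeeping that produces $d_m$ is exactly the point.

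For part (c) the identity $\widetilde{\overline{\varepsilon}}_{t,a}=H_t+R_{t-1}-R_t$ and the Abel-summation idea are precisely what the paper uses (it writes $U_t\coloneqq E[h_t\mid\mathscr{E}_{t-1}]=R_{t-1}$ and $\widetilde{\overline{\varepsilon}}_{s,a}-H_s=U_s-U_{s+1}$), but two points in your accounting need sharpening. First, the paper obtains $|U_t|_2=O(\sqrt{m})$ via Burkholder on the $m$-term sum, not $O(m)$; this is what controls the telescoping boundary pieces at the right rate. Second, and more importantly, after Abel summation the inner sum becomes $\sum_s(\Delta K)_{ts}U_s$, and a triangle inequality using only $|U_s|=O(m)$ would produce $\sum_s|\Delta K|$, not the required $(\sum_s(\Delta K)^2)^{1/2}$. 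The paper closes this by further decomposing $U_s=\sum_{l=1}^m\mathscr{P}_{s-l}U_s$ into $m$ martingale-difference layers and applying Burkholder layer by layer, yielding $O(m)(\sum_s(\Delta K)^2)^{1/2}$; this step is what places the extra factor $m$ under the square root in the stated bound. Finally, the paper keeps the $m$-dependent $\widetilde{\overline{\varepsilon}}_{t,a}$ (not $H_t$) as the outer factor and therefore needs an interlacing device, splitting the outer sum into $\sim 4m$ sparse martingale subsequences to recover the $T^{1/2}$ scale; your choice of $H_t$ as outer factor in one cross term avoids this, but the symmetric cross term $\sum_{s<t}K_{ts}(R_{t-1}-R_t)H_s$ is not a predictable-weight martingale in $s$, so you will still need an $m$-blocking argument there.
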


\begin{lemma}
{\small \label{Lemma5} Suppose that Assumptions \ref{Assumption1} and \ref%
{Assumption4} hold. Let $U_{T}=\sum_{t=2}^{T}H_{t}
\sum_{s=1}^{t-1}H_{s}w_{t,s}$ where $w_{t,s}=\frac{1}{T\sqrt{h}}K\left( 
\frac{t-s}{Th}\right) $. Then $U_{T}\rightarrow _{D}N(0,\sigma _{H}^{2}),$
where $\sigma _{H}^{2}=\lim_{T\rightarrow \infty
}\sum_{t=2}^{T}E(H_{t}^{2})\sum_{s=1}^{t-1}E(H_{s}^{2})w_{t,s}^{2}$. }
\end{lemma}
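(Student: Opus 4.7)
The plan is to recognize $U_T$ as a sum of martingale differences and apply a martingale central limit theorem (e.g., Corollary 3.1 of Hall and Heyde, 1980). With $\mathcal{F}_t := \mathscr{E}_t$, define $D_t := H_t \sum_{s=1}^{t-1} H_s w_{t,s}$ for $2 \leq t \leq T$, so $U_T = \sum_{t=2}^{T} D_t$. Because $H_t = h_t - E[h_t \mid \mathscr{E}_{t-1}]$ is by construction a martingale difference with respect to $\{\mathcal{F}_t\}$ and $\sum_{s<t} H_s w_{t,s}$ is $\mathcal{F}_{t-1}$-measurable, one has $E[D_t \mid \mathcal{F}_{t-1}] = 0$, so $\{D_t, \mathcal{F}_t\}$ forms an MDS. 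It then suffices to verify (i) the conditional variance convergence $V_T := \sum_{t=2}^{T} E[D_t^2 \mid \mathcal{F}_{t-1}] \rightarrow_P \sigma_H^2$ and (ii) a Lindeberg-type condition.

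For (i), setting $\sigma_t^2 := E[H_t^2 \mid \mathcal{F}_{t-1}]$ and expanding the square gives
\begin{equation*}
V_T = \sum_{t=2}^{T} \sigma_t^2 \sum_{s=1}^{t-1} H_s^2 w_{t,s}^2 \; + \; 2 \sum_{t=2}^{T} \sigma_t^2 \sum_{1 \leq s_1 < s_2 \leq t-1} H_{s_1} H_{s_2} w_{t,s_1} w_{t,s_2} \eqqcolon V_T^{(1)} + 2 V_T^{(2)}.
\end{equation*}
For the diagonal piece $V_T^{(1)}$, stationarity of $\{H_t\}$ combined with the algebraic decay of the functional dependence measure in Assumption \ref{Assumption4}(b) yields $E[\sigma_t^2 H_s^2] \rightarrow (E H_1^2)^2$ as $|t-s| \rightarrow \infty$ with summable covariances; a routine second-moment computation then produces $V_T^{(1)} \rightarrow_P (E H_1^2)^2 \lim_T \sum_{t}\sum_{s<t} w_{t,s}^2 = \sigma_H^2$ by the definition of $\sigma_H^2$ together with stationarity. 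For the off-diagonal piece $V_T^{(2)}$, the plan is to bound $\mathrm{Var}(V_T^{(2)})$ directly: the MDS property of $\{H_t\}$ annihilates most of the resulting cross terms, and the residual is controlled via Assumption \ref{Assumption4}(b) and the weighted structure of $w_{t,s}$, giving $V_T^{(2)} = o_P(1)$.

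For (ii), it suffices to verify the Lyapunov-type condition $\sum_{t=2}^{T} E[D_t^4] \rightarrow 0$. By iterated expectation, $E[D_t^4] = E\bigl\{E[H_t^4 \mid \mathcal{F}_{t-1}] \, (\sum_{s<t} H_s w_{t,s})^4\bigr\}$; applying Burkholder's inequality to the inner MDS sum gives $E[(\sum_{s<t} H_s w_{t,s})^4] = O\bigl((\sum_s w_{t,s}^2)^2\bigr) = O(T^{-2})$, using $\sum_s w_{t,s}^2 = O(T^{-1})$ since $K$ has compact support. Combined with the finite fourth moments of $H_t$ inherited from Assumption \ref{Assumption4}(b), this produces $E[D_t^4] = O(T^{-2})$, so $\sum_{t=2}^{T} E[D_t^4] = O(T^{-1}) \rightarrow 0$. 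The martingale CLT then yields $U_T \rightarrow_D N(0, \sigma_H^2)$. The main obstacle lies in step (i)'s off-diagonal analysis: because $\sigma_t^2$ is a nonlinear $\mathcal{F}_{t-1}$-measurable functional of $\{\mathbf{e}_s : s \leq t-1\}$, it is non-trivially correlated with the bilinear forms $H_{s_1} H_{s_2}$, so one must invoke projection-operator decompositions alongside the algebraic decay of $\lambda_4^\varepsilon$ to control the resulting fourth-order covariances.
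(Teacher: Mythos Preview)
Your overall framework---recognizing $U_T$ as a martingale-difference sum and invoking a martingale CLT---matches the paper. But you have overlooked the single structural fact that makes the entire proof go through cleanly: $\{H_t\}$ is not merely an MDS, it is an \emph{$m$-dependent} MDS. This follows from the construction in Lemma~\ref{Lemma4}: $\widetilde{\overline{\varepsilon}}_{t,a}=E[\overline{\varepsilon}_{t,a}\,|\,\mathscr{E}_{t-m,t}]$ depends only on $(\mathbf{e}_{t-m},\ldots,\mathbf{e}_t)$, so $h_t=\sum_{j=0}^{m}E[\widetilde{\overline{\varepsilon}}_{t+j,a}\,|\,\mathscr{E}_t]$ and hence $H_t$ are measurable with respect to $(\mathbf{e}_{t-m},\ldots,\mathbf{e}_t)$, giving independence of $H_t$ and $H_s$ whenever $|t-s|\geq m+1$.

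The paper exploits this by first stripping off the near-diagonal block $\sum_{s=t-m+1}^{t-1}H_s w_{t,s}$ (shown to be $o_P(1)$ since $m/(Th)\to 0$) and then working with $W_{t-1}=\sum_{s=1}^{t-m}H_s w_{s,t}$, which is \emph{independent} of $H_t$. This independence is exactly what your argument is missing, and its absence creates a concrete gap in your Lyapunov step. You write $E[D_t^4]=E\bigl\{E[H_t^4\,|\,\mathcal{F}_{t-1}]\,(\sum_{s<t}H_sw_{t,s})^4\bigr\}$ and then assert this is $O(T^{-2})$ because $E[H_t^4]<\infty$ and $E[(\sum_{s<t}H_sw_{t,s})^4]=O(T^{-2})$. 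But these two marginal bounds do not control the expectation of the product: one would need either $E[H_t^4\,|\,\mathcal{F}_{t-1}]$ bounded a.s.\ (not available) or a H\"older split, which would require eighth moments of $H_t$ that Assumption~\ref{Assumption4}(b) does not provide. With $m$-dependence the issue evaporates: after truncation, $E[H_t^4 W_{t-1}^4]=E[H_t^4]\,E[W_{t-1}^4]$ directly.

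The same device also streamlines your step~(i). Rather than handling correlations between $\sigma_t^2$ and $H_s^2$ via decay rates, the paper uses that $H_t^\ast\coloneqq E(H_t^2\,|\,\mathscr{E}_{t-1})-E(H_t^2)$ has only an $m$-term projection decomposition $H_t^\ast=\sum_{j=0}^{m-1}\mathcal{P}_{t-j}H_t^\ast$; each resulting piece $(\mathcal{P}_{t-j}H_t^\ast)W_{t-1}^2$ is itself an MDS, and Burkholder plus $|W_{t-1}|_4^4=O(T^{-2})$ finishes the bound. Your acknowledgement that the off-diagonal analysis is the ``main obstacle'' is correct, but the resolution is not a generic appeal to algebraic decay of $\lambda_4^\varepsilon$; it is the finite-range dependence of $H_t$ built into its definition.
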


\begin{lemma}
{\small \label{Lemma6} Suppose that Assumptions \ref{Assumption1} and \ref%
{Assumption4} hold. Then $\sqrt{1/h}[Q_{T}-E(Q_{T})]\rightarrow _{D}N(0,2\nu
_{0}\sigma _{\varepsilon ,a}^{4}),$ where 
\begin{eqnarray*}
Q_{T} &=&\frac{1}{TN}\sum_{i,j=1}^{N}\sum_{t,s=1}^{T}(1-a_{i})(1-a_{j})
\varepsilon _{it}\varepsilon _{js}K\left( \frac{t-s}{Th}\right) ,  \notag \\
\sigma _{\varepsilon ,a}^{2} &=&\lim_{(N,T)\rightarrow \infty }\frac{1}{Th}
\sum_{t,s=1}^{T}E(\overline{\varepsilon }_{t,a}\overline{\varepsilon }
_{s,a})K\left( \frac{t-s}{Th}\right) ,\text{ }\nu _{0}=\int_{-1}^{1}K^{2}(u) 
\mathrm{d}u,
\end{eqnarray*}
and $\overline{\varepsilon }_{t,a}=\frac{1}{\sqrt{N}}\sum_{i=1}^{N}(1-a_{i})
\varepsilon _{it}$. }
\end{lemma}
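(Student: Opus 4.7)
The plan is to reduce $Q_T-E(Q_T)$ to a martingale quadratic form for which Lemma \ref{Lemma5} already provides a CLT. Using $\overline{\varepsilon}_{t,a}=N^{-1/2}\sum_{i}(1-a_i)\varepsilon_{it}$, I first rewrite
\[
Q_T=\frac{1}{T}\sum_{t,s=1}^{T}\overline{\varepsilon}_{t,a}\overline{\varepsilon}_{s,a}K\!\left(\frac{t-s}{Th}\right),
\]
and split $Q_T-E(Q_T)=D_T+(2/T)\bigl[J_T-E(J_T)\bigr]$, where $D_T=T^{-1}K(0)\sum_{t}[\overline{\varepsilon}_{t,a}^{2}-E(\overline{\varepsilon}_{t,a}^{2})]$ and $J_T$ is the off-diagonal sum introduced in Lemma \ref{Lemma4}. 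Under Assumption \ref{Assumption4}, $\{\overline{\varepsilon}_{t,a}^{2}\}$ is stationary with summable autocovariances, so $\mathrm{Var}(D_T)=O(T^{-1})$ and hence $\sqrt{1/h}\,D_T=O_P((Th)^{-1/2})=o_P(1)$ because $Th\rightarrow\infty$. The claim therefore reduces to showing that $(2/(T\sqrt{h}))[J_T-E(J_T)]\rightarrow_D N(0,2\nu_0\sigma_{\varepsilon,a}^{4})$.

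To handle the off-diagonal piece I invoke Lemma \ref{Lemma4}(b)--(c). The bounded support of $K$ yields two Riemann-sum estimates, $\sum_{t=1}^{T-1}K^2(t/(Th))=O(Th)$ and, since $K$ is Lipschitz on its support, $\sum_{t=1}^{T-1}(K(t/(Th))-K((t-1)/(Th)))^{2}=O((Th)^{-1})$. Substituting these into Lemma \ref{Lemma4} produces
\[
\frac{1}{T\sqrt{h}}\bigl|J_T-E(J_T)-\widetilde{J}_T+E(\widetilde{J}_T)\bigr|_2=O(d_m), \qquad \frac{1}{T\sqrt{h}}\bigl|\widetilde{J}_T-E(\widetilde{J}_T)-M_T\bigr|_2=O\!\left(\frac{m}{\sqrt{Th}}\right).
\]
Assumption \ref{Assumption4}(b) ensures $d_m\rightarrow 0$ as $m\rightarrow\infty$, so one can select $m=m_T\rightarrow\infty$ with $m_T=o(\sqrt{Th})$ to drive both errors to zero. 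Using the identity $M_T=T\sqrt{h}\,U_T$ for the $U_T$ of Lemma \ref{Lemma5}, this gives $(2/(T\sqrt{h}))[J_T-E(J_T)]=2U_T+o_P(1)$.

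Lemma \ref{Lemma5} then supplies $2U_T\rightarrow_D N(0,4\sigma_H^{2})$, so the remaining task is to identify $4\sigma_H^{2}=2\nu_0\sigma_{\varepsilon,a}^{4}$. Stationarity of $\{\overline{\varepsilon}_{t,a}\}$ makes $E(H_t^{2})$ constant in $t$, and letting $m\rightarrow\infty$ the martingale decomposition of \cite{wu2005nonlinear} identifies this constant with the long-run variance $\sigma_{\varepsilon,a}^{2}$. A kernel Riemann-sum calculation exploiting $|u|/T\leq h\rightarrow 0$ on the support $|u|\leq Th$ gives
\[
\frac{1}{T^{2}h}\sum_{1\leq s<t\leq T}K^{2}\!\left(\frac{t-s}{Th}\right)\longrightarrow \frac{\nu_0}{2},
\]
so $\sigma_H^{2}=\sigma_{\varepsilon,a}^{4}\,\nu_0/2$, delivering the target variance $2\nu_0\sigma_{\varepsilon,a}^{4}$.

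The main obstacle is the simultaneous handling of $m$-dependence and the joint divergence of $(N,T)$: both $M_T$ and $\sigma_H^{2}$ are defined through the $m$-truncated process $\widetilde{\overline{\varepsilon}}_{t,a}$, so a diagonal $(T,m)$ argument is required to control the memory-truncation error, the martingale-approximation error, and the Gaussian approximation error simultaneously, before the limiting variance can be identified with the untruncated long-run variance. The polynomial decay $\alpha>3$ in Assumption \ref{Assumption4}(b) provides ample slack for balancing these rates, but the three-way bookkeeping is the delicate step.
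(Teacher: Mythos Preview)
Your proposal is correct and follows essentially the same route as the paper: split $Q_T$ into the diagonal term and the off-diagonal $J_T$, kill the diagonal via the $O_P(T^{-1/2})$ bound, apply Lemma~\ref{Lemma4}(b)--(c) for the $m$-dependence and martingale approximations, and then invoke Lemma~\ref{Lemma5} with the Riemann-sum calculation $\frac{4}{T^{2}h}\sum_{t>s}K^{2}((t-s)/(Th))\to 2\nu_0$ to pin down the variance. Your bookkeeping on the $m$-selection and the identification $E(H_t^{2})\to\sigma_{\varepsilon,a}^{2}$ is a bit more explicit than the paper's, but the structure is identical.
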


\begin{lemma}
{\small \label{Lemma7} Suppose that $\mathbf{A}$ and $\mathbf{A}+\mathbf{E}$
are $n\times n$ symmetric matrices and that $\mathbf{Q}=(\mathbf{Q}_{1},%
\mathbf{\ Q}_{2})$ with $\mathbf{Q}_{1}$ and $\mathbf{Q}_{2}$ being $n\times
r$ and $n\times (n-r)$, respectively, is an orthogonal matrix such that $%
\normalfont 
\text{span}(\mathbf{Q}_{1})$ is an invariant subspace for $\mathbf{A}$.
Decompose $\mathbf{Q}^{\top }\mathbf{A}\mathbf{Q}$ and $\mathbf{Q}^{\top } 
\mathbf{E}\mathbf{Q}$ as $\mathbf{Q}^{\top }\mathbf{A}\mathbf{Q}=\text{diag}
( \mathbf{D}_{1},\mathbf{D}_{2})$ and $\mathbf{Q}^{\top }\mathbf{E}\mathbf{Q}
=\{\mathbf{E}_{ij}\}_{2\times 2}$. Let $\mathrm{sep}(\mathbf{D}_{1},\mathbf{%
D }_{2})=\min_{\lambda _{1}\in \lambda (\mathbf{D}_{1}),\ \lambda _{2}\in
\lambda (\mathbf{D}_{2})}|\lambda _{1}-\lambda _{2}|$. If $\mathrm{sep}( 
\mathbf{D}_{1},\mathbf{D}_{2})>0$ and $\Vert \mathbf{E}\Vert \leq \mathrm{\
sep }(\mathbf{D}_{1},\mathbf{D}_{2})/5$, then there exists a $(n-r)\times r$
matrix $\mathbf{P}$ with $\Vert \mathbf{P}\Vert \leq 4\Vert \mathbf{E}
_{21}\Vert /\mathrm{sep}(\mathbf{D}_{1},\mathbf{D}_{2})$, such that the
columns of $\mathbf{Q}_{1}^{0}=(\mathbf{Q}_{1}+\mathbf{Q}_{2}\mathbf{P})( 
\mathbf{I}_{r}+\mathbf{P}^{\top }\mathbf{P})^{-1/2}$ define an orthonormal
basis for a subspace that is invariant for $\mathbf{A}+\mathbf{E}$. }
\end{lemma}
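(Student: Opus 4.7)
The plan is to recognize this as a standard invariant-subspace perturbation statement (in the spirit of Stewart--Sun, Theorem V.2.7) and to reduce it to a Riccati-type matrix equation that can be solved by a Banach fixed-point argument. First I would pass to the rotated coordinates $\widetilde{\mathbf{A}}\coloneqq\mathbf{Q}^{\top}\mathbf{A}\mathbf{Q}=\mathrm{diag}(\mathbf{D}_{1},\mathbf{D}_{2})$ and $\widetilde{\mathbf{E}}\coloneqq\mathbf{Q}^{\top}\mathbf{E}\mathbf{Q}=\{\mathbf{E}_{ij}\}_{2\times 2}$, so the problem becomes: find an invariant subspace of $\widetilde{\mathbf{A}}+\widetilde{\mathbf{E}}$ close to $\mathrm{span}\binom{\mathbf{I}_{r}}{\mathbf{0}}$. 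A natural parametrization of such a subspace is the column space of $\binom{\mathbf{I}_{r}}{\mathbf{P}}$ for some $\mathbf{P}\in\mathbb{R}^{(n-r)\times r}$. Invariance, $(\widetilde{\mathbf{A}}+\widetilde{\mathbf{E}})\binom{\mathbf{I}_{r}}{\mathbf{P}}=\binom{\mathbf{I}_{r}}{\mathbf{P}}\mathbf{L}$ for some $\mathbf{L}$, yields $\mathbf{L}=\mathbf{D}_{1}+\mathbf{E}_{11}+\mathbf{E}_{12}\mathbf{P}$ from the top block and, after substitution into the bottom block, the Riccati-type equation
\begin{equation*}
\mathbf{D}_{2}\mathbf{P}-\mathbf{P}\mathbf{D}_{1}=-\mathbf{E}_{21}-\mathbf{E}_{22}\mathbf{P}+\mathbf{P}\mathbf{E}_{11}+\mathbf{P}\mathbf{E}_{12}\mathbf{P}.
\end{equation*}

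Second, I would introduce the Sylvester operator $\mathbf{T}(\mathbf{X})\coloneqq \mathbf{D}_{2}\mathbf{X}-\mathbf{X}\mathbf{D}_{1}$. Because $\mathbf{D}_{1},\mathbf{D}_{2}$ are symmetric with $\mathrm{sep}(\mathbf{D}_{1},\mathbf{D}_{2})>0$, their spectra are disjoint, so $\mathbf{T}$ is invertible with operator norm bound $\|\mathbf{T}^{-1}\|\leq 1/\mathrm{sep}(\mathbf{D}_{1},\mathbf{D}_{2})$; this is the standard fact that follows, e.g., from simultaneous diagonalization of $\mathbf{D}_{1}$ and $\mathbf{D}_{2}$. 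I would then recast the Riccati equation as the fixed-point problem $\mathbf{P}=\Phi(\mathbf{P})$ with $\Phi(\mathbf{P})\coloneqq \mathbf{T}^{-1}\bigl(-\mathbf{E}_{21}-\mathbf{E}_{22}\mathbf{P}+\mathbf{P}\mathbf{E}_{11}+\mathbf{P}\mathbf{E}_{12}\mathbf{P}\bigr)$.

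Third, I would run a contraction argument on the closed ball $\mathcal{B}_{\rho}\coloneqq \{\mathbf{P}:\|\mathbf{P}\|\leq \rho\}$ with $\rho = 4\|\mathbf{E}_{21}\|/\mathrm{sep}(\mathbf{D}_{1},\mathbf{D}_{2})$. Writing $\delta\coloneqq \mathrm{sep}(\mathbf{D}_{1},\mathbf{D}_{2})$ and using $\|\mathbf{E}_{ij}\|\leq \|\mathbf{E}\|\leq \delta/5$, one gets for $\mathbf{P}\in\mathcal{B}_{\rho}$
\begin{equation*}
\|\Phi(\mathbf{P})\|\leq \delta^{-1}\bigl(\|\mathbf{E}_{21}\|+2\|\mathbf{E}\|\rho+\|\mathbf{E}\|\rho^{2}\bigr),
\end{equation*}
and a parallel bound on $\|\Phi(\mathbf{P})-\Phi(\mathbf{P}')\|$ that is linear in $\|\mathbf{P}-\mathbf{P}'\|$ with Lipschitz constant controlled by $\delta^{-1}(2\|\mathbf{E}\|+2\|\mathbf{E}\|\rho)$. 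Plugging $\|\mathbf{E}\|\leq \delta/5$ and $\rho\leq 4\|\mathbf{E}_{21}\|/\delta\leq 4/5$ gives self-mapping $\Phi(\mathcal{B}_{\rho})\subseteq \mathcal{B}_{\rho}$ and contraction with ratio strictly less than $1$; Banach's theorem then produces a unique solution $\mathbf{P}\in\mathcal{B}_{\rho}$, giving the quantitative bound $\|\mathbf{P}\|\leq 4\|\mathbf{E}_{21}\|/\mathrm{sep}(\mathbf{D}_{1},\mathbf{D}_{2})$. Finally, orthonormalizing the columns of $\mathbf{Q}_{1}+\mathbf{Q}_{2}\mathbf{P}$ via the factor $(\mathbf{I}_{r}+\mathbf{P}^{\top}\mathbf{P})^{-1/2}$ (well-defined since $\mathbf{I}_{r}+\mathbf{P}^{\top}\mathbf{P}$ is positive definite) yields $\mathbf{Q}_{1}^{0}$, and the invariance of $\mathrm{span}(\mathbf{Q}_{1}^{0})$ is preserved because invariance depends only on the column span.

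The main obstacle will be the bookkeeping in the third step: one has to choose the radius $\rho$ and the numerical constant $1/5$ consistently so that both the self-mapping and the Lipschitz contraction inequalities hold simultaneously, and to extract exactly the constant $4$ in $\|\mathbf{P}\|\leq 4\|\mathbf{E}_{21}\|/\mathrm{sep}(\mathbf{D}_{1},\mathbf{D}_{2})$ rather than some suboptimal variant. Everything else—rotation to block-diagonal form, the bound $\|\mathbf{T}^{-1}\|\leq \mathrm{sep}^{-1}$, and the orthonormalization step—is routine linear algebra; the argument is essentially a replay of Stewart's classical derivation adapted to the symmetric case at hand.
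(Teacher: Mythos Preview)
Your sketch is correct and is essentially the classical Stewart--Sun argument for invariant-subspace perturbation via the Riccati equation and a contraction-mapping bound; the numerical bookkeeping with $\rho=4\|\mathbf{E}_{21}\|/\delta$ and $\|\mathbf{E}\|\leq \delta/5$ checks out (self-mapping gives a right-hand side $\leq 2.24\|\mathbf{E}_{21}\|<3\|\mathbf{E}_{21}\|$, and the Lipschitz constant is $\leq 18/25<1$). The paper, however, does not prove this lemma at all: it simply cites Theorem~8.1.10 of Golub and Van Loan (2013) and omits the argument. So you have supplied precisely the textbook derivation underlying the cited result, whereas the paper treats the lemma as a known black box.
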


\begin{lemma}
{\small \label{Lemma8} Suppose that Assumptions \ref{Assumption1}, \ref%
{Assumption3}, \ref{Assumption4}, and \ref{Assumption5} hold and the
selected number of factors satisfies $\widetilde{r}\leq J$. Let $\pmb{\Delta}%
=\plim(TN)^{-1} \mathbf{S}^{(J)}\mathbf{V}^{(J),\top }\mathbb{F}^{\top }%
\mathbb{F}\mathbf{V} ^{(J)}\mathbf{S}^{(J)}$ be a $J\times J$ symmetric
positive definite matrix, $\mathbb{V}$ be an $\widetilde{r}\times \widetilde{%
r}$ diagonal matrix containing the $\widetilde{r}$ largest eigenvalues of $%
\pmb{\Delta}$ in descending order and $\pmb{\Upsilon}$ be a $J\times 
\widetilde{r}$ eigenvector matrix of $\pmb{\Delta}$ corresponding to the $%
\widetilde{r}$ eigenvalues in $\mathbb{V}$. Then we have }

{\small \smallskip }

{\small \noindent (a) $\Vert \frac{1}{TN}\sum_{i=1}^{N}\mathcal{E}%
_{i}^{\dagger }\Vert =o_{P}(1/\sqrt{T})$, where $\mathcal{E}_{i}^{\dagger }$
is the $i^{th}$ column of $\mathcal{E}^{\dagger ,\top }$; }

{\small \smallskip }

{\small \noindent (b) $\frac{1}{\sqrt{T}}\Vert \widehat{\mathbf{F}}-\mathcal{%
F}\mathcal{H}\Vert =o_{P}(1)$, where $\mathcal{H}=\left( \frac{\pmb{\Theta}%
^{\top }\pmb{\Theta} }{N}\right) \left( \frac{\mathcal{F}^{\top }\widehat{%
\mathbf{F}}}{T}\right) \left( \frac{1}{TN}\widehat{\mathbf{V}}\right) ^{-1}$%
; }

{\small \smallskip }

{\small \noindent (c) $\frac{1}{TN}\widehat{\mathbf{V}}=\mathbb{V}+o_{P}(1)$%
; }

{\small \smallskip }

{\small \noindent (d) $\mathbf{R}_{TN}=(\pmb{\Theta}^{\top }\pmb{\Theta}%
/N)^{1/2}\mathcal{F} ^{\top }\widehat{\mathbf{F}}/T=\pmb{\Upsilon}\mathbb{V}%
^{1/2}+o_{P}(1)$. }
\end{lemma}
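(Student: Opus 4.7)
I will establish the four parts in the order (a), (c), (b), (d), because the rate for $\widehat{\mathbf{V}}$ in (c) is what powers the invertibility and projection arguments needed for (b) and (d). The driving structural fact is the SVD split $\mathbb{A}=\mathbf{U}^{(J)}\mathbf{S}^{(J)}(\mathbf{V}^{(J)})^{\top}+\mathbf{U}^{(-J)}\mathbf{S}^{(-J)}(\mathbf{V}^{(-J)})^{\top}$, which gives the orthogonality $(\mathbf{V}^{(J)})^{\top}\mathbf{V}^{(-J)}=\mathbf{0}$ and, via Assumption~\ref{Assumption6}(b), $\|\mathbb{A}^{(-J)}\|^{2}=\sum_{j=J+1}^{R}s_{TN,j}^{2}=o(TN)$. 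Throughout I will treat $\mathcal{E}^{\dagger}=\mathcal{E}+\mathbb{F}(\mathbb{A}^{(-J)})^{\top}$ as a ``perturbed error'' and propagate bounds on its two pieces separately.

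For (a), split $\frac{1}{TN}\sum_{i=1}^{N}\mathcal{E}_{i}^{\dagger}=\frac{1}{TN}\sum_{i=1}^{N}\mathcal{E}_{i}+\frac{1}{TN}\mathbb{F}\sum_{i=1}^{N}(\mathbb{A}^{(-J)})_{\cdot,i}^{\top}$. The first term is $O_{P}(1/\sqrt{TN})=o_{P}(1/\sqrt{T})$ by the argument of Lemma~\ref{Lemma2}(d) applied with loading vector $N^{-1}\mathbf{1}_{N}$. For the second term, Cauchy--Schwarz gives $\|\sum_{i=1}^{N}(\mathbb{A}^{(-J)})_{\cdot,i}^{\top}\|\le\sqrt{N}\|\mathbb{A}^{(-J)}\|=\sqrt{N}\cdot o(\sqrt{TN})$, while $\|\mathbb{F}\|_{2}^{2}=\max_{t}\|\mathbf{f}_{t}\|^{2}=O_{P}(1)$ by Assumption~\ref{Assumption3}; combining yields an $o_{P}(1/\sqrt{T})$ bound. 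For (c), I substitute $\mathbf{X}=\mathcal{F}\pmb{\Theta}^{\top}+\mathcal{E}^{\dagger}$ into $\mathbf{X}\mathbf{X}^{\top}/(TN)$ and expand into four pieces. The leading piece $\mathcal{F}(\pmb{\Theta}^{\top}\pmb{\Theta}/N)\mathcal{F}^{\top}/T$ has the same non-zero eigenvalues as $(\pmb{\Theta}^{\top}\pmb{\Theta}/N)^{1/2}(\mathcal{F}^{\top}\mathcal{F}/T)(\pmb{\Theta}^{\top}\pmb{\Theta}/N)^{1/2}$, which converges in probability to $\pmb{\Delta}$ by the definition of $\mathcal{F}$, $\pmb{\Theta}$ and the orthogonality $\mathbf{U}^{(J),\top}\mathbf{U}^{(J)}=\mathbf{I}$. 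The three cross/error pieces are shown to be $o_{P}(1)$ in operator norm using $\|\mathcal{E}\mathcal{E}^{\top}/(TN)\|=o_{P}(1)$ (Lemma~\ref{Lemma2}(b)), $\|\mathbb{A}^{(-J)}\|^{2}/(TN)=o(1)$, and Lemma~\ref{Lemma2}(c)-type bounds for $\mathcal{E}\pmb{\Theta}$. Weyl's inequality then gives $\widehat{\mathbf{V}}/(TN)=\mathbb{V}+o_{P}(1)$.

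For (b), I use the standard Bai-type identity obtained by subtracting $\mathcal{F}\mathcal{H}\cdot\widehat{\mathbf{V}}/(TN)$ from both sides of $\widehat{\mathbf{F}}\widehat{\mathbf{V}}/(TN)=\mathbf{X}\mathbf{X}^{\top}\widehat{\mathbf{F}}/(TN)$, yielding
\begin{equation*}
\widehat{\mathbf{F}}-\mathcal{F}\mathcal{H}=\bigl(\widehat{\mathbf{V}}/(TN)\bigr)^{-1}\Bigl[\tfrac{1}{TN}\mathcal{F}\pmb{\Theta}^{\top}\mathcal{E}^{\dagger,\top}+\tfrac{1}{TN}\mathcal{E}^{\dagger}\pmb{\Theta}\mathcal{F}^{\top}+\tfrac{1}{TN}\mathcal{E}^{\dagger}\mathcal{E}^{\dagger,\top}\Bigr]\widehat{\mathbf{F}}.
\end{equation*}
By (c) the prefactor is $O_{P}(1)$, and $\|\widehat{\mathbf{F}}\|=\sqrt{T}$. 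Each bracketed term is then bounded in Frobenius norm of order $o_{P}(\sqrt{T})$: the two cross terms use Lemma~\ref{Lemma2}(c)-(d) for the $\mathcal{E}$ part and, for the $\mathbb{F}(\mathbb{A}^{(-J)})^{\top}$ part, the orthogonality $(\mathbf{V}^{(J)})^{\top}\mathbf{V}^{(-J)}=\mathbf{0}$ kills the principal interaction with $\mathcal{F}=\sqrt{T}\mathbb{F}\mathbf{V}^{(J)}$, leaving only a term of order $\|\mathbb{A}^{(-J)}\|/\sqrt{TN}=o(1)$; the quadratic error piece is $o_{P}(1/\sqrt{T\wedge N})$ by the same two ingredients. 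Dividing by $\sqrt{T}$ gives $T^{-1/2}\|\widehat{\mathbf{F}}-\mathcal{F}\mathcal{H}\|=o_{P}(1)$.

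For (d), I apply Lemma~\ref{Lemma7} with $\mathbf{A}=\pmb{\Delta}$, $\mathbf{Q}_{1}=\pmb{\Upsilon}$, and $\mathbf{E}=\mathbf{S}^{(J)}(\mathbf{V}^{(J)})^{\top}\mathbb{F}^{\top}\mathbf{X}\mathbf{X}^{\top}\mathbb{F}\mathbf{V}^{(J)}\mathbf{S}^{(J)}/(TN)^{2}-\pmb{\Delta}$, which by the expansion used in (c) is $o_{P}(1)$. Since $\widetilde{r}\le J$, the eigengap $\mathrm{sep}(\mathbf{D}_{1},\mathbf{D}_{2})$ between the top $\widetilde{r}$ eigenvalues of $\pmb{\Delta}$ and the remaining ones is strictly positive by Assumption~\ref{Assumption6}(b), so Lemma~\ref{Lemma7} provides an orthonormal basis within $o_{P}(1)$ of $\pmb{\Upsilon}$ for the perturbed invariant subspace. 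Translating this back through the identifications $\mathcal{F}^{\top}\widehat{\mathbf{F}}/T=(\pmb{\Theta}^{\top}\pmb{\Theta}/N)^{-1/2}\mathbf{R}_{TN}$ and using (c) to identify the eigenvalues on the right-hand side yields $\mathbf{R}_{TN}=\pmb{\Upsilon}\mathbb{V}^{1/2}+o_{P}(1)$, up to a sign/orthogonal indeterminacy that is absorbed by the convention on $\widehat{\mathbf{F}}$.

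The main obstacle will be handling the contamination $\mathbb{F}(\mathbb{A}^{(-J)})^{\top}$ inside $\mathcal{E}^{\dagger}$: it shares the factor path $\mathbb{F}$ with the dominant signal $\mathcal{F}=\sqrt{T}\mathbb{F}\mathbf{V}^{(J)}$, so naive moment bounds on $\|\mathcal{E}^{\dagger}\|$ are insufficient. The crucial leverage is the SVD orthogonality $\mathbf{V}^{(J),\top}\mathbf{V}^{(-J)}=\mathbf{0}$, which makes $\mathcal{F}^{\top}\mathbb{F}(\mathbb{A}^{(-J)})^{\top}$ structurally vanish in its leading piece, so that Assumption~\ref{Assumption6}(b) alone is enough to render the residual negligible; without exploiting this explicitly, the cross terms in the expansion of $\mathbf{X}\mathbf{X}^{\top}/(TN)$ cannot be driven to zero.
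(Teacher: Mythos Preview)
Your overall architecture is sound, and establishing (c) via Weyl's inequality on $\mathbf{X}\mathbf{X}^{\top}/(TN)$ is a clean alternative to the paper's route (which instead derives an exact $J\times J$ eigenvector identity $(\mathbf{B}_{TN}+\mathbf{d}_{TN}(\mathbf{R}_{TN}^{\top}\mathbf{R}_{TN})^{-1}\mathbf{R}_{TN}^{\top})\mathbf{R}_{TN}=\mathbf{R}_{TN}\cdot\widehat{\mathbf{V}}/(TN)$ and then perturbs). But two steps do not go through as written.

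In (a), the claim $\|\mathbb{F}\|_{2}^{2}=\max_{t}\|\mathbf{f}_{t}\|^{2}=O_{P}(1)$ is false under Assumption~\ref{Assumption3}, which only controls fourth moments; $\max_{t}\|\mathbf{f}_{t}\|$ grows with $T$, so your bound yields only $o_{P}(T^{-1/4})$, not the required $o_{P}(T^{-1/2})$. The paper instead writes the $t$th entry of the contaminant as $\sum_{j>J}s_{TN,j}(\mathbf{f}_{t}^{\top}\mathbf{v}_{j,t})(\sum_{i}u_{j,i})$, uses $(\sum_{i}u_{j,i})^{2}\le N$, and then averages $\|\mathbf{f}_{t}\|^{2}$ over $t$ against $\sum_{t}\|\mathbf{v}_{j,t}\|^{2}=1$ in expectation, avoiding the maximum altogether.

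More seriously, in (b) you invoke the wrong SVD orthogonality. You claim $(\mathbf{V}^{(J)})^{\top}\mathbf{V}^{(-J)}=\mathbf{0}$ ``kills the principal interaction'' $\mathcal{F}^{\top}\mathbb{F}(\mathbb{A}^{(-J)})^{\top}$. It does not: $\mathcal{F}^{\top}\mathbb{F}(\mathbb{A}^{(-J)})^{\top}=\sqrt{T}(\mathbf{V}^{(J)})^{\top}\mathbb{F}^{\top}\mathbb{F}\mathbf{V}^{(-J)}\mathbf{S}^{(-J)}(\mathbf{U}^{(-J)})^{\top}$, and the nontrivial $\mathbb{F}^{\top}\mathbb{F}$ sitting between $(\mathbf{V}^{(J)})^{\top}$ and $\mathbf{V}^{(-J)}$ blocks any cancellation. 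What actually neutralizes the contaminant in the two cross terms of your Bai-type identity is the $\mathbf{U}$-orthogonality: since $\pmb{\Theta}=\mathbf{U}^{(J)}\mathbf{S}^{(J)}/\sqrt{T}$ and $(\mathbb{A}^{(-J)})^{\top}=\mathbf{V}^{(-J)}\mathbf{S}^{(-J)}(\mathbf{U}^{(-J)})^{\top}$, one has $(\mathbb{A}^{(-J)})^{\top}\pmb{\Theta}=\mathbf{0}$ exactly, whence $\mathcal{E}^{\dagger}\pmb{\Theta}=\mathcal{E}\pmb{\Theta}$ and $\pmb{\Theta}^{\top}\mathcal{E}^{\dagger,\top}=\pmb{\Theta}^{\top}\mathcal{E}^{\top}$; the paper then bounds $\|\frac{1}{TN}\mathcal{E}\pmb{\Theta}\|=O_{P}(1/\sqrt{TN})$ directly. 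Reroute your argument through this identity; the mechanism you describe does not deliver the bound.
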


\begin{lemma}
{\small \label{Lemma9} Suppose that Assumptions \ref{Assumption1}--\ref%
{Assumption4} hold and let $(T\wedge N) \sqrt{h} \to \infty$. Let $\mathcal{E%
}_{i}^{\ast }$ be the $i^{th}$ column of $\mathcal{E}^{\ast }\coloneqq%
\{\varepsilon _{it}+a_{TN} \mathbf{g}_{it}^{\top }\mathbf{f}_{t}\}_{T\times
N}.$ Let $\mathbf{H}=( \pmb{\Lambda}^{\top }\pmb{\Lambda}/N)(\mathbf{F}%
^{\top }\widehat{\mathbf{F}} /T)(\frac{1}{TN}\widehat{\mathbf{V}})^{-1}$ and 
$\pmb{\Lambda}=[\pmb{\lambda} _{1},\ldots ,\pmb{\lambda}_{p}]^{\top }.$ Then
under the local alternative $\mathbb{H}_{1}(a_{TN})$ with $%
a_{TN}=(TN)^{-1/2}h^{-1/4}$, we have }

{\small \smallskip }

{\small \noindent (a) $\Vert \frac{1}{TN}\mathcal{E}^{\ast }\mathcal{E}%
^{\ast ,\top }\Vert =O_{P}(\frac{1}{\sqrt{T\wedge N}})$; }

{\small \smallskip }

{\small \noindent (b) $\Vert \frac{1}{TN}\pmb{\Lambda}^{\top }\mathcal{E}%
^{\ast ,\top }\mathbf{F}\Vert =O_{P}(\frac{1}{\sqrt{TN}})$; }

{\small \smallskip }

{\small \noindent (c) $\Vert \frac{1}{TN}\mathcal{E}^{\ast }\pmb{\Lambda}%
\Vert =O_{P}(\frac{1}{ \sqrt{TN}})$; }

{\small \smallskip }

{\small \noindent (d) $\Vert \frac{1}{TN}\mathcal{E}^{\ast }\mathcal{E}%
^{\ast ,\top }\mathbf{F} \Vert =O_{P}(\frac{1}{\sqrt{T\wedge N}})$; }

{\small \smallskip }

{\small \noindent (e) $\Vert \frac{1}{TN}\sum_{i=1}^{N}(1-a_{i})\mathcal{E}%
_{i}^{\ast ,\top } \mathbf{F}\Vert =O_{P}(\frac{1}{\sqrt{TN}})$; }

{\small \smallskip }

{\small \noindent (f) $T^{-1/2}\Vert \widehat{\mathbf{F}}-\mathbf{F}\mathbf{H%
}\Vert =O_{P}( \frac{1}{\sqrt{T\wedge N}})$; }

{\small \smallskip }

{\small \noindent (g) $\Vert \frac{1}{T}\mathbf{F}^{\top }(\widehat{\mathbf{F%
}}-\mathbf{F} \mathbf{H})\Vert =O_{P}(\frac{1}{T\wedge N})$; }

{\small \smallskip }

{\small \noindent (h) $\frac{1}{\sqrt{T}}\Vert \frac{1}{TN}\mathcal{E}^{\ast
}\mathcal{E}^{\ast ,\top }(\widehat{\mathbf{F}}-\mathbf{F}\mathbf{H})\Vert
=O_{P}(\frac{1}{ (T\wedge N)^{3/2}})$; }

{\small \smallskip }

{\small \noindent (i) $\left\Vert \frac{1}{TN}\sum_{i=1}^{N}(1-a_{i})%
\mathcal{E}_{i}^{\ast ,\top }(\widehat{\mathbf{F}}-\mathbf{F}\mathbf{H}%
)\right\Vert =O_{P}(1/N)$. }
\end{lemma}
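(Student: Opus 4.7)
I decompose the effective error matrix as $\mathcal{E}^{\ast} = \mathcal{E} + \mathcal{D}$, where $\mathcal{D}$ is the $T\times N$ matrix with $(t,i)$-entry $a_{TN}\mathbf{g}_{it}^{\top}\mathbf{f}_t$. Since $\mathbf{g}_{it}=\mathbf{g}_i(t/T)$ is bounded and $E\|\mathbf{f}_t\|^{2} < \infty$ by Assumption~\ref{Assumption3}, one has $E\|\mathcal{D}\|^{2} = O(a_{TN}^{2}TN) = O(h^{-1/2})$, so $\|\mathcal{D}\| = O_P(h^{-1/4})$. The bandwidth restrictions in Assumption~\ref{Assumption1}(b) (in particular $Th\to\infty$, $hT^{2}/N^{2}\to 0$, and $hN^{2}/T^{2}\to 0$) guarantee that every term carrying a factor of $a_{TN}$ or $a_{TN}^{2}$ is of strictly smaller order than the corresponding pure-$\mathcal{E}$ rate from Lemma~\ref{Lemma2} or Lemma~\ref{Lemma3}, so the plan reduces to peeling off the $\mathcal{D}$-contributions one at a time.

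For parts (a)--(e), I expand each bilinear/quadratic form in $\mathcal{E}^{\ast}$ into four pieces according to the decomposition. The pure-$\mathcal{E}$ piece returns the stated rate directly from Lemma~\ref{Lemma2}. For the mixed and pure-$\mathcal{D}$ pieces I use the submultiplicativity $\|AB\|\leq\|A\|_{2}\|B\|$ together with $\|\pmb{\Lambda}\|_{2}=O(\sqrt{N})$ (Assumption~\ref{Assumption2}) and $\|\mathbf{F}\|_{2}=O_P(\sqrt{T})$ (Assumption~\ref{Assumption3}). For instance, in (c), $\|\mathcal{D}\pmb{\Lambda}\|/(TN) = O_P(h^{-1/4}\sqrt{N}/(TN)) = o_P(1/\sqrt{TN})$ since $Th^{1/2}\to\infty$; in (a) and (d), $\|\mathcal{D}\mathcal{D}^{\top}\|/(TN) = O_P(h^{-1/2}/(TN)) = o_P(1/\sqrt{T\wedge N})$ by the above bandwidth bounds. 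For (e), I additionally invoke the normalization $\int_0^1\mathbf{g}_i(\tau)\,\mathrm{d}\tau=\mathbf{0}_r$ from \eqref{Eq2.8} when bounding the $\mathcal{D}$-contribution to $\frac{1}{TN}\sum_{i=1}^{N}(1-a_{i})\mathcal{D}_i^{\top}\mathbf{F}$.

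For (f)--(i), I repeat the Bai--Ng perturbation argument behind Lemma~\ref{Lemma3}, now with $\mathcal{E}^{\ast}$ in place of $\mathcal{E}$. Starting from $\widehat{\mathbf{F}}\widehat{\mathbf{V}} = \mathbf{X}\mathbf{X}^{\top}\widehat{\mathbf{F}}$ with $\mathbf{X}=\mathbf{F}\pmb{\Lambda}^{\top}+\mathcal{E}^{\ast}$, one obtains
\begin{equation*}
\widehat{\mathbf{F}} - \mathbf{F}\mathbf{H} = \bigl[\mathbf{F}\pmb{\Lambda}^{\top}\mathcal{E}^{\ast,\top} + \mathcal{E}^{\ast}\pmb{\Lambda}\mathbf{F}^{\top} + \mathcal{E}^{\ast}\mathcal{E}^{\ast,\top}\bigr]\widehat{\mathbf{F}}\widehat{\mathbf{V}}^{-1}.
\end{equation*}
Plugging parts (a)--(e) into the three summands delivers (f) and (g); parts (h) and (i) combine (a)/(e) with (f) through a Cauchy--Schwarz step mimicking the proof of Lemma~\ref{Lemma3}(c)--(d). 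The asymptotic positivity of $\widehat{\mathbf{V}}/(TN)$ follows from Lemma~\ref{Lemma7} applied with perturbation $\frac{1}{TN}(\mathcal{E}^{\ast}\pmb{\Lambda}\mathbf{F}^{\top}+\mathbf{F}\pmb{\Lambda}^{\top}\mathcal{E}^{\ast,\top}+\mathcal{E}^{\ast}\mathcal{E}^{\ast,\top})$, whose spectral norm is $o_P(1)$ by (a)--(c).

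The main obstacle is the careful rate bookkeeping for the $\mathcal{D}$-pieces. Specifically, I need $(TN)^{-1}h^{-1/2} = o((T\wedge N)^{-1/2})$ for the pure $\mathcal{D}\mathcal{D}^{\top}$ contribution (exactly the content of Assumption~\ref{Assumption1}(b)) and $h^{-1/4}/\sqrt{T}\to 0$ for the mixed contributions (which follows from $Th\to\infty$ combined with $h\to 0$). Beyond this, everything is essentially notation-tracking on top of Lemmas~\ref{Lemma2}--\ref{Lemma3}.
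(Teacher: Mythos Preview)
Your overall strategy---decompose $\mathcal{E}^{\ast}=\mathcal{E}+\mathcal{D}$, quote Lemma~\ref{Lemma2} for the pure-$\mathcal{E}$ pieces, bound the $\mathcal{D}$-perturbations, then rerun the Lemma~\ref{Lemma3} expansion for (f)--(i)---is exactly what the paper does. But there is a real gap in your treatment of part~(b). Since $\mathcal{E}^{\ast}$ enters $\pmb{\Lambda}^{\top}\mathcal{E}^{\ast,\top}\mathbf{F}$ linearly, the only extra term is the $\mathcal{D}$-contribution, and your crude submultiplicativity bound gives
\[
\frac{1}{TN}\bigl\|\pmb{\Lambda}^{\top}\mathcal{D}^{\top}\mathbf{F}\bigr\|
\;\leq\;\frac{1}{TN}\,\|\pmb{\Lambda}\|_{2}\,\|\mathcal{D}\|\,\|\mathbf{F}\|_{2}
\;=\;O_{P}\!\bigl(h^{-1/4}/\sqrt{TN}\bigr),
\]
which overshoots the target $O_{P}(1/\sqrt{TN})$ by the divergent factor $h^{-1/4}$. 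The normalization $\int_{0}^{1}\mathbf{g}_{i}(\tau)\,\mathrm{d}\tau=\mathbf{0}_{r}$ is needed here as well, not only in (e). The paper splits
\[
\frac{1}{TN}\pmb{\Lambda}^{\top}\mathbf{G}^{\top}\mathbf{F}
=\frac{1}{N}\sum_{i}\pmb{\lambda}_{i}\Bigl(\tfrac{1}{T}\sum_{t}\mathbf{g}_{it}\Bigr)^{\!\top}\pmb{\Sigma}_{f}
\;+\;\frac{1}{TN}\sum_{i}\pmb{\lambda}_{i}\sum_{t}\mathbf{g}_{it}^{\top}\bigl(\mathbf{f}_{t}\mathbf{f}_{t}^{\top}-\pmb{\Sigma}_{f}\bigr),
\]
kills the first piece via the normalization, and bounds the centered second piece by $O_{P}(1/\sqrt{T})$ via the Lemma~\ref{Lemma1}-type martingale argument. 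This yields $a_{TN}\cdot O_{P}(1/\sqrt{T})=O_{P}\bigl(h^{-1/4}/(T\sqrt{N})\bigr)=o_{P}(1/\sqrt{TN})$ since $Th^{1/2}\to\infty$. Without this refinement (b) does not close, and because (b) is an input to the expansions in (g) and (i), the gap propagates downstream.
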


\begin{lemma}
{\small \label{Lemma10} Suppose that Assumptions \ref{Assumption1}--\ref%
{Assumption4} hold with $E(\mathbf{f}_t)=\mathbf{0}$. Let $\mathcal{E}%
_{i}^{\ast }$ is the $i^{th}$ column of $\mathcal{E}^{\ast }\coloneqq%
\{\varepsilon _{it}+a_{TN} \mathbf{g}_{it}^{\top }\mathbf{f}_{t}\}_{T\times
N}.$ Let $\mathbf{H}=( \pmb{\Lambda}^{\top }\pmb{\Lambda}/N)(\mathbf{F}%
^{\top }\widehat{\mathbf{F}} /T)(\frac{1}{TN}\widehat{\mathbf{V}})^{-1}$ and 
$\pmb{\Lambda}=[\pmb{\lambda} _{1},\ldots ,\pmb{\lambda}_{p}]^{\top }.$ Then
under the local alternative $\mathbb{H}_{1}(a_{TN})$ with $%
a_{TN}=N^{-1/2}h^{1/4}$, we have }

{\small \smallskip }

{\small \noindent (a) $\Vert \frac{1}{TN}\mathcal{E}^{\ast }\mathcal{E}%
^{\ast ,\top }\Vert =O_{P}(\frac{1}{\sqrt{T\wedge N}})$; }

{\small \smallskip }

{\small \noindent (b) $\Vert \frac{1}{TN}\pmb{\Lambda}^{\top }\mathcal{E}%
^{\ast ,\top }\mathbf{\ F}\Vert =O_{P}(\frac{1}{\sqrt{TN}})$; }

{\small \smallskip }

{\small \noindent (c) $\Vert \frac{1}{TN}\mathcal{E}^{\ast }\pmb{\Lambda}%
\Vert =O_{P}(\frac{1}{ \sqrt{TN}})$; }

{\small \smallskip }

{\small \noindent (d) $\Vert \frac{1}{TN}\mathcal{E}^{\ast }\mathcal{E}%
^{\ast ,\top }\mathbf{F} \Vert =O_{P}(\frac{1}{\sqrt{T\wedge N}})$; }

{\small \smallskip }

{\small \noindent (e) $\Vert \frac{1}{TN}\sum_{i=1}^{N}(1-a_{i})\mathcal{E}%
_{i}^{\ast ,\top } \mathbf{F}\Vert =O_{P}(\frac{1}{\sqrt{TN}})$; }

{\small \smallskip }

{\small \noindent (f) $T^{-1/2}\Vert \widehat{\mathbf{F}}-\mathbf{F}\mathbf{H%
}\Vert =O_{P}( \frac{1}{\sqrt{T\wedge N}})$; }

{\small \smallskip }

{\small \noindent (g) $\Vert \frac{1}{T}\mathbf{F}^{\top }(\widehat{\mathbf{F%
}}-\mathbf{F} \mathbf{H})\Vert =O_{P}(\frac{1}{T\wedge N})$; }

{\small \smallskip }

{\small \noindent (h) $\frac{1}{\sqrt{T}}\Vert \frac{1}{TN}\mathcal{E}^{\ast
}\mathcal{E}^{\ast ,\top }(\widehat{\mathbf{F}}-\mathbf{F}\mathbf{H})\Vert
=O_{P}(\frac{1}{ (T\wedge N)^{3/2}})$; }

{\small \smallskip }

{\small \noindent (i) $\left\Vert \frac{1}{TN}\sum_{i=1}^{N}(1-a_{i})%
\mathcal{E}_{i}^{\ast ,\top }(\widehat{\mathbf{F}}-\mathbf{F}\mathbf{H}%
)\right\Vert =O_{P}(1/N)$. }
\end{lemma}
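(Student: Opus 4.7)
The plan is to mirror the proof of Lemma \ref{Lemma9} term by term, but exploit the additional moment conditions $E(\mathbf{f}_t)=\mathbf{0}$ and $E(\mathbf{f}_t\pmb{\varepsilon}_s^{\top})=\mathbf{0}$ to accommodate the slower shrinking rate $a_{TN}=N^{-1/2}h^{1/4}$. Writing $\mathcal{E}^{*}=\mathcal{E}+a_{TN}\mathcal{D}$ with $\mathcal{D}=\{\mathbf{g}_{it}^{\top}\mathbf{f}_{t}\}_{T\times N}$, every bilinear expression in parts (a)--(e) splits into three pieces: a pure $\mathcal{E}$ contribution (handled by Lemma \ref{Lemma2}), a cross contribution linear in $\mathcal{D}$, and a pure $\mathcal{D}$ contribution. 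Under $E(\mathbf{f}_{t}\pmb{\varepsilon}_{s}^{\top})=\mathbf{0}$ the cross terms are centered sums of products of $\mathbf{f}_t$ and $\varepsilon_{it}$, so standard second-moment bounds combined with Assumption \ref{Assumption4}(b) give the same rates as in Lemma \ref{Lemma1}(b), and the prefactor $a_{TN}$ makes them negligible. The pure $\mathcal{D}$ piece is controlled by boundedness of $\mathbf{g}_i(\cdot)$ and stationarity of $\mathbf{f}_t$, multiplied by $a_{TN}^{2}=N^{-1}h^{1/2}\to 0$; for (a) and (d) this contributes $O_P(a_{TN}^{2}) = o_P((T\wedge N)^{-1/2})$ under Assumption \ref{Assumption1}(b).

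For the parts where the rate really binds, namely (b), (c) and (e), the plan is to use the normalization $\int_{0}^{1}\mathbf{g}_i(\tau)\mathrm{d}\tau=\mathbf{0}_r$ together with $E(\mathbf{f}_t)=\mathbf{0}$. For (e) one encounters $\frac{a_{TN}}{TN}\sum_{i,t}(1-a_i)\mathbf{g}_{it}^{\top}\mathbf{f}_t\mathbf{f}_t^{\top}$; replacing $\mathbf{f}_t\mathbf{f}_t^{\top}$ by $\pmb{\Sigma}_f$ plus a mean-zero fluctuation reduces it to a Riemann sum $\frac{a_{TN}}{N}\sum_i(1-a_i)\pmb{\Sigma}_f\int_0^1\mathbf{g}_i(\tau)\mathrm{d}\tau=0$ up to discretization error, with the fluctuation piece contributing $O_P(a_{TN}/\sqrt{TN})$. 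An analogous argument, using $E(\mathbf{f}_t)=\mathbf{0}$ to handle $\frac{1}{T}\sum_t\mathbf{f}_t$ by a central-limit-type bound of order $T^{-1/2}$, handles (b) and (c). After multiplying by $a_{TN}$, all cross terms are dominated by the pure $\mathcal{E}$ bounds $O_P((TN)^{-1/2})$ from Lemma \ref{Lemma2}, giving the claimed rates.

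For the PCA-based results (f)--(i), the plan is to apply the Davis--Kahan-type argument of Lemma \ref{Lemma7} to the perturbation of $\mathbf{X}\mathbf{X}^{\top}$ induced by passing from $\mathcal{E}$ to $\mathcal{E}^{*}$. The rotation matrix $\mathbf{H}$ satisfies the same identity as in Lemma \ref{Lemma3}, and the algebraic expansions of $\widehat{\mathbf{F}}-\mathbf{F}\mathbf{H}$ reproduce the same eight summands encountered there; each additional summand involving $\mathcal{D}$ is bounded using parts (a)--(e) above together with $a_{TN}=o(1)$, so the dominant error rates $(T\wedge N)^{-1/2}$, $(T\wedge N)^{-1}$, $(T\wedge N)^{-3/2}$ and $N^{-1}$ are preserved. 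The main obstacle, as in the local-power proof for Theorem \ref{theorem3}, will be bookkeeping in part (e): one must verify that the mean-zero improvement from $E(\mathbf{f}_t)=\mathbf{0}$ really kicks in \emph{before} taking absolute values, since naively bounding $|\mathbf{f}_t|$ would yield only $O_P(a_{TN}/\sqrt{T})$, which is insufficient. Once this is established, the remaining steps are routine extensions of Lemmas \ref{Lemma2} and \ref{Lemma3} and go through verbatim.
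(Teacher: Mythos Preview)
Your approach is correct and matches the paper's: split $\mathcal{E}^{*}=\mathcal{E}+a_{TN}\mathcal{D}$ and recycle the bounds of Lemmas~\ref{Lemma2}--\ref{Lemma3}, using that $a_{TN}=N^{-1/2}h^{1/4}$ satisfies both $a_{TN}=o(N^{-1/2})$ and (via $hT^{2}/N^{2}\to0$ in Assumption~\ref{Assumption1}(b)) $a_{TN}=o((T\wedge N)^{-1/2})$.

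Two clarifications are worth making. First, your ``main obstacle'' in part~(e) is not an obstacle: the naive bound already gives $a_{TN}\cdot O_{P}(T^{-1/2})=h^{1/4}(TN)^{-1/2}=o_{P}((TN)^{-1/2})$, which is exactly the required rate, and this is precisely the route the paper takes. Second, the paper's proof of this lemma never actually invokes $E(\mathbf{f}_{t})=\mathbf{0}$. The bounds in (b), (c) and (e) come from the normalization $\int_{0}^{1}\mathbf{g}_{i}(\tau)\mathrm{d}\tau=\mathbf{0}_{r}$ (which kills the $\pmb{\Sigma}_{f}$ piece after centering $\mathbf{f}_{t}\mathbf{f}_{t}^{\top}$) together with $a_{TN}=o(N^{-1/2})$; in (c), the crude Frobenius bound $\|\frac{1}{TN}\mathbf{G}\pmb{\Lambda}\|=O_{P}(T^{-1/2})$ is used without any centering at all. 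The zero-mean hypothesis on $\mathbf{f}_{t}$ is listed only because it is inherited from the setting of Proposition~\ref{prop2}, where it is genuinely needed in the power analysis of the test statistic.
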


\section{Proofs of the Main Results in Section \ref{Sec3} \label{App.A3}}

{\small In this section, we prove the main results in Section \ref{Sec3}. }

\medskip

\begin{proof}[Proof of Theorem \ref{theorem1}]

{\small First, by the proof of Lemma \ref{Lemma3}, we have 
\begin{eqnarray*}
\mathbf{F}-\widehat{\mathbf{F}}\mathbf{H}^{-1}&=&-\frac{1}{TN}\mathbf{F} %
\pmb{\Lambda}^{\top }\mathcal{E}^{\top }\widehat{\mathbf{F}}\pmb{\Sigma}_{ 
\widehat{\mathbf{F}}}^{-1}\pmb{\Sigma}_{\pmb{\Lambda}}^{-1}-\frac{1}{N} 
\mathcal{E}\pmb{\Lambda}\pmb{\Sigma}_{\pmb{\Lambda}}^{-1}-\frac{1}{TN} 
\mathcal{E}\mathcal{E}^{\top }\widehat{\mathbf{F}}\pmb{\Sigma}_{\widehat{ 
\mathbf{F}}}^{-1}\pmb{\Sigma}_{\pmb{\Lambda}}^{-1}  \notag \\
&\eqqcolon& -\mathbf{W}_{1}-\mathbf{W }_{2}-\mathbf{W}_{3},
\end{eqnarray*}
where $\mathbf{H}^{-1}=\frac{1}{TN}\widehat{\mathbf{V}}\pmb{\Sigma}_{ 
\widehat{\mathbf{F}}}^{-1}\pmb{\Sigma}_{\pmb{\Lambda}}^{-1}$, $\pmb{\Sigma}
_{ \widehat{\mathbf{F}}}=\frac{1}{T}\mathbf{F}^{\top }\widehat{\mathbf{F}}$
and $\pmb{\Sigma}_{\pmb{\Lambda}}=\frac{1}{N}\pmb{\Lambda}^{\top } %
\pmb{\Lambda}$ . Thus, by definition, we expand $L_{NT}$ as follows: 
\begin{eqnarray*}
L_{NT} &=&\frac{1}{T^{2}N^{2}}\sum_{i,j=1}^{N}\pmb{\lambda}_{i}^{\top }%
\mathbf{F} ^{\top }\mathbf{M}_{\widehat{\mathbf{F}}}\mathbf{K}_{h}\mathbf{M}%
_{\widehat{ \mathbf{F}}}\mathbf{F}\pmb{\lambda}_{j}+\frac{1}{T^{2}N^{2}}%
\sum_{i,j=1}^{N} \mathcal{E}_{i}^{\top }\mathbf{M}_{\widehat{\mathbf{F}}}%
\mathbf{K}_{h} \mathbf{M}_{\widehat{\mathbf{F}}}\mathcal{E}_{j}  \notag \\
&&+\frac{2}{T^{2}N^{2}}\sum_{i,j=1}^{N}\mathcal{E}_{i}^{\top }\mathbf{M}_{ 
\widehat{\mathbf{F}}}\mathbf{K}_{h}\mathbf{M}_{\widehat{\mathbf{F}}}\mathbf{%
F }\pmb{\lambda}_{j}  \notag \\
&\eqqcolon & L_{T,1}+L_{T,2}+L_{T,3}.
\end{eqnarray*}
}

{\small For $L_{T,1},$ we make the following decomposition: 
\begin{eqnarray*}
L_{T,1} &=&\frac{1}{T^{2}N^{2}}\sum_{i,j=1}^{N}\pmb{\lambda}_{i}^{\top }( 
\mathbf{F}-\widehat{\mathbf{F}}\mathbf{H}^{-1})^{\top }\mathbf{M}_{\widehat{ 
\mathbf{F}}}\mathbf{K}_{h}\mathbf{M}_{\widehat{\mathbf{F}}}(\mathbf{F}- 
\widehat{\mathbf{F}}\mathbf{H}^{-1})\pmb{\lambda}_{j}  \notag \\
&=&\frac{1}{T^{2}}\overline{\pmb{\lambda}}^{\top }\mathbf{W}_{1}^{\top } 
\mathbf{M}_{\widehat{\mathbf{F}}}\mathbf{K}_{h}\mathbf{M}_{\widehat{\mathbf{%
F }}}\mathbf{W}_{1}\overline{\pmb{\lambda}}+\frac{1}{T^{2}}\overline{ %
\pmb{\lambda}}^{\top }\mathbf{W}_{2}^{\top }\mathbf{M}_{\widehat{\mathbf{F}}
} \mathbf{K}_{h}\mathbf{M}_{\widehat{\mathbf{F}}}\mathbf{W}_{2}\overline{ %
\pmb{\lambda}}  \notag \\
&&+\frac{1}{T^{2}}\overline{\pmb{\lambda}}^{\top }\mathbf{W}_{3}^{\top } 
\mathbf{M}_{\widehat{\mathbf{F}}}\mathbf{K}_{h}\mathbf{M}_{\widehat{\mathbf{%
F }}}\mathbf{W}_{3}\overline{\pmb{\lambda}}+\frac{2}{T^{2}}\overline{ %
\pmb{\lambda}}^{\top }\mathbf{W}_{1}^{\top }\mathbf{M}_{\widehat{\mathbf{F}}
} \mathbf{K}_{h}\mathbf{M}_{\widehat{\mathbf{F}}}\mathbf{W}_{2}\overline{ %
\pmb{\lambda}}  \notag \\
&&+\frac{2}{T^{2}}\overline{\pmb{\lambda}}^{\top }\mathbf{W}_{1}^{\top } 
\mathbf{M}_{\widehat{\mathbf{F}}}\mathbf{K}_{h}\mathbf{M}_{\widehat{\mathbf{%
F }}}\mathbf{W}_{3}\overline{\pmb{\lambda}}+\frac{2}{T^{2}}\overline{ %
\pmb{\lambda}}^{\top }\mathbf{W}_{2}^{\top }\mathbf{M}_{\widehat{\mathbf{F}}
} \mathbf{K}_{h}\mathbf{M}_{\widehat{\mathbf{F}}}\mathbf{W}_{3}\overline{ %
\pmb{\lambda}}  \notag \\
&\eqqcolon &L_{T,11}+L_{T,12}+L_{T,13}+2L_{T,14}+2L_{T,15}+2L_{T,16},
\end{eqnarray*}
where $\overline{\pmb{\lambda}}=\frac{1}{N}\sum_{i=1}^{N}\pmb{\lambda}_{i}$.
Below, we consider the terms on the right hand side (r.h.s.) of the last
equation one by one. In particular, we will keep the leading term $L_{T,12}$
and show that the other terms are asymptotically negligible. }

{\small For $L_{T,11}$, by using Lemmas \ref{Lemma2} and \ref{Lemma3}, 
\begin{eqnarray*}
|L_{T,11}| &\leq &\frac{1}{T^{2}N^{2}}\Vert \mathbf{F}^{\top }\mathbf{M}_{ 
\widehat{\mathbf{F}}}\mathbf{K}_{h}\mathbf{M}_{\widehat{\mathbf{F}}}\mathbf{%
F }\Vert _{2}\left\Vert \frac{1}{T}\pmb{\Lambda}^{\top }\mathcal{E}^{\top } 
\widehat{\mathbf{F}}\pmb{\Sigma}_{\widehat{\mathbf{F}}}^{-1}\pmb{\Sigma}_{ %
\pmb{\Lambda}}^{-1}\overline{\pmb{\lambda}}\right\Vert ^{2}  \notag \\
&\leq &\frac{2}{T^{2}N^{2}}\Vert \mathbf{K}_{h}\Vert _{2}\Vert \mathbf{M}_{ 
\widehat{\mathbf{F}}}\Vert _{2}^{2}\Vert \mathbf{F}-\widehat{\mathbf{F}} 
\mathbf{H}^{-1}\Vert ^{2}\left\Vert \frac{1}{T}\pmb{\Lambda}^{\top }\mathcal{%
\ E}^{\top }\mathbf{F}\mathbf{H}\pmb{\Sigma}_{\widehat{\mathbf{F}}}^{-1} %
\pmb{\Sigma}_{\pmb{\Lambda}}^{-1}\overline{\pmb{\lambda}}\right\Vert ^{2} 
\notag \\
&&+\frac{2}{T^{2}N^{2}}\Vert \mathbf{K}_{h}\Vert _{2}\Vert \mathbf{M}_{ 
\widehat{\mathbf{F}}}\Vert _{2}^{2}\Vert \mathbf{F}-\widehat{\mathbf{F}} 
\mathbf{H}^{-1}\Vert ^{2}\left\Vert \frac{1}{T}\pmb{\Lambda}^{\top }\mathcal{%
E}^{\top }(\widehat{\mathbf{F}}-\mathbf{F}\mathbf{H})\pmb{\Sigma}_{ \widehat{
\mathbf{F}}}^{-1}\pmb{\Sigma}_{\pmb{\Lambda}}^{-1}\overline{ \pmb{\lambda}}
\right\Vert ^{2}  \notag \\
&\leq &O_{P}(1)\frac{1}{T^{2}N^{2}}\cdot T\cdot \frac{T}{T\wedge N}\cdot 
\frac{N}{T}+O_{P}(1)\frac{1}{T^{2}N^{2}}\cdot T\cdot \frac{T}{T\wedge N}
\cdot \frac{N}{T}\frac{T}{T\wedge N}  \notag \\
&=&O_{P}\left( \frac{1}{TN(T\wedge N)}+\frac{1}{N(T\wedge N)^{2}}\right)
=o_{P}\left( \frac{1}{TN\sqrt{h}}\right) .
\end{eqnarray*}
For $L_{T,12}$, we have 
\begin{eqnarray*}
L_{T,12} &=&\frac{1}{T^{2}N^{2}}\sum_{i,j=1}^{N}\overline{\pmb{\lambda}}%
^{\top } \pmb{\Sigma}_{\pmb{\Lambda}}^{-1}\pmb{\lambda}_{i}\mathcal{E}%
_{i}^{\top } \mathbf{M}_{\widehat{\mathbf{F}}}\mathbf{K}_{h}\mathbf{M}_{%
\widehat{\mathbf{F}}}\mathcal{E}_{j}\pmb{\lambda}_{j}^{\top }\pmb{\Sigma}_{%
\pmb{\Lambda} }^{-1} \overline{\pmb{\lambda}},
\end{eqnarray*}
which can be merged with $L_{T,2}$ later on. For $L_{T,13}$, we have 
\begin{eqnarray*}
|L_{T,13}| &\leq &O_{P}(1)\frac{2}{T^{2}N^{2}}\Vert \mathbf{M}_{\widehat{ 
\mathbf{F}}}\mathbf{K}_{h}\mathbf{M}_{\widehat{\mathbf{F}}}\Vert
_{2}\left\Vert \frac{1}{T}\mathcal{E}\mathcal{E}^{\top }\mathbf{F}
\right\Vert ^{2}  \notag \\
&&+O_{P}(1)\frac{2}{T^{2}N^{2}}\Vert \mathbf{M}_{\widehat{ \mathbf{F}}}%
\mathbf{K}_{h}\mathbf{M}_{\widehat{\mathbf{F}}}\Vert _{2}\left\Vert \frac{1}{%
T}\mathcal{E}\mathcal{E}^{\top }(\widehat{\mathbf{F}} -\mathbf{F}\mathbf{H}%
)\right\Vert ^{2}  \notag \\
&\leq &O_{P}(1)\frac{2}{T^{2}N^{2}}\cdot T\cdot \frac{N^{2}}{T\wedge N}
+O_{P}(1)\frac{2}{T^{2}N^{2}}\cdot T\cdot \frac{TN^{2}}{(T\wedge N)^{3}} 
\notag \\
&=&O_{P}\left( \frac{1}{T(T\wedge N)}+\frac{1}{(T\wedge N)^{3}}\right)
=o_{P}\left( \frac{1}{TN\sqrt{h}}\right) ,
\end{eqnarray*}
where the last equality holds under Asss\ref{Assumption1}(b). Based on the
developments of $L_{T,11}$, $L_{T,12},$ and $L_{T,13}$ and by the
Cauchy-Schwarz (CS) inequality, $L_{T,14}$, $L_{T,15}$ and $L_{T,16}$ are
all asymptotically negligible. }

{\small Similarly, we can show that 
\begin{equation*}
L_{T,3}=-\frac{2}{T^{2}N^{2}}\sum_{i,j=1}^{N}\mathcal{E}_{i}^{\top }\mathbf{%
M }_{\widehat{\mathbf{F}}}\mathbf{K}_{h}\mathbf{M}_{\widehat{\mathbf{F}}} 
\mathcal{E}_{j}\pmb{\lambda}_{j}^{\top }\pmb{\Sigma}_{\pmb{\Lambda}}^{-1} 
\overline{\pmb{\lambda}}+o_{P}\left( \frac{1}{TN\sqrt{h}}\right) .
\end{equation*}
Then we have $L_{NT}=\frac{1}{T^{2}N^{2}}\sum_{i,j=1}^{N}(1-a_{i})(1-a_{j}) 
\mathcal{E}_{i}^{\top }\mathbf{M}_{\widehat{\mathbf{F}}}\mathbf{K}_{h} 
\mathbf{M}_{\widehat{\mathbf{F}}}\mathcal{E}_{j}+o_{P}((TN\sqrt{h})^{-1}),$
where $a_{i}=\pmb{\lambda}_{i}^{\top }\pmb{\Sigma}_{\pmb{\Lambda}}^{-1} 
\overline{\pmb{\lambda}}.$ }

{\small Now, we observe that 
\begin{eqnarray*}
&&\frac{1}{T^{2}N^{2}}\sum_{i,j=1}^{N}(1-a_{i})(1-a_{j})\mathcal{E}
_{i}^{\top }\mathbf{M}_{\widehat{\mathbf{F}}}\mathbf{K}_{h}\mathbf{M}_{ 
\widehat{\mathbf{F}}}\mathcal{E}_{j}  \notag \\
&=&\frac{1}{T^{2}N^{2}}\sum_{i,j=1}^{N}(1-a_{i})(1-a_{j})\mathcal{E}
_{i}^{\top }\mathbf{K}_{h}\mathcal{E}_{j}+\frac{1}{T^{2}N^{2}}
\sum_{i,j=1}^{N}(1-a_{i})(1-a_{j})\mathcal{E}_{i}^{\top }\frac{1}{T}\widehat{
\mathbf{F}}\widehat{\mathbf{F}}^{\top }\mathbf{K}_{h}\frac{1}{T}\widehat{ 
\mathbf{F}}\widehat{\mathbf{F}}^{\top }\mathcal{E}_{j}  \notag \\
&&-\frac{2}{T^{2}N^{2}}\sum_{i,j=1}^{N}(1-a_{i})(1-a_{j})\mathcal{E}
_{i}^{\top }\mathbf{K}_{h}\frac{1}{T}\widehat{\mathbf{F}}\widehat{\mathbf{F}}
^{\top }\mathcal{E}_{j}  \notag \\
&\eqqcolon &B_{1}+B_{2}-2B_{3}.
\end{eqnarray*}
We consider $B_{2}$ first. Note that 
\begin{eqnarray*}
B_{2} &=&\frac{1}{T^{4}N^{2}}\sum_{i,j=1}^{N}(1-a_{i})(1-a_{j})\mathcal{E}
_{i}^{\top }\left\{ (\widehat{\mathbf{F}}-\mathbf{F}\mathbf{H})\widehat{ 
\mathbf{F}}^{\top }\mathbf{K}_{h}\widehat{\mathbf{F}}(\widehat{\mathbf{F}}- 
\mathbf{F}\mathbf{H})^{\top }+(\mathbf{F}\mathbf{H})\widehat{\mathbf{F}}
^{\top }\mathbf{K}_{h}\widehat{\mathbf{F}}(\mathbf{F}\mathbf{H})^{\top
}\right.  \notag \\
&&\left. +2(\widehat{\mathbf{F}}-\mathbf{F}\mathbf{H})\widehat{\mathbf{F}}
^{\top }\mathbf{K}_{h}\widehat{\mathbf{F}}(\mathbf{F}\mathbf{H})^{\top
}\right\} \mathcal{E}_{j}  \notag \\
&\eqqcolon &B_{2,1}+B_{2,2}+2B_{2,3}.
\end{eqnarray*}
By Lemma \ref{Lemma3}(d), Lemma \ref{Lemma2}(f) and the condition $%
hT^{2}/N^{2}\rightarrow 0$ in Assumption \ref{Assumption1}(b), we have 
\begin{eqnarray*}
\left\vert B_{2,1}\right\vert &\leq &\frac{1}{T^{4}N^{2}}\Vert \mathbf{K}
_{h}\Vert _{2}\cdot \left\Vert \sum_{i=1}^{N}(1-a_{i})\mathcal{E}_{i}^{\top
}(\widehat{\mathbf{F}}-\mathbf{F}\mathbf{H})\right\Vert ^{2}\Vert \widehat{ 
\mathbf{F}}\Vert ^{2}=O_{P}(1/N^{2})=o_{P}\left( \frac{1}{TN\sqrt{h}}\right) 
\text{ and}  \notag \\
\left\vert B_{2,2}\right\vert &\lesssim &\frac{1}{T^{4}N^{2}}\Vert \mathbf{K}
_{h}\Vert _{2}\Vert \widehat{\mathbf{F}}\Vert ^{2}\left\Vert
\sum_{i=1}^{N}(1-a_{i})\mathcal{E}_{i}^{\top }\mathbf{F}\right\Vert
^{2}=O_{P}(1/(TN))=o_{P}\left( \frac{1}{TN\sqrt{h}}\right) .
\end{eqnarray*}
By the CS inequality, $B_{2,3}=o_{P}((TN\sqrt{h})^{-1}).$ Then $%
|B_{2}|=o_{P}((TN\sqrt{h})^{-1})$. Analogously, $|B_{3}|=o_{P}((TN\sqrt{h}
)^{-1})$. }

{\small Finally, we can write 
\begin{equation*}
L_{NT}=\frac{1}{T^{2}N^{2}}\sum_{i,j=1}^{N}%
\sum_{t,s=1}^{T}(1-a_{i})(1-a_{j}) \varepsilon _{it}\varepsilon
_{js}h^{-1}K\left( \frac{t-s}{Th}\right) +o_{P}\left( \frac{1}{TN\sqrt{h}}%
\right) .
\end{equation*}
By Lemma \ref{Lemma6}, we have 
\begin{equation*}
TN\sqrt{h}[L_{NT}-(TNh)^{-1}E(Q_{T})]\rightarrow _{D}N(0,2\nu _{0}\sigma
_{\varepsilon ,a}^{4}),
\end{equation*}
where $Q_{T}=\frac{1}{TNh}\sum_{i,j=1}^{N}\sum_{t,s=1}^{T}(1-a_{i})(1-a_{j})
\varepsilon _{it}\varepsilon _{js}K\left( \frac{t-s}{Th}\right) $, $\nu
_{0}=\int_{-1}^{1}K^{2}(u)\mathrm{d}u$. This completes the proof of the
theorem. }
\end{proof}

\begin{proof}[Proof of Theorem \ref{theorem2}]

{\small Let $\widetilde{\mathbf{F}}=(\widehat{\mathbf{F}},\ddot{\mathbf{F}})$%
, where $\ddot{\mathbf{F}}$ includes the last $\widetilde{r}-r$ columns of $%
\widetilde{\mathbf{F}}$. First, if we just focus on the first $r$ columns of 
$\widetilde{\mathbf{F}}$, we still have 
\begin{equation*}
\frac{1}{\sqrt{T}}\Vert \widehat{\mathbf{F}}-\mathbf{F}\mathbf{H}\Vert
=O_{P}(1/\sqrt{T\wedge N}).
\end{equation*}
It follows that 
\begin{equation*}
\frac{1}{T}\ddot{\mathbf{F}}^{\top }\mathbf{F}=\frac{1}{T}\ddot{\mathbf{F}}
^{\top }(\mathbf{F}-\widehat{\mathbf{F}}\mathbf{H}^{-1})=O_{P}(1/\sqrt{
T\wedge N})
\end{equation*}
and 
\begin{equation*}
\mathbf{F}^{\top }\widetilde{\mathbf{F}}/T=(\mathbf{F}^{\top }\widehat{ 
\mathbf{F}}/T,\mathbf{F}^{\top }\ddot{\mathbf{F}}/T)=(\mathbf{F}^{\top } 
\widehat{\mathbf{F}}/T,O_{P}(1/\sqrt{T\wedge N})).
\end{equation*}
Let $\mathbf{W}_{TN}=\frac{1}{TN}\pmb{\Lambda}^{\top }\pmb{\Lambda}\mathbf{F}
^{\top }\widetilde{\mathbf{F}}$. By using similar arguments of the proofs of
Lemmas \ref{Lemma8}(b)--(c), we have that $\mathbf{W}_{TN}$ is of full row
rank and $\mathbf{W}_{TN}\mathbf{W}_{TN}^{\top }\rightarrow _{P}\pmb{\Sigma}
_{\lambda }\pmb{\Sigma}_{f}\pmb{\Sigma}_{\lambda }$, i.e., $\mathbf{W}_{TN} 
\mathbf{W}_{TN}^{\top }$ is asymptotically invertible. }

{\small Note that 
\begin{equation*}
\widetilde{\mathbf{F}}\cdot \frac{1}{TN}\widetilde{\mathbf{V}}=\frac{1}{TN} 
\mathbf{X}\mathbf{X}^{\top }\widetilde{\mathbf{F}}=\frac{1}{TN}(\mathbf{F} %
\pmb{\Lambda}^{\top }+\mathcal{E})(\mathbf{F}\pmb{\Lambda}^{\top }+\mathcal{%
E })^{\top }\widetilde{\mathbf{F}},
\end{equation*}
we have 
\begin{eqnarray*}
\mathbf{F}-\widetilde{\mathbf{F}}\widetilde{\mathbf{H}}^{+} &=&-\frac{1}{TN} 
\mathbf{F}\pmb{\Lambda}^{\top }\mathcal{E}^{\top }\widetilde{\mathbf{F}} 
\mathbf{W}_{TN}^{\top }(\mathbf{W}_{TN}\mathbf{W}_{TN}^{\top })^{-1}-\frac{1 
}{TN}\mathcal{E}\pmb{\Lambda}\mathbf{F}^{\top }\widetilde{\mathbf{F}}\mathbf{%
W}_{TN}^{\top }(\mathbf{W}_{TN}\mathbf{W}_{TN}^{\top })^{-1}  \notag \\
&&-\frac{1}{TN}\mathcal{E}\mathcal{E}^{\top }\widetilde{\mathbf{F}}\mathbf{W}
_{TN}^{\top }(\mathbf{W}_{TN}\mathbf{W}_{TN}^{\top })^{-1}=-\mathbf{W}_{1}- 
\mathbf{W}_{2}-\mathbf{W}_{3},
\end{eqnarray*}
where $\widetilde{\mathbf{H}}^{+}=\frac{1}{TN}\widetilde{\mathbf{V}}\mathbf{%
W }_{TN}^{\top }(\mathbf{W}_{TN}\mathbf{W}_{TN}^{\top })^{-1}$. By Lemma \ref%
{Lemma2}, we have $\frac{1}{\sqrt{T}}\Vert \mathbf{F}-\widetilde{ \mathbf{F}}%
\widetilde{\mathbf{H}}^{+}\Vert =O_{P}(1/\sqrt{T\wedge N})$. }

{\small As in the proof of Theorem \ref{theorem1}, we next expand $L_{NT}$
as follows 
\begin{eqnarray*}
L_{NT} &=&\frac{1}{T^{2}N^{2}}\sum_{i,j=1}^{N}\pmb{\lambda}_{i}^{\top }%
\mathbf{F} ^{\top }\mathbf{M}_{\widetilde{\mathbf{F}}}\mathbf{K}_{h}\mathbf{M%
}_{ \widetilde{\mathbf{F}}}\mathbf{F}\pmb{\lambda}_{j}+\frac{1}{T^{2}N^{2}}
\sum_{i,j=1}^{N}\mathcal{E}_{i}^{\top }\mathbf{M}_{\widetilde{\mathbf{F}}} 
\mathbf{K}_{h}\mathbf{M}_{\widetilde{\mathbf{F}}}\mathcal{E}_{j}  \notag \\
&&+\frac{2}{T^{2}N^{2}}\sum_{i,j=1}^{N}\mathcal{E}_{i}^{\top }\mathbf{M}_{ 
\widetilde{\mathbf{F}}}\mathbf{K}_{h}\mathbf{M}_{\widetilde{\mathbf{F}}} 
\mathbf{F}\pmb{\lambda}_{j}  \notag \\
&\eqqcolon& L_{T,1}+L_{T,2}+L_{T,3}.
\end{eqnarray*}
}

{\small First, we study $L_{T,1}$ by making the following decomposition: 
\begin{eqnarray*}
L_{T,1} &=&\frac{1}{T^{2}N^{2}}\sum_{i,j=1}^{N}\pmb{\lambda}_{i}^{\top }( 
\mathbf{F}-\widetilde{\mathbf{F}}\widetilde{\mathbf{H}}^{+})^{\top }\mathbf{%
M }_{\widetilde{\mathbf{F}}}\mathbf{K}_{h}\mathbf{M}_{\widetilde{\mathbf{F}}
}( \mathbf{F}-\widetilde{\mathbf{F}}\widetilde{\mathbf{H}}^{+})\pmb{\lambda}
_{j}  \notag \\
&=&\frac{1}{T^{2}}\overline{\pmb{\lambda}}^{\top }\mathbf{W}_{1}^{\top } 
\mathbf{M}_{\widetilde{\mathbf{F}}}\mathbf{K}_{h}\mathbf{M}_{\widetilde{ 
\mathbf{F}}}\mathbf{W}_{1}\overline{\pmb{\lambda}}+\frac{1}{T^{2}}\overline{ %
\pmb{\lambda}}^{\top }\mathbf{W}_{2}^{\top }\mathbf{M}_{\widetilde{\mathbf{F}
}}\mathbf{K}_{h}\mathbf{M}_{\widetilde{\mathbf{F}}}\mathbf{W}_{2}\overline{ %
\pmb{\lambda}}  \notag \\
&&+\frac{1}{T^{2}}\overline{\pmb{\lambda}}^{\top }\mathbf{W}_{3}^{\top } 
\mathbf{M}_{\widetilde{\mathbf{F}}}\mathbf{K}_{h}\mathbf{M}_{\widetilde{ 
\mathbf{F}}}\mathbf{W}_{3}\overline{\pmb{\lambda}}+\frac{2}{T^{2}}\overline{ %
\pmb{\lambda}}^{\top }\mathbf{W}_{1}^{\top }\mathbf{M}_{\widetilde{\mathbf{F}
}}\mathbf{K}_{h}\mathbf{M}_{\widetilde{\mathbf{F}}}\mathbf{W}_{2}\overline{ %
\pmb{\lambda}}  \notag \\
&&+\frac{2}{T^{2}}\overline{\pmb{\lambda}}^{\top }\mathbf{W}_{1}^{\top } 
\mathbf{M}_{\widetilde{\mathbf{F}}}\mathbf{K}_{h}\mathbf{M}_{\widetilde{ 
\mathbf{F}}}\mathbf{W}_{3}\overline{\pmb{\lambda}}+\frac{2}{T^{2}}\overline{ %
\pmb{\lambda}}^{\top }\mathbf{W}_{2}^{\top }\mathbf{M}_{\widetilde{\mathbf{F}
}}\mathbf{K}_{h}\mathbf{M}_{\widetilde{\mathbf{F}}}\mathbf{W}_{3}\overline{ %
\pmb{\lambda}}  \notag \\
&\eqqcolon&L_{T,11}+L_{T,12}+L_{T,13}+2L_{T,14}+2L_{T,15}+2L_{T,16},
\end{eqnarray*}
where $\overline{\pmb{\lambda}}=\frac{1}{N}\sum_{i=1}^{N}\pmb{\lambda}_{i}$.
Below, we consider the terms on the r.h.s. of the last displayed equation
one by one. For $L_{T,11}$, by using the identity $\widetilde{\mathbf{F}} 
\mathbf{W}_{TN}^{\top }=\mathbf{F}(\pmb{\Lambda}^{\top }\pmb{\Lambda}/N)- 
\mathbf{M}_{\widetilde{\mathbf{F}}}\mathbf{F}(\pmb{\Lambda}^{\top } %
\pmb{\Lambda}/N)$ and using Lemmas \ref{Lemma2}(c)--(d), 
\begin{eqnarray*}
&& |L_{T,11}| \leq \frac{1}{T^{2}N^{2}}\Vert \mathbf{F}^{\top }\mathbf{M}_{ 
\widetilde{\mathbf{F}}}\mathbf{K}_{h}\mathbf{M}_{\widetilde{\mathbf{F}}} 
\mathbf{F}\Vert _{2}\left\Vert \frac{1}{T}\pmb{\Lambda}^{\top }\mathcal{E}
^{\top }\widetilde{\mathbf{F}}\mathbf{W}_{TN}^{\top }(\mathbf{W}_{TN}\mathbf{%
\ W}_{TN}^{\top })^{-1}\overline{\pmb{\lambda}}\right\Vert ^{2}  \notag \\
&\leq &\frac{2}{T^{2}N^{2}}\Vert \mathbf{K}_{h}\Vert _{2}\Vert \mathbf{M}_{ 
\widetilde{\mathbf{F}}}\Vert _{2}^{2}\Vert \mathbf{F}-\widetilde{\mathbf{F}} 
\widetilde{\mathbf{H}}^{+}\Vert ^{2}\left\Vert \frac{1}{T}\pmb{\Lambda}
^{\top }\mathcal{E}^{\top }\mathbf{F}(\pmb{\Lambda}^{\top }\pmb{\Lambda}/N)( 
\mathbf{W}_{TN}\mathbf{W}_{TN}^{\top })^{-1}\overline{\pmb{\lambda}}
\right\Vert ^{2}  \notag \\
&&+\frac{2}{T^{2}N^{2}}\Vert \mathbf{K}_{h}\Vert _{2}\Vert \mathbf{M}_{ 
\widetilde{\mathbf{F}}}\Vert _{2}^{2}\Vert \mathbf{F}-\widetilde{\mathbf{F}} 
\widetilde{\mathbf{H}}^{+}\Vert ^{2}\left\Vert \frac{1}{T}\pmb{\Lambda}
^{\top }\mathcal{E}^{\top }\mathbf{M}_{\widetilde{\mathbf{F}}}(\mathbf{F}- 
\widetilde{\mathbf{F}}\widetilde{\mathbf{H}}^{+})(\pmb{\Lambda}^{\top } %
\pmb{\Lambda}/N)(\mathbf{W}_{TN}\mathbf{W}_{TN}^{\top })^{-1}\overline{ %
\pmb{\lambda}}\right\Vert ^{2}  \notag \\
&\leq &O_{P}(1)\frac{1}{T^{2}N^{2}}\cdot T\cdot \frac{T}{T\wedge N}\cdot 
\frac{N}{T}+O_{P}(1)\frac{1}{T^{2}N^{2}}\cdot T\cdot \frac{T}{T\wedge N}
\cdot \frac{N}{T}\frac{T}{T\wedge N}  \notag \\
&=&O_{P}\left( \frac{1}{TN(T\wedge N)}+\frac{1}{N(T\wedge N)^{2}}\right)
=o_{P}\left( \frac{1}{TN\sqrt{h}}\right) .
\end{eqnarray*}
For $L_{T,12}$, since $(\mathbf{W}_{TN}\mathbf{W}_{TN}^{\top })^{-1}\mathbf{%
W }_{TN}\frac{1}{T}\widetilde{\mathbf{F}}^{\top }\mathbf{F}=\pmb{\Sigma}_{ %
\pmb{\Lambda}}^{-1}$ and $\pmb{\Sigma}_{\pmb{\Lambda}}=\frac{1}{N} %
\pmb{\Lambda}^{\top }\pmb{\Lambda}$, we have 
\begin{eqnarray*}
L_{T,12} &=&\frac{1}{T^{2}N^{2}}\overline{\pmb{\lambda}}^{\top }\pmb{\Sigma}
_{\pmb{\Lambda}}^{-1}\pmb{\Lambda}^{\top }\mathcal{E}^{\top }\mathbf{M}_{ 
\widehat{\mathbf{F}}}\mathbf{K}_{h}\mathbf{M}_{\widehat{\mathbf{F}}}\mathcal{%
E}\pmb{\Lambda}\pmb{\Sigma}_{\pmb{\Lambda}}^{-1}\overline{\pmb{\lambda}} 
\notag \\
&=&\frac{1}{T^{2}N^{2}}\sum_{i,j=1}^{N}\overline{\pmb{\lambda}}^{\top } %
\pmb{\Sigma}_{\pmb{\Lambda}}^{-1}\pmb{\lambda}_{i}\mathcal{E}_{i}^{\top } 
\mathbf{M}_{\widehat{\mathbf{F}}}\mathbf{K}_{h}\mathbf{M}_{\widehat{\mathbf{%
F }}}\mathcal{E}_{j}\pmb{\lambda}_{j}^{\top }\pmb{\Sigma}_{\pmb{\Lambda}
}^{-1} \overline{\pmb{\lambda}},
\end{eqnarray*}
which can be merged with $L_{T,2}$ together later on. For $L_{T,13}$, by
using the identity $\widetilde{\mathbf{F}}\mathbf{W}_{TN}^{\top }=\mathbf{F}
( \pmb{\Lambda}^{\top }\pmb{\Lambda}/N)-\mathbf{M}_{\widetilde{\mathbf{F}}} 
\mathbf{F}(\pmb{\Lambda}^{\top }\pmb{\Lambda}/N)$ and using Lemmas \ref%
{Lemma2}(e) and (b), we have 
\begin{eqnarray*}
|L_{T,13}| &\leq &O_{P}(1)\frac{2}{T^{2}N^{2}}\Vert \mathbf{M}_{\widetilde{ 
\mathbf{F}}}\mathbf{K}_{h}\mathbf{M}_{\widetilde{\mathbf{F}}}\Vert
_{2}\left\Vert \frac{1}{T}\mathcal{E}\mathcal{E}^{\top }\mathbf{F}
\right\Vert ^{2}  \notag \\
&&+O_{P}(1)\frac{2}{T^{2}N^{2}}\Vert \mathbf{M}_{\widetilde{\mathbf{F}}} 
\mathbf{K}_{h}\mathbf{M}_{\widetilde{\mathbf{F}}}\Vert _{2}\left\Vert \frac{
1 }{T}\mathcal{E}\mathcal{E}^{\top }\mathbf{M}_{\widetilde{\mathbf{F}}}( 
\mathbf{F}-\widetilde{\mathbf{F}}\widetilde{\mathbf{H}}^{+})\right\Vert ^{2}
\notag \\
&\leq &O_{P}(1)\frac{2}{T^{2}N^{2}}\cdot T\cdot \frac{N^{2}}{T\wedge N}
+O_{P}(1)\frac{2}{T^{2}N^{2}}\cdot T\cdot \frac{TN^{2}}{(T\wedge N)^{2}} 
\notag \\
&=&O_{P}\left( \frac{1}{T(T\wedge N)}+\frac{1}{(T\wedge N)^{2}}\right)
=o_{P}\left( 1/(TN\sqrt{h})\right) .
\end{eqnarray*}
By the development of $L_{T,11},$ $L_{T,12}$ and $L_{T,13}$ and the CS
inequality, we know $L_{T,14}$, $L_{T,15}$ and $L_{T,16}$ are all
asymptotically negligible. }

{\small Similarly, we can show that 
\begin{equation*}
L_{T,3}=-\frac{2}{T^{2}N^{2}}\sum_{i,j=1}^{N}\mathcal{E}_{i}^{\top }\mathbf{%
M }_{\widetilde{\mathbf{F}}}\mathbf{K}_{h}\mathbf{M}_{\widetilde{\mathbf{F}}
} \mathcal{E}_{j}\pmb{\lambda}_{j}^{\top }\pmb{\Sigma}_{\pmb{\Lambda}}^{-1} 
\overline{\pmb{\lambda}}+o_{P}\left( 1/(TN\sqrt{h})\right) .
\end{equation*}
It follows that 
\begin{equation*}
L_{NT}=\frac{1}{T^{2}N^{2}}\sum_{i,j=1}^{N}(1-a_{i})(1-a_{j})\mathcal{E}
_{i}^{\top }\mathbf{M}_{\widetilde{\mathbf{F}}}\mathbf{K}_{h}\mathbf{M}_{ 
\widetilde{\mathbf{F}}}\mathcal{E}_{j}+o_{P}\left( 1/(TN\sqrt{h})\right) ,
\end{equation*}
where $a_{i}=\pmb{\lambda}_{i}^{\top }\pmb{\Sigma}_{\pmb{\Lambda}}^{-1} 
\overline{\pmb{\lambda}}.$ }

{\small Note further that 
\begin{eqnarray*}
&&\frac{1}{T^{2}N^{2}}\sum_{i,j=1}^{N}(1-a_{i})(1-a_{j})\mathcal{E}
_{i}^{\top }\mathbf{M}_{\widetilde{\mathbf{F}}}\mathbf{K}_{h}\mathbf{M}_{ 
\widetilde{\mathbf{F}}}\mathcal{E}_{j}  \notag \\
&=&\frac{1}{T^{2}N^{2}}\sum_{i,j=1}^{N}(1-a_{i})(1-a_{j})\mathcal{E}
_{i}^{\top }\mathbf{K}_{h}\mathcal{E}_{j}+\frac{1}{T^{2}N^{2}}
\sum_{i,j=1}^{N}(1-a_{i})(1-a_{j})\mathcal{E}_{i}^{\top }\frac{1}{T} 
\widetilde{\mathbf{F}}\widetilde{\mathbf{F}}^{\top }\mathbf{K}_{h}\frac{1}{T}
\widetilde{\mathbf{F}}\widetilde{\mathbf{F}}^{\top }\mathcal{E}_{j}  \notag
\\
&&-\frac{2}{T^{2}N^{2}}\sum_{i,j=1}^{N}(1-a_{i})(1-a_{j})\mathcal{E}
_{i}^{\top }\mathbf{K}_{h}\frac{1}{T}\widetilde{\mathbf{F}}\widetilde{ 
\mathbf{F}}^{\top }\mathcal{E}_{j}  \notag \\
&\eqqcolon&B_{1}+B_{2}-2B_{3}.
\end{eqnarray*}
We consider $B_{2}$. First, let $\pmb{\Sigma}_{\mathbf{F}\pmb{\Lambda}}= 
\frac{1}{TN}\mathbf{F}\pmb{\Lambda}^{\top }\pmb{\Lambda}\mathbf{F}^{\top }.$
By Assumptions \ref{Assumption2} and \ref{Assumption3}, there exists a
positive fixed constant (say, $c_{1}^{\ast }$) satisfying that $\lambda
_{\max }(\pmb{\Sigma}_{\mathbf{F}\pmb{\Lambda}})\leq c_{1}^{\ast }.$ Second,
we let $\mathbf{F}^{\ast }$ be a $T\times (T-r)$ matrix such that 
\begin{equation*}
\frac{1}{T}(\mathbf{F}^{\ast },\mathbf{F}\mathbf{D})^{\top }(\mathbf{F}
^{\ast },\mathbf{F}\mathbf{D})=\left( 
\begin{array}{cc}
\mathbf{I}_{T-r} & \mathbf{0} \notag \\ 
\mathbf{0} & \mathbf{I}_{r}%
\end{array}
\right) ,
\end{equation*}
where $\mathbf{D}$ is an $r\times r$ rotation matrix such that $\frac{1}{T} 
\mathbf{D}^{\top }\mathbf{F}^{\top }\mathbf{F}\mathbf{D}=\mathbf{I}_{r}$.
Now, write 
\begin{equation*}
\frac{1}{TN}\mathbf{X}\mathbf{X}^{\top }=\pmb{\Sigma}_{\mathbf{F} %
\pmb{\Lambda}}+\left( \frac{1}{TN}\mathbf{X}\mathbf{X}^{\top }-\pmb{\Sigma}
_{ \mathbf{F}\pmb{\Lambda}}\right) \eqqcolon\pmb{\Sigma}_{\mathbf{F} %
\pmb{\Lambda}}+\Delta \pmb{\Sigma}_{\mathbf{F}\pmb{\Lambda}},
\end{equation*}
where the definition of $\Delta \pmb{\Sigma}_{\mathbf{F}\pmb{\Lambda}}$ is
obvious.\medskip }

{\small Having introduced the above notations, we are now ready to proceed
further. Note that $\frac{1}{\sqrt{T}}\mathbf{F}^{\ast }$, $\frac{1}{\sqrt{T}%
}\mathbf{F}\mathbf{D}$, $\pmb{\Sigma}_{\mathbf{F}\pmb{\Lambda}}$ and $\Delta %
\pmb{\Sigma}_{\mathbf{F}\pmb{\Lambda}}$ are corresponding to $\mathbf{Q}_{1}$
, $\mathbf{Q}_{2}$, $\mathbf{A}$ and $\mathbf{E}$ of Lemma \ref{Lemma7}.
Thus, using Lemma \ref{Lemma7}, we obtain that }

{\small 
\begin{equation*}
\widetilde{\mathbf{F}}^{\ast }\coloneqq\frac{1}{\sqrt{T}}\left( \mathbf{F}
^{\ast }+\mathbf{F}\mathbf{D}\mathbf{P}\right) (\mathbf{I}_{T-r}+\mathbf{P}
^{\top }\mathbf{P})^{-1/2},
\end{equation*}
which is corresponding to $\mathbf{Q}_{1}^{0}$ in Lemma \ref{Lemma7}.
Moreover, 
\begin{equation*}
\Vert \mathbf{P}\Vert \leq \frac{4}{\text{\normalfont sep}(0,\frac{1}{T} 
\mathbf{F}^{\top }\pmb{\Sigma}_{\mathbf{F}\pmb{\Lambda}}\mathbf{F})}\cdot
\Vert \Delta \pmb{\Sigma}_{\mathbf{F}\pmb{\Lambda}}\Vert \leq O_{P}(1)\Vert
\Delta \pmb{\Sigma}_{\mathbf{F}\pmb{\Lambda}}\Vert =O_{P}\left( 1/\sqrt{
T\wedge N}\right) ,
\end{equation*}
where the last line follows from the development of Lemma \ref{Lemma2}. }

{\small Since $\widetilde{\mathbf{F}}^{\ast }$ is an orthonormal basis for a
subspace that is invariant for $\pmb{\Sigma}_{\mathbf{F}\pmb{\Lambda}
}+\Delta \pmb{\Sigma}_{\mathbf{F}\pmb{\Lambda}}=\frac{1}{TN}\mathbf{X} 
\mathbf{X}^{\top }$, studying $\frac{1}{\sqrt{T}}\ddot{\mathbf{F}}$ is
equivalent to investigating $\widetilde{\mathbf{F}}^{\ast }$. Then we write 
\begin{eqnarray*}
\left\Vert \widetilde{\mathbf{F}}^{\ast }-\frac{1}{\sqrt{T}}\mathbf{F}^{\ast
}\right\Vert &=&\frac{1}{\sqrt{T}}\Vert [ \mathbf{F}^{\ast }+\mathbf{F} 
\mathbf{D}\mathbf{P}-\mathbf{F}^{\ast }(\mathbf{I}_{T-r}+\mathbf{P}^{\top } 
\mathbf{P})^{1/2}](\mathbf{I}_{T-r}+\mathbf{P}^{\top }\mathbf{P})^{-1/2}\Vert
\notag \\
&\leq &\frac{1}{\sqrt{T}}\Vert \mathbf{F}^{\ast }(\mathbf{I}_{T-r}-(\mathbf{%
I }_{T-r}+\mathbf{P}^{\top }\mathbf{P})^{1/2})(\mathbf{I}_{T-r}+\mathbf{P}
^{\top }\mathbf{P})^{-1/2}\Vert  \notag \\
&&+\frac{1}{\sqrt{T}}\Vert \mathbf{F}\mathbf{D}\mathbf{P}(\mathbf{I}_{T-r}+ 
\mathbf{P}^{\top }\mathbf{P})^{-1/2}\Vert  \notag \\
&\leq &\Vert (\mathbf{I}_{T-r}-(\mathbf{I}_{T-r}+\mathbf{P}^{\top }\mathbf{P}
)^{1/2})(\mathbf{I}_{T-r}+\mathbf{P}^{\top }\mathbf{P})^{-1/2}\Vert +\Vert 
\mathbf{P}(\mathbf{I}_{T-r}+\mathbf{P}^{\top }\mathbf{P})^{-1/2}\Vert  \notag
\\
&\leq &\frac{\Vert \mathbf{I}_{T-r}-(\mathbf{I}_{T-r}+\mathbf{P}^{\top } 
\mathbf{P})^{1/2}\Vert +\Vert \mathbf{P}\Vert }{\lambda _{\min }^{1/2}( 
\mathbf{I}_{T-r}+\mathbf{P}^{\top }\mathbf{P})}=O_{P}\left( 1/\sqrt{T\wedge
N }\right) .
\end{eqnarray*}
Then, we are able to write 
\begin{eqnarray*}
B_{2} &=&\frac{1}{T^{4}N^{2}}\sum_{i,j=1}^{N}(1-a_{i})(1-a_{j})\mathcal{E}
_{i}^{\top }\left\{ (\widehat{\mathbf{F}}-\mathbf{F}\mathbf{H},\ddot{\mathbf{%
F}}-\ddot{\mathbf{F}}^{\ast })\widetilde{\mathbf{F}}^{\top }\mathbf{K}_{h} 
\widetilde{\mathbf{F}}(\widehat{\mathbf{F}}-\mathbf{F}\mathbf{H},\ddot{ 
\mathbf{F}}-\ddot{\mathbf{F}}^{\ast })^{\top }\mathcal{E}_{j}\right.  \notag
\\
&&\left. +(\mathbf{F}\mathbf{H},\ddot{\mathbf{F}}^{\ast })\widetilde{\mathbf{%
F}}^{\top }\mathbf{K}_{h}\widetilde{\mathbf{F}}(\mathbf{F}\mathbf{H},\ddot{ 
\mathbf{F}}^{\ast })^{\top }+2(\widehat{\mathbf{F}}-\mathbf{F}\mathbf{H}, 
\ddot{\mathbf{F}}-\ddot{\mathbf{F}}^{\ast })\widetilde{\mathbf{F}}^{\top } 
\mathbf{K}_{h}\widetilde{\mathbf{F}}(\mathbf{F}\mathbf{H},\ddot{\mathbf{F}}
^{\ast })^{\top }\right\} \mathcal{E}_{j}  \notag \\
&\eqqcolon & B_{2,1}+B_{2,2}+2B_{2,3},
\end{eqnarray*}
where $\ddot{\mathbf{F}}^{\ast }$ are the $\widetilde{r}-r$ columns of $%
\mathbf{F}^{\ast }$ corresponding to $\ddot{\mathbf{F}}$. In view of the
aforementioned developments, the rest proof is the same as that in Theorem %
\ref{theorem1}. For example, 
\begin{eqnarray*}
B_{2,1} &\leq &\left\Vert \frac{1}{TN}\sum_{i=1}^{N}(1-a_{i})\mathcal{E}
_{i}^{\top }\right\Vert ^{2}\left\Vert (\widehat{\mathbf{F}}-\mathbf{F} 
\mathbf{H},\ddot{\mathbf{F}}-\ddot{\mathbf{F}}^{\ast })\right\Vert ^{2}\frac{
1}{T^{2}}\left\Vert \widetilde{\mathbf{F}}^{\top }\mathbf{K}_{h}\widetilde{ 
\mathbf{F}}\right\Vert  \notag \\
&=&O_{P}\left( \frac{1}{TN}\right) \times O_{P}\left( \frac{T}{T\wedge N}
\right) \times O_{P}(1)=o_{P}\left( 1/(TN\sqrt{h})\right) .
\end{eqnarray*}
Then the proof is now completed. }
\end{proof}

\begin{proof}[Proof of Proposition \ref{Coro1}]

{\small By the proof of Theorems \ref{theorem1} and \ref{theorem2} with $Th$
replaced by $l$, for any fixed $\widetilde{r}\geq r$ we have 
\begin{eqnarray*}
\widehat{\sigma }_{\varepsilon ,a}^{2} &=&\sum_{k=-l}^{l}\widehat{\sigma }
_{\varepsilon ,a,k}^{2}a(k/l)=\frac{1}{TN}\sum_{t,s=1}^{T}\sum_{i,j=1}^{N} 
\widehat{\varepsilon }_{it}\widehat{\varepsilon }_{js}a((t-s)/l)  \notag \\
&=&\frac{1}{TN}\sum_{i,j=1}^{N}\sum_{t,s=1}^{T}(1-a_{i})(1-a_{j})\varepsilon
_{it}\varepsilon_{js}a\left( (t-s)/l\right) +o_{P}\left( \sqrt{l/T} \right) .
\end{eqnarray*}
By Lemma \ref{Lemma6} with $Th$ replaced by $l$, we have 
\begin{equation*}
\frac{1}{TN}\sum_{i,j=1}^{N}\sum_{t,s=1}^{T}(1-a_{i})(1-a_{j})\varepsilon
_{it}\varepsilon_{js}a\left( \frac{t-s}{l}\right) =\frac{1}{T}
\sum_{t,s=1}^{T}E(\overline{\varepsilon }_{t,a}\overline{\varepsilon }
_{s,a})a\left( (t-s)/l\right) +O_{P}(l/T),
\end{equation*}
where $\overline{\varepsilon }_{t,a}=\frac{1}{\sqrt{N}}
\sum_{i=1}^{N}(1-a_{i})\varepsilon _{it}$. }

{\small Next, we complete the proof by calculating the bias term induced by
truncation. Write 
\begin{eqnarray*}
&&l^{q}\left( \frac{1}{T}\sum_{t,s=1}^{T}E(\overline{\varepsilon }_{t,a} 
\overline{\varepsilon }_{s,a})a\left( (t-s)/l\right) -\sigma _{\varepsilon
,a}^{2}\right)  \notag \\
&=&l^{q}\sum_{k=-l}^{l}\left[ a\left( k/l\right) -1\right] E(\overline{
\varepsilon }_{k,a}\overline{\varepsilon }_{0,a})-2l^{q}\sum_{k=l+1}^{\infty
}E(\overline{\varepsilon }_{k,a}\overline{\varepsilon }_{0,a})\eqqcolon %
I_{1}-2I_{2}.
\end{eqnarray*}%
We first consider $I_{1}$. By Assumption \ref{Assumption5}, $\forall
\epsilon >0$, we choose $\nu _{\epsilon }>0$ such that 
\begin{equation*}
|k/l|<\nu _{\epsilon }\quad \text{and}\quad \left\vert \frac{1-a(k/l)}{
|k/l|^{q}}-\bar{c}_{q}\right\vert <\epsilon .
\end{equation*}%
Letting $l_{T}^{\ast }=\lfloor \nu _{\epsilon }l\rfloor $, we write 
\begin{equation*}
I_{1}=\sum_{k=-l_{T}^{\ast }}^{l_{T}^{\ast }}\frac{a\left( k/l\right) -1}{
\left\vert k/l\right\vert ^{q}}|k|^{q}E(\overline{\varepsilon }_{k,a} 
\overline{\varepsilon }_{0,a})+2\sum_{k=l_{T}^{\ast }+1}^{l}\frac{a\left(
k/\ell \right) -1}{\left\vert k/\ell \right\vert ^{q}}|k|^{q}E(\overline{
\varepsilon }_{k,a}\overline{\varepsilon }_{0,a}).
\end{equation*}%
Then, it is easy to see that the first term of the r.h.s. converges to $-%
\bar{c}_{q}\sum_{k=-\infty }^{\infty }|k|^{q}E(\overline{\varepsilon }_{k,a}%
\overline{\varepsilon }_{0,a})<\infty $ under Assumption \ref{Assumption5}
provided that $\sum_{k=-\infty }^{\infty }|k|^{2}E(\overline{\varepsilon }%
_{k,a}\overline{\varepsilon }_{0,a})<\infty $. We next verify this condition
based on Assumption \ref{Assumption4}. Note that $\overline{\varepsilon }%
_{t,a}=\sum_{j=0}^{\infty }\mathscr{P}_{t-j}(\overline{\varepsilon }_{t,a})$
and write 
\begin{eqnarray*}
&&\sum_{k=1}^{\infty }k^{2}|E[\overline{\varepsilon }_{0,a}\overline{
\varepsilon }_{k,a}]|=\sum_{k=1}^{\infty }k^{q}\left\vert E\left[ \left(
\sum_{j=0}^{\infty }\mathscr{P}_{-j}(\overline{\varepsilon }_{0,a})\right)
\left( \sum_{j=0}^{\infty }\mathscr{P}_{k-j}(\overline{\varepsilon }
_{k,a})\right) \right] \right\vert  \notag \\
&=&\sum_{k=1}^{\infty }k^{2}\left\vert E\left[ \sum_{j=0}^{\infty }(E[ 
\overline{\varepsilon }_{0,a}\,|\, \mathscr{E}_{-j}]-E[\overline{\varepsilon 
} _{0,a}\,|\, \mathscr{E}_{-j-1}])\cdot (E[\overline{\varepsilon }%
_{k,a}\,|\, \mathscr{E}_{-j}]-E[\overline{\varepsilon }_{k,a}\,|\, %
\mathscr{E}_{-j-1}]) \right] \right\vert  \notag \\
&\leq &\sum_{k=1}^{\infty }k^{2}\sum_{j=0}^{\infty }\left\vert E[\overline{
\varepsilon }_{0,a}\,|\, \mathscr{E}_{-j}]-E[\overline{\varepsilon }
_{0,a}\,|\, \mathscr{E}_{-j-1}]\right\vert _{2}\cdot \left\vert E[\overline{
\varepsilon }_{k,a}\,|\, \mathscr{E}_{-j}]-E[\overline{\varepsilon }
_{k,a}\,|\, \mathscr{E}_{-j-1}]\right\vert _{2}  \notag \\
&=&\sum_{k=1}^{\infty }k^{2}\sum_{j=0}^{\infty }\left\vert E[\overline{
\varepsilon }_{j,a}\,|\, \mathscr{E}_{0}]-E[\overline{\varepsilon }
_{j,a}^{\ast }\,|\, \mathscr{E}_{0}]\right\vert _{2}\cdot \left\vert E[ 
\overline{\varepsilon }_{k+j,a}\,|\, \mathscr{E}_{0}]-E[\overline{%
\varepsilon }_{k+j,a}^{\ast }\,|\, \mathscr{E}_{0}]\right\vert _{2}  \notag
\\
&\leq &\sum_{k=1}^{\infty }k^{2}\sum_{j=0}^{\infty }\left\vert \overline{
\varepsilon }_{j,a}-\overline{\varepsilon }_{j,a}^{\ast }\right\vert
_{2}\left\vert \overline{\varepsilon }_{k+j,a}-\overline{\varepsilon }
_{k+j,a}^{\ast }\right\vert _{2}<\infty ,  \notag
\end{eqnarray*}%
where the second equality follows from the fact that $\mathscr{P}_{t-j}(%
\overline{\varepsilon }_{t,a})$' are martingale differences, the first
inequality follows from the CS inequality, the third equality follows from
stationarity, and the second inequality follows from Jensen's inequality.
Next, for $I_{2},$ we have 
\begin{equation*}
|I_{2}|\leq \frac{l^{q}}{l^{2}}\sum_{k=l+1}^{\infty }k^{2}|E(\overline{
\varepsilon }_{0,a}\overline{\varepsilon }_{k,a})|\rightarrow 0,
\end{equation*}%
where the last steps follows from the fact that $\frac{l^{q}}{l^{2}}$ is
bounded and $l\rightarrow \infty $. This completes the proof of the
proposition. }
\end{proof}

\begin{proof}[Proof of Theorem \ref{theorem3}]

{\small We first consider the case of $\widetilde{r}=r$. Note that under the
local alternative \eqref{Eq2.8}, we have $\mathbf{X}=\mathbf{F}\pmb{\Lambda}%
^{\top }+\mathcal{E}^{\ast }$ with $\mathcal{E}^{\ast }=\{\varepsilon
_{it}+a_{TN} \mathbf{g}_{it}^{\top }\mathbf{f}_{t}\}_{T\times N}$. Hence, as
in the proof of Theorem \ref{theorem1}, we can decompose $\mathbf{F}-%
\widehat{\mathbf{F}} \mathbf{H}^{-1}$ as follows: 
\begin{equation*}
\mathbf{F}-\widehat{\mathbf{F}}\mathbf{H}^{-1}=-\frac{1}{TN}\mathbf{F} %
\pmb{\Lambda}^{\top }\mathcal{E}^{\ast ,\top }\widehat{\mathbf{F}} %
\pmb{\Sigma}_{\widehat{\mathbf{F}}}^{-1}\pmb{\Sigma}_{\pmb{\Lambda}}^{-1}- 
\frac{1}{N}\mathcal{E}^{\ast }\pmb{\Lambda}\pmb{\Sigma}_{\pmb{\Lambda}
}^{-1}- \frac{1}{TN}\mathcal{E}^{\ast }\mathcal{E}^{\ast ,\top }\widehat{ 
\mathbf{F}} \pmb{\Sigma}_{\widehat{\mathbf{F}}}^{-1}\pmb{\Sigma}_{ %
\pmb{\Lambda}}^{-1}.
\end{equation*}
In addition, by definition, we expand $L_{NT}$ as follows: 
\begin{eqnarray*}
L_{NT}&=&\frac{1}{T^{2}N^{2}}\sum_{i,j=1}^{N}\pmb{\lambda}_{i}^{\top }%
\mathbf{F} ^{\top }\mathbf{M}_{\widehat{\mathbf{F}}}\mathbf{K}_{h}\mathbf{M}%
_{\widehat{ \mathbf{F}}}\mathbf{F}\pmb{\lambda}_{j}+\frac{1}{T^{2}N^{2}}%
\sum_{i,j=1}^{N} \mathcal{E}_{i}^{\ast ,\top }\mathbf{M}_{\widehat{\mathbf{F}%
}}\mathbf{K}_{h} \mathbf{M}_{\widehat{\mathbf{F}}}\mathcal{E}_{j}^{\ast } 
\notag \\
&&+\frac{2}{T^{2}N^{2}}\sum_{i,j=1}^{N}\mathcal{E}_{i}^{\ast ,\top }\mathbf{%
M }_{\widehat{\mathbf{F}}}\mathbf{K}_{h}\mathbf{M}_{\widehat{\mathbf{F}}} 
\mathbf{F}\pmb{\lambda}_{j}.
\end{eqnarray*}
Then by using Lemma \ref{Lemma9} and identical arguments as used in the
proof of Theorem \ref{theorem1}, we have 
\begin{eqnarray*}
L_{NT}&=&\frac{1}{T^{2}N^{2}}\sum_{i,j=1}^{N}%
\sum_{t,s=1}^{T}(1-a_{i})(1-a_{j}) \left\{ \varepsilon _{it}\varepsilon
_{js}+a_{TN}^{2}\mathbf{g}_{it}^{\top } \mathbf{f}_{t}\mathbf{g}_{js}^{\top }%
\mathbf{f}_{s}+2a_{TN}\mathbf{g} _{it}^{\top }\mathbf{f}_{t}\varepsilon
_{js}\right\} h^{-1}K_{ts}  \notag \\
&&+o_{P}\left( \frac{1}{TN\sqrt{h}}\right)  \notag \\
&\eqqcolon &J_{1}+J_{2}+2J_{3}+o_{P}\left( \frac{1}{TN\sqrt{h}}\right) ,
\end{eqnarray*}
where $\varepsilon _{it}^{\ast }=\varepsilon _{it}+a_{TN}\mathbf{g}
_{it}^{\top }\mathbf{f}_{t}$. }

{\small For $J_{1}$, by using Lemma \ref{Lemma6}, we have 
\begin{equation*}
TN\sqrt{h}\left( J_{1}-(TNh)^{-1}E(Q_{T})\right) \rightarrow _{D}N(0,2\nu
_{0}\sigma _{\varepsilon ,a}^{4}),
\end{equation*}
where $Q_{T}=\frac{1}{TN}\sum_{i,j=1}^{N}\sum_{t,s=1}^{T}(1-a_{i})(1-a_{j})
\varepsilon _{it}\varepsilon _{js}K_{ts}$ and $\nu _{0}=\int_{-1}^{1}K^{2}(u)%
\mathrm{d}u$. For $J_{2}$, by using Lemmas \ref{Lemma6} and \ref{Lemma1}(b),
we have 
\begin{eqnarray*}
&&TN\sqrt{h}J_{2} =\frac{1}{T^{2}N^{2}}\sum_{i,j=1}^{N}
\sum_{t,s=1}^{T}(1-a_{i})(1-a_{j})\mathbf{g}_{it}^{\top }\mathbf{f}_{t} 
\mathbf{g}_{js}^{\top }\mathbf{f}_{s}h^{-1}K_{ts}  \notag \\
&=&\frac{1}{T^{2}N^{2}}\sum_{i,j=1}^{N}\sum_{t,s=1}^{T}(1-a_{i})(1-a_{j}) 
\notag \\
&&\times \left\{ \mathbf{g}_{it}^{\top }E(\mathbf{f}_{t})\mathbf{g}
_{js}^{\top }E(\mathbf{f}_{s})+\mathbf{g}_{it}^{\top }(\mathbf{f}_{t}-E( 
\mathbf{f}_{t}))\mathbf{g}_{js}^{\top }(\mathbf{f}_{s}-E(\mathbf{f}_{s}))+2 
\mathbf{g}_{it}^{\top }(\mathbf{f}_{t}-E(\mathbf{f}_{t}))\mathbf{g}
_{js}^{\top }E(\mathbf{f}_{s})\right\} h^{-1}K_{ts}  \notag \\
&=&\frac{1}{T^{2}N^{2}}\sum_{i,j=1}^{N}\sum_{t,s=1}^{T}(1-a_{i})(1-a_{j}) 
\mathbf{g}_{it}^{\top }E(\mathbf{f}_{t})\mathbf{g}_{js}^{\top }E(\mathbf{f}
_{s})h^{-1}K_{ts}+O_{P}\left( \frac{1}{Th}\right) +O_{P}\left( \frac{1}{ 
\sqrt{T}}\right)  \notag \\
&\rightarrow &_{P}\int_{-1}^{1}K(\tau )\mathrm{d}\tau \int_{0}^{1}\left[ 
\overline{\mathbf{g}}_{a}(\tau )^{\top }E(\mathbf{f}_{t})\right] ^{2}\mathrm{%
d}\tau ,
\end{eqnarray*}
where $\overline{\mathbf{g}}_{a}(\tau )= \lim_{N\to \infty}\frac{1}{N}%
\sum_{i=1}^{N}(1-a_{i}) \mathbf{g}_{i}(\tau )$. Similarly, by using Lemma %
\ref{Lemma6}, we have 
\begin{equation*}
J_{3}=a_{TN}O_{P}\left(\frac{1}{Th\sqrt{N}}\right)=o_{P}(1/(TN\sqrt{h}))
\end{equation*}
if $Th^{3/2}\rightarrow \infty $. }

{\small For the case of the case of $\widetilde{r}>r$, we can decompose $%
\mathbf{F}- \widetilde{\mathbf{F}}\widetilde{\mathbf{H}}^{+}$ as 
\begin{eqnarray*}
\mathbf{F}-\widetilde{\mathbf{F}}\widetilde{\mathbf{H}}^{+} &=&-\frac{1}{TN} 
\mathbf{F}\pmb{\Lambda}^{\top }\mathcal{E}^{\ast ,\top }\widetilde{\mathbf{F}
}\mathbf{W}_{TN}^{\top }(\mathbf{W}_{TN}\mathbf{W}_{TN}^{\top })^{-1}-\frac{
1 }{TN}\mathcal{E}^{\ast }\pmb{\Lambda}\mathbf{F}^{\top }\widetilde{\mathbf{%
F }} \mathbf{W}_{TN}^{\top }(\mathbf{W}_{TN}\mathbf{W}_{TN}^{\top })^{-1} 
\notag \\
&&-\frac{1}{TN}\mathcal{E}^{\ast }\mathcal{E}^{\ast ,\top }\widetilde{ 
\mathbf{F}}\mathbf{W}_{TN}^{\top }(\mathbf{W}_{TN}\mathbf{W}_{TN}^{\top
})^{-1}=-\mathbf{W}_{1}-\mathbf{W}_{2}-\mathbf{W}_{3},
\end{eqnarray*}
where $\widetilde{\mathbf{H}}^{+}=\frac{1}{TN}\widetilde{\mathbf{V}}\mathbf{%
W }_{TN}^{\top }(\mathbf{W}_{TN}\mathbf{W}_{TN}^{\top })^{-1}$ and $\mathbf{%
W } _{TN}=\frac{1}{TN}\pmb{\Lambda}^{\top }\pmb{\Lambda}\mathbf{F}^{\top } 
\widetilde{\mathbf{F}}$. Then by using Lemma \ref{Lemma9} and identical
arguments as used in the proof of Theorem \ref{theorem2}, we have 
\begin{eqnarray*}
L_{NT}&=&\frac{1}{T^{2}N^{2}}\sum_{i,j=1}^{N}%
\sum_{t,s=1}^{T}(1-a_{i})(1-a_{j}) \left\{ \varepsilon _{it}\varepsilon
_{js}+a_{TN}^{2}\mathbf{g}_{it}^{\top } \mathbf{f}_{t}\mathbf{g}_{js}^{\top }%
\mathbf{f}_{s}+2a_{TN}\mathbf{g} _{it}^{\top }\mathbf{f}_{t}\varepsilon
_{js}\right\} h^{-1}K_{ts}  \notag \\
&&+o_{P}\left( \frac{1}{TN\sqrt{h}}\right)  \notag \\
&\eqqcolon &J_{1}+J_{2}+2J_{3}+o_{P}\left( \frac{1}{TN\sqrt{h}}\right) .
\end{eqnarray*}
The rest proof is the same as that for the case of $\widetilde{r}=r$. Thus,
we omit the details. This completes the proof of the theorem. }
\end{proof}

\begin{proof}[Proof of Proposition \ref{prop2}]

{\small By using Lemma \ref{Lemma10} and using similar arguments as the
proof of Theorem \ref{theorem3}, for any fixed $\widetilde{r}\geq r$ we have 
\begin{eqnarray*}
L_{NT} &=&\frac{1}{T^{2}N^{2}}\sum_{i,j=1}^{N}%
\sum_{t,s=1}^{T}(1-a_{i})(1-a_{j}) \left\{ \varepsilon _{it}\varepsilon
_{js}+a_{TN}^{2}\mathbf{g}_{it}^{\top } \mathbf{f}_{t}\mathbf{g}_{js}^{\top }%
\mathbf{f}_{s}+2a_{TN}\mathbf{g} _{it}^{\top }\mathbf{f}_{t}\varepsilon
_{js}\right\} h^{-1}K_{ts}  \notag \\
&&+o_{P}\left( \frac{1}{TN\sqrt{h}}\right)  \notag \\
&\eqqcolon &J_{1}+J_{2}+2J_{3}+o_{P}\left( \frac{1}{TN\sqrt{h}}\right) .
\end{eqnarray*}
}

{\small For $J_{1}$, again, by using Lemma \ref{Lemma6}, we have 
\begin{equation*}
TN\sqrt{h}\left( J_{1}-(TNh)^{-1}E(Q_{T})\right) \rightarrow _{D}N(0,2\nu
_{0}\sigma _{\varepsilon ,a}^{4}),
\end{equation*}
where $Q_{T}=\frac{1}{TN}\sum_{i,j=1}^{N}\sum_{t,s=1}^{T}(1-a_{i})(1-a_{j})
\varepsilon _{it}\varepsilon _{js}K_{ts}$ and $\nu _{0}=\int_{-1}^{1}K^{2}(u)%
\mathrm{d}u$. }

{\small For $J_2$, by using Lemma \ref{Lemma6} and $E(\mathbf{f}_t)=0$, we
have 
\begin{eqnarray*}
TN\sqrt{h}J_2 &=& \frac{1}{TN^{2}}\sum_{i,j=1}^{N}%
\sum_{t,s=1}^{T}(1-a_{i})(1-a_{j}) \mathbf{g}_{it}^{\top } \mathbf{f}_{t}%
\mathbf{g}_{js}^{\top }\mathbf{f}_{s}K_{ts}  \notag \\
&\to_P&\frac{1}{T}\sum_{t,s=1}^{T}E\left(\overline{\mathbf{g}}_a(t/T)^\top 
\mathbf{f}_t\mathbf{f}_s^\top\overline{\mathbf{g}}_a(s/T)\right)K_{ts}.
\end{eqnarray*}
Similarly, by using Lemma \ref{Lemma6} and $E(\mathbf{f}_{t}\overline{%
\mathbf{e}}_{s,a})=0$, we have 
\begin{equation*}
TN\sqrt{h}J_3 = h^{1/4} \frac{1}{T\sqrt{h}}\sum_{t,s=1}^{T} \overline{%
\mathbf{g}}_{a}(t/T)^{\top} \mathbf{f}_{t}\overline{\mathbf{e}}%
_{s,a}K_{ts}=O_P(h^{1/4}).
\end{equation*}
}

{\small The proof is now completed. }
\end{proof}

\begin{proof}[Proof of Theorem \ref{theorem4}]

{\small Note that $\mathbf{X}=\mathcal{F}\pmb{\Theta}^{\top }+\mathcal{E}%
^{\dagger }$ and $\mathcal{E}^{\dagger }=\mathcal{E}+\mathbb{F}\mathbb{A}%
^{(-J),\top }$. As in the proof of Theorem \ref{theorem1}, we expand $L_{NT}$
as follows: 
\begin{eqnarray*}
L_{NT}&=&\frac{1}{T^{2}N^{2}}\sum_{i,j=1}^{N}\pmb{\theta}_{i}^{\top }%
\mathcal{F} ^{\top }\mathbf{M}_{\widehat{\mathbf{F}}}\mathbf{K}_{h}\mathbf{M}%
_{\widehat{ \mathbf{F}}}\mathcal{F}\pmb{\theta}_{j}+\frac{1}{T^{2}N^{2}}%
\sum_{i,j=1}^{N} \mathcal{E}_{i}^{{\dagger },\top }\mathbf{M}_{\widehat{%
\mathbf{F}}}\mathbf{K} _{h}\mathbf{M}_{\widehat{\mathbf{F}}}\mathcal{E}%
_{j}^{\dagger }  \notag \\
&&+\frac{2}{T^{2}N^{2}}\sum_{i,j=1}^{N}\mathcal{E}_{i}^{\dagger ,\top } 
\mathbf{M}_{\widehat{\mathbf{F}}}\mathbf{K}_{h}\mathbf{M}_{\widehat{\mathbf{%
F }}}\mathcal{F}\pmb{\theta}_{j}  \notag \\
&\eqqcolon & L_{T,1}+L_{T,2}+L_{T,3},
\end{eqnarray*}
where $\pmb{\theta}_{i}$ denotes the $i^{th}$ column of $\pmb{\Theta}^{\top
} $ and $\mathcal{E}_{i}^{{\dagger }}$ denotes the $i^{th}$ column of $%
\mathcal{E}^{{\dagger },\top }$. }

{\small Consider $L_{T,2}$ first, by Lemmas \ref{Lemma8}(a) and \ref{Lemma2}%
(a), we have 
\begin{equation*}
\left\Vert \frac{1}{T^{2}N^{2}}\sum_{i,j=1}^{N}\mathcal{E}_{i}^{\dagger
,\top }\mathbf{M}_{\widehat{\mathbf{F}}}\mathbf{K}_{h}\mathbf{M}_{\widehat{ 
\mathbf{F}}}\mathcal{E}_{j}^{\dagger }\right\Vert \leq \left\Vert \frac{1}{
TN }\sum_{i=1}^{N}\mathcal{E}_{i}^{\dagger ,\top }\right\Vert ^{2}\Vert 
\mathbf{M}_{\widehat{\mathbf{F}}}\Vert _{2}^{2}\Vert \mathbf{K}_{h}\Vert
_{2}=o_{P}\left( \frac{1}{T}\right) O(T)=o_{P}(1).
\end{equation*}
Similarly, for $L_{T,3}$, we have 
\begin{equation*}
\Vert L_{T,3}\Vert \leq \left\Vert \frac{1}{TN}\sum_{i=1}^{N}\mathcal{E}
_{i}^{\dagger ,\top }\right\Vert \Vert \mathbf{M}_{\widehat{\mathbf{F}}
}\Vert _{2}^{2}\frac{1}{T}\Vert \mathbf{K}_{h}\Vert _{2}\Vert \mathcal{F}
\Vert \Vert \frac{1}{N}\sum_{i=1}^{N}\pmb{\theta}_{j}\Vert =o_{P}(1/\sqrt{T}
)O_{P}(\sqrt{T})=o_{P}(1).
\end{equation*}
}

{\small Next, we consider $L_{T,1}$. We first decompose $\frac{1}{\sqrt{T}}%
\mathbf{M} _{\widehat{\mathbf{F}}}\mathcal{F}$ as follows 
\begin{eqnarray*}
\mathbf{M}_{\widehat{\mathbf{F}}}\mathcal{F} &=&\mathcal{F}-\frac{1}{T} 
\widehat{\mathbf{F}}\widehat{\mathbf{F}}^{\top }\mathcal{F}  \notag \\
&=&(\mathcal{F}-\frac{1}{T}\mathcal{F}\mathcal{H}\mathcal{H}^{\top }\mathcal{%
F}^{\top }\mathcal{F})-\frac{1}{T}(\widehat{\mathbf{F}}-\mathcal{F}\mathcal{%
H })(\widehat{\mathbf{F}}-\mathcal{F}\mathcal{H})^{\top }\mathcal{F}  \notag
\\
&&-\frac{1}{T}(\widehat{\mathbf{F}}-\mathcal{F}\mathcal{H})\mathcal{H}^{\top
}\mathcal{F}^{\top }\mathcal{F}-\frac{1}{T}\mathcal{F}\mathcal{H}(\widehat{ 
\mathbf{F}}-\mathcal{F}\mathcal{H})^{\top }\mathcal{F}  \notag \\
&\eqqcolon &\mathbf{I}_{1}+\mathbf{I}_{2}+\mathbf{I}_{3}+\mathbf{I}_{4},
\end{eqnarray*}
where $\mathcal{H}=\left( \frac{\pmb{\Theta}^{\top }\pmb{\Theta}}{N}\right)
\left( \frac{\mathcal{F}^{\top }\widehat{\mathbf{F}}}{T}\right) \left( \frac{
1}{TN}\widehat{\mathbf{V}}\right) ^{-1}$. By Lemma \ref{Lemma8} (b), $\Vert 
\frac{1}{\sqrt{T}}\mathbf{I}_{2}\Vert =o_{P}(1)$, $\Vert \frac{1}{\sqrt{T}} 
\mathbf{I}_{3}\Vert =o_{P}(1)$ and $\Vert \frac{1}{\sqrt{T}}\mathbf{I}
_{4}\Vert =o_{P}(1)$. Let $\mathbf{R}_{TN}=(\pmb{\Theta}^{\top }\pmb{\Theta}
/N)^{1/2}\mathcal{F}^{\top }\widehat{\mathbf{F}}/T$ and 
\begin{equation*}
\mathbf{B}_{TN}=(\pmb{\Theta}^{\top }\pmb{\Theta}/N)^{1/2}(\mathcal{F}^{\top
}\mathcal{F}/T)(\pmb{\Theta}^{\top }\pmb{\Theta}/N)^{1/2}.
\end{equation*}
For $\mathbf{I}_{1}$, by using Lemmas \ref{Lemma8} (c)--(d) and $%
\pmb{\Upsilon}^{\top }\pmb{\Delta}=\mathbb{V}\pmb{\Upsilon}^{\top }$, we
have 
\begin{eqnarray*}
\frac{1}{\sqrt{T}}(\mathcal{F}-\frac{1}{T}\mathcal{F}\mathcal{H}\mathcal{H}
^{\top }\mathcal{F}^{\top }\mathcal{F}) &=&\frac{1}{\sqrt{T}}\mathcal{F}( %
\pmb{\Theta}^{\top }\pmb{\Theta}/N)^{1/2}(\mathbf{I}_{J}-\mathbf{R}_{TN}( 
\frac{1}{TN}\widehat{\mathbf{V}})^{-2}\mathbf{R}_{TN}^{\top }\mathbf{B}
_{TN})(\pmb{\Theta}^{\top }\pmb{\Theta}/N)^{-1/2}  \notag \\
&=&\frac{1}{\sqrt{T}}\mathcal{F}(\mathbf{S}^{(J)}/\sqrt{TN})(\mathbf{I}_{J}- %
\pmb{\Upsilon}\pmb{\Upsilon}^{\top })(\mathbf{S}^{(J)}/\sqrt{TN}
)^{-1}+o_{P}(1).
\end{eqnarray*}
It follows that 
\begin{eqnarray*}
L_{T,1} &=&\frac{1}{T^{2}}\overline{\pmb{\theta}}^{\top }(\mathbf{S}^{(J)}/ 
\sqrt{TN})^{-1}(\mathbf{I}_{J}-\pmb{\Upsilon}\pmb{\Upsilon}^{\top })(\mathbf{%
S}^{(J)}/\sqrt{TN})  \notag \\
&&\times \mathcal{F}^{\top }\mathbf{K}_{h}\mathcal{F}(\mathbf{S}^{(J)}/\sqrt{
TN})(\mathbf{I}_{J}-\pmb{\Upsilon}\pmb{\Upsilon}^{\top })(\mathbf{S}^{(J)}/ 
\sqrt{TN})^{-1}\overline{\pmb{\theta}}+o_{P}(1),
\end{eqnarray*}
where $\overline{\pmb{\theta}}=\frac{1}{N}\sum_{i=1}^{N}\pmb{\theta}_{i}$.
In addition, by using similar arguments as used in the proof of Lemma \ref%
{Lemma6}, we can show the convergence of the term $\frac{1}{T^{2}}\mathcal{F}
^{\top }\mathbf{K}_{h}\mathcal{F}$. This completes the proof of the theorem. 
}
\end{proof}

{\small
\setlength{\bibsep}{3.pt plus 0ex}
\bibliography{paper-ref.bib}

@article{chen2018nonparametric,
	title={Nonparametric testing for smooth structural changes in panel data models},
	author={Chen, Bin and Huang, Liquan},
	journal={Journal of Econometrics},
	volume={202},
	number={2},
	pages={245--267},
	year={2018},
	publisher={Elsevier}
}

@article{su2015specification,
	title={Specification test for panel data models with interactive fixed effects},
	author={Su, Liangjun and Jin, Sainan and Zhang, Yonghui},
	journal={Journal of Econometrics},
	volume={186},
	number={1},
	pages={222--244},
	year={2015},
	publisher={Elsevier}
}

@techreport{Li1994,
  title       = {A Consistent Test for Linearity in Partially Linear Regression Models},
  author      = {Li, Qi},
  year        = {1994},
  institution = {University of Guelph, Department of Economics and Finance},
  type        = {Working Papers \url{https://EconPapers.repec.org/RePEc:gue:guelph:1994-7}}
}

@incollection{STOCK2016415,
  title     = {Chapter 8 - Dynamic Factor Models, Factor-Augmented Vector Autoregressions, and Structural Vector Autoregressions in Macroeconomics},
  editor    = {John B. Taylor and Harald Uhlig},
  series    = {Handbook of Macroeconomics},
  publisher = {Elsevier},
  volume    = {2},
  pages     = {415-525},
  year      = {2016},
  author    = {J.H. Stock and M.W. Watson}
}

@article{ahn2013eigenvalue,
  title     = {Eigenvalue ratio test for the number of factors},
  author    = {Ahn, Seung C and Horenstein, Alex R},
  journal   = {Econometrica},
  volume    = {81},
  number    = {3},
  pages     = {1203--1227},
  year      = {2013},
  publisher = {Wiley Online Library}
}

@article{Pesaran2004,
  author  = {M. H. Pesaran},
  title   = {General diagnostic tests for cross section dependence in panels},
  journal = {Empirical Economics},
  volume  = {60},
  pages   = {13-50},
  year    = {2021}
}

@article{MN2016,
  author  = {Michael W. McCracken and Serena Ng},
  title   = {FRED-MD: A Monthly Database for Macroeconomic Research},
  journal = {Journal of Business \& Economic Statistics},
  volume  = {34},
  number  = {4},
  pages   = {574-589},
  year    = {2016}
}

@incollection{CP2015,
  author    = {Chudik, Alexander and Pesaran, Hashem},
  isbn      = {9780199940042},
  title     = {{Large Panel Data Models with Cross-Sectional Dependence: {A} Survey}},
  booktitle = {{The Oxford Handbook of Panel Data}},
  publisher = {Oxford University Press},
  year      = {2015},
  month     = {01}
}

@article{BKW2021,
  title   = {Estimating and testing high dimensional factor models with multiple structural changes},
  journal = {Journal of Econometrics},
  volume  = {220},
  number  = {2},
  pages   = {349-365},
  year    = {2021},
  author  = {Badi H. Baltagi and Chihwa Kao and Fa Wang}
}

@article{NW2011,
  author  = {Sahand Negahban and Martin J. Wainwright},
  title   = {{Estimation of (near) low-rank matrices with noise and high-dimensional scaling}},
  volume  = {39},
  journal = {Annals of Statistics},
  number  = {2},
  pages   = {1069-1097},
  year    = {2011}
}

@article{BDH2024,
  title   = {The likelihood ratio test for structural changes in factor models},
  journal = {Journal of Econometrics},
  volume  = {238},
  number  = {2},
  pages   = {105631},
  year    = {2024},
  author  = {Jushan Bai and Jiangtao Duan and Xu Han}
}

@article{BN2002,
  author  = {Bai, Jushan and Ng, Serena},
  title   = {Determining the Number of Factors in Approximate Factor Models},
  journal = {Econometrica},
  volume  = {70},
  number  = {1},
  pages   = {191-221},
  year    = {2002}
}

@article{han2015tests,
  title     = {Tests for parameter instability in dynamic factor models},
  author    = {Han, Xu and Inoue, Atsushi},
  journal   = {Econometric Theory},
  volume    = {31},
  number    = {5},
  pages     = {1117-1152},
  year      = {2015},
  publisher = {Cambridge University Press}
}

@article{breitung2011testing,
  title     = {Testing for structural breaks in dynamic factor models},
  author    = {Breitung, J{\"o}rg and Eickmeier, Sandra},
  journal   = {Journal of Econometrics},
  volume    = {163},
  number    = {1},
  pages     = {71-84},
  year      = {2011},
  publisher = {Elsevier}
}

@article{chen2014detecting,
  title     = {Detecting big structural breaks in large factor models},
  author    = {Chen, Liang and Dolado, Juan J and Gonzalo, Jes{\'u}s},
  journal   = {Journal of Econometrics},
  volume    = {180},
  number    = {1},
  pages     = {30-48},
  year      = {2014},
  publisher = {Elsevier}
}

@article{su2020testing,
  title     = {Testing for structural changes in factor models via a nonparametric regression},
  author    = {Su, Liangjun and Wang, Xia},
  journal   = {Econometric Theory},
  volume    = {36},
  number    = {6},
  pages     = {1127--1158},
  year      = {2020},
  publisher = {Cambridge University Press}
}

@article{zhang2012inference,
  title   = {Inference of time-varying regression models},
  author  = {Zhang, Ting and Wu, Wei Biao},
  year    = {2012},
  journal = {Annals of Statistics},
  volume  = {40},
  number  = {3},
  pages   = {1376--1402}
}

@book{stock2020introduction,
  author    = {Stock, James H and Watson, Mark W},
  publisher = {Pearson},
  title     = {Introduction to Econometrics},
  year      = {2020}
}

@article{JOHNXUZHENG1996263,
  title   = {A consistent test of functional form via nonparametric estimation techniques},
  journal = {Journal of Econometrics},
  volume  = {75},
  number  = {2},
  pages   = {263-289},
  year    = {1996},
  author  = {Zheng, Xu}
}

@article{bai2003inferential,
  title     = {Inferential theory for factor models of large dimensions},
  author    = {Bai, Jushan},
  journal   = {Econometrica},
  volume    = {71},
  number    = {1},
  pages     = {135--171},
  year      = {2003},
  publisher = {Wiley Online Library}
}

@article{fu2023testing,
  title     = {Testing for structural changes in large dimensional factor models via discrete Fourier transform},
  author    = {Fu, Zhonghao and Hong, Yongmiao and Wang, Xia},
  journal   = {Journal of Econometrics},
  volume    = {233},
  number    = {1},
  pages     = {302--331},
  year      = {2023},
  publisher = {Elsevier}
}

@book{tong1990non,
  title     = {Non-linear time series: a dynamical system approach},
  author    = {Tong, Howell},
  year      = {1990},
  publisher = {Oxford university press}
}

@article{wu2005nonlinear,
  title     = {Nonlinear system theory: Another look at dependence},
  author    = {Wu, Wei Biao},
  journal   = {Proceedings of the National Academy of Sciences},
  volume    = {102},
  number    = {40},
  pages     = {14150--14154},
  year      = {2005},
  publisher = {National Acad Sciences}
}

@article{su2017time,
  title     = {On time-varying factor models: {E}stimation and testing},
  author    = {Su, Liangjun and Wang, Xia},
  journal   = {Journal of Econometrics},
  volume    = {198},
  number    = {1},
  pages     = {84--101},
  year      = {2017},
  publisher = {Elsevier}
}

@book{hall1980martingale,
  title     = {Martingale limit theory and its application},
  author    = {Hall, Peter and Heyde, Christopher C},
  year      = {1980},
  publisher = {Academic press}
}

@article{DK1979,
  author  = {David A. Dickey and Wayne A. Fuller},
  journal = {Journal of the American Statistical Association},
  number  = {366},
  pages   = {427-431},
  title   = {Distribution of the Estimators for Autoregressive Time Series With a Unit Root},
  volume  = {74},
  year    = {1979}
}

@article{PP1988,
  author  = {Peter C. B. Phillips and Pierre Perron},
  journal = {Biometrika},
  number  = {2},
  pages   = {335-346},
  title   = {Testing for a Unit Root in Time Series Regression},
  volume  = {75},
  year    = {1988}
}

@article{GKLT2009,
  author  = {Jiti Gao and Maxwell King and Zudi Lu and Dag Tj{\o}stheim},
  title   = {{Specification testing in nonlinear and nonstationary time series autoregression}},
  volume  = {37},
  journal = {Annals of Statistics},
  number  = {6B},
  pages   = {3893-3928},
  year    = {2009}
}

@article{WSX2022,
  author  = {Jilin Wu and Xiaojun Song and Zhijie Xiao},
  title   = {Testing for Trend Specifications in Panel Data Models},
  journal = {Journal of Business \& Economic Statistics},
  volume  = {0},
  number  = {0},
  pages   = {1-14},
  year    = {2022}
}

@book{Jolliffe,
  author    = {I.T. Jolliffe},
  title     = {{{Principal Component Analysis}, 2nd Edition}},
  publisher = {Springer},
  year      = {2002}
}

@article{LM2012,
  author  = {Clifford Lam and Qiwei Yao},
  title   = {{Factor modeling for high-dimensional time series: Inference for the number of factors}},
  volume  = {40},
  journal = {Annals of Statistics},
  number  = {2},
  pages   = {694-726},
  year    = {2012}
}

@book{GL2013,
  author    = {G. H. Golub and C. F. {Van Loan}},
  title     = {Matrix Computations (4th Edition)},
  publisher = {The Johns Hopkins University Press},
  year      = {2013}
}

@article{FLL2021,
  author  = {Fan, Jianqing and Li, Kunpeng and Liao, Yuan},
  title   = {Recent Developments in Factor Models and Applications in Econometric Learning},
  journal = {Annual Review of Financial Economics},
  volume  = {13},
  number  = {1},
  pages   = {401-430},
  year    = {2021}
}

@article{onatski2010,
  author  = {Onatski, Alexei},
  journal = {The Review of Economics and Statistics},
  title   = {Determining the number of factors from empirical distribution of eigenvalues},
  year    = {2010},
  issn    = {0034-6535},
  month   = jul,
  number  = {4},
  pages   = {1004--1016},
  volume  = {92}
}

@article{yamamoto_tanaka2015,
  author    = {Yamamoto, Yohei and Tanaka, Shinya},
  journal   = {Journal of Econometrics},
  title     = {Testing for factor loading structural change under common breaks},
  year      = {2015},
  number    = {1},
  pages     = {187--206},
  volume    = {189},
  publisher = {Elsevier}
}

@article{ma_su2018,
  author    = {Ma, Shujie and Su, Liangjun},
  journal   = {Journal of econometrics},
  title     = {Estimation of large dimensional factor models with an unknown number of breaks},
  year      = {2018},
  number    = {1},
  pages     = {1--29},
  volume    = {207},
  publisher = {Elsevier}
}

@article{cheng2016shrinkage,
  author    = {Cheng, Xu and Liao, Zhipeng and Schorfheide, Frank},
  journal   = {The Review of Economic Studies},
  title     = {Shrinkage estimation of high-dimensional factor models with structural instabilities},
  year      = {2016},
  number    = {4},
  pages     = {1511--1543},
  volume    = {83},
  publisher = {Oxford University Press}
}

@Article{fu_su_wang2024,
  author    = {Fu, Zhonghao and Su, Liangjun and Wang, Xia},
  journal   = {Journal of Business \& Economic Statistics},
  title     = {Estimation and inference on time-varying FAVAR models},
  year      = {2024},
  number    = {2},
  pages     = {533--547},
  volume    = {42},
  publisher = {Taylor \& Francis},
}

@article{pelger_xiong2022,
  author    = {Pelger, Markus and Xiong, Ruoxuan},
  journal   = {Journal of Business \& Economic Statistics},
  title     = {State-varying factor models of large dimensions},
  year      = {2022},
  number    = {3},
  pages     = {1315--1333},
  volume    = {40},
  publisher = {Taylor \& Francis}
}

@Article{andrews1991heteroskedasticity,
  author    = {Andrews, Donald WK},
  journal   = {Econometrica},
  title     = {Heteroskedasticity and autocorrelation consistent covariance matrix estimation},
  year      = {1991},
  number    = {3},
  pages     = {817--858},
  volume    = {59},
  publisher = {JSTOR},
}

@book{li2007nonparametric,
  author    = {Li, Qi and Racine, Jeffrey Scott},
  publisher = {Princeton University Press},
  title     = {Nonparametric econometrics: theory and practice},
  year      = {2007}
}

@article{bai2023approximate,
  author    = {Bai, Jushan and Ng, Serena},
  journal   = {Journal of Econometrics},
  title     = {Approximate factor models with weaker loadings},
  year      = {2023},
  number    = {2},
  pages     = {1893--1916},
  volume    = {235},
  publisher = {Elsevier}
}

@book{stewart1990matrix,
  author    = {Stewart, Gilbert W and Sun, Ji-guang},
  publisher = {Academic Press},
  title     = {Matrix perturbation theory},
  year      = {1990}
}

@Article{hansen2000testing,
  author    = {Hansen, Bruce E},
  journal   = {Journal of Econometrics},
  title     = {Testing for structural change in conditional models},
  year      = {2000},
  number    = {1},
  pages     = {93--115},
  volume    = {97},
  publisher = {Elsevier},
}

@Article{rosenblatt1975quadratic,
  author    = {Rosenblatt, Murray},
  journal   = {The Annals of Statistics},
  title     = {A quadratic measure of deviation of two-dimensional density estimates and a test of independence},
  year      = {1975},
  pages     = {1--14},
  publisher = {JSTOR},
}

@Article{lettau2020factors,
  author    = {Lettau, Martin and Pelger, Markus},
  journal   = {The Review of Financial Studies},
  title     = {Factors that fit the time series and cross-section of stock returns},
  year      = {2020},
  number    = {5},
  pages     = {2274--2325},
  volume    = {33},
  publisher = {Oxford University Press},
}

@Article{zhang2012testing,
  author    = {Zhang, Yonghui and Su, Liangjun and Phillips, Peter CB},
  journal   = {The Econometrics Journal},
  title     = {Testing for common trends in semi-parametric panel data models with fixed effects},
  year      = {2012},
  number    = {1},
  pages     = {56--100},
  volume    = {15},
  publisher = {Oxford University Press Oxford, UK},
}
}

\newpage 

\setcounter{page}{1}

\begin{center}
{\large \textbf{Online Appendix B to 
\textquotedblleft A Robust Residual-Based Test for Structural Changes in
Factor Models\textquotedblright }} 

{\small \medskip }

{\small \textsc{Bin Peng}$^{a}$ and \textsc{Liangjun Su}$^{b}$ and \textsc{%
Yayi Yan}$^{c}$ }

{\small \medskip }

{\small $^{a}$Monash University, $^{b}$Tsinghua University, $^{c}$Shanghai University of
Finance and Economics}
\end{center}

\small

\renewcommand{\theequation}{B.\arabic{equation}}
\renewcommand{\thesection}{B.\arabic{section}}
\renewcommand{\thefigure}{B.\arabic{figure}}
\renewcommand{\thetable}{B.\arabic{table}}
\renewcommand{\thelemma}{B.\arabic{lemma}}
\renewcommand{\theremark}{B.\arabic{remark}}
\renewcommand{\thecorollary}{B.\arabic{corollary}}
\renewcommand{\theassumption}{B.\arabic{assumption}}

\setcounter{equation}{0}
\setcounter{lemma}{0}
\setcounter{section}{0}
\setcounter{table}{0}
\setcounter{figure}{0}
\setcounter{remark}{0}
\setcounter{corollary}{0}
\setcounter{assumption}{0}

{\small This appendix includes three sections. Section \ref{App.B1} contains
the proofs of the technical lemmas of Appendix \ref{App.A2}. Section \ref%
{App.B2} verifies Assumption \ref{Assumption4} in the paper. Section \ref%
{App.B3} contains some additional simulation results. }

\section{Proofs of the Technical Lemmas\label{App.B1}}

{\small 
\begin{proof}[Proof of Lemma \ref{Lemma1}]

\noindent (a). Define the projection operator $\mathcal{P}_{t}(\cdot
)=E[\cdot \,|\, \mathscr{E}_{t}]-E[\cdot \,|\, \mathscr{E}_{t-1}]$. Note that $%
\varepsilon _{it}\varepsilon _{jt}-E(\varepsilon _{it}\varepsilon
_{jt})=\sum_{k=0}^{\infty }\mathcal{P}_{t-k}(\varepsilon _{it}\varepsilon
_{jt})$. Let $\varepsilon _{it,\{k\}}$ be the coupled version of $%
\varepsilon _{it}$ replacing $\mathbf{e}_{t-k}$ with $\mathbf{e}%
_{t-k}^{\prime }$. By using Jensen inequality and using Assumption \ref%
{Assumption4}, we have 
\begin{eqnarray*}
|\mathcal{P}_{t-k}(\varepsilon _{it}\varepsilon _{jt})|_{2}
&=&|E(\varepsilon _{it}\varepsilon _{jt}\,|\, \mathscr{E}%
_{t-k})-E(\varepsilon _{it,\{t-k\}}\varepsilon _{jt,\{t-k\}}\,|\, %
\mathscr{E}_{t-k})|_{2} \notag \\
&\leq &|\varepsilon _{it}\varepsilon _{jt}-\varepsilon
_{it,\{t-k\}}\varepsilon _{jt,\{t-k\}}|_{2} \notag \\
&\leq &\left( |\varepsilon _{it}-\varepsilon
_{it,\{t-k\}}|_{4}+|\varepsilon _{jt}-\varepsilon
_{jt,\{t-k\}}|_{4}\right) (|\varepsilon _{it}|_{4}+|\varepsilon
_{jt}|_{4})=O(\lambda _{4}^{\varepsilon }(k)).
\end{eqnarray*}%
Since $\{\mathcal{P}_{t-k}(\varepsilon _{it}\varepsilon_{jt})\}_{t\leq T}$
is a martingale difference sequence (m.d.s.), by Burkholder and Minkowski
inequalities, we have 
\begin{equation*}
\left\vert \sum_{t=1}^{T}\mathcal{P}_{t-k}(\varepsilon _{it}\varepsilon
_{jt})\right\vert _{2}^{2}\leq O(1)\sum_{t=1}^{T}|\mathcal{P}%
_{t-k}(\varepsilon _{it}\varepsilon _{jt})|_{2}^{2}=O\left( T(\lambda
_{4}^{\varepsilon }(k))^{2}\right) .
\end{equation*}%
It follows that 
\begin{equation*}
\left|\sum_{t=1}^{T}\sum_{k=0}^{\infty }\mathcal{P}_{t-k}(\varepsilon
_{it}\varepsilon _{jt})\right|_{2}\leq \sum_{k=0}^{\infty }\left|\sum_{t=1}^{T}%
\mathcal{P}_{t-k}(\varepsilon _{it}\varepsilon _{jt})\right|_{2}=O(\sqrt{T})
\end{equation*}%
since $\sum_{k=0}^{\infty }\lambda _{4}^{\varepsilon }(k)$ is bounded by
Assumption \ref{Assumption4}.

\medskip

\noindent (b). By Jensen inequality and Assumptions \ref{Assumption4} and %
\ref{Assumption3}, we have 
\begin{eqnarray*}
|\mathcal{P}_{t-k}(\mathbf{f}_{t}\overline{\varepsilon }_{t,\mathbf{v}%
})|_{2} &=&|E(\mathbf{f}_{t}\overline{\varepsilon }_{t,\mathbf{v}}\,|\, %
\mathscr{E}_{t-k})-E(\mathbf{f}_{t,\{t-k\}}\overline{\varepsilon }_{t,%
\mathbf{v},\{t-k\}}\,|\, \mathscr{E}_{t-k})|_{2} \notag \\
&\leq &|\mathbf{f}_{t}-\mathbf{f}_{t,\{t-k\}}|_{4}\cdot |\overline{%
\varepsilon }_{t,\mathbf{v}}|_{4}+|\mathbf{f}_{t,\{t-k\}}|_{4}\cdot |%
\overline{\varepsilon }_{t,\mathbf{v}}-\overline{\varepsilon }_{t,\mathbf{v}%
,\{t-k\}}|_{4}=O(k^{-\alpha }).
\end{eqnarray*}%
Then by using martingale decomposition and Burkholder inequality, the proof
of part (b) is similar to that in part (a) and thus omitted here.

\medskip

\noindent (c). By Assumptions \ref{Assumption4},
martingale decomposition and Burkholder inequality, the proof of part (c) is
identical to that in parts (a)--(b) and thus omitted here. 
\end{proof}
}

{\small 
\begin{proof}[Proof of Lemma \ref{Lemma2}]
		
\noindent (a). Note that $\mathbf{K}_h$ is a symmetric matrix, and thus $$\| 
\mathbf{K}_h\|_2 \leq \| \mathbf{K}_h\|_\infty = \max_{s\ge 1} \sum_{t=1}^T
h^{-1}K((t-s)/(Th)) = O(T).$$

\medskip

\noindent (b). Note that $\left\Vert \frac{1}{TN}\mathcal{E}\mathcal{E}%
^{\top }\right\Vert =\left\Vert \frac{1}{TN}\mathcal{E}^{\top }\mathcal{E}%
\right\Vert \leq \left\Vert \frac{1}{TN}\sum_{t=1}^{T}(\pmb{\varepsilon}_{t}%
\pmb{\varepsilon}_{t}^{\top }-\pmb{\Sigma_{\varepsilon}})\right\Vert +\Vert 
\frac{1}{N}\pmb{\Sigma_{\varepsilon}}\Vert $. Note that $\Vert \frac{1}{N}%
\pmb{\Sigma_{\varepsilon}}\Vert $ is bounded by $\frac{1}{N}\sqrt{N}\Vert %
\pmb{\Sigma_{\varepsilon}}\Vert _{2}=O(1/\sqrt{N})$. For the first term, by
using Lemma \ref{Lemma1}(a) we have 
\begin{eqnarray*}
E\left\Vert \frac{1}{TN}\sum_{t=1}^{T}(\pmb{\varepsilon}_{t}\pmb{\varepsilon}%
_{t}^{\top }-\pmb{\Sigma_{\varepsilon}})\right\Vert  &\leq &\left\{
E\left\Vert \frac{1}{TN}\sum_{t=1}^{T}(\pmb{\varepsilon}_{t}\pmb{\varepsilon}%
_{t}^{\top }-\pmb{\Sigma_{\varepsilon}})\right\Vert ^{2}\right\} ^{1/2} \notag \\
&=&\frac{1}{N}\sqrt{\sum_{i,j=1}^{N}E\left( \frac{1}{T}\sum_{t=1}^{T}[%
	\varepsilon _{it}\varepsilon _{jt}-E(\varepsilon _{it}\varepsilon
	_{jt})]\right) ^{2}}=O(1/\sqrt{T}).
\end{eqnarray*}%
Then the result holds.

\medskip

\noindent (c). By Lemma \ref{Lemma1}(b), we have 
\begin{equation*}
\left\Vert \frac{1}{TN}\pmb{\Lambda}^{\top }\mathcal{E}^{\top }\mathbf{F}%
\right\Vert =\left\Vert \frac{1}{TN}\sum_{t=1}^{T}\sum_{i=1}^{N}\pmb{\lambda}%
_{i}\varepsilon _{it}\mathbf{f}_{t}^{\top }\right\Vert =O_{P}\left( \frac{1%
}{\sqrt{TN}}\right) .
\end{equation*}

\medskip

\noindent (d). By Assumption \ref{Assumption2} and the condition that $\Vert %
\pmb{\Sigma_{\varepsilon}}\Vert _{2}\leq c_{3}<\infty $ under Assumption \ref%
{Assumption4}(a) 
\begin{eqnarray*}
E\left\Vert \frac{1}{TN}\mathcal{E}\pmb{\Lambda}\right\Vert ^{2} &=&\frac{1}{%
	T^{2}N^{2}}\sum_{t=1}^{T}E\left\{ \mathrm{tr}\left( \pmb{\Lambda}^{\top }%
\pmb{\varepsilon}_{t}\pmb{\varepsilon}_{t}^{\top }\pmb{\Lambda}\right)
\right\}\le \frac{1}{TN}\Vert \pmb{\Sigma_{\varepsilon}}\Vert _{2}\mathrm{tr}%
\left( N^{-1}\pmb{\Lambda}^{\top }\pmb{\Lambda}\right) =O\left( \frac{1}{TN}%
\right) .
\end{eqnarray*}

		\medskip
		
\noindent (e). Note that 
\begin{equation*}
E\left\Vert \frac{1}{TN}\mathcal{E}\mathcal{E}^{\top }\mathbf{F}\right\Vert
\leq \left\{ \sum_{s=1}^{T}E\left\Vert \frac{1}{TN}\sum_{t=1}^{T}\mathbf{f}%
_{t}\pmb{\varepsilon}_{t}^{\top }\pmb{\varepsilon}_{s}\right\Vert
^{2}\right\} ^{1/2}
\end{equation*}%
and 
\begin{equation*}
\left\vert \frac{1}{TN}\sum_{t=1}^{T}\mathbf{f}_{t}\pmb{\varepsilon}%
_{t}^{\top }\pmb{\varepsilon}_{s}\right\vert _{2}\leq \left\vert \frac{1}{TN}%
\sum_{t=1}^{T}\mathbf{f}_{t}(\pmb{\varepsilon}_{t}^{\top }\pmb{\varepsilon}%
_{s}-E(\pmb{\varepsilon}_{t}^{\top }\pmb{\varepsilon}_{s}))\right\vert
_{2}+\left\vert \frac{1}{TN}\sum_{t=1}^{T}\mathbf{f}_{t}E(\pmb{\varepsilon}%
_{t}^{\top }\pmb{\varepsilon}_{s})\right\vert _{2}.
\end{equation*}%
By Lemma \ref{Lemma1}(c), the first term on the r.h.s. is $O(1/\sqrt{TN})$.
For the second term, we have that for $\alpha >1$ 
\begin{equation*}
\left|\frac{1}{TN}\sum_{t=1}^{T}\mathbf{f}_{t}E(\pmb{\varepsilon}_{t}^{\top }%
\pmb{\varepsilon}_{s})\right|_{2}\leq O(1)\frac{1}{T}\sum_{t=1}^{T}\max_{i}|E(%
\varepsilon _{it}\varepsilon _{is})|=O(1)\frac{1}{T}\sum_{t=1}^{T}|t-s|^{-%
	\alpha }=O\left( 1/T\right) 
\end{equation*}%
provided that 
\begin{eqnarray*}
|E[\varepsilon _{i0}\varepsilon _{ij}]| &=&\left\vert E\left[ \left(
\sum_{l=0}^{\infty }\mathscr{P}_{-l}(\varepsilon _{i0})\right) \left(
\sum_{l=0}^{\infty }\mathscr{P}_{j-l}(\varepsilon _{ij})\right) \right]
\right\vert \notag \\
&=&\left\vert E\left[ \sum_{l=0}^{\infty }(E[\varepsilon _{i0}\,|\, %
\mathscr{E}_{-l}]-E[\varepsilon _{i0}\,|\, \mathscr{E}_{-l-1}])\cdot
(E[\varepsilon _{ij}\,|\, \mathscr{E}_{-l}]-E[\varepsilon _{ij}\,|\, %
\mathscr{E}_{-l-1}])\right] \right\vert \notag \\
&\leq &\sum_{l=0}^{\infty }\Vert E[\varepsilon _{i0}\,|\, \mathscr{E}%
_{-l}]-E[\varepsilon _{i0}\,|\, \mathscr{E}_{-l-1}]\Vert _{2}\cdot \Vert
E[\varepsilon _{ij}\,|\, \mathscr{E}_{-l}]-E[\varepsilon _{ij}\,|\, %
\mathscr{E}_{-l-1}]\Vert _{2} \notag \\
&=&\sum_{l=0}^{\infty }\Vert E[\varepsilon _{il}\,|\, \mathscr{E}%
_{0}]-E[\varepsilon _{il}^{\ast }\,|\, \mathscr{E}_{0}]\Vert _{2}\cdot \Vert
E[\varepsilon _{i,j+l}\,|\, \mathscr{E}_{0}]-E[\varepsilon _{i,j+l}^{\ast
}\,|\, \mathscr{E}_{0}]\Vert _{2} \notag \\
&\leq &\sum_{l=0}^{\infty }\Vert \varepsilon _{il}-\varepsilon _{il}^{\ast
}\Vert _{2}\Vert \varepsilon _{i,j+l}-\varepsilon _{i,j+l}^{\ast }\Vert
_{2}\leq O(j^{-\alpha }).
\end{eqnarray*}%
Then we have $E\left\Vert \frac{1}{TN}\mathcal{E}\mathcal{E}^{\top }\mathbf{F%
}\right\Vert =O({1}/\sqrt{T}+1/\sqrt{N})$.

\medskip

\noindent (f). As in the proof of part (c), part (f) follows directly from
Lemma \ref{Lemma1}(b).

\medskip

\noindent (g). As in the proof of part (c), part (g) follows directly from
Lemma \ref{Lemma1}(b).

\end{proof}
}

{\small 
\begin{proof}[Proof of Lemma \ref{Lemma3}]
		
\noindent (a). Write

\begin{eqnarray*}
\widehat{\mathbf{F}}\cdot \frac{1}{TN}\widehat{\mathbf{V}} &=&\frac{1}{TN}%
\mathbf{X}\mathbf{X}^{\top }\widehat{\mathbf{F}}=\frac{1}{TN}(\mathbf{F}%
\pmb{\Lambda}^{\top }+\mathcal{E})(\mathbf{F}\pmb{\Lambda}^{\top }+\mathcal{E%
})^{\top }\widehat{\mathbf{F}} \notag \\
&=&\frac{1}{TN}\mathbf{F}\pmb{\Lambda}^{\top }\pmb{\Lambda}\mathbf{F}^{\top }%
\widehat{\mathbf{F}}+\frac{1}{TN}\mathbf{F}\pmb{\Lambda}^{\top }\mathcal{E}%
^{\top }\widehat{\mathbf{F}}+\frac{1}{TN}\mathcal{E}\pmb{\Lambda}\mathbf{F}%
^{\top }\widehat{\mathbf{F}}+\frac{1}{TN}\mathcal{E}\mathcal{E}^{\top }%
\widehat{\mathbf{F}}.
\end{eqnarray*}%
Note that 
\begin{equation*}
\frac{1}{\sqrt{T}}\Vert \frac{1}{TN}\mathbf{F}\pmb{\Lambda}^{\top }\mathcal{E%
}^{\top }\widehat{\mathbf{F}}\Vert =O_{P}\left( \frac{1}{\sqrt{N}}\right) \
\ \text{ and }\ \ \frac{1}{\sqrt{T}}\Vert \frac{1}{TN}\mathcal{E}\mathcal{E}%
^{\top }\widehat{\mathbf{F}}\Vert =O_{P}\left( \frac{1}{\sqrt{T\wedge N}}%
\right) 
\end{equation*}%
by Lemma \ref{Lemma2}(d) and (b), respectively and the fact that $\Vert 
\mathbf{F}/\sqrt{T}\Vert =O_{P}(1)$ and $\Vert \widehat{\mathbf{F}}/\sqrt{T}%
\Vert =\sqrt{r}$. Thus, the first result follows.

\medskip

\noindent (b). Write 
\begin{equation*}
\frac{1}{T}\mathbf{F}^{\top }(\widehat{\mathbf{F}}-\mathbf{F}\mathbf{H})%
\frac{1}{TN}\widehat{\mathbf{V}}=\frac{1}{T^{2}N}\mathbf{F}^{\top }\mathbf{F}%
\pmb{\Lambda}^{\top }\mathcal{E}^{\top }\widehat{\mathbf{F}}+\frac{1}{T^{2}N}%
\mathbf{F}^{\top }\mathcal{E}\pmb{\Lambda}\mathbf{F}^{\top }\widehat{\mathbf{%
		F}}+\frac{1}{T^{2}N}\mathbf{F}^{\top }\mathcal{E}\mathcal{E}^{\top }\widehat{%
	\mathbf{F}}.
\end{equation*}%
For the first term, by using Lemma \ref{Lemma2}(c)--(d) and the result in
part (a), we have 
\begin{eqnarray*}
\frac{1}{T^{2}N}\mathbf{F}^{\top }\mathbf{F}\pmb{\Lambda}^{\top }\mathcal{E}%
^{\top }\widehat{\mathbf{F}} &=&\frac{1}{T^{2}N}\mathbf{F}^{\top }\mathbf{F}%
\pmb{\Lambda}^{\top }\mathcal{E}^{\top }\mathbf{F}\mathbf{H}+\frac{1}{T^{2}N}%
\mathbf{F}^{\top }\mathbf{F}\pmb{\Lambda}^{\top }\mathcal{E}^{\top }(%
\widehat{\mathbf{F}}-\mathbf{F}\mathbf{H}) \notag \\
&=&O_{P}(1/\sqrt{TN})+O_{P}(\sqrt{T}/\sqrt{TN(T\wedge N)}).
\end{eqnarray*}%
Similarly, by Lemma \ref{Lemma2}(c) and (e), we have $\frac{1}{T^{2}N}%
\mathbf{F}^{\top }\mathcal{E}\pmb{\Lambda}\mathbf{F}^{\top }\widehat{\mathbf{%
	F}}=O_{P}(1/\sqrt{TN})$ and $\frac{1}{T^{2}N}\mathbf{F}^{\top }\mathcal{E}%
\mathcal{E}^{\top }\widehat{\mathbf{F}}=O_{P}(1/\sqrt{T(T\wedge N)})$. We
then obtain the second result.

\medskip

\noindent (c). Write 
\begin{eqnarray*}
\frac{1}{TN}\mathcal{E}\mathcal{E}^{\top }(\widehat{\mathbf{F}}-\mathbf{F}%
\mathbf{H})\widehat{\mathbf{V}}/(TN) &=&\frac{1}{T^{2}N^{2}}\mathcal{E}%
\mathcal{E}^{\top }\mathbf{F}\pmb{\Lambda}^{\top }\mathcal{E}^{\top }%
\widehat{\mathbf{F}}+\frac{1}{T^{2}N^{2}}\mathcal{E}\mathcal{E}^{\top }%
\mathcal{E}\pmb{\Lambda}\mathbf{F}^{\top }\widehat{\mathbf{F}} \notag \\
&&+\frac{1}{T^{2}N^{2}}\mathcal{E}\mathcal{E}^{\top }\mathcal{E}\mathcal{E}%
^{\top }\widehat{\mathbf{F}}.
\end{eqnarray*}%
For the first term, 
\begin{eqnarray*}
\frac{1}{T^{2}N^{2}}\mathcal{E}\mathcal{E}^{\top }\mathbf{F}\pmb{\Lambda}%
^{\top }\mathcal{E}^{\top }\widehat{\mathbf{F}} &=&\frac{1}{T^{2}N^{2}}%
\mathcal{E}\mathcal{E}^{\top }\mathbf{F}\pmb{\Lambda}^{\top }\mathcal{E}%
^{\top }\mathbf{F}\mathbf{H}+\frac{1}{T^{2}N^{2}}\mathcal{E}\mathcal{E}%
^{\top }\mathbf{F}\pmb{\Lambda}^{\top }\mathcal{E}^{\top }(\widehat{\mathbf{F%
}}-\mathbf{F}\mathbf{H}) \notag \\
&=&O_{P}\left( \frac{1}{\sqrt{TN(T\wedge N)}}\right) +O_{P}\left( \frac{%
	\sqrt{T}}{\sqrt{TN}(T\wedge N)}\right) 
\end{eqnarray*}%
by using Lemmas \ref{Lemma2}(c), (e) and \ref{Lemma3}(a). Similarly, by
Lemmas \ref{Lemma2} and \ref{Lemma3}(a), we can show that the second term is 
$O_{P}\left( \sqrt{T}/\sqrt{N(T\wedge N)}\right) $ and the third term is $%
O_{P}(\sqrt{T}/(T\wedge N)^{3/2})$. It follows that $\Vert \frac{1}{TN}%
\mathcal{E}\mathcal{E}^{\top }(\widehat{\mathbf{F}}-\mathbf{F}\mathbf{H}%
)\Vert =O_{P}(\sqrt{T}/(T\wedge N)^{3/2}).$

\medskip

\noindent (d). Write 
\begin{eqnarray*}
&&\sum_{i=1}^{N}(1-a_{i})\mathcal{E}_{i}^{\top }(\widehat{\mathbf{F}}-%
\mathbf{F}\mathbf{H}) \notag \\
&=&\sum_{i=1}^{N}(1-a_{i})\mathcal{E}_{i}^{\top }\left( \frac{1}{TN}\mathbf{F%
}\pmb{\Lambda}^{\top }\mathcal{E}^{\top }\widehat{\mathbf{F}}+\frac{1}{TN}%
\mathcal{E}\pmb{\Lambda}\mathbf{F}^{\top }\widehat{\mathbf{F}}+\frac{1}{TN}%
\mathcal{E}\mathcal{E}^{\top }\widehat{\mathbf{F}}\right) (\frac{1}{TN}%
\widehat{\mathbf{V}})^{-1}.
\end{eqnarray*}%
For the first term, by using Lemmas \ref{Lemma2}(f) and (c) and \ref{Lemma3}%
(a), we have 
\begin{eqnarray*}
&&\left\Vert \frac{1}{T^{2}N^{2}}\sum_{i=1}^{N}(1-a_{i})\mathcal{E}%
_{i}^{\top }\mathbf{F}\pmb{\Lambda}^{\top }\mathcal{E}^{\top }\widehat{%
	\mathbf{F}}\right\Vert \notag \\
&\leq &\left\Vert \frac{1}{T^{2}N^{2}}\sum_{i=1}^{N}(1-a_{i})\mathcal{E}%
_{i}^{\top }\mathbf{F}\pmb{\Lambda}^{\top }\mathcal{E}^{\top }\mathbf{F}%
\mathbf{H}\right\Vert +\left\Vert \frac{1}{T^{2}N^{2}}\sum_{i=1}^{N}(1-a_{i})%
\mathcal{E}_{i}^{\top }\mathbf{F}\pmb{\Lambda}^{\top }\mathcal{E}^{\top }(%
\widehat{\mathbf{F}}-\mathbf{F}\mathbf{H})\right\Vert  \notag \\
&=&O_{P}(1/(TN))+O_{P}(\sqrt{T}/(TN\sqrt{T\wedge N})).
\end{eqnarray*}%
For the second term, by using Lemmas \ref{Lemma2} (g) and (d), we have 
\begin{equation*}
\left\Vert \frac{1}{T^{2}N^{2}}\sum_{i=1}^{N}(1-a_{i})\mathcal{E}_{i}^{\top }%
\mathcal{E}\pmb{\Lambda}\mathbf{F}^{\top }\widehat{\mathbf{F}}\right\Vert
=O_{P}(1/N).
\end{equation*}%
Similarly, for the third term, we have 
\begin{equation*}
\left\Vert \frac{1}{T^{2}N^{2}}\sum_{i=1}^{N}(1-a_{i})\mathcal{E}_{i}^{\top }%
\mathcal{E}\mathcal{E}^{\top }\widehat{\mathbf{F}}\right\Vert =O_{P}(1/(%
\sqrt{TN}\sqrt{T\wedge N}))+O_{P}(\sqrt{T}/(\sqrt{TN}(T\wedge N))).
\end{equation*}%
This completes the proof. 
\end{proof}
}

{\small 
\begin{proof} [Proof of Lemma \ref{Lemma4}]

\noindent (a). Define 
\begin{equation*}
D_{t,j}=E[\overline{\varepsilon }_{t,\mathbf{v}}\,|\, \mathscr{E}_{t-j,t}]-E[%
\overline{\varepsilon }_{t,\mathbf{v}}\,|\, \mathscr{E}_{t-j+1,t}].
\end{equation*}%
Then $\{D_{t,j}\}_{t\geq 1}$ forms a m.d.s. with respect to $\mathscr{E}%
_{t-j,\infty }$, and $\Vert D_{t,j}\Vert _{4}\leq \lambda _{4}^{\varepsilon
}(j)$. Similar to the proof of Lemma \ref{Lemma1}, by Burkholder's
inequality and Minkowski's inequality, we have 
\begin{equation*}
\left\Vert \sum_{t=1}^{T}b_{t}D_{t,j}\right\Vert _{4}^{2}\leq
O(1)\sum_{t=1}^{T}\Vert b_{t}D_{t,j}\Vert
_{4}^{2}=O(1)\sum_{t=1}^{T}|b_{t}|^{2}\lambda _{4}^{\varepsilon ,2}(j).
\end{equation*}%
Since $\overline{\varepsilon }_{t,\mathbf{v}}-\widetilde{\overline{%
\varepsilon }}_{t,\mathbf{v}}=\sum_{j=m+1}^{\infty }D_{t,j}$, the result
follows.

\medskip

\noindent (b). In what follows, let 
\begin{eqnarray*}
Z_{t} &=&\sum_{j=1}^{t}K(\frac{t+1-j}{Th})\overline{\varepsilon }_{j,\mathbf{%
v}},\quad \widetilde{Z}_{t}=\sum_{j=1}^{t}K(\frac{t+1-j}{Th})\widetilde{%
\overline{\varepsilon }}_{j,\mathbf{v}},\text{ and} \notag \\
J_{T}^{\diamond } &=&\sum_{1\leq j<j^{\prime }\leq T}K(\frac{j^{\prime }-j}{%
Th})\overline{\varepsilon }_{j^{\prime },\mathbf{v}}\widetilde{\overline{%
\varepsilon }}_{j,\mathbf{v}}=\sum_{t=2}^{T}\overline{\varepsilon }_{t,%
\mathbf{v}}\widetilde{Z}_{t-1}.
\end{eqnarray*}%
Let $Z_{t,\{k\}},\overline{\varepsilon }_{t,\mathbf{v},\{k\}}$ be the
coupled version of $Z_{t}$ and $\overline{\varepsilon }_{t,\mathbf{v}}$
replacing $\mathbf{e}_{k}$ with $\mathbf{e}_{k}^{\prime }$. By Jensen's
inequality, 
\begin{eqnarray*}
|\mathscr{P}_{k}(J_{T}-J_{T}^{\diamond })|_{2} &\leq &\left\vert
\sum_{t=2}^{T}[\overline{\varepsilon }_{t,\mathbf{v}}(Z_{t-1}-\widetilde{Z}%
_{t-1})-\overline{\varepsilon }_{t,\mathbf{v},\{k\}}(Z_{t-1,\{k\}}-%
\widetilde{Z}_{t-1,\{k\}})]\right\vert _{2} \notag \\
&\leq &\left\vert \sum_{t=2}^{T}[\overline{\varepsilon }_{t,\mathbf{v}%
,\{k\}}(Z_{t-1}-\widetilde{Z}_{t-1}-Z_{t-1,\{k\}}+\widetilde{Z}%
_{t-1,\{k\}})]\right\vert _{2} \notag \\
&&+\left\vert \sum_{t=2}^{T}[\overline{\varepsilon }_{t,\mathbf{v}}-%
\overline{\varepsilon }_{t,\mathbf{v},\{k\}}](Z_{t-1}-\widetilde{Z}%
_{t-1})\right\vert _{2}\notag \\
&\eqqcolon& I_{1}+I_{2}.
\end{eqnarray*}

We consider $I_{1}$ first. Noting that $|\widetilde{\overline{\varepsilon }}%
_{t,\mathbf{v}}-\widetilde{\overline{\varepsilon }}_{t,\mathbf{v}%
,\{k\}}|_{4}\leq \lambda _{4}^{\varepsilon }(t-k)$ and $|\overline{%
\varepsilon }_{t,\mathbf{v}}-\widetilde{\overline{\varepsilon }}_{t,\mathbf{v%
}}|_{4}\leq (\sum_{j=m+1}^{\infty }\lambda _{4}^{\varepsilon ,2}(j))^{1/2},$
we obtain that 
\begin{equation*}
|\widetilde{\overline{\varepsilon }}_{t,\mathbf{v}}-\widetilde{\overline{%
\varepsilon }}_{t,\mathbf{v},\{k\}}-\overline{\varepsilon }_{t,\mathbf{v}}+%
\widetilde{\overline{\varepsilon }}_{t,\mathbf{v}}|_{4}\leq 2\min \left(
\lambda _{4}^{\varepsilon }(t-k),(\sum_{j=m+1}^{\infty }\lambda
_{4}^{\varepsilon ,2}(j))^{1/2}\right) .
\end{equation*}

By Lemma \ref{Lemma1} (a), we have 
\begin{eqnarray*}
&&I_{1}\leq \left\vert \sum_{t=1}^{T-1}(\widetilde{\overline{\varepsilon }}%
_{t,\mathbf{v}}-\widetilde{\overline{\varepsilon }}_{t,\mathbf{v},\{k\}}-%
\overline{\varepsilon }_{t,\mathbf{v}}+\widetilde{\overline{\varepsilon }}%
_{t,\mathbf{v}})\sum_{s=t+1}^{T}K(\frac{t-s}{Th})\overline{\varepsilon }_{t,%
\mathbf{v},\{k\}}\right\vert _{2} \notag \\
&\leq &\sum_{t=1}^{T-1}\left\vert \widetilde{\overline{\varepsilon }}_{t,%
\mathbf{v}}-\widetilde{\overline{\varepsilon }}_{t,\mathbf{v},\{k\}}-%
\overline{\varepsilon }_{t,\mathbf{v}}+\widetilde{\overline{\varepsilon }}%
_{t,\mathbf{v}}\right\vert _{\delta }\cdot \left\vert \sum_{s=t+1}^{T}K(%
\frac{t-s}{Th})\overline{\varepsilon }_{t,\mathbf{v},\{k\}}\right\vert
_{\delta } \notag \\
&=&O\left( (\sum_{t=1}^{T}K^{2}(\frac{t}{Th}))^{1/2}\right)
\sum_{t=1}^{T-1}\min \left( \lambda _{4}^{\varepsilon
}(t-k),(\sum_{j=m+1}^{\infty }\lambda _{4}^{\varepsilon ,2}(j))^{1/2}\right)
.
\end{eqnarray*}

Now, we consider $I_{2}$. By part (a), we have 
\begin{equation*}
\max_{1\leq t\leq T}|Z_{t-1}-\widetilde{Z}_{t-1}|_{4}\leq
O(1)(\sum_{t=1}^{T}K^{2}(\frac{t}{Th}))^{1/2}\sum_{j=m+1}^{\infty }\lambda
_{4}^{\varepsilon }(j).
\end{equation*}%
Note that $I_{2}$ is actually $I_{2,k}$ where the sub-index $k$ is
suppressed previously for notational simplicity. Hence, we have 
\begin{eqnarray*}
\sum_{k=-\infty }^{T}I_{2,k}^{2} &\leq &O(1)\sum_{k=-\infty }^{T}\left(
\sum_{t=1}^{T}K^{2}(\frac{t}{Th})\right) \left( \sum_{j=m+1}^{\infty
}\lambda _{4}^{\varepsilon }(j)\right) ^{2}(\sum_{t=1}^{T}\lambda
_{4}^{\varepsilon }(t-k))^{2} \notag \\
&\leq &O(1)\left( \sum_{t=1}^{T}K^{2}(\frac{t}{Th})\right) \left(
\sum_{j=m+1}^{\infty }\lambda _{4}^{\varepsilon }(j)\right)
^{2}\sum_{j=0}^{\infty }\lambda _{4}^{\varepsilon
}(j)\sum_{t=1}^{T}\sum_{k=-\infty }^{T}\lambda _{4}^{\varepsilon }(t-k) \notag \\
&=&O(1)T\left( \sum_{t=1}^{T}K^{2}(\frac{t}{Th})\right) \left(
\sum_{j=m+1}^{\infty }\lambda _{4}^{\varepsilon }(j)\right) ^{2}.
\end{eqnarray*}%
Similarly, $\sum_{k=-\infty }^{T}I_{1,k}^{2}=O(1)T\left( \sum_{t=1}^{T}K^{2}(%
\frac{t}{Th})\right) d_{m}^{2}$. Since $\sum_{j=m+1}^{\infty }\lambda
_{4}^{\varepsilon }(j)\leq d_{m}$, we have 
\begin{equation*}
|J_{T}-E(J_{T})-J_{T}^{\diamond }-E(J_{T}^{\diamond })|_{2}^{2}\leq
O(1)\sum_{k=-\infty }^{T}|\mathscr{P}_{k}(J_{T}-J_{T}^{\diamond
})|_{2}^{2}=O(1)T\left( \sum_{t=1}^{T}K^{2}(\frac{t}{Th})\right) d_{m}^{2}.
\end{equation*}%
Similarly, we have $|\widetilde{J}_{T}-E(\widetilde{J}_{T})-J_{T}^{\diamond
}-E(J_{T}^{\diamond })|_{2}^{2}=O(1)T\left( \sum_{t=1}^{T}K^{2}(\frac{t}{Th}%
)\right) d_{m}^{2}$. This completes the proof.

\medskip

\noindent (c). In what follows, let $U_{t}=E[h_{t}\,|\,\mathscr{E}_{t-1}]$
for notational simplicity. Note that we actually have 
\begin{equation*}
h_{t}=\sum_{j=0}^{m}E[\widetilde{\overline{\varepsilon }}_{j+t,a}\,|\, %
\mathscr{E}_{t}],
\end{equation*}%
as $E[\widetilde{\overline{\varepsilon }}_{j+t,a}\,|\,\mathscr{E}_{t}]=0$
for $j>m$. Also, note that $\{H_{t}\}$ is an $m$-dependent m.d.s. with
respect to $\mathscr{E}_{t}$. Note that $\widetilde{\overline{\varepsilon }}%
_{t,\mathbf{v}}=h_{t}-E[h_{t+1}\,|\, \mathscr{E}_{t}]$ and $|U_{t}|_{2}\leq
\left\vert \sum_{j=0}^{m}\widetilde{\overline{\varepsilon }}%
_{j+t,a}\right\vert _{2}=O(\sqrt{m})$ by Lemma \ref{Lemma1}. Then we have 
\begin{eqnarray*}
&&\left\vert \sum_{s=1}^{t-8m}K(\frac{t-s}{Th})(\widetilde{\overline{%
\varepsilon }}_{s,a}-H_{s})\right\vert _{2}=\left\vert \sum_{s=1}^{t-8m}K(%
\frac{t-s}{Th})(U_{s}-U_{s+1})\right\vert _{2} \notag \\
&=&\left\vert K(\frac{t-1}{Th})U_{1}-K(\frac{8m}{Th})U_{t-8m+1}+%
\sum_{s=2}^{t-8m}(K(\frac{t-s}{Th})-K(\frac{t+1-s}{Th}))U_{s}\right\vert _{2}
\notag \\
&\leq &O(1)\sqrt{m}\max_{1\leq t\leq T-1}|K(\frac{t}{Th})|+\left\vert
\sum_{s=2}^{t-8m}(K(\frac{t-s}{Th})-K(\frac{t+1-s}{Th}))U_{s}\right\vert
_{2}.
\end{eqnarray*}%
In addition, noting that $U_{t}=\sum_{l=1}^{m}\mathscr{P}_{t-l}[U_{t}]$ and $%
\{\mathscr{P}_{t-l}[U_{t}]\}_{t}$ is an m.d.s., we have 
\begin{equation*}
\left\vert \sum_{s=2}^{t-8m}(K(\frac{t-s}{Th})-K(\frac{t+1-s}{Th}))%
\mathscr{P}_{s-l}[U_{s}]\right\vert _{2}^{2}=\sum_{s=2}^{t-8m}(K(\frac{t-s}{%
Th})-K(\frac{t+1-s}{Th}))^{2}|\mathscr{P}_{s-l}[U_{s}]|_{2}^{2},
\end{equation*}%
which further yields that 
\begin{eqnarray*}
&&\left\vert \sum_{s=2}^{t-8m}(K(\frac{t-s}{Th})-K(\frac{t+1-s}{Th}%
))U_{s}\right\vert _{2} \notag \\
&\leq &\sum_{l=1}^{m}\left\vert \sum_{s=2}^{t-8m}(K(\frac{t-s}{Th})-K(\frac{%
t+1-s}{Th}))\mathscr{P}_{s-l}[U_{s}]\right\vert _{2} \notag \\
&\leq &\sum_{l=1}^{m}|\mathscr{P}_{0}[U_{l}]|_{2}\left( \sum_{s=2}^{t-8m}(K(%
\frac{t-s}{Th})-K(\frac{t+1-s}{Th}))^{2}\right) ^{1/2} \notag \\
&=&O(m)\left( \sum_{s=2}^{t-8m}(K(\frac{t-s}{Th})-K(\frac{t+1-s}{Th}%
))^{2}\right) ^{1/2}.
\end{eqnarray*}%
Hence, we have 
\begin{eqnarray*}
&&\left\vert \sum_{s=1}^{t-8m}K(\frac{t-s}{Th})(\widetilde{\overline{%
\varepsilon }}_{s,a}-H_{s})\right\vert _{2} \notag \\
&=&O(\sqrt{m})\left( \max_{1\leq t\leq T-1}K^{2}(\frac{t}{Th}%
)+m\sum_{t=1}^{T-1}\left( K(\frac{t}{Th})-K(\frac{t-1}{Th})\right)
^{2}\right) ^{1/2}.
\end{eqnarray*}%
Similarly, we have 
\begin{eqnarray*}
&&\left\vert \sum_{t=s+8m}^{T}K(\frac{t-s}{Th})(\widetilde{\overline{%
\varepsilon }}_{t,a}-H_{t})\right\vert _{2} \notag \\
&=&O(\sqrt{m})\left( \max_{1\leq t\leq T-1}K^{2}(\frac{t}{Th}%
)+m\sum_{t=1}^{T-1}\left( K(\frac{t}{Th})-K(\frac{t-1}{Th})\right)
^{2}\right) ^{1/2}.
\end{eqnarray*}

Let $W_{1,t}=\widetilde{\overline{\varepsilon }}_{t,a}\sum_{s=1}^{t-8m}K(%
\frac{t-s}{Th})(\widetilde{\overline{\varepsilon }}_{s,a}-H_{s})^{\top }$.
Then $W_{1,t},W_{1,t+4m},W_{1,t+8m}\ldots $ are martingale difference
sequences. By the above developments, we have 
\begin{equation*}
|W_{1,t}|_{2}=O(\sqrt{m})\left( \max_{1\leq t\leq T-1}K^{2}(\frac{t}{Th}%
)+m\sum_{t=1}^{T-1}\left( K(\frac{t}{Th})-K(\frac{t-1}{Th})\right)
^{2}\right) ^{1/2}
\end{equation*}%
and by Lemma \ref{Lemma1} 
\begin{eqnarray*}
\left\vert \sum_{t=1}^{T}W_{1,t}\right\vert _{2} &\leq
&\sum_{i=1}^{4m-1}\left\Vert \sum_{l=0}^{\lfloor (T-i)/(4m)\rfloor
}W_{1,t+4ml}\right\Vert _{2} \notag \\
&=&O(m\sqrt{T})\left( \max_{1\leq t\leq T-1}K^{2}(\frac{t}{Th}%
)+m\sum_{t=1}^{T-1}\left( K(\frac{t}{Th})-K(\frac{t-1}{Th})\right)
^{2}\right) ^{1/2}.
\end{eqnarray*}

Let $W_{t}=W_{1,t}+W_{2,t}$, where $W_{2,t}=\widetilde{\overline{\varepsilon 
}}_{t,\mathbf{v}}\sum_{s=t-8m+1}^{t-1}K(\frac{t-s}{Th})(\widetilde{\overline{%
\varepsilon }}_{s,a}-H_{s})^{\top }$ are $12m$-dependent. Similarly, we have 
\begin{equation*}
|W_{2,t}|_{2}=O(\sqrt{m})\left( \max_{1\leq t\leq T-1}K^{2}(\frac{t}{Th}%
)+m\sum_{t=1}^{T-1}\left( K(\frac{t}{Th})-K(\frac{t-1}{Th})\right)
^{2}\right) ^{1/2}
\end{equation*}%
and 
\begin{equation*}
\left\vert \sum_{t=1}^{T}W_{2,t}-E(W_{2,t})\right\vert _{2}=O(m\sqrt{T}%
)\left( \max_{1\leq t\leq T-1}K^{2}(\frac{t}{Th})+m\sum_{t=1}^{T-1}\left( K(%
\frac{t}{Th})-K(\frac{t-1}{Th})\right) ^{2}\right) ^{1/2}.
\end{equation*}%
It follows that 
\begin{equation*}
\left\vert \sum_{t=1}^{T}W_{t}-E(W_{t})\right\vert _{2}=O(m\sqrt{T})\left(
\max_{1\leq t\leq T-1}K^{2}(\frac{t}{Th})+m\sum_{t=1}^{T-1}\left( K(\frac{t}{%
Th})-K(\frac{t-1}{Th})\right) ^{2}\right) ^{1/2}.
\end{equation*}%
The proof is now completed. 
\end{proof}
}

{\small 
\begin{proof} [Proof of Lemma \ref{Lemma5}]

We will prove this lemma by using the martingale central limit theorem
(CLT); see, e.g., \cite{hall1980martingale}.

Note that $H_{t}$ and $H_{s}$ are mutually independent if $|t-s|\geq
m+1$. Also, since $\{H_{t}\sum_{s=t-m+1}^{t-1}H_{s}w_{s,t}\}_{t}$ is
an m.d.s., by Burkholder inequality and Minkowski inequality, if $%
m/(Th)\rightarrow 0$, we have 
\begin{eqnarray*}
\left\vert \sum_{t=2}^{T}H_{t}\sum_{s=t-m+1}^{t-1}H_{s}w_{s,t}\right\vert
_{2}^{2} &\leq &O(1)\sum_{t=2}^{T}\left\vert
H_{t}\sum_{s=t-m+1}^{t-1}H_{s}w_{s,t}\right\vert _{2}^{2}\leq
O(1)\sum_{t=2}^{T}\left\vert H_{t}\right\vert _{4}^{2}\left\vert
\sum_{s=t-m+1}^{t-1}H_{s}w_{s,t}\right\vert _{4}^{2} \notag \\
&\leq &O(1)\sum_{t=2}^{T}\left\vert H_{t}\right\vert
_{4}^{2}\sum_{s=t-m+1}^{t-1}\left\vert H_{s}w_{s,t}\right\vert _{4}^{2}\leq
O(1)\sum_{t=2}^{T}\sum_{s=t-m+1}^{t-1}w_{s,t}^{2} \notag \\
&=&O(Tm/(T^{2}h))=o(1).
\end{eqnarray*}%
Then we can easily verify the Lindeberg condition since 
\begin{equation*}
\sum_{t=1+m}^{T}\left\vert H_{t}\sum_{s=t-m}^{t-1}H_{s}w_{s,t}\right\vert
_{4}^{4}\leq \sum_{t=1+m}^{T}\left\vert H_{t}\right\vert _{4}^{4}\left\vert
\sum_{s=t-m}^{t-1}H_{s}w_{s,t}\right\vert _{4}^{4}\leq
\sum_{t=1+m}^{T}\left\vert H_{t}\right\vert _{4}^{4}\left(
\sum_{s=t-m}^{t-1}\left\vert H_{s}w_{s,t}\right\vert _{4}^{2}\right)
^{2}=o(1).
\end{equation*}%
We next verify the convergence of conditional variance, i.e., 
\begin{equation*}
\sum_{t=m+1}^{T}E[(H_{t}\sum_{s=1}^{t-m}H_{s}w_{s,t})^{2}\,|\, \mathscr{E}%
_{t-1}]\rightarrow
_{P}\sum_{t=m+1}^{T}E[(H_{t}\sum_{s=1}^{t-m}H_{s}w_{s,t})^{2}],
\end{equation*}%
then the result follows from the martingale CLT.

Let $W_{t-1}=\sum_{s=1}^{t-m}H_{s}w_{s,t}$. We will prove this convergence
result by showing that 
\begin{equation*}
\left\vert \sum_{t=m+1}^{T}\left[ E(H_{t}^{2}\,|\, \mathscr{E}%
_{t-1})-E(H_{t}^{2})\right] W_{t-1}^{2}\right\vert _{1}=o(1)
\end{equation*}%
and 
\begin{equation*}
\left\vert \sum_{t=m+1}^{T}E(H_{t}^{2})\left[ W_{t-1}^{2}-E(W_{t-1}^{2})%
\right] \right\vert _{1}=o(1),
\end{equation*}%
respectively.

First, we consider $\left\vert \sum_{t=m+1}^{T}\left[ E(H_{t}^{2}\,|\, %
\mathscr{E}_{t-1})-E(H_{t}^{2})\right] W_{t-1}^{2}\right\vert _{1}$. Let $%
H_{t}^{\ast }=E(H_{t}^{2}\,|\, \mathscr{E}_{t-1})-E(H_{t}^{2})$ and thus $%
E(H_{t}^{\ast })=0$. For $0\leq j\leq m-1$, let $U(j)=\sum_{t=m+1}^{T}\left( 
\mathcal{P}_{t-j}H_{t}^{\ast }\right) {W}_{t-1}^{2}$, then 
\begin{equation*}
\sum_{t=m+1}^{T}\left[ E(H_{t}^{2}\,|\, \mathscr{E}_{t-1})-E(H_{t}^{2})\right]
W_{t-1}^{2}=\sum_{t=m+1}^{T}\left( \sum_{j=0}^{m-1}\mathcal{P}%
_{t-j}H_{t}^{\ast }\right) W_{t-1}^{2}=\sum_{j=0}^{m-1}U(j).
\end{equation*}%
Note that $\left\{ \left( \mathcal{P}_{t-j}H_{t}^{\ast }\right)
W_{t-1}^{2}\right\} _{t}$ forms a martingale difference sequence since 
\begin{equation*}
E\left\{ \left( \mathcal{P}_{t-j}H_{t}^{\ast }\right) W_{t-1}^{2}|\mathscr{E}%
_{t-j-1}\right\} =\left[ E(H_{t}^{\ast }|\mathscr{E}_{t-j-1})-E(H_{t}^{\ast
}|\mathscr{E}_{t-j-1})\right] W_{t-1}^{2}=0.
\end{equation*}%
By using Burkholder inequality and the fact that $|\mathcal{P}%
_{t-j}H_{t}^{\ast }|_{2}\leq 2|H_{t}^{\ast }|_{2}<\infty $, 
\begin{eqnarray*}
|U(j)|_{2}^{2} &\leq &O(1)\sum_{t=m+1}^{T}|\left( \mathcal{P}%
_{t-j}H_{t}^{\ast }\right) W_{t-1}^{2}|_{2}^{2}\leq
O(1)\sum_{t=m+1}^{T}|W_{t-1}|_{4}^{4} \notag \\
&\leq &O(1)\sum_{t=m+1}^{T}(\sum_{s=1}^{t-m}w_{s,t}^{2})^{2}=O\left(
T^{-1}\right) .
\end{eqnarray*}%
It follows that 
\begin{equation*}
\left\vert \sum_{t=m+1}^{T}\left[ E(H_{t}^{2}\,|\, \mathscr{E}%
_{t-1})-E(H_{t}^{2})\right] W_{t-1}^{2}\right\vert _{1}=o(1).
\end{equation*}

Now, we consider $\left\vert \sum_{t=2}^{T}E({H}_{t}^{2})\left[
W_{t-1}^{2}-E(W_{t-1}^{2})\right] \right\vert _{1}$. For notational
simplicity, we omit the constant $E(H_{t}^{2})$ and write 
\begin{eqnarray*}
\sum_{t=m+1}^{T}[W_{t-1}^{2}-E\left( W_{t-1}^{2}\right) ]
&=&\sum_{t=m+1}^{T}\sum_{s=1}^{t-m}\left( H_{s}^{2}-E\left( H_{s}^{2}\right)
\right)
w_{s,t}^{2}+2\sum_{t=m+2}^{T}\sum_{s_{1}=2}^{t-m}\sum_{s_{2}=1}^{s_{1}-1}
\left[ H_{s_{1}}H_{s_{2}}w_{s_{1},t}w_{s_{2},t}\right] \notag  \\
&\eqqcolon &I_{3}+2I_{4}.
\end{eqnarray*}%
Since $T\sum_{t=1}^{T}w_{s,t}^{2}=O(1)$, as in the proof of $U_{j}$, by
using martingale decomposition and Burkholder inequality, we have 
\begin{eqnarray*}
I_{3} &=&\sum_{t=m+1}^{T}\sum_{s=1}^{t-m}\left(
H_{s}^{2}-E(H_{s}^{2})\right) w_{s,t}^{2} \notag \\
&=&\frac{1}{T}\sum_{s=1}^{T-m}\left( H_{s}^{2}-E(H_{s}^{2})\right) \left(
T\sum_{t=s+m}^{T}w_{s,t}^{2}\right) =O_{P}(1/\sqrt{T}).
\end{eqnarray*}%
For $I_{4},$ by the CS inequality and Burkholder inequality, we have 
\begin{eqnarray*}
|I_{4}|_{2}^{2} &\leq &O(1)\sum_{s_{1}=2}^{T-m}\left\vert
H_{s_{1}}\sum_{s_{2}=1}^{s_{1}-1}H_{s_{2}}%
\sum_{t=s_{1}+m}^{T}w_{s_{1},t}w_{s_{2},t}\right\vert _{2}^{2} \notag \\
&\leq &O(1)\sum_{s_{1}=2}^{T-m}\left\vert H_{s_{1}}\right\vert
_{4}^{2}\left\vert
\sum_{s_{2}=1}^{s_{1}-1}H_{s_{2}}\sum_{t=s_{1}+m}^{T}w_{s_{1},t}w_{s_{2},t}%
\right\vert _{4}^{2} \notag \\
&\leq &O(1)\sum_{s_{1}=2}^{T-m}\left\vert H_{s_{1}}\right\vert
_{4}^{2}\sum_{s_{2}=1}^{s_{1}-1}\left\vert
H_{s_{2}}\sum_{t=s_{1}+m}^{T}w_{s_{1},t}w_{s_{2},t}\right\vert _{4}^{2} \notag \\
&=&O\left( \sum_{s_{1}=2}^{T-m}\sum_{s_{2}=1}^{s_{1}-1}\left(
\sum_{t=s_{1}+m}^{T}w_{s_{1},t}w_{s_{2},t}\right) ^{2}\right) =O(1/T).
\end{eqnarray*}%
Combining the above results, the proof is now completed. 
\end{proof}
}

{\small 
\begin{proof} [Proof of Lemma \ref{Lemma6}]

We shall prove this CLT for quadratic forms of panel data by using $m$%
-dependence approximation, blocking arguments and the CLT for martingales.

First note that $Q_T = \frac{2}{T}J_T + \frac{1}{T}\sum_{t=1}^{T}\overline{%
\varepsilon}_{t,a}^2$. By using Lemma \ref{Lemma1} (a), we have $\frac{1}{T}%
\sum_{t=1}^{T}[\overline{\varepsilon}_{t,a}^2-E(\overline{\varepsilon}%
_{t,a}^2)] = O_P(1/\sqrt{T})$. In addition, by the $m$-dependence
approximation result in Lemma \ref{Lemma4} (b), we have $\frac{2}{T}%
|J_T-E(J_T) - \widetilde{J}_T + E(\widetilde{J}_T)|_{2} = O(\sqrt{h}d_m)$.
Hence $\sqrt{1/h}\frac{2}{T}|J_T-E(J_T) - \widetilde{J}_T + E(\widetilde{J}%
_T)|_{2} \to 0$ as $d_m \to 0$ when $m\to \infty$. In addition, by Lemma \ref%
{Lemma4} (c), $\widetilde{J}_T - E(\widetilde{J}_T)$ can be approximated by $%
M_T$ as long as $m/(Th) \to 0$.

From the above analyses, we know that $\sqrt{1/h}(Q_{T}-E(Q_{T}))$ can be
approximated by $\frac{2}{T}\sqrt{1/h}M_{T}$. Note that $H_{t}$ is a
sequence of martingale differences, we then prove this result by using Lemma %
\ref{Lemma5}. Let $w_{s,5}=\frac{2}{T}\sqrt{1/h}K\left( \frac{t-s}{Th}%
\right) $, by using Lemma \ref{Lemma5}, we have 
\begin{equation*}
\frac{2}{T}\sqrt{1/h}M_{T}\rightarrow _{D}N(0,2\nu _{0}\sigma _{\varepsilon
,a}^{4})
\end{equation*}%
provided that 
\begin{equation*}
\frac{4}{T^{2}h}\sum_{t=2}^{T}\sum_{s=1}^{t-1}K^{2}\left( \frac{t-s}{Th}%
\right) \rightarrow 2\nu _{0}.
\end{equation*}%
The proof is now completed. 
\end{proof}
}

{\small 
\begin{proof} [Proof of Lemma \ref{Lemma7}]

The proof is given in Theorem 8.1.10 of \cite{GL2013} and is thus omitted. 
\end{proof}
}

{\small 
\begin{proof}[Proof of Lemma \ref{Lemma8}] 

\noindent (a). Let $\mathbf{V}=[\mathbf{v}_{1},\ldots ,\mathbf{v}_{R}]$, $%
\mathbf{v}_{j}=[\mathbf{v}_{j,1}^{\top },\ldots ,\mathbf{v}_{j,T}^{\top
}]^{\top }$ for $1\leq j\leq R$, $\mathbf{U}=[\mathbf{u}_{1},\ldots ,\mathbf{%
u}_{R}]$ and $\mathbf{u}_{j}=[u_{j,1},\ldots ,u_{j,N}]$ for $1\leq j\leq R$.
Note that $\mathbb{F}\mathbf{V}^{(-J)}$ is a $T\times (R-J)$ matrix with the 
$t^{th}$ row being $[\mathbf{f}_{t}^{\top }\mathbf{v}_{J+1,t},\mathbf{f}%
_{t}^{\top }\mathbf{v}_{R,t}]$. Write 
\begin{equation*}
\Vert \frac{1}{TN}\sum_{i=1}^{N}\mathcal{E}_{i}^{\dagger }\Vert \leq \Vert 
\frac{1}{TN}\sum_{i=1}^{N}\mathcal{E}_{i}\Vert +\left( \frac{1}{T^{2}N^{2}}%
\sum_{t=1}^{T}(\sum_{j=J+1}^{R}s_{TN,j}\sum_{i=1}^{N}\mathbf{f}_{t}^{\top }%
\mathbf{v}_{j,t}u_{j,i})^{2}\right) ^{1/2}.
\end{equation*}%
The first term is $O_{P}(1/\sqrt{TN})$ by Lemma \ref{Lemma2}(g). For the
second term, by the CS inequality and the fact that $%
\sum_{i=1}^{N}u_{j,i}^{2}=1$, $\frac{1}{TN}\sum_{j=J+1}^{T}s_{TN,j}^{2}=o(1)$
and $\sum_{t=1}^{T}\Vert \mathbf{v}_{j,t}\Vert ^{2}=1$, we have 
\begin{eqnarray*}
&&\frac{1}{T^{2}N^{2}}\sum_{t=1}^{T}(\sum_{j=J+1}^{R}s_{TN,j}\sum_{i=1}^{N}%
\mathbf{f}_{t}^{\top }\mathbf{v}_{j,t}u_{j,i})^{2} \notag \\
&\leq &\frac{1}{T^{2}N^{2}}\sum_{t=1}^{T}\sum_{j=J+1}^{R}(s_{TN,j}\mathbf{f}%
_{t}^{\top }\mathbf{v}_{j,t})^{2}(\sum_{i=1}^{N}u_{j,i})^{2} \notag \\
&\leq &\frac{1}{T^{2}N}\sum_{t=1}^{T}\sum_{j=J+1}^{R}(s_{TN,j}\mathbf{f}%
_{t}^{\top }\mathbf{v}_{j,t})^{2}=o_{P}(1/T).
\end{eqnarray*}%
Then we have $\Vert \frac{1}{TN}\sum_{i=1}^{N}\mathcal{E}_{i}^{\dagger
}\Vert =o_{P}(1/\sqrt{T})$.

\medskip

\noindent (b). Noting that $\widehat{\mathbf{F}}\widehat{\mathbf{V}}=\mathbf{%
X}\mathbf{X}^{\top }\widehat{\mathbf{F}}$ and $\mathbf{X}=\mathcal{F}%
\pmb{\Theta}^{\top }+\mathcal{E}^{\dagger }$, we have 
\begin{equation*}
\widehat{\mathbf{F}}-\mathcal{F}\left( \frac{\pmb{\Theta}^{\top }\pmb{\Theta}%
}{N}\right) \left( \frac{\mathcal{F}^{\top }\widehat{\mathbf{F}}}{T}\right)
\left( \frac{1}{TN}\widehat{\mathbf{V}}\right) ^{-1}=\frac{1}{TN}\left( 
\mathcal{F}\pmb{\Theta}^{\top }\mathcal{E}^{\dagger ,\top }+\mathcal{E}%
^{\dagger }\pmb{\Theta}\mathcal{F}^{\top }+\mathcal{E}^{\dagger }\mathcal{E}%
^{\dagger ,\top }\right) \widehat{\mathbf{F}}\left( \frac{1}{TN}\widehat{%
\mathbf{V}}\right) ^{-1}.
\end{equation*}%
For the first and second term on the r.h.s., noting that $\mathbf{U}%
^{(-J),\top }\mathbf{U}=\mathbf{0}$ and 
\begin{equation*}
E\Vert \mathcal{E}\pmb{\Theta}\Vert ^{2}=E\left\{ \mathrm{tr}\left(
\sum_{t=1}^{T}\pmb{\Theta}^{\top }\pmb{\varepsilon}_{t}\pmb{\varepsilon}%
_{t}^{\top }\pmb{\Theta}\right) \right\} \leq TN\Vert \pmb{\Sigma}%
_{\varepsilon }\Vert _{2}\mathrm{tr}\left( \pmb{\Theta}^{\top }\pmb{\Theta}%
/N\right) =O(TN),
\end{equation*}%
we have 
\begin{equation*}
\Vert \frac{1}{TN}\mathcal{E}^{\dagger }\pmb{\Theta}\Vert =\Vert \frac{1}{TN}%
\mathcal{E}\pmb{\Theta}\Vert =O_{P}(1/\sqrt{TN}).
\end{equation*}%
For the third term, using the facts that $\Vert \frac{1}{TN}\mathcal{E}%
\mathcal{E}^{\top }\Vert =O_{P}(1/\sqrt{T\wedge N})$ and 
\begin{equation*}
E\Vert \frac{1}{TN}\mathbb{F}\mathbb{A}^{(-J),\top }\mathbb{A}^{(-J)}\mathbb{%
F}^{\top }\Vert \leq \frac{1}{TN}\sum_{j=J+1}^{R}s_{TN,j}^{2}\sum_{t=1}^{T}%
\Vert \mathbf{v}_{j,t}\Vert ^{2}E\Vert \mathbf{f}_{t}\Vert ^{2}=o(1),
\end{equation*}%
we have $\Vert \frac{1}{TN}\mathcal{E}^{\dagger }\mathcal{E}^{\dagger ,\top
}\Vert =o_{P}(1)$. It follows that $\frac{1}{\sqrt{T}}\Vert \widehat{%
\mathbf{F}}-\mathcal{F}\mathcal{H}\Vert =o_{P}(1)$, where $\mathcal{H}%
= ( \frac{\pmb{\Theta}^{\top }\pmb{\Theta}}{N} ) ( \frac{%
\mathcal{F}^{\top }\widehat{\mathbf{F}}}{T} ) ( \frac{1}{TN}%
\widehat{\mathbf{V}} ) ^{-1}$.

\medskip

\noindent (c). Note that $\frac{1}{TN}\widehat{\mathbf{V}}=\frac{1}{T^{2}N}%
\widehat{\mathbf{F}}^{\top }\mathbf{X}\mathbf{X}^{\top }\widehat{\mathbf{F}}$
and $\mathbf{X}=\mathbb{F}\mathbb{A}^{(J),\top }+\mathbb{F}\mathbb{A}%
^{(-J),\top }+\mathcal{E}$. Then we have 
\begin{eqnarray*}
\frac{1}{TN}\widehat{\mathbf{V}} &=&\frac{1}{T^{2}N}\widehat{\mathbf{F}}%
^{\top }\mathbb{F}\mathbb{A}^{(J),\top }\mathbb{A}^{(J)}\mathbb{F}^{\top }%
\widehat{\mathbf{F}}+\frac{1}{T^{2}N}\widehat{\mathbf{F}}^{\top }\mathbb{F}%
\mathbb{A}^{(-J),\top }\mathbb{A}^{(-J)}\mathbb{F}^{\top }\widehat{\mathbf{F}%
}+\frac{1}{T^{2}N}\widehat{\mathbf{F}}^{\top }\mathcal{E}\mathcal{E}^{\top }%
\widehat{\mathbf{F}} \notag \\
&&+\text{interaction terms} \notag \\
&\eqqcolon & I_{5}+I_{6}+I_{7}+\text{interaction terms}.
\end{eqnarray*}%
We consider $I_{6}$ first. Let $\mathbf{V}=[\mathbf{v}_{1},\ldots ,\mathbf{v}%
_{R}]$ and $\mathbf{v}_{j}=[\mathbf{v}_{j,1}^{\top },\ldots ,\mathbf{v}%
_{j,T}^{\top }]^{\top }$ for $1\leq j\leq R$. Note that $\mathbb{F}\mathbf{V}%
^{(-J)}$ is a $T\times (R-J)$ matrix with the $t^{th}$ row being $[\mathbf{f}%
_{t}^{\top }\mathbf{v}_{J+1,t},\mathbf{f}_{t}^{\top }\mathbf{v}_{R,t}]$.
Then we have 
\begin{equation*}
E\Vert I_{6}\Vert \leq \widetilde{r}(TN)^{-1}E\Vert \mathbb{F}\mathbb{A}%
^{(-J),\top }\mathbb{A}^{(-J)}\mathbb{F}^{\top }\Vert \leq \frac{\widetilde{r%
}}{TN}\sum_{j=J+1}^{R}s_{TN,j}^{2}\sum_{t=1}^{T}\Vert \mathbf{v}_{j,t}\Vert
^{2}E\Vert \mathbf{f}_{t}\Vert ^{2}=o(1).
\end{equation*}%
As in the proof of part (a), we have $I_{7}=O_{P}(\frac{1}{\sqrt{T\wedge N}})
$. By the CS inequality, we can show that the cross product (interaction)
terms are all $o_{P}(1)$.

Let $\mathbf{R}_{TN}=(\pmb{\Theta}^{\top }\pmb{\Theta}/N)^{1/2}\mathcal{F}%
^{\top }\widehat{\mathbf{F}}/T$ and $\pmb{\Upsilon}_{TN}=\mathbf{R}_{TN}(%
\widehat{\mathbf{F}}^{\top }\mathcal{F}/T(\pmb{\Theta}^{\top }\pmb{\Theta}/N)%
\mathcal{F}^{\top }\widehat{\mathbf{F}}/T)^{-1/2}$. Thus $\pmb{\Upsilon}%
_{TN}^{\top }\pmb{\Upsilon}_{TN}=\mathbf{I}_{\widetilde{r}}$. In addition,
note that 
\begin{equation*}
\left( \mathbf{B}_{TN}+\mathbf{d}_{TN}(\mathbf{R}_{TN}^{\top }\mathbf{R}%
_{TN})^{-1}\mathbf{R}_{TN}^{\top }\right) \mathbf{R}_{TN}=\mathbf{R}_{TN}%
\frac{1}{TN}\widehat{\mathbf{V}}\quad \text{and}\quad \mathbf{B}_{TN}=%
\pmb{\Delta}+o_{P}(1),
\end{equation*}%
where $\mathbf{B}_{TN}=(\pmb{\Theta}^{\top }\pmb{\Theta}/N)^{1/2}(\mathcal{F}%
^{\top }\mathcal{F}/T)(\pmb{\Theta}^{\top }\pmb{\Theta}/N)^{1/2}$ and $%
\mathbf{d}_{TN}=o_{P}(1)$. Hence, each column of $\mathbf{R}_{TN}$ is
non-standardized eigenvector of the matrix $\mathbf{B}_{TN}+\mathbf{d}_{TN}(%
\mathbf{R}_{TN}^{\top }\mathbf{R}_{TN})^{-1}\mathbf{R}_{TN}^{\top }$. Then,
part (c) follows by using eigenvalues perturbation theory (e.g., p.203 in %
\citealp{stewart1990matrix}).

\medskip

\noindent (d). Part (d) follows from the proof of part (c) and the
eigenvector perturbation theory (see, e.g., Ch. V in %
\citealp{stewart1990matrix}). 
\end{proof}
}

{\small 
\begin{proof}[Proof of Lemma \ref{Lemma9}]

\noindent (a). Let $\mathbf{G}=\{\mathbf{g}_{it}^{\top }\mathbf{f}%
_{t}\}_{T\times N}$ be a $T\times N$ matrix. Then we have 
\begin{equation*}
\Vert \frac{1}{TN}\mathcal{E}^{\ast }\mathcal{E}^{\ast ,\top }\Vert \leq
\Vert \frac{1}{TN}\mathcal{E}\mathcal{E}^{\top }\Vert +a_{TN}^{2}\Vert \frac{%
1}{TN}\mathbf{G}\mathbf{G}^{\top }\Vert +2a_{TN}\Vert \frac{1}{TN}\mathcal{E}%
\mathbf{G}^{\top }\Vert .
\end{equation*}%
Part (a) follows from Lemma \ref{Lemma2}(b) and the facts that $\Vert \mathbf{G}\Vert =O_{P}(\sqrt{TN})$ and $a_{TN}=(TN)^{-1/2}h^{-1/4}=o(1/%
\sqrt{T\wedge N})$.

\medskip

\noindent (b). Note that $\Vert \frac{1}{TN}\pmb{\Lambda}^{\top }\mathcal{E}%
^{\ast ,\top }\mathbf{F}\Vert \leq \Vert \frac{1}{TN}\pmb{\Lambda}^{\top }%
\mathcal{E}^{\top }\mathbf{F}\Vert +a_{TN}\Vert \frac{1}{TN}\pmb{\Lambda}%
^{\top }\mathbf{G}^{\top }\mathbf{F}\Vert $. Part (b) follows directly from
Lemma \ref{Lemma2}(c) provided that $a_{TN}\Vert \frac{1}{TN}\pmb{\Lambda}%
^{\top }\mathbf{G}^{\top }\mathbf{F}\Vert =o_{P}(1/\sqrt{TN})$. By using
similar arguments as used in the proof of Lemma \ref{Lemma1}(a) and the
normalization condition that $\frac{1}{T}\sum_{t=1}^{T}\mathbf{g}_{it}=0$,
we have 
\begin{eqnarray*}
\Vert \frac{1}{TN}\pmb{\Lambda}^{\top }\mathbf{G}^{\top }\mathbf{F}\Vert 
&\leq &\left\Vert \frac{1}{TN}\sum_{i=1}^{N}\pmb{\lambda}_{i}\sum_{t=1}^{T}%
\mathbf{g}_{it}^{\top }E(\mathbf{f}_{t}\mathbf{f}_{t}^{\top })\right\Vert
+\left\Vert \frac{1}{TN}\sum_{i=1}^{N}\pmb{\lambda}_{i}\sum_{t=1}^{T}\mathbf{%
g}_{it}^{\top }(\mathbf{f}_{t}\mathbf{f}_{t}^{\top }-E(\mathbf{f}_{t}\mathbf{%
f}_{t}^{\top }))\right\Vert  \notag \\
&=&\left\Vert \frac{1}{TN}\sum_{t=1}^{T}\sum_{i=1}^{N}\pmb{\lambda}_{i}%
\mathbf{g}_{it}^{\top }E(\mathbf{f}_{t}\mathbf{f}_{t}^{\top })\right\Vert
+O_{P}(1/\sqrt{T})=O_{P}(1/\sqrt{T}).
\end{eqnarray*}%
This proves part (b). \medskip 

\noindent (c). Note that $\|\frac{1}{TN}\mathcal{E}^* \pmb{\Lambda} \|\leq \|
\frac{1}{TN}\mathcal{E} \pmb{\Lambda} \| + a_{TN}\|\frac{1}{TN}\mathbf{G}
\pmb{\Lambda} \|=O_P( \frac{1}{\sqrt{T N}}) + O_P( \frac{1}{\sqrt{T}}a_{TN})
= O_P( \frac{1}{\sqrt{T N}})$ since $a_{TN} = (TN)^{-1/2}h^{-1/4}=o(1/\sqrt{
T\wedge N})$.

\medskip

\noindent (d). Note that 
\begin{equation*}
\Vert \frac{1}{TN}\mathcal{E}^{\ast }\mathcal{E}^{\ast ,\top }\mathbf{F}
\Vert \leq \Vert \frac{1}{TN}\mathcal{E}\mathcal{E}^{\top }\mathbf{F}\Vert
+a_{TN}^{2}\Vert \frac{1}{TN}\mathbf{G}\mathbf{G}^{\top }\mathbf{F}\Vert
+a_{TN}\Vert \frac{1}{TN}\mathcal{E}\mathbf{G}^{\top }\mathbf{F}
\Vert +a_{TN}\Vert \frac{1}{TN}\mathbf{G}\mathcal{E}^{\top }\mathbf{F}
\Vert .
\end{equation*}%
Then part (d) follows from Lemma \ref{Lemma2}(e) and the fact that $a_{TN}=(TN)^{-1/2}h^{-1/4}=o(1/\sqrt{T\wedge N})$.

\medskip

\noindent (e). By Lemma \ref{Lemma2} (f) and similar arguments as used in
the the proof of part (b), we have 
\begin{eqnarray*}
\Vert \frac{1}{TN}\sum_{i=1}^{N}(1-a_{i})\mathcal{E}_{i}^{\ast ,\top }%
\mathbf{F}\Vert  &\leq &\Vert \frac{1}{TN}\sum_{i=1}^{N}(1-a_{i})\mathcal{E}%
_{i}^{\top }\mathbf{F}\Vert +a_{TN}\Vert \frac{1}{TN}\sum_{i=1}^{N}(1-a_{i})%
\sum_{t=1}^{T}\mathbf{g}_{it}^{\top }\mathbf{f}_{t}\mathbf{f}_{t}^{\top
}\Vert  \notag \\
&=&O_{P}(1/\sqrt{TN})+a_{TN}O_{P}(1/\sqrt{T})=O_{P}(1/\sqrt{TN}).
\end{eqnarray*}

\medskip

\noindent (f). Write%
\begin{eqnarray*}
\widehat{\mathbf{F}}\cdot \frac{1}{TN}\widehat{\mathbf{V}} &=&\frac{1}{TN}%
\mathbf{X}\mathbf{X}^{\top }\widehat{\mathbf{F}}=\frac{1}{TN}(\mathbf{F}%
\pmb{\Lambda}^{\top }+\mathcal{E}^{\ast })(\mathbf{F}\pmb{\Lambda}^{\top }+%
\mathcal{E})^{\ast ,\top }\widehat{\mathbf{F}} \notag \\
&=&\frac{1}{TN}\mathbf{F}\pmb{\Lambda}^{\top }\pmb{\Lambda}\mathbf{F}^{\top }%
\widehat{\mathbf{F}}+\frac{1}{TN}\mathbf{F}\pmb{\Lambda}^{\top }\mathcal{E}%
^{\ast ,\top }\widehat{\mathbf{F}}+\frac{1}{TN}\mathcal{E}^{\ast }%
\pmb{\Lambda}\mathbf{F}^{\top }\widehat{\mathbf{F}}+\frac{1}{TN}\mathcal{E}%
^{\ast }\mathcal{E}^{\ast ,\top }\widehat{\mathbf{F}}.
\end{eqnarray*}%
Note that 
\begin{equation*}
\frac{1}{\sqrt{T}}\Vert \frac{1}{TN}\mathbf{F}\pmb{\Lambda}^{\top }\mathcal{E%
}^{\ast ,\top }\widehat{\mathbf{F}}\Vert =O_{P}\left( \frac{1}{\sqrt{N}}%
\right) \ \ \text{ and }\ \ \frac{1}{\sqrt{T}}\Vert \frac{1}{TN}\mathcal{E}%
^{\ast }\mathcal{E}^{\ast ,\top }\widehat{\mathbf{F}}\Vert =O_{P}\left( 
\frac{1}{\sqrt{T\wedge N}}\right) 
\end{equation*}%
by Lemma \ref{Lemma9}(c) and \ref{Lemma9}(a) respectively. Then part (f)
follows.

\medskip

\noindent (g). Write 
\begin{equation*}
\frac{1}{T}\mathbf{F}^{\top }(\widehat{\mathbf{F}}-\mathbf{F}\mathbf{H})%
\frac{1}{TN}\widehat{\mathbf{V}}=\frac{1}{T^{2}N}\mathbf{F}^{\top }\mathbf{F}%
\pmb{\Lambda}^{\top }\mathcal{E}^{\ast ,\top }\widehat{\mathbf{F}}+\frac{1}{%
T^{2}N}\mathbf{F}^{\top }\mathcal{E}^{\ast }\pmb{\Lambda}\mathbf{F}^{\top }%
\widehat{\mathbf{F}}+\frac{1}{T^{2}N}\mathbf{F}^{\top }\mathcal{E}^{\ast }%
\mathcal{E}^{\ast ,\top }\widehat{\mathbf{F}}.
\end{equation*}%
For the first term on the r.h.s., by Lemmas \ref{Lemma9}(b) and (f), we have 
\begin{eqnarray*}
\frac{1}{T^{2}N}\mathbf{F}^{\top }\mathbf{F}\pmb{\Lambda}^{\top }\mathcal{E}%
^{\ast ,\top }\widehat{\mathbf{F}} &=&\frac{1}{T^{2}N}\mathbf{F}^{\top }%
\mathbf{F}\pmb{\Lambda}^{\top }\mathcal{E}^{\ast ,\top }\mathbf{F}\mathbf{H}+%
\frac{1}{T^{2}N}\mathbf{F}^{\top }\mathbf{F}\pmb{\Lambda}^{\top }\mathcal{E}%
^{\ast ,\top }(\widehat{\mathbf{F}}-\mathbf{F}\mathbf{H}) \notag \\
&=&O_{P}(1/\sqrt{TN})+O_{P}(1/\sqrt{N(T\wedge N)}).
\end{eqnarray*}%
Similarly, by Lemmas \ref{Lemma9}(b) and (d), we have $\frac{1}{T^{2}N}%
\mathbf{F}^{\top }\mathcal{E}^{\ast }\pmb{\Lambda}\mathbf{F}^{\top }\widehat{%
\mathbf{F}}=O_{P}(1/\sqrt{TN})$ and $\frac{1}{T^{2}N}\mathbf{F}^{\top }%
\mathcal{E}^{\ast }\mathcal{E}^{\ast ,\top }\widehat{\mathbf{F}}=O_{P}(1/%
\sqrt{T(T\wedge N)})$. Then part (g) follows.

\medskip

\noindent (h). Write 
\begin{eqnarray*}
\frac{1}{TN}\mathcal{E}^{\ast }\mathcal{E}^{\ast ,\top }(\widehat{\mathbf{F}}%
-\mathbf{F}\mathbf{H})\widehat{\mathbf{V}}/(TN) &=&\frac{1}{T^{2}N^{2}}%
\mathcal{E}^{\ast }\mathcal{E}^{\ast ,\top }\mathbf{F}\pmb{\Lambda}^{\top }%
\mathcal{E}^{\ast ,\top }\widehat{\mathbf{F}}+\frac{1}{T^{2}N^{2}}\mathcal{E}%
^{\ast }\mathcal{E}^{\ast ,\top }\mathcal{E}^{\ast }\pmb{\Lambda}\mathbf{F}%
^{\top }\widehat{\mathbf{F}} \notag \\
&&+\frac{1}{T^{2}N^{2}}\mathcal{E}^{\ast }\mathcal{E}^{\ast ,\top }\mathcal{E%
}^{\ast }\mathcal{E}^{\ast ,\top }\widehat{\mathbf{F}}.
\end{eqnarray*}%
For the first term on the r.h.s., 
\begin{eqnarray*}
\frac{1}{T^{2}N^{2}}\mathcal{E}^{\ast }\mathcal{E}^{\ast ,\top }\mathbf{F}%
\pmb{\Lambda}^{\top }\mathcal{E}^{\ast ,\top }\widehat{\mathbf{F}} &=&\frac{1%
}{T^{2}N^{2}}\mathcal{E}^{\ast }\mathcal{E}^{\ast ,\top }\mathbf{F}%
\pmb{\Lambda}^{\top }\mathcal{E}^{\ast ,\top }\mathbf{F}\mathbf{H}+\frac{1}{%
T^{2}N^{2}}\mathcal{E}^{\ast }\mathcal{E}^{\ast ,\top }\mathbf{F}%
\pmb{\Lambda}^{\top }\mathcal{E}^{\ast ,\top }(\widehat{\mathbf{F}}-\mathbf{F%
}\mathbf{H}) \notag \\
&=&O_{P}\left( \frac{1}{\sqrt{TN(T\wedge N)}}\right) +O_{P}\left( \frac{%
\sqrt{T}}{\sqrt{TN}(T\wedge N)}\right) 
\end{eqnarray*}%
by Lemmas \ref{Lemma9} (b)--(d) and (f). Similarly, by Lemmas \ref{Lemma9}
(c), (d) and (f) we can show that the second term is $O_{P}\left( \sqrt{T}/%
\sqrt{N(T\wedge N)}\right) $ and the third term is $O_{P}(\sqrt{T}/(T\wedge
N)^{3/2})$. It follows that $\Vert \frac{1}{TN}\mathcal{E}^{\ast }\mathcal{E}%
^{\ast ,\top }(\widehat{\mathbf{F}}-\mathbf{F}\mathbf{H})\Vert =O_{P}(\sqrt{T%
}/(T\wedge N)^{3/2}).$

\medskip

\noindent (i). Let $\mathbf{G}_{i}$ denote the $i^{th}$ column of $\mathbf{G}
$ in which $\mathbf{G}=\{\mathbf{g}_{it}^{\top }\mathbf{f}_{t}\}_{T\times N}$%
. Write 
\begin{eqnarray*}
&&\sum_{i=1}^{N}(1-a_{i})\mathcal{E}_{i}^{\ast ,\top }(\widehat{\mathbf{F}}-%
\mathbf{F}\mathbf{H}) \notag \\
&=&\sum_{i=1}^{N}(1-a_{i})\mathcal{E}_{i}^{\ast ,\top }\left( \frac{1}{TN}%
\mathbf{F}\pmb{\Lambda}^{\top }\mathcal{E}^{\ast ,\top }\widehat{\mathbf{F}}+%
\frac{1}{TN}\mathcal{E}^{\ast }\pmb{\Lambda}\mathbf{F}^{\top }\widehat{%
\mathbf{F}}+\frac{1}{TN}\mathcal{E}^{\ast }\mathcal{E}^{\ast ,\top }\widehat{%
\mathbf{F}}\right) (\frac{1}{TN}\widehat{\mathbf{V}})^{-1}.
\end{eqnarray*}%
For the first term on the r.h.s., we apply Lemmas \ref{Lemma9} (b) and
(e)--(f) to obtain 
\begin{eqnarray*}
&&\left\Vert \frac{1}{T^{2}N^{2}}\sum_{i=1}^{N}(1-a_{i})\mathcal{E}%
_{i}^{\ast ,\top }\mathbf{F}\pmb{\Lambda}^{\top }\mathcal{E}^{\ast ,\top }%
\widehat{\mathbf{F}}\right\Vert  \notag \\
&\leq &\left\Vert \frac{1}{T^{2}N^{2}}\sum_{i=1}^{N}(1-a_{i})\mathcal{E}%
_{i}^{\ast ,\top }\mathbf{F}\pmb{\Lambda}^{\top }\mathcal{E}^{\ast ,\top }%
\mathbf{F}\mathbf{H}\right\Vert +\left\Vert \frac{1}{T^{2}N^{2}}%
\sum_{i=1}^{N}(1-a_{i})\mathcal{E}_{i}^{\ast ,\top }\mathbf{F}\pmb{\Lambda}%
^{\top }\mathcal{E}^{\ast ,\top }(\widehat{\mathbf{F}}-\mathbf{F}\mathbf{H}%
)\right\Vert \notag \\
&=&O_{P}(1/(TN))+O_{P}\left( \sqrt{T}/(TN\sqrt{(T\wedge N)})\right) .
\end{eqnarray*}%
For the second term, by Lemmas \ref{Lemma2}(g) and \ref{Lemma9}(c), we have 
\begin{eqnarray*}
&&\left\Vert \frac{1}{T^{2}N^{2}}\sum_{i=1}^{N}(1-a_{i})\mathcal{E}%
_{i}^{\ast ,\top }\mathcal{E}^{\ast }\pmb{\Lambda}\mathbf{F}^{\top }\widehat{%
\mathbf{F}}\right\Vert \notag \\
&\leq &a_{TN}\left\Vert \frac{1}{T^{2}N^{2}}\sum_{i=1}^{N}(1-a_{i})\mathbf{G}%
_{i}^{\top }\mathcal{E}\pmb{\Lambda}\mathbf{F}^{\top }\widehat{\mathbf{F}}%
\right\Vert +a_{TN}^{2}\left\Vert \frac{1}{T^{2}N^{2}}\sum_{i=1}^{N}(1-a_{i})%
\mathbf{G}_{i}^{\top }\mathbf{G}\pmb{\Lambda}\mathbf{F}^{\top }\widehat{%
\mathbf{F}}\right\Vert \notag \\
&&+\left\Vert \frac{1}{T^{2}N^{2}}\sum_{i=1}^{N}(1-a_{i})\mathcal{E}%
_{i}^{\top }\mathcal{E}^{\ast }\pmb{\Lambda}\mathbf{F}^{\top }\widehat{%
\mathbf{F}}\right\Vert  \notag \\
&=&a_{TN}\left\Vert \frac{1}{T^{2}N^{2}}\sum_{i=1}^{N}(1-a_{i})\mathbf{G}%
_{i}^{\top }\mathcal{E}\pmb{\Lambda}\mathbf{F}^{\top }\widehat{\mathbf{F}}%
\right\Vert +O_{P}(a_{TN}^{2})+O_{P}(1/N).
\end{eqnarray*}%
In addition, since $|\mathbf{G}_{i}^{\top }\mathcal{E}\pmb{\Lambda}%
|_{2}=|\sum_{j=1}^{N}\sum_{t=1}^{T}\mathbf{g}_{it}^{\top }\mathbf{f}%
_{t}\varepsilon _{jt}\pmb{\lambda}_{j}^{\top }|_{2}=O(\sqrt{TN})$ by using
Lemma \ref{Lemma1} (b), we then have $a_{TN}\left\Vert \frac{1}{T^{2}N^{2}}%
\sum_{i=1}^{N}(1-a_{i})\mathbf{G}_{i}^{\top }\mathcal{E}\pmb{\Lambda}\mathbf{%
F}^{\top }\widehat{\mathbf{F}}\right\Vert =O_{P}(a_{TN}/\sqrt{TN})$.
Similarly, for the third term, we have 
\begin{equation*}
\left\Vert \frac{1}{T^{2}N^{2}}\sum_{i=1}^{N}(1-a_{i})\mathcal{E}_{i}^{\ast
,\top }\mathcal{E}^{\ast }\mathcal{E}^{\ast ,\top }\widehat{\mathbf{F}}%
\right\Vert =O_{P}(a_{TN}/\sqrt{T\wedge N})+O_{P}(a_{TN}\sqrt{T}/(T\wedge
N)).
\end{equation*}%
This completes the proof. 

\end{proof}
}

{\small 
\begin{proof}[Proof of Lemma \ref{Lemma10}]

\noindent (a). Let $\mathbf{G}=\{\mathbf{g}_{it}^{\top }\mathbf{f}
_{t}\}_{T\times N}$ be a $T\times N$ matrix. Then we have 
\begin{equation*}
\Vert \frac{1}{TN}\mathcal{E}^{\ast }\mathcal{E}^{\ast ,\top }\Vert \leq
\Vert \frac{1}{TN}\mathcal{E}\mathcal{E}^{\top }\Vert +a_{TN}^{2}\Vert \frac{%
	1}{TN}\mathbf{G}\mathbf{G}^{\top }\Vert +2a_{TN}\Vert \frac{1}{TN}\mathcal{E}%
\mathbf{G}^{\top }\Vert .
\end{equation*}%
Part (a) follows from Lemma \ref{Lemma2}(b) and the facts that $
\Vert \mathbf{G}\Vert =O_{P}(\sqrt{TN})$ and $a_{TN}=N^{-1/2}h^{1/4}=o(1/
\sqrt{T\wedge N})$.

\medskip

\noindent (b). Note that $\Vert \frac{1}{TN}\pmb{\Lambda}^{\top }\mathcal{E}%
^{\ast ,\top }\mathbf{F}\Vert \leq \Vert \frac{1}{TN}\pmb{\Lambda}^{\top }%
\mathcal{E}^{\top }\mathbf{F}\Vert +a_{TN}\Vert \frac{1}{TN}\pmb{\Lambda}%
^{\top }\mathbf{G}^{\top }\mathbf{F}\Vert $. Part (b) follows directly from
Lemma \ref{Lemma2}(c) provided that $a_{TN}\Vert \frac{1}{TN}\pmb{\Lambda}%
^{\top }\mathbf{G}^{\top }\mathbf{F}\Vert =o_{P}(1/\sqrt{TN})$. By using
similar arguments as used in the proof of Lemma \ref{Lemma1}(a) and the
normalization condition that $\frac{1}{T}\sum_{t=1}^{T}\mathbf{g}_{it}=0$,
we have 
\begin{eqnarray*}
\Vert \frac{1}{TN}\pmb{\Lambda}^{\top }\mathbf{G}^{\top }\mathbf{F}\Vert 
&\leq &\left\Vert \frac{1}{TN}\sum_{i=1}^{N}\pmb{\lambda}_{i}\sum_{t=1}^{T}%
\mathbf{g}_{it}^{\top }E(\mathbf{f}_{t}\mathbf{f}_{t}^{\top })\right\Vert
+\left\Vert \frac{1}{TN}\sum_{i=1}^{N}\pmb{\lambda}_{i}\sum_{t=1}^{T}\mathbf{%
	g}_{it}^{\top }(\mathbf{f}_{t}\mathbf{f}_{t}^{\top }-E(\mathbf{f}_{t}\mathbf{%
	f}_{t}^{\top }))\right\Vert  \notag \\
&=&\left\Vert \frac{1}{TN}\sum_{t=1}^{T}\sum_{i=1}^{N}\pmb{\lambda}_{i}%
\mathbf{g}_{it}^{\top }E(\mathbf{f}_{t}\mathbf{f}_{t}^{\top })\right\Vert
+O_{P}(1/\sqrt{T})=O_{P}(1/\sqrt{T}).
\end{eqnarray*}%
This proves part (b) given $a_{NT} = N^{-1/2}h^{1/4}$. \medskip 

\noindent (c). Note that $\|\frac{1}{TN}\mathcal{E}^* \pmb{\Lambda} \|\leq \|%
\frac{1}{TN}\mathcal{E} \pmb{\Lambda} \| + a_{TN}\|\frac{1}{TN}\mathbf{G}
\pmb{\Lambda} \|=O_P( \frac{1}{\sqrt{T N}}) + O_P( \frac{1}{\sqrt{T}}a_{TN})
= O_P( \frac{1}{\sqrt{T N}})$ since $a_{TN} = N^{-1/2}h^{1/4}=o(1/\sqrt{N})$.

\medskip

\noindent (d). Note that 
\begin{equation*}
\Vert \frac{1}{TN}\mathcal{E}^{\ast }\mathcal{E}^{\ast ,\top }\mathbf{F}
\Vert \leq \Vert \frac{1}{TN}\mathcal{E}\mathcal{E}^{\top }\mathbf{F}\Vert
+a_{TN}^{2}\Vert \frac{1}{TN}\mathbf{G}\mathbf{G}^{\top }\mathbf{F}\Vert
+a_{TN}\Vert \frac{1}{TN}\mathcal{E}\mathbf{G}^{\top }\mathbf{F}
\Vert +a_{TN}\Vert \frac{1}{TN}\mathbf{G}\mathcal{E}^{\top }\mathbf{F}
\Vert
\end{equation*}
and
$$
\left\|\frac{1}{\sqrt{NT}}\mathbf{G}^\top\mathbf{F}\right\|= \left\|\frac{1}{\sqrt{NT}}\sum_{t=1}^{T}\mathbf{
G}_{t}(\mathbf{f}_{t}\mathbf{f}_{t}^{\top }-E(\mathbf{f}_{t}\mathbf{
f}_{t}^{\top}))\right\| = O_P(1),
$$
where $\mathbf{G}_{t} = [\mathbf{g}_{1t},\ldots,\mathbf{g}_{Nt}]^\top$.

Then part (d) follows from Lemma \ref{Lemma2}(e) and the fact that $a_{TN}=N^{-1/2}h^{1/4}=o(1/\sqrt{T\wedge N})$.

\medskip

\noindent (e). By Lemma \ref{Lemma2} (f) and similar arguments as used in
the the proof of part (b), we have 
\begin{eqnarray*}
\Vert \frac{1}{TN}\sum_{i=1}^{N}(1-a_{i})\mathcal{E}_{i}^{\ast ,\top }%
\mathbf{F}\Vert  &\leq &\Vert \frac{1}{TN}\sum_{i=1}^{N}(1-a_{i})\mathcal{E}%
_{i}^{\top }\mathbf{F}\Vert +a_{TN}\Vert \frac{1}{TN}\sum_{i=1}^{N}(1-a_{i})%
\sum_{t=1}^{T}\mathbf{g}_{it}^{\top }\mathbf{f}_{t}\mathbf{f}_{t}^{\top
}\Vert  \notag \\
&=&O_{P}(1/\sqrt{TN})+a_{TN}O_{P}(1/\sqrt{T})=O_{P}(1/\sqrt{TN}).
\end{eqnarray*}

\medskip

\noindent (f). Write%
\begin{eqnarray*}
\widehat{\mathbf{F}}\cdot \frac{1}{TN}\widehat{\mathbf{V}} &=&\frac{1}{TN}%
\mathbf{X}\mathbf{X}^{\top }\widehat{\mathbf{F}}=\frac{1}{TN}(\mathbf{F}%
\pmb{\Lambda}^{\top }+\mathcal{E}^{\ast })(\mathbf{F}\pmb{\Lambda}^{\top }+%
\mathcal{E})^{\ast ,\top }\widehat{\mathbf{F}} \notag \\
&=&\frac{1}{TN}\mathbf{F}\pmb{\Lambda}^{\top }\pmb{\Lambda}\mathbf{F}^{\top }%
\widehat{\mathbf{F}}+\frac{1}{TN}\mathbf{F}\pmb{\Lambda}^{\top }\mathcal{E}%
^{\ast ,\top }\widehat{\mathbf{F}}+\frac{1}{TN}\mathcal{E}^{\ast }%
\pmb{\Lambda}\mathbf{F}^{\top }\widehat{\mathbf{F}}+\frac{1}{TN}\mathcal{E}%
^{\ast }\mathcal{E}^{\ast ,\top }\widehat{\mathbf{F}}.
\end{eqnarray*}%
Note that 
\begin{equation*}
\frac{1}{\sqrt{T}}\Vert \frac{1}{TN}\mathbf{F}\pmb{\Lambda}^{\top }\mathcal{E%
}^{\ast ,\top }\widehat{\mathbf{F}}\Vert =O_{P}\left( \frac{1}{\sqrt{N}}%
\right) \ \ \text{ and }\ \ \frac{1}{\sqrt{T}}\Vert \frac{1}{TN}\mathcal{E}%
^{\ast }\mathcal{E}^{\ast ,\top }\widehat{\mathbf{F}}\Vert =O_{P}\left( 
\frac{1}{\sqrt{T\wedge N}}\right) 
\end{equation*}%
by Lemma \ref{Lemma10}(c) and \ref{Lemma10}(a) respectively. Then part (f)
follows.

\medskip

\noindent (g). Write 
\begin{equation*}
\frac{1}{T}\mathbf{F}^{\top }(\widehat{\mathbf{F}}-\mathbf{F}\mathbf{H})%
\frac{1}{TN}\widehat{\mathbf{V}}=\frac{1}{T^{2}N}\mathbf{F}^{\top }\mathbf{F}%
\pmb{\Lambda}^{\top }\mathcal{E}^{\ast ,\top }\widehat{\mathbf{F}}+\frac{1}{%
	T^{2}N}\mathbf{F}^{\top }\mathcal{E}^{\ast }\pmb{\Lambda}\mathbf{F}^{\top }%
\widehat{\mathbf{F}}+\frac{1}{T^{2}N}\mathbf{F}^{\top }\mathcal{E}^{\ast }%
\mathcal{E}^{\ast ,\top }\widehat{\mathbf{F}}.
\end{equation*}%
For the first term on the r.h.s., by Lemmas \ref{Lemma10}(b) and (f), we have 
\begin{eqnarray*}
\frac{1}{T^{2}N}\mathbf{F}^{\top }\mathbf{F}\pmb{\Lambda}^{\top }\mathcal{E}%
^{\ast ,\top }\widehat{\mathbf{F}} &=&\frac{1}{T^{2}N}\mathbf{F}^{\top }%
\mathbf{F}\pmb{\Lambda}^{\top }\mathcal{E}^{\ast ,\top }\mathbf{F}\mathbf{H}+%
\frac{1}{T^{2}N}\mathbf{F}^{\top }\mathbf{F}\pmb{\Lambda}^{\top }\mathcal{E}%
^{\ast ,\top }(\widehat{\mathbf{F}}-\mathbf{F}\mathbf{H}) \notag \\
&=&O_{P}(1/\sqrt{TN})+O_{P}(1/\sqrt{N(T\wedge N)}).
\end{eqnarray*}%
In the same fashion, by Lemmas \ref{Lemma10}(b) and (d), we obtain  that  

\begin{eqnarray*}
&&\frac{1}{T^{2}N}\mathbf{F}^{\top }\mathcal{E}^{\ast }\pmb{\Lambda}\mathbf{F}^{\top }\widehat{\mathbf{F}}=O_{P}(1/\sqrt{TN}),\notag \\
&&\frac{1}{T^{2}N}\mathbf{F}^{\top }
\mathcal{E}^{\ast }\mathcal{E}^{\ast ,\top }\widehat{\mathbf{F}}=O_{P}(1/\sqrt{T(T\wedge N)}).
\end{eqnarray*}
Then part (g) follows.

\medskip

\noindent (h). Write 
\begin{eqnarray*}
\frac{1}{TN}\mathcal{E}^{\ast }\mathcal{E}^{\ast ,\top }(\widehat{\mathbf{F}}%
-\mathbf{F}\mathbf{H})\widehat{\mathbf{V}}/(TN) &=&\frac{1}{T^{2}N^{2}}%
\mathcal{E}^{\ast }\mathcal{E}^{\ast ,\top }\mathbf{F}\pmb{\Lambda}^{\top }%
\mathcal{E}^{\ast ,\top }\widehat{\mathbf{F}}+\frac{1}{T^{2}N^{2}}\mathcal{E}%
^{\ast }\mathcal{E}^{\ast ,\top }\mathcal{E}^{\ast }\pmb{\Lambda}\mathbf{F}%
^{\top }\widehat{\mathbf{F}} \notag \\
&&+\frac{1}{T^{2}N^{2}}\mathcal{E}^{\ast }\mathcal{E}^{\ast ,\top }\mathcal{E%
}^{\ast }\mathcal{E}^{\ast ,\top }\widehat{\mathbf{F}}.
\end{eqnarray*}%
For the first term on the r.h.s., 
\begin{eqnarray*}
\frac{1}{T^{2}N^{2}}\mathcal{E}^{\ast }\mathcal{E}^{\ast ,\top }\mathbf{F}%
\pmb{\Lambda}^{\top }\mathcal{E}^{\ast ,\top }\widehat{\mathbf{F}} &=&\frac{1%
}{T^{2}N^{2}}\mathcal{E}^{\ast }\mathcal{E}^{\ast ,\top }\mathbf{F}%
\pmb{\Lambda}^{\top }\mathcal{E}^{\ast ,\top }\mathbf{F}\mathbf{H}+\frac{1}{%
	T^{2}N^{2}}\mathcal{E}^{\ast }\mathcal{E}^{\ast ,\top }\mathbf{F}%
\pmb{\Lambda}^{\top }\mathcal{E}^{\ast ,\top }(\widehat{\mathbf{F}}-\mathbf{F%
}\mathbf{H}) \notag \\
&=&O_{P}\left( \frac{1}{\sqrt{TN(T\wedge N)}}\right) +O_{P}\left( \frac{%
	\sqrt{T}}{\sqrt{TN}(T\wedge N)}\right) 
\end{eqnarray*}%
by Lemmas \ref{Lemma10} (b)--(d) and (f). Similarly, by Lemmas \ref{Lemma10}
(c), (d) and (f) we can show that the second term is $O_{P}\left( \sqrt{T}/%
\sqrt{N(T\wedge N)}\right) $ and the third term is $O_{P}(\sqrt{T}/(T\wedge
N)^{3/2})$. It follows that $\Vert \frac{1}{TN}\mathcal{E}^{\ast }\mathcal{E}%
^{\ast ,\top }(\widehat{\mathbf{F}}-\mathbf{F}\mathbf{H})\Vert =O_{P}(\sqrt{T%
}/(T\wedge N)^{3/2}).$

\medskip

\noindent (i). Let $\mathbf{G}_{i}$ denote the $i^{th}$ column of $\mathbf{G}
$ in which $\mathbf{G}=\{\mathbf{g}_{it}^{\top }\mathbf{f}_{t}\}_{T\times N}$%
. Write 
\begin{eqnarray*}
&&\sum_{i=1}^{N}(1-a_{i})\mathcal{E}_{i}^{\ast ,\top }(\widehat{\mathbf{F}}-%
\mathbf{F}\mathbf{H}) \notag \\
&=&\sum_{i=1}^{N}(1-a_{i})\mathcal{E}_{i}^{\ast ,\top }\left( \frac{1}{TN}%
\mathbf{F}\pmb{\Lambda}^{\top }\mathcal{E}^{\ast ,\top }\widehat{\mathbf{F}}+%
\frac{1}{TN}\mathcal{E}^{\ast }\pmb{\Lambda}\mathbf{F}^{\top }\widehat{%
	\mathbf{F}}+\frac{1}{TN}\mathcal{E}^{\ast }\mathcal{E}^{\ast ,\top }\widehat{%
	\mathbf{F}}\right) (\frac{1}{TN}\widehat{\mathbf{V}})^{-1}.
\end{eqnarray*}%
For the first term on the r.h.s., we apply Lemmas \ref{Lemma10} (b) and
(e)--(f) to obtain 
\begin{eqnarray*}
&&\left\Vert \frac{1}{T^{2}N^{2}}\sum_{i=1}^{N}(1-a_{i})\mathcal{E}%
_{i}^{\ast ,\top }\mathbf{F}\pmb{\Lambda}^{\top }\mathcal{E}^{\ast ,\top }%
\widehat{\mathbf{F}}\right\Vert  \notag \\
&\leq &\left\Vert \frac{1}{T^{2}N^{2}}\sum_{i=1}^{N}(1-a_{i})\mathcal{E}%
_{i}^{\ast ,\top }\mathbf{F}\pmb{\Lambda}^{\top }\mathcal{E}^{\ast ,\top }%
\mathbf{F}\mathbf{H}\right\Vert +\left\Vert \frac{1}{T^{2}N^{2}}%
\sum_{i=1}^{N}(1-a_{i})\mathcal{E}_{i}^{\ast ,\top }\mathbf{F}\pmb{\Lambda}%
^{\top }\mathcal{E}^{\ast ,\top }(\widehat{\mathbf{F}}-\mathbf{F}\mathbf{H}%
)\right\Vert \notag \\
&=&O_{P}(1/(TN))+O_{P}\left( \sqrt{T}/(TN\sqrt{(T\wedge N)})\right) .
\end{eqnarray*}%
For the second term, by Lemmas \ref{Lemma2}(g) and \ref{Lemma10}(c), we have 
\begin{eqnarray*}
&&\left\Vert \frac{1}{T^{2}N^{2}}\sum_{i=1}^{N}(1-a_{i})\mathcal{E}%
_{i}^{\ast ,\top }\mathcal{E}^{\ast }\pmb{\Lambda}\mathbf{F}^{\top }\widehat{%
	\mathbf{F}}\right\Vert \notag \\
&\leq &a_{TN}\left\Vert \frac{1}{T^{2}N^{2}}\sum_{i=1}^{N}(1-a_{i})\mathbf{G}%
_{i}^{\top }\mathcal{E}\pmb{\Lambda}\mathbf{F}^{\top }\widehat{\mathbf{F}}%
\right\Vert +a_{TN}^{2}\left\Vert \frac{1}{T^{2}N^{2}}\sum_{i=1}^{N}(1-a_{i})%
\mathbf{G}_{i}^{\top }\mathbf{G}\pmb{\Lambda}\mathbf{F}^{\top }\widehat{%
	\mathbf{F}}\right\Vert \notag \\
&&+\left\Vert \frac{1}{T^{2}N^{2}}\sum_{i=1}^{N}(1-a_{i})\mathcal{E}%
_{i}^{\top }\mathcal{E}^{\ast }\pmb{\Lambda}\mathbf{F}^{\top }\widehat{%
	\mathbf{F}}\right\Vert  \notag \\
&=&a_{TN}\left\Vert \frac{1}{T^{2}N^{2}}\sum_{i=1}^{N}(1-a_{i})\mathbf{G}%
_{i}^{\top }\mathcal{E}\pmb{\Lambda}\mathbf{F}^{\top }\widehat{\mathbf{F}}%
\right\Vert +O_{P}(a_{TN}^{2})+O_{P}(1/N).
\end{eqnarray*}%
In addition, since $|\mathbf{G}_{i}^{\top }\mathcal{E}\pmb{\Lambda}%
|_{2}=|\sum_{j=1}^{N}\sum_{t=1}^{T}\mathbf{g}_{it}^{\top }\mathbf{f}%
_{t}\varepsilon _{jt}\pmb{\lambda}_{j}^{\top }|_{2}=O(\sqrt{TN})$ by using
Lemma \ref{Lemma1} (b), we then have $a_{TN}\left\Vert \frac{1}{T^{2}N^{2}}%
\sum_{i=1}^{N}(1-a_{i})\mathbf{G}_{i}^{\top }\mathcal{E}\pmb{\Lambda}\mathbf{%
F}^{\top }\widehat{\mathbf{F}}\right\Vert =O_{P}(a_{TN}/\sqrt{TN})$.
Similarly, for the third term, we have 
\begin{equation*}
\left\Vert \frac{1}{T^{2}N^{2}}\sum_{i=1}^{N}(1-a_{i})\mathcal{E}_{i}^{\ast
	,\top }\mathcal{E}^{\ast }\mathcal{E}^{\ast ,\top }\widehat{\mathbf{F}}%
\right\Vert =O_{P}(a_{TN}/\sqrt{T\wedge N})+O_{P}(a_{TN}\sqrt{T}/(T\wedge
N)).
\end{equation*}%
This completes the proof. 
	
\end{proof}
}

\section{Verification of Assumption \ref{Assumption4}\label{App.B2}}

{\small In this section we verify Assumption \ref{Assumption4} via a given
example.  Consider a high-dimensional MA($\infty $) process $\pmb{\varepsilon}
_{t}=\sum_{j=0}^{\infty }\mathbf{B}_{j}\mathbf{e}_{t-j},$ where $\mathbf{B}
_{j}$'s are $N\times N$ matrices. We make the following primitive conditions
on $\left\{ \mathbf{B}_{j}\right\} ,$ $\{\mathbf{e}_{t-j}\}$ and $\{\mathbf{f%
}_{t}\}.$ }

\begin{assumption}
{\small \label{AssumptionA1}(a) $\Vert \mathbf{B}_{j}\Vert _{2}=O(j^{-\alpha
})$ for some $\alpha >3$; }

{\small (b) $\{e_{it}\}$ is independent over $(i,t)$ with mean 0 and
variance 1, and $E|e_{it}|^{4}<\infty $, where $e_{it}$ is the $i^{th}$
element of $\mathbf{e}_{t}$. }
\end{assumption}

\begin{assumption}
{\small \label{AssumptionA2}$\{\mathbf{f}_{t}\}$ satisfy Assumption \ref%
{Assumption3} , $\{\mathbf{f}_{t}\}$ is independent of $\{\pmb{\varepsilon}%
_{t}\}$, $\Vert \mathbf{B}_{j}\Vert _{1}=O(j^{-\alpha })$ and $\Vert \mathbf{%
B}_{j}\Vert _{\infty }=O(j^{-\alpha })$. }
\end{assumption}

{\small Below we verify that $\{\pmb{\varepsilon}_{t}\}$ fulfills Assumption %
\ref{Assumption4}(a)--(b) under Assumption \ref{AssumptionA1}(a)--(b).
Similarly, Assumption \ref{Assumption4}(c) is satisfied under Assumptions %
\ref{AssumptionA1}--\ref{AssumptionA2}. }

\subsection{Verification of Assumption \ref{Assumption4}(a)--(b)}

{\small By Assumption \ref{AssumptionA1}(a)-(b) and the independence between 
$\{\mathbf{e}_{t-j}\}$ and $\{\mathbf{f}_{t}\}$, $E(\boldsymbol{\varepsilon }%
_{t}|\mathbf{f}_{t})=E(\boldsymbol{\varepsilon }_{t})=0$ and $E[\boldsymbol{%
\varepsilon }_{t}\boldsymbol{\varepsilon }_{t}^{\top }]=$$\sum_{j=0}^{\infty
}\mathbf{B}_{j}\mathbf{B}_{j}^{\prime }\eqqcolon \Sigma _{\varepsilon }$. By the
triangle inequality and the submultiplicative property of the spectral norm,
we have%
\begin{equation*}
\left\Vert \Sigma _{\varepsilon }\right\Vert _{2}=\left\Vert
\sum_{j=0}^{\infty }\mathbf{B}_{j}\mathbf{B}_{j}^{\prime }\right\Vert
_{2}\leq \sum_{j=0}^{\infty }\left\Vert \mathbf{B}_{j}\right\Vert
_{2}^{2}\lesssim \sum_{j=0}^{\infty }j^{-2\alpha }=O(1).
\end{equation*}%
} {\small This verifies Assumption \ref{Assumption4}(a).}

{\small Let $\mathbf{B}_{j}=\{B_{j,kl}\}_{k,l\in \lbrack N]}$ and $\mathbf{v}%
^{\top }\mathbf{B}_{j}=(B_{j,\,\centerdot 1},\ldots ,B_{j,\,\centerdot N})$
for any $\Vert \mathbf{v}\Vert =1$. As $\{e_{it}\}$ are independent over $i$%
, we can write 
\begin{eqnarray*}
&&E\left\vert \mathbf{v}^{\top }\mathbf{B}_{j}(\mathbf{e}_{0}-\mathbf{e}%
_{0}^{\prime })\right\vert ^{4}=E\left\vert \sum_{l=1}^{N}B_{j,\,\centerdot
l}^{2}(e_{l,0}-e_{l,0}^{\prime })^{2}\right\vert ^{2}\notag \\
&&+4E\left\vert \sum_{l=1}^{N-1}\sum_{k=l+1}^{N}B_{j,\,\centerdot
l}B_{j,\,\centerdot k}(e_{l,0}-e_{l,0}^{\prime })(e_{k,0}-e_{k,0}^{\prime
})\right\vert ^{2} \notag\\
&\leq &O(1)\left( \sum_{l=1}^{N}B_{j,\,\centerdot l}^{2}\right)
^{2}+O(1)\sum_{l=1}^{N-1}\sum_{k=l+1}^{N}B_{j,\,\centerdot
l}^{2}B_{j,\,\centerdot k}^{2} \notag\\
&\leq &O(1)\left( \mathbf{v}^{\top }\mathbf{B}_{j}\mathbf{B}_{j}^{\top }%
\mathbf{v}\right) ^{2}=O(\Vert \mathbf{B}_{j}\Vert _{2}^{4}),
\end{eqnarray*}%
where the first inequality follows from some direct calculations, and the
second inequality follows from the fact that $\sum_{l=1}^{N-1}%
\sum_{k=l+1}^{N}B_{j,\,\centerdot l}^{2}B_{j,\,\centerdot k}^{2}\leq
(\sum_{l=1}^{N}B_{j,\,\centerdot l}^{2})^{2}$. Based on the above
development, we have 
\begin{equation*}
\sup_{\mathbf{v}\in \mathbb{R}^{N},\Vert \mathbf{v}\Vert =1}\left\vert 
\mathbf{v}^{\top }(\pmb{\varepsilon}_{j}-\pmb{\varepsilon}_{j}^{\ast
})\right\vert _{4}=\sup_{\mathbf{v}\in \mathbb{R}^{N},\Vert \mathbf{v}\Vert
=1}\left\vert \mathbf{v}^{\top }\mathbf{B}_{j}(\mathbf{e}_{0}-\mathbf{e}%
_{0}^{\prime })\right\vert _{4}=O(|\mathbf{B}_{j}|_{2})=O(j^{-\alpha }).
\end{equation*}%
This verifies Assumption \ref{Assumption4}(b).}

\subsection{{Verification of Assumption  \ref{Assumption4}(c)}}

{\small Let $\mathbf{f}_{t,t-m}$ and $\varepsilon _{it,t-m}$ be the coupled
version of $\mathbf{f}_{t}$ and $\varepsilon _{it}$ with $\mathbf{e}_{t-m}$
replaced by $\mathbf{e}_{t-m}^{\prime }$. Since $\{\mathbf{f}_{t}\}$ is
independent of $\{\pmb{\varepsilon}_{t}\}$, by Jensen inequality, we have 
\begin{eqnarray*}
&&\left\vert \mathcal{P}_{t-m}\left( \mathbf{f}_{t}\frac{1}{\sqrt{N}}
\sum_{i=1}^{N}\varepsilon _{it}\varepsilon _{is}\right) \right\vert _{2} 
\notag \\
&\leq &|E(\mathbf{f}_{t}\,|\, \mathscr{E}_{t-m})-E(\mathbf{f}_{t}\,|\, %
\mathscr{E}_{t-m-1})|_{2}\left\vert \frac{1}{\sqrt{N}}\sum_{i=1}^{N}
\varepsilon _{it}\varepsilon _{is}\right\vert _{2}  \notag \\
&&+|\mathbf{f}_{t}|_{2}\left\vert E\left( \frac{1}{\sqrt{N}}
\sum_{i=1}^{N}\varepsilon _{it}\varepsilon _{is}\,|\, \mathscr{E}
_{t-m}\right) -E\left( \frac{1}{\sqrt{N}}\sum_{i=1}^{N}\varepsilon
_{it}\varepsilon _{is}\,|\, \mathscr{E}_{t-m-1}\right) \right\vert _{2} 
\notag \\
&\leq &|E(\mathbf{f}_{t}-\mathbf{f}_{t,t-m}\,|\, \mathscr{E}
_{t-m})|_{2}\left\vert \frac{1}{\sqrt{N}}\sum_{i=1}^{N}\varepsilon
_{it}\varepsilon_{is}\right\vert _{2}  \notag \\
&&+|\mathbf{f}_{t}|_{2}\left\vert E\left( \frac{1}{\sqrt{N}}
\sum_{i=1}^{N}(\varepsilon _{it}\varepsilon _{is}-\varepsilon
_{it,t-m}\varepsilon _{is,t-m})\,|\, \mathscr{E}_{t-m}\right) \right\vert
_{2}.
\end{eqnarray*}
Noting that $|E(\mathbf{f}_{t}-\mathbf{f}_{t,t-m}\,|\, \mathscr{E}
_{t-m})|_{2}\leq |\mathbf{f}_{m}-\mathbf{f}_{m}^{\ast }|_{2}=O(m^{-\alpha })$
and $|\frac{1}{\sqrt{N}}\sum_{i=1}^{N}\varepsilon _{it}\varepsilon
_{is}|_{2}=O(1)$, it is sufficient to show the order of 
\begin{equation*}
\left\vert E\left( \frac{1}{\sqrt{N}}\sum_{i=1}^{N}(\varepsilon
_{it}\varepsilon _{is}-\varepsilon _{it,t-m}\varepsilon _{is,t-m})\,|\, %
\mathscr{E}_{t-m}\right) \right\vert _{2}.
\end{equation*}
}

{\small Let $\mathbf{B}_{j,i\cdot}$ denote the $i^{th}$ column of $\mathbf{B}
_{j}^\top$. Hence, we have $\varepsilon_{it} = \sum_{j=0}^{\infty}\mathbf{B}
_{j,i\cdot}^\top\mathbf{e}_{t-j}$ and $\varepsilon_{it,t-m} = \sum_{j=0,\neq
m}^{\infty}\mathbf{B}_{j,i\cdot}^\top\mathbf{e}_{t-j} + \mathbf{B}%
_{m,i\cdot}^\top\mathbf{e}_{t-m}^\prime$. Then write 
\begin{eqnarray*}
&&\left|E\left(\frac{1}{\sqrt{N}}\sum_{i=1}^{N}(\varepsilon_{it}
\varepsilon_{is}-\varepsilon_{it,t-m}\varepsilon_{is,t-m})\,|\, \mathscr{E}
_{t-m} \right)\right|_2  \notag \\
&\leq& \left|E\left(\frac{1}{\sqrt{N}}\sum_{i=1}^{N}(\mathbf{B}
_{m,i\cdot}^\top\mathbf{e}_{t-m}^\prime\mathbf{B}_{s+m-t,i\cdot}^\top\mathbf{%
e}_{t-m}^\prime-\frac{1}{\sqrt{N}}\sum_{i=1}^{N}(\mathbf{B} _{m,i\cdot}^\top 
\mathbf{e}_{t-m}\mathbf{B}_{s+m-t,i\cdot}^\top\mathbf{e} _{t-m})\,|\, %
\mathscr{E}_{t-m} \right)\right|_2  \notag \\
&&+ \left|E\left(\frac{1}{\sqrt{N}}\sum_{i=1}^{N}\mathbf{B}_{m,i\cdot}^\top( 
\mathbf{e}_{t-m}^\prime-\mathbf{e}_{t-m})(\sum_{j=0,\neq s+m-t}^{\infty} 
\mathbf{B}_{j,i\cdot}^\top \mathbf{e}_{s-j})\,|\, \mathscr{E}_{t-m}
\right)\right|_2  \notag \\
&&+\left|E\left(\frac{1}{\sqrt{N}}\sum_{i=1}^{N}\mathbf{B}
_{s+m-t,i\cdot}^\top(\mathbf{e}_{t-m}^\prime-\mathbf{e}_{t-m})(\sum_{j=0,
\neq m}^{\infty}\mathbf{B}_{j,i\cdot}^\top \mathbf{e}_{t-j})\,|\, \mathscr{E}
_{t-m} \right)\right|_2  \notag \\
&\eqqcolon &E_1 + E_2 + E_3.
\end{eqnarray*}
}

{\small For $E_1$, we have 
\begin{eqnarray*}
E_1&=& \left|\frac{1}{\sqrt{N}}\sum_{i=1}^{N}
\sum_{l=1}^{N}B_{m,il}B_{s+m-t,il}(e_{l,t-m}^2-1)+\frac{1}{\sqrt{N}}
\sum_{i=1}^{N}\sum_{l=1}^{N}\sum_{k=1,\neq
l}^{N}B_{m,il}B_{s+m-t,il}e_{l,t-m}e_{k,t-m} \right|_2  \notag \\
&\leq&\left|\frac{1}{\sqrt{N}}\sum_{i=1}^{N}
\sum_{l=1}^{N}B_{m,il}B_{s+m-t,il}(e_{l,t-m}^2-1)\right|_2  \notag \\
&& + \left|\frac{1}{\sqrt{N}}\sum_{i=1}^{N}\sum_{l=1}^{N}\sum_{k=1,\neq
l}^{N}B_{m,il}B_{s+m-t,ik}e_{l,t-m}e_{k,t-m} \right|_2  \notag \\
&\eqqcolon&E_{11} + E_{12}.
\end{eqnarray*}
For $E_{11}$, by using Cauchy-Schwarz inequality and $\frac{1}{N}
\sum_{l,i=1}^{N}B_{m,il}^2 \leq \|\mathbf{B}_{m} \|_2^2$, we have 
\begin{eqnarray*}
E_{11} &=& \left( \sum_{l=1}^{N}\left(\frac{1}{\sqrt{N}}
\sum_{i=1}^{N}B_{m,il}B_{s+m-t,il}\right)^2E\left((e_{l,t-m}^2-1)^2\right)
\right)^{1/2}  \notag \\
&\leq& O(1) \left(\max_{1\leq l\leq N}|\sum_{i=1}^{N}B_{s+m-t,il}^2|\frac{1}{
N}\sum_{l,i=1}^{N}B_{m,il}^2\right)^{1/2} =O( \|\mathbf{B}_{m}
\|_2)=O(m^{-\alpha}).
\end{eqnarray*}
Next, for $E_{12}$, we have 
\begin{eqnarray*}
E_{12} &=& \left( \sum_{l=1}^{N}\sum_{k=1,\neq l}^{N}\left(\frac{1}{\sqrt{N}}
\sum_{i=1}^{N}B_{m,il}B_{s+m-t,ik}\right)^2E\left(e_{l,t-m}^2e_{k,t-m}^2
\right) \right)^{1/2}  \notag \\
&\leq& O(1) \left(\sum_{l=1}^{N}\sum_{k=1,\neq l}^{N}\left(\frac{1}{\sqrt{N}}
\sum_{i=1}^{N}B_{m,il}B_{s+m-t,ik}\right)^2\right)^{1/2}  \notag \\
&\leq& O(1) \left(\max_{k}\sum_{i=1}^{N}|B_{s+m-k,ik}|\max_{k}
\sum_{j=1}^{N}|B_{s+m-t,jk}|\max_{l}\sum_{j=1}^{N}|B_{m,jl}|\frac{1}{N}
\sum_{i=1}^{N}\sum_{l=1}^{N}|B_{m,il}|\right)^{1/2}  \notag \\
&=& O(\sqrt{ \|\mathbf{B}_{m} \|_1\|\mathbf{B}_{m} \|_\infty}
)=O(m^{-\alpha}).
\end{eqnarray*}
Similarly, we can show that $E_2$ is also $O(\|\mathbf{B}_{m}
\|_2)=O(m^{-\alpha})$ for some $\alpha > 3$. }

{\small Finally, we consider the third term $E_{3}$. By using arguments as
used in the study of $E_{1}$, we have 
\begin{eqnarray*}
&&\left\vert E\left( \frac{1}{\sqrt{N}}\sum_{i=1}^{N}\mathbf{B}%
_{s+m-t,i\cdot }^{\top }(\mathbf{e}_{t-m}^{\prime }-\mathbf{e}%
_{t-m})(\sum_{j=0,\neq m}^{\infty }\mathbf{B}_{j,i\cdot }^{\top }\mathbf{e}%
_{t-j})\,|\,\mathscr{E}_{t-m}\right) \right\vert _{2} \notag \\
&=&\left\vert \frac{1}{\sqrt{N}}\sum_{i=1}^{N}\mathbf{B}_{s+m-t,i\cdot
}^{\top }\mathbf{e}_{t-m}\sum_{j=m+1}^{\infty }\mathbf{B}_{j,i\cdot }^{\top }%
\mathbf{e}_{t-j}\right\vert _{2}\notag \\
&\leq &\sum_{j=m+1}^{\infty }\left\vert \frac{1}{\sqrt{N}}\sum_{i=1}^{N}%
\mathbf{B}_{s+m-t,i\cdot }^{\top }\mathbf{e}_{t-m}\mathbf{B}_{j,i\cdot
}^{\top }\mathbf{e}_{t-j}\right\vert _{2}=\sum_{j=m+1}^{\infty }O(m^{-\alpha
})=O(m^{-\alpha +1}).
\end{eqnarray*}%
This completes the verification of Assumption \ref{Assumption4}(c). }

\section{Additional Simulation Results \label{App.B3}}

{\small This section reports some additional simulation results.}

\subsection{Moving Block Bootstrap for the SW and FHW Tests}

{\small Following the suggestion of a referee, we examine the applicability
of the moving block bootstrap (MBB) procedure for SW and FHW tests via
simulation studies. The details of the moving block bootstrap (MBB)
procedure can be found in Appendix C.6 of \cite{fu2023testing}, in which
they suggest a MBB procedure in order to deal with serially correlated
errors. }

{\small We generate the data under DGP.S3 where the error terms exhibit both
serial and cross-sectional correlations. The simulation results for the 5\%
test are reported in Table \ref{TableB1.1}. Overall, we can see that the MBB
procedure is not appropriate for obtaining the critical values of the SW and
FHW tests. Specifically, the SW and FHW tests display severe size
distortions and their sizes converge to zero with the increase of $T$. This
result is similar to those in Appendix C.6 of \cite{fu2023testing}, in which
they find that SW and FHW tests tend to over-reject but their sizes decrease
with the increasing sample size. }

{\small 
\begin{table}[tbp]
\caption{Finite Sample Performance of the MBB Procedure for SW and FHW tests}
\label{TableB1.1}
\vspace{-3mm} \setlength{\tabcolsep}{4pt} \renewcommand{\arraystretch}{0.95} 
\begin{center}

\scalebox{0.9}{\begin{tabular}{l cc ccccc | ccccc}
\hline\hline
& &  & \multicolumn{5}{c}{$N = 50$} & \multicolumn{5}{c}{$N = 100$} \\
\hline
& & $\widetilde{r}$ &PSY & CDG & HI& SW & FHW &PSY & CDG & HI& SW & FHW\\
\multirow{16}{*}{DGP.S3} 
&\multirow{4}{*}{$T = 30$}  & 2  &    0.139  & 0.007  & 0.000  & 0.164  & 0.072  & 0.132  & 0.001  & 0.000  & 0.306  & 0.058  \\
& & 3                &    0.125  & 0.001  & 0.000  & 0.147  & 0.107  & 0.094  & 0.000  & 0.000  & 0.244  & 0.093  \\
& & 4                &    0.086  & 0.000  & 0.000  & 0.129  & 0.114  & 0.106  & 0.000  & 0.000  & 0.193  & 0.092  \\
& & 5                &    0.076  & 0.000  & 0.000  & 0.089  & 0.134  & 0.095  & 0.000  & 0.000  & 0.071  & 0.143  \\
\cline{3-13}
&\multirow{4}{*}{$T = 50$}  & 2  &    0.112  & 0.007  & 0.000  & 0.078  & 0.035  & 0.112  & 0.009  & 0.001  & 0.166  & 0.040  \\
& & 3                &    0.082  & 0.002  & 0.000  & 0.070  & 0.043  & 0.075  & 0.003  & 0.000  & 0.119  & 0.039  \\
& & 4                &    0.062  & 0.000  & 0.000  & 0.089  & 0.049  & 0.079  & 0.000  & 0.000  & 0.124  & 0.041  \\
& & 5                &    0.061  & 0.000  & 0.000  & 0.113  & 0.061  & 0.072  & 0.000  & 0.000  & 0.172  & 0.068  \\
\cline{3-13} 
&\multirow{4}{*}{$T = 100$}  & 2 &    0.085  & 0.016  & 0.005  & 0.019  & 0.017  & 0.099  & 0.021  & 0.007  & 0.052  & 0.018  \\
& & 3                &    0.075  & 0.007  & 0.000  & 0.007  & 0.012  & 0.066  & 0.017  & 0.001  & 0.013  & 0.004  \\
& & 4                &    0.065  & 0.000  & 0.000  & 0.009  & 0.022  & 0.074  & 0.003  & 0.000  & 0.010  & 0.006  \\
& & 5                &    0.062  & 0.001  & 0.000  & 0.009  & 0.024  & 0.067  & 0.000  & 0.000  & 0.028  & 0.008  \\
\cline{3-13}                                                                                                                                                           
&\multirow{4}{*}{$T =200$}   & 2 &    0.071  & 0.045  & 0.021  & 0.002  & 0.012  & 0.062  & 0.033  & 0.019  & 0.009  & 0.014  \\
& & 3                &    0.073  & 0.027  & 0.007  & 0.000  & 0.003  & 0.074  & 0.027  & 0.004  & 0.000  & 0.007  \\
& & 4                &    0.074  & 0.013  & 0.000  & 0.000  & 0.006  & 0.073  & 0.017  & 0.000  & 0.000  & 0.004  \\
& & 5                &    0.071  & 0.005  & 0.000  & 0.000  & 0.004  & 0.076  & 0.008  & 0.000  & 0.000  & 0.004  \\
\hline\hline
\end{tabular}} 

\end{center}
\end{table}
}

\subsection{Results for Zero-Mean Common Factors}

In this section, we examine the performance of our test with
zero-mean common factors. As implied by Proposition \ref{prop2}, our test
statistic is of order $Nh^{-1/2}$ under the global alternatives if the
common factors have zero mean. Here, we focus on DGP.G1--DGP.G3 except that
the common factors are generated by a vector autoregressive model with no
intercept. 

Table \ref{TableB1.2} reports the empirical rejections rates of
various tests under DGP.G1--DGP.G3 with zero-mean common factors where the
nomial level is 5\%. To conclude, although the empirical rejection rates of
our test converge to $1$ at a slower rate than SW and FHW tests, our test is
still powerful against DGP.G1--DGP.G3 with zero-mean common factors,
especially when $r=2$.

{\small 
\begin{table}[tbp]
\caption{Empirical Rejection Rates for DGP.G1--DGP.G3}
\label{TableB1.2}

\vspace{-3mm} \setlength{\tabcolsep}{4pt} \renewcommand{\arraystretch}{0.95} 
\begin{center}
\scalebox{0.9}{\begin{tabular}{c cc ccccc | ccccc}
\hline\hline
& &  & \multicolumn{5}{c}{$N = 50$} & \multicolumn{5}{c}{$N = 100$} \\
\hline
& & $\widetilde{r}$ &PSY & CDG & HI& SW & FHW &PSY & CDG & HI& SW & FHW\\
\multirow{12}{*}{DGP.G1} &\multirow{4}{*}{$T = 50$} 
& 2            &    0.778  & 0.009  & 0.001  & 0.715  & 0.984  & 0.939  & 0.012  & 0.000  & 0.878  & 0.999  \\
&& 3            &    0.246  & 0.006  & 0.000  & 0.238  & 0.381  & 0.167  & 0.008  & 0.000  & 0.248  & 0.307  \\
&& 4            &    0.147  & 0.002  & 0.000  & 0.201  & 0.191  & 0.132  & 0.000  & 0.000  & 0.250  & 0.196  \\
&& 5            &    0.136  & 0.000  & 0.000  & 0.189  & 0.114  & 0.112  & 0.000  & 0.000  & 0.252  & 0.129  \\
\cline{3-13} 
&\multirow{4}{*}{$T = 100$} 
& 2             &    0.957  & 0.023  & 0.008  & 0.988  & 1.000  & 0.995  & 0.030  & 0.011  & 1.000  & 1.000  \\
&& 3            &    0.198  & 0.729  & 0.002  & 0.208  & 0.414  & 0.122  & 0.936  & 0.002  & 0.179  & 0.277  \\
&& 4            &    0.151  & 0.018  & 0.000  & 0.179  & 0.272  & 0.105  & 0.006  & 0.000  & 0.177  & 0.190  \\
&& 5            &    0.142  & 0.002  & 0.000  & 0.168  & 0.179  & 0.101  & 0.001  & 0.000  & 0.199  & 0.145  \\
\cline{3-13}                                                                                                 
&\multirow{4}{*}{$T = 200$}                                                                                   
& 2            &    0.999  & 0.043  & 0.034  & 1.000  & 1.000  & 1.000  & 0.037  & 0.026  & 1.000  & 1.000  \\
&& 3            &    0.122  & 1.000  & 0.955  & 0.169  & 0.419  & 0.074  & 1.000  & 0.995  & 0.123  & 0.276  \\
&& 4            &    0.109  & 0.027  & 0.143  & 0.133  & 0.312  & 0.090  & 0.018  & 0.327  & 0.104  & 0.191  \\
&& 5            &    0.066  & 0.012  & 0.000  & 0.119  & 0.250  & 0.074  & 0.008  & 0.000  & 0.116  & 0.167  \\
\hline
\multirow{12}{*}{DGP.G2} 
&\multirow{4}{*}{$T = 50$} 
& 2            &    0.606  & 0.004  & 0.001  & 0.888  & 0.996  & 0.834  & 0.013  & 0.002  & 0.967  & 1.000  \\
&& 3            &    0.283  & 0.001  & 0.000  & 0.628  & 0.700  & 0.266  & 0.012  & 0.000  & 0.715  & 0.561  \\
&& 4            &    0.147  & 0.000  & 0.000  & 0.567  & 0.482  & 0.188  & 0.001  & 0.000  & 0.647  & 0.403  \\
&& 5            &    0.151  & 0.000  & 0.000  & 0.503  & 0.329  & 0.165  & 0.000  & 0.000  & 0.551  & 0.301  \\
\cline{3-13} 
&\multirow{4}{*}{$T = 100$} 
& 2            &    0.859  & 0.023  & 0.007  & 0.999  & 1.000  & 0.991  & 0.033  & 0.010  & 1.000  & 1.000  \\
&& 3            &    0.166  & 0.726  & 0.001  & 0.669  & 0.721  & 0.179  & 0.924  & 0.002  & 0.688  & 0.523  \\
&& 4            &    0.143  & 0.013  & 0.000  & 0.648  & 0.612  & 0.172  & 0.010  & 0.000  & 0.734  & 0.437  \\
&& 5            &    0.142  & 0.002  & 0.000  & 0.633  & 0.499  & 0.125  & 0.000  & 0.000  & 0.740  & 0.350  \\
\cline{3-13}                                                                                                 
&\multirow{4}{*}{$T = 200$}                                                                                   
& 2             &    0.986  & 0.033  & 0.033  & 1.000  & 1.000  & 1.000  & 0.037  & 0.024  & 1.000  & 1.000  \\
&& 3            &    0.126  & 0.998  & 0.966  & 0.586  & 0.724  & 0.104  & 1.000  & 0.996  & 0.663  & 0.471  \\
&& 4            &    0.100  & 0.037  & 0.201  & 0.585  & 0.640  & 0.102  & 0.018  & 0.402  & 0.706  & 0.418  \\
&& 5            &    0.099  & 0.018  & 0.000  & 0.597  & 0.563  & 0.079  & 0.009  & 0.000  & 0.747  & 0.370  \\
\hline
\multirow{12}{*}{DGP.G3} 
&\multirow{4}{*}{$T = 50$} 
& 2             &    0.707  & 0.014  & 0.001  & 0.849  & 0.982  & 0.848  & 0.009  & 0.002  & 0.942  & 0.999  \\
&& 3            &    0.321  & 0.009  & 0.000  & 0.510  & 0.450  & 0.286  & 0.009  & 0.000  & 0.636  & 0.445  \\
&& 4            &    0.202  & 0.002  & 0.000  & 0.440  & 0.260  & 0.221  & 0.000  & 0.000  & 0.562  & 0.313  \\
&& 5            &    0.154  & 0.000  & 0.000  & 0.393  & 0.172  & 0.179  & 0.000  & 0.000  & 0.499  & 0.190  \\
\cline{3-13} 
&\multirow{4}{*}{$T = 100$} 
& 2             &     0.906  & 0.019  & 0.008  & 0.991  & 1.000  & 0.984  & 0.022  & 0.004  & 1.000  & 1.000  \\
&& 3            &     0.250  & 0.637  & 0.001  & 0.551  & 0.516  & 0.244  & 0.876  & 0.001  & 0.654  & 0.431  \\
& &4            &     0.181  & 0.010  & 0.000  & 0.527  & 0.401  & 0.183  & 0.005  & 0.000  & 0.665  & 0.333  \\
& &5            &     0.176  & 0.002  & 0.000  & 0.530  & 0.283  & 0.169  & 0.001  & 0.000  & 0.688  & 0.253  \\
\cline{3-13}                                                                                                 
&\multirow{4}{*}{$T = 200$}                                                                                   
& 2             &    0.988  & 0.028  & 0.032  & 1.000  & 1.000  & 1.000  & 0.030  & 0.024  & 1.000  & 1.000  \\
&& 3            &    0.187  & 1.000  & 0.934  & 0.529  & 0.538  & 0.148  & 1.000  & 0.990  & 0.593  & 0.363  \\
& &4            &    0.153  & 0.035  & 0.108  & 0.530  & 0.420  & 0.132  & 0.015  & 0.297  & 0.645  & 0.311  \\
& &5            &    0.149  & 0.008  & 0.000  & 0.529  & 0.336  & 0.122  & 0.017  & 0.000  & 0.695  & 0.256  \\
\hline\hline
\end{tabular}}
\end{center}
\end{table}
}

\subsection{Bandwidth Sensitivity}

{\small In this section, we investigate the sensitivity of our test to the
choice of bandwidth. As before, we adopt the rule of thumb to set the
benchmark bandwidth: $h_{C}=(TN)^{-1/5}.$ We further consider $%
h_{L}=0.8h_{C} $ and $h_{R}=1.2h_{C}$ respectively in order to examine the
sensitivity. Here, we focus on DGP.S3, DGP.L3, DGP.L6 and DGP.G3, all of
which allows both TSA and CSD. This allows us examine the performance of the
test under both TSA and CSD. }

{\small Table \ref{TableB.3} presents the simulation results for $%
T=100,200,300$ and $N=100,200$. As we can see clearly from these tables, the
size and power performance of our test remains largely consistent with that
in the main text. This suggests that our test results are not sensitive to
the choice of bandwidth if the sample size is not so small. }

{\small 
\begin{table}[tbp]
\caption{Size and Power Performance with Different Bandwidth Sequences}
\label{TableB.3}
\setlength{\tabcolsep}{4pt} \renewcommand{\arraystretch}{0.95} 
\begin{center}
\scalebox{0.9}{\begin{tabular}{c cc ccc| ccc|ccc}
\hline\hline
&\multicolumn{1}{l}{} &  & \multicolumn{3}{c}{$T = 100$} & \multicolumn{3}{c}{$T = 200$} & \multicolumn{3}{c}{$T = 300$} \\
\hline
&\multicolumn{1}{l}{} & $\widetilde{r}$ & $h_L$ & $h_C$ & $h_R$ & $h_L$ & $h_C$ & $h_R$ & 
$h_L$ & $h_C$ & $h_R$ \\ 
\multirow{8}{*}{DGP.S3}
&\multirow{4}{*}{$N = 100$} 
& 2 &0.116  & 0.107  & 0.097  & 0.092  & 0.072  & 0.048  & 0.073  & 0.061  & 0.054  \\
&& 3 &0.120  & 0.089  & 0.090  & 0.076  & 0.064  & 0.066  & 0.048  & 0.046  & 0.048  \\
&& 4 &0.093  & 0.072  & 0.062  & 0.069  & 0.062  & 0.054  & 0.056  & 0.059  & 0.043  \\
&& 5 &0.079  & 0.052  & 0.068  & 0.071  & 0.049  & 0.046  & 0.086  & 0.054  & 0.063  \\
\cline{3-12}
&\multirow{4}{*}{$N = 200$} 
& 2 &    0.078  & 0.085  & 0.081  & 0.089  & 0.070  & 0.073  & 0.077  & 0.064  & 0.065  \\
&& 3 &    0.090  & 0.069  & 0.089  & 0.072  & 0.076  & 0.078  & 0.098  & 0.065  & 0.066  \\
&& 4 &    0.086  & 0.079  & 0.082  & 0.077  & 0.064  & 0.060  & 0.096  & 0.085  & 0.066  \\
&& 5 &    0.081  & 0.070  & 0.061  & 0.068  & 0.050  & 0.049  & 0.061  & 0.057  & 0.050  \\
\hline
\multirow{8}{*}{DGP.L3}
&\multirow{4}{*}{$N = 100$} 
& 2 &  0.225  & 0.185  & 0.166  & 0.224  & 0.205  & 0.173  & 0.213  & 0.208  & 0.201  \\
&& 3 &  0.219  & 0.180  & 0.149  & 0.207  & 0.205  & 0.162  & 0.216  & 0.205  & 0.192  \\
&& 4 &  0.155  & 0.166  & 0.132  & 0.216  & 0.161  & 0.129  & 0.199  & 0.222  & 0.172  \\
&& 5 &  0.137  & 0.151  & 0.110  & 0.152  & 0.136  & 0.146  & 0.193  & 0.160  & 0.160  \\
\cline{3-12}
&\multirow{4}{*}{$N = 200$} 
& 2 &    0.234  & 0.183  & 0.183  & 0.248  & 0.210  & 0.186  & 0.265  & 0.189  & 0.167  \\
&& 3 &    0.201  & 0.189  & 0.131  & 0.212  & 0.192  & 0.163  & 0.212  & 0.186  & 0.176  \\
&& 4 &    0.196  & 0.162  & 0.138  & 0.170  & 0.169  & 0.115  & 0.195  & 0.190  & 0.155  \\
&& 5 &    0.164  & 0.142  & 0.086  & 0.182  & 0.164  & 0.143  & 0.195  & 0.181  & 0.151  \\
\hline
\multirow{8}{*}{DGP.L6}
&\multirow{4}{*}{$N = 100$} 
& 2 &    0.267  & 0.237  & 0.223  & 0.303  & 0.254  & 0.233  & 0.259  & 0.250  & 0.233  \\
&& 3 &    0.247  & 0.189  & 0.205  & 0.239  & 0.242  & 0.220  & 0.227  & 0.215  & 0.202  \\
&& 4 &    0.216  & 0.195  & 0.181  & 0.217  & 0.212  & 0.205  & 0.172  & 0.209  & 0.152  \\
&& 5 &    0.184  & 0.157  & 0.148  & 0.230  & 0.186  & 0.178  & 0.159  & 0.193  & 0.132  \\
\cline{3-12}
&\multirow{4}{*}{$N = 200$} 
& 2 &    0.288  & 0.231  & 0.201  & 0.241  & 0.213  & 0.197  & 0.213  & 0.189  & 0.189  \\
&& 3 &    0.245  & 0.218  & 0.173  & 0.214  & 0.191  & 0.175  & 0.206  & 0.209  & 0.138  \\
&& 4 &    0.212  & 0.195  & 0.177  & 0.178  & 0.170  & 0.161  & 0.230  & 0.182  & 0.165  \\
&& 5 &    0.211  & 0.163  & 0.161  & 0.167  & 0.162  & 0.160  & 0.225  & 0.157  & 0.155  \\
\hline
\multirow{8}{*}{DGP.G3}
&\multirow{4}{*}{$N = 100$}
& 2 &    1.000  & 1.000  & 1.000  & 1.000  & 1.000  & 1.000  & 1.000  & 1.000  & 1.000  \\
&& 3 &    0.121  & 0.098  & 0.072  & 0.076  & 0.055  & 0.042  & 0.059  & 0.058  & 0.042  \\
&& 4 &    0.105  & 0.083  & 0.072  & 0.069  & 0.056  & 0.047  & 0.073  & 0.049  & 0.038  \\
&& 5 &    0.091  & 0.073  & 0.049  & 0.069  & 0.047  & 0.043  & 0.047  & 0.037  & 0.043  \\
\cline{3-12}
&\multirow{4}{*}{$N = 200$}
& 2 &    1.000  & 1.000  & 1.000  & 1.000  & 1.000  & 1.000  & 1.000  & 1.000  & 1.000  \\
&& 3 &    0.138  & 0.105  & 0.098  & 0.063  & 0.065  & 0.058  & 0.059  & 0.043  & 0.041  \\
&& 4 &    0.108  & 0.089  & 0.062  & 0.097  & 0.062  & 0.082  & 0.069  & 0.058  & 0.047  \\
&& 5 &    0.127  & 0.111  & 0.057  & 0.079  & 0.073  & 0.069  & 0.051  & 0.047  & 0.031  \\
\hline\hline
\end{tabular}} 
\end{center}
\end{table}
}

\end{document}